\documentclass[11pt]{article}

\usepackage[margin=1in]{geometry}
\usepackage{amsmath,amssymb,amsthm,mathtools}
\usepackage[T1]{fontenc}
\usepackage{lmodern}
\usepackage{graphicx}
\usepackage{authblk}
\usepackage{braket}
\usepackage{xcolor}
\usepackage{framed}
\usepackage{hyperref}
\usepackage{enumitem}
\usepackage{colortbl}
\usepackage[backend=biber,style=alphabetic,maxbibnames=99,giveninits=true]{biblatex}
\usepackage{complexity}
\renewcommand{\P}{\mathrm{P}}
\renewcommand{\BPP}{\mathrm{BPP}}
\renewcommand{\BQP}{\mathrm{BQP}}
\renewcommand{\PH}{\mathrm{PH}}
\renewcommand{\poly}{\mathrm{poly}}

\hypersetup{
  colorlinks=true,
  linkcolor=blue!60!black,
  citecolor=blue!60!black,
  urlcolor=blue!60!black,
  pdftitle={Oracle Separations in the Fourier Hierarchy},
    pdfpagemode=FullScreen,
}

\newenvironment{boxedstatement}[1]{%
  \par\medskip\noindent\begin{minipage}{\linewidth}%
  \begin{framed}\noindent\textbf{#1}\par\medskip
}{%
  \end{framed}\end{minipage}\par\medskip
}

\newtheorem{theorem}{Theorem}[section]
\newtheorem{lemma}[theorem]{Lemma}
\newtheorem{proposition}[theorem]{Proposition}
\newtheorem{corollary}[theorem]{Corollary}
\theoremstyle{definition}
\newtheorem{definition}[theorem]{Definition}
\newtheorem{remark}[theorem]{Remark}
\newtheorem{conjecture}[theorem]{Conjecture}

\renewcommand{\FH}{\mathrm{FH}}
\newcommand{\Yes}{\mathsf{Yes}}
\newcommand{\No}{\mathsf{No}}
\newcommand{\bits}{\{0,1\}}
\newcommand{\pmone}{\{\pm 1\}}
\newcommand{\Ftwo}{\mathbb{F}_2}
\renewcommand{\E}{\mathbb{E}}
\newcommand{\Prb}{\Pr}
\newcommand{\Had}{H^{\otimes n}}
\newcommand{\Norm}[1]{\left\lVert #1 \right\rVert}
\newcommand{\abs}[1]{\left| #1 \right|}

\newcommand{\sgnstar}{\operatorname{sgn}_*}
\newcommand{\sgn}{\operatorname{sgn}}

\title{Oracle Separations in the Fourier Hierarchy}
\newif\ifanon
\anonfalse
\ifanon
  \author{}
\else
  \author{Atul Mantri~\thanks{atulmantri@vt.edu}}
  \affil{Department of Computer Science, Virginia Tech, USA 24061}
\fi
\date{}

\begin{document}
\maketitle

\begin{abstract}
The Fourier hierarchy $\FH_0\subseteq\FH_1\subseteq\FH_2\subseteq\cdots$, introduced by Shi (TCS 2005), measures a quantum computation by the number of Hadamard layers it uses. Between two layers the circuit may permute basis states and attach phases, but it may not create superposition; the layers are its only source of interference. The first level is exactly $\BPP$, while the second already solves Simon's problem and, through phase estimation, factors integers. Shi conjectured that every additional layer strictly increases computational power, and asked, as a first step, for oracle separations between consecutive levels. To our knowledge, the question was open at every level $k\ge2$.

We prove that for every constant $k\ge2$ there is an oracle relative to which $\FH_k\subsetneq\FH_{k+1}$. The separating problem is built from Forrelation (Aaronson and Ambainis, STOC 2015): the level above solves it with a constant number of queries, whereas at level $k$ it stays hard even for circuits making exponentially many queries. This holds for both of the usual ways of giving a circuit access to an oracle, the phase oracle and the standard oracle, which writes its answer into a register. The two are not interchangeable: relative to an oracle, the standard oracle is strictly more powerful at the same number of layers. We also separate the union of all the levels from $\BQP$ relative to an oracle.

The lower bounds rest on a structural property of the hierarchy: the number of Hadamard layers limits how adaptively a circuit can query its oracle. With a phase oracle, a circuit with $k$ layers is reproduced exactly by an algorithm making only $k-1$ rounds of parallel queries, which brings known lower bounds for such algorithms to bear. The standard oracle lets a circuit branch on earlier answers, and that case needs a separate argument.
\end{abstract}

\newpage
\begingroup
\small
\tableofcontents
\endgroup

\newpage
\section{Introduction}

The Fourier hierarchy, introduced by Shi~\cite{Shi_2005}, measures a quantum computation by the
number of global Hadamard layers it uses.  Between two consecutive layers the computation may carry
out classical reversible operations, which permute computational basis states, and attach phases to
them, but it may not create new superpositions; such a computation is called
\emph{basis-preserving}.  The Hadamard layers are therefore the only source of interference, and
their number is the \emph{Fourier depth} of the computation.  An $\FH_k$ computation has Fourier
depth at most $k$, and
\[
\FH_0\subseteq \FH_1\subseteq \FH_2\subseteq\FH_3\subseteq\cdots.
\]
The hierarchy is already expressive at its lowest levels.  With no Hadamard layer a computation is
deterministic, and with one it can do no more than toss coins and compute classically, so
$\FH_0=\mathrm P$ and $\FH_1=\BPP$, as Shi observes~\cite{Shi_2005} (Lemmas~\ref{lem:fh0-std}
and~\ref{lem:fh1-std} prove the relativized forms in the standard-query model).  A single pair
of layers already suffices for Simon's algorithm and, through Kitaev's phase estimation, for
factoring~\cite{Shi_2005,365701,kitaev1995quantum,cleve2000fast,shor1999polynomial}.  Commuting
circuits of IQP type are the computations of this second level in which the stage between the two
layers is diagonal, and Buzet and Chailloux recently showed that constantly many of them, with
efficient classical pre- and post-processing, already decide a language outside the polynomial
hierarchy relative to an oracle~\cite{buzet2026iqp}.

Shi conjectured that the hierarchy is strict at every level~\cite[Conjecture~4.1]{Shi_2005} and
asked, as a first step, for oracle separations: whether each additional Hadamard layer strictly
increases computational power relative to some oracle~\cite{Shi_2005}.  Without an oracle the
conjecture is already out of reach at $k=0$, where it asserts $\P\ne\BPP$.  Simon's problem
separates the first two levels relative to an
oracle, and Shi proposed iterated Simon and recursive Fourier
sampling~\cite{bernstein1997quantum} as candidates for the higher levels.  To our knowledge, no
separation at any level $k\ge2$ has been proved since Shi raised the question.

\subsection{Main results}\label{subsec:main-results}

We resolve Shi's adjacent-level question at every level $k\ge2$, and we do so in both of the usual
ways of giving a circuit access to an oracle: the \emph{phase-query} model, in which a query
multiplies a basis state by a sign, and the \emph{standard-query} model, in which a query
writes its answer into a register (Definitions~\ref{def:FHk-rel} and~\ref{def:standard-fh}).  The
standard query is the oracle gate used in Shi's paper, and the phase query is the diagonal oracle of
the Fourier-growth literature.

All the separations below are witnessed by problems built from \emph{Forrelation}, which appeared in
Aaronson's Fourier Checking problem~\cite{aaronson2010bqp} and which Aaronson and
Ambainis~\cite{aaronson2014forrelationproblemoptimallyseparates} developed into the problem that
optimally separates quantum from classical query complexity.  Forrelation asks how strongly one Boolean function is correlated with
the Fourier transform of another, and its $K$-fold version asks the same of a chain of $K$ functions
alternating with Fourier transforms.  It is the natural problem here because that alternating chain
is exactly the pattern of Hadamard layers and oracle calls that the Fourier hierarchy counts.
The oracle functions are defined on $n$-bit strings, and we write $N=2^n$.

\begin{boxedstatement}{Theorem A (phase-query hierarchy; informal, Theorem~\ref{thm:all-level-separation})}
For every constant $k\ge 2$ there is an oracle $O$ such that
\[
\FH_k^{O}\ \subsetneq\ \FH_{k+1}^{O}
\]
in the phase-query Fourier hierarchy.  The separating problem is an OR of $(2k-1)$-fold
Forrelation instances.  It is decided at Fourier depth $k+1$ with $O_k(1)$ queries, a number
depending only on $k$, while every circuit of Fourier depth $k$ making at most $2^{n/(16k^2)}$
queries fails on some promised instance, whatever its size and whether or not it is uniform.
\end{boxedstatement}

The phase and standard access modes are equivalent in ordinary query complexity, but they behave
differently with respect to Fourier depth: a standard query leaves its answer in an ancilla, on which a
basis-preserving block may then branch, whereas a phase query only multiplies each branch by a sign
(Remark~\ref{rem:phase-vs-bit}).  The standard-query hierarchy therefore needs a lower bound of its own,
and that is our second main result.

\begin{boxedstatement}{Theorem B (standard-query hierarchy; informal, Theorem~\ref{thm:standard-query-adjacent})}
For every constant $k\ge 2$ there is an oracle $O$ such that
\[
\FH_{k,\mathrm{std}}^{O}\ \subsetneq\ \FH_{k+1,\mathrm{std}}^{O}
\]
in the standard-query hierarchy.  The separating problem is an OR of $(2k+1)$-fold
Forrelation instances: it is decided at standard Fourier depth $k+1$ with $O_k(1)$ queries, while
every standard-query circuit of Fourier depth $k$ making at most $N^{\eta_k}$ queries fails on some
promised instance, where $\eta_k=\tfrac1{2(2k-1)(2k+1)}$.
\end{boxedstatement}

The upper bounds are the same test in the two models: a control-qubit interference test comparing
the two halves of an odd-order Forrelation instance.  What differs is where the comparison is read
out.  A standard query writes the middle answer into a register, so the comparison becomes a
predicate in the computational basis; a phase query leaves it as a relative phase on the control
qubit, and one further Hadamard layer is needed to convert that phase into a measurable bit.  That
single layer is the gap Theorem~C measures.  It also fixes the two Forrelation orders.  Standard
depth $k$ already decides $(2k-1)$-fold Forrelation, the order that separates the phase hierarchy at
$k$, so the standard hierarchy has to be separated at the next odd order, which is $2k+1$.

The lower bounds differ as well.  For phase queries, a depth-$k$ circuit is
simulated exactly by a $(k-1)$-round
parallel-query algorithm, so the round-based Fourier growth bound of Girish, Sinha, Tal, and
Wu~\cite{girish2024power} rules out Fourier depth $k$.  For standard queries no comparably efficient
simulation is available, and we instead bound the Fourier growth directly, by factoring each
basis-preserving block through the leaves of its decision tree, where the entire queried path is
determined (Theorem~\ref{thm:standard-query-growth}).

Combining the phase-query lower bound with the standard-query upper bound gives a third separation.
At Fourier depth $k$, phase access decides $(2k-2)$-fold Forrelation but not $(2k-1)$-fold
Forrelation, whereas standard access decides the latter, since with a written answer bit the same
test needs only depth $k$.  The
two models therefore differ at a fixed depth, and not only in the level at which each hierarchy
separates.

\begin{boxedstatement}{Theorem C (phase versus standard queries; informal, Theorem~\ref{thm:phase-vs-standard})}
For every constant $k\ge 2$ there is an oracle $O$ with
$\FH_k^{O}\subsetneq\FH_{k,\mathrm{std}}^{O}$, separated by the OR of $(2k-1)$-fold Forrelation.
Standard queries are therefore strictly more powerful than phase queries at the same Fourier depth.
\end{boxedstatement}

Throughout, an oracle is an indexed family of Boolean functions, queried only at the input's own
length (Definition~\ref{def:FHk-rel}).  Proposition~\ref{rem:standard-convention-separations}
transfers Theorems~A, B and~C to a single language oracle under the usual convention.

Table~\ref{tab:fh-separations} summarizes the separations established in this paper.

\begin{table}[ht]
\centering
\footnotesize
\renewcommand{\arraystretch}{1.3}
\setlength{\tabcolsep}{4pt}
\resizebox{\textwidth}{!}{%
\begin{tabular}{@{}lllll@{}}
\hline
Levels & Separating problem & Solved at (queries) & Fails at (queries) & Reference \\
\hline
\multicolumn{5}{@{}l}{\emph{This paper}}\\
$\FH_k$ vs $\FH_{k+1}$ (phase), $k\ge2$ & OR of $(2k{-}1)$-fold Forrelation & $\FH_{k+1}$, $O_k(1)$  & $\FH_k$, $\le N^{c_k}$ & Thm~\ref{thm:all-level-separation}\\
\rowcolor[gray]{.9}$\FH_{k,\mathrm{std}}$ vs $\FH_{k+1,\mathrm{std}}$ (std), $k\ge2$ & OR of $(2k{+}1)$-fold Forrelation & $\FH_{k+1,\mathrm{std}}$, $O_k(1)$  & $\FH_{k,\mathrm{std}}$, $\le N^{\eta_k}$ & Thm~\ref{thm:standard-query-adjacent}\\
$\FH_k$ (phase) vs $\FH_{k,\mathrm{std}}$, $k\ge2$ & OR of $(2k{-}1)$-fold Forrelation & $\FH_{k,\mathrm{std}}$, $O_k(1)$  & $\FH_k$ (phase), $\le N^{c_k}$ & Thm~\ref{thm:phase-vs-standard}\\
\rowcolor[gray]{.9}$\bigcup_k\FH_k$ vs $\BQP$ & $K$-fold Forrelation, $K=\lceil\log_2 n\rceil$ & $\BQP$, $\poly(n)$ & every depth $\le\frac14\log_2 n$, $\poly(n)$ & Thm~\ref{thm:fh-vs-bqp}\\
\hline
\multicolumn{5}{@{}l}{\emph{Known, for comparison}}\\
$\FH_1$ vs $\FH_2$ (standard) & Simon's problem & $\FH_2$, $\poly(n)$  & $\FH_1=\BPP$ & \cite{365701,Shi_2005}\\
\rowcolor[gray]{.9}$\FH_2$ vs $\PH$ & Raz--Tal twofold Forrelation & $\FH_2$, $\poly(n)$ & $\PH$, quasipoly.\ size & \cite{buzet2026iqp}, Cor.~\ref{thm:fh2-not-in-ph}\\
\hline
\end{tabular}%
}
\caption[Fourier-depth separations in the Fourier hierarchy.]{Fourier-depth separations in the
Fourier hierarchy.  ``Solved at'' gives the Fourier depth and query count of the deciding circuit;
``Fails at'' gives the Fourier depth and the query budget up to which every circuit still fails on
some instance of the promise.  Here $c_k=\tfrac{1}{4(2k-1)(2k-2)}$ and
$\eta_k=\tfrac{1}{2(2k-1)(2k+1)}$, both at least $\tfrac1{16k^2}$, so the lower-level circuits fail
even with $2^{n/(16k^2)}$ queries; at the second phase-query level $c_2$ improves from $\tfrac1{24}$ to $\tfrac16$
(Remark~\ref{rem:fh2-sharpened}).  All rows are unconditional.  The explicit candidate ensemble of
Appendix~\ref{sec:appendix-sign}, whose $\FH_2$ lower bound is conditional, is not listed.}
\label{tab:fh-separations}
\end{table}

Theorem A relies on two structural facts about the hierarchy, which may be of independent
interest, together with Fourier-growth and distributional estimates from prior work.  The first fact is a simulation theorem:

\begin{quote}
\emph{Fourier depth bounds query adaptivity.}  Every $\FH_k$ circuit making $q$ phase queries is
simulated exactly by a quantum query algorithm with $k-1$ rounds of adaptivity and at most $q$
parallel queries per round (Lemma~\ref{lem:round-embedding}).  In particular, every $\FH_2$
computation is simulated exactly by a non-adaptive quantum query algorithm.
\end{quote}

The second fact concerns the upper bound.  Aaronson and
Ambainis~\cite{aaronson2014forrelationproblemoptimallyseparates} use a control-qubit interference
test to estimate the Forrelation value with $\lceil K/2\rceil$ queries; what we add is that at odd
order that test also has a fixed Fourier depth.

\begin{quote}
\emph{Odd-order Forrelation splits into two halves of equal depth.}  For odd $K=2k-1$, the
$K$-fold Forrelation expression splits at its middle oracle into two states each prepared by $k$
Hadamard layers, and a control-qubit interference test comparing them estimates the Forrelation
value using exactly one further Hadamard layer, the $(k+1)$st
(Proposition~\ref{prop:odd-interference-circuit}).
\end{quote}

The order has to be odd.  At even order one half of the expression is a layer deeper than the
other, so comparing them would give Fourier depth $k+2$ (Remark~\ref{rem:why-odd}).  The next lower
order, $2k-2$, is already decided at Fourier depth $k$ by the circuit of
Proposition~\ref{prop:even-forrelation}, which accepts on an inner-product parity of the two
measured registers.  It follows that $2k-1$ is the smallest Forrelation order for which this
construction separates depth $k$ from depth $k+1$, and it is the order at which the two facts combine.  Depth $k+1$ decides it, while depth $k$ is
confined to $k-1$ rounds, and $2(k-1)<2k-1$ is exactly the inequality under which the
Fourier-weight criterion of Bansal and Sinha~\cite{bansal2021k} rules those rounds out.  This is the
trade-off between rounds and parallel queries studied in~\cite{girish2024power}, reached here
through the $K$-fold Forrelation framework of~\cite{bansal2021k}.

The Fourier growth bounds and the analysis of the structured distributions are taken
from~\cite{girish2024power,bansal2021k}.  New here are the simulation of Fourier depth by query
adaptivity, the connection between odd Forrelation and the hierarchy through the control-qubit
interference test, the
decision-tree factorization that gives the corresponding growth bound for standard queries, and
the resulting answer to the oracle question in both models.  We also prove the imported growth bound
with a constant singly exponential in the number of rounds and the Fourier level
(Appendix~\ref{app:gstw-growth-constant}), which is what allows the Fourier depth to grow with the
input length in Corollary~\ref{cor:growing-k}; the standard-query growth bound has such a constant
from the outset, which gives the corresponding Corollary~\ref{cor:growing-k-std}.

\subsection{Technical overview}\label{subsec:technical-overview}

We prove Theorem~A (formally, Theorem~\ref{thm:all-level-separation}) by exhibiting a single problem
that is decided at Fourier depth $k+1$ and on which every circuit of Fourier depth $k$ fails.  Both bounds are
based on Forrelation, and they match because a single combinatorial parameter, its order, controls
both the depth of the upper-bound circuit and the Fourier-weight lower bound.  We describe the ideas informally here; the formal treatment is in
Section~\ref{sec:higher-level-conjecture}.

Oracles here are Boolean functions $F:\bits^n\to\pmone$ accessed by phase queries
$P_F\ket x=F(x)\ket x$, and an $\FH_k$ circuit has the form
\[
U_k\,(H^{\otimes m}\otimes I)\,U_{k-1}\cdots U_1\,(H^{\otimes m}\otimes I)\,U_0,
\]
each $U_j$ basis-preserving and possibly containing queries, followed by a computational-basis
measurement with a designated accepting set.  The lower bounds exploit this structure.

\subsubsection{The Forrelation problem}\label{subsubsec:overview-forrelation}

The \emph{Hadamard transform} of a function $f:\bits^n\to\mathbb R$ is
$\widetilde f(y)=2^{-n/2}\sum_{x}(-1)^{x\cdot y}f(x)$, which measures the correlation of $f$ with the
character $x\mapsto(-1)^{x\cdot y}$.  We use $\widetilde{\phantom{f}}$ for this normalization
throughout, and reserve $\widehat f$ for the ordinary Boolean Fourier coefficient, normalized as an
average rather than a sum (Definition~\ref{def:fourier-weight}); the two differ by a factor
$2^{n/2}$.  On a quantum computer this transform is exactly the $n$-qubit
Hadamard gate $H^{\otimes n}$ acting on amplitudes, which is how the Fourier transform enters the
Fourier hierarchy.

We now make Forrelation precise.  For $f,g:\bits^n\to\pmone$,
\[
\mathsf{forr}(f,g)=\frac{1}{2^{3n/2}}\sum_{x,y\in\bits^n}f(x)\,(-1)^{x\cdot y}\,g(y)\ \in\ [-1,1],
\]
that is, $\mathsf{forr}(f,g)=N^{-1}\langle g,\widetilde f\rangle$ for the Hadamard transform $\widetilde f$ above.  When $\mathsf{forr}(f,g)$ is
close to $\pm1$ the two functions are said to be ``forrelated''; when it is close to $0$ they are
weakly correlated.  A quantum
algorithm can estimate this quantity with one query to each function: prepare
$\ket u=H^{\otimes n}\ket{0^n}$, apply $P_f$, apply $H^{\otimes n}$, apply $P_g$, apply
$H^{\otimes n}$, and measure; the amplitude on $\ket{0^n}$ is exactly $\mathsf{forr}(f,g)$, so the outcome $0^n$
appears with probability $\mathsf{forr}(f,g)^2$, and the control-qubit interference test of
Aaronson and Ambainis estimates $\mathsf{forr}(f,g)$ itself.  A
classical algorithm, on the other hand, requires $2^{\Omega(n)}$ queries even to distinguish
$\abs{\mathsf{forr}}\approx1$ from $\mathsf{forr}\approx0$.

The generalization we need alternates $K$ oracles with $K+1$ Hadamard layers,
\[
\mathsf{forr}_K(F_0,\dots,F_{K-1})
=\bra{0^n}H^{\otimes n}P_{F_{K-1}}H^{\otimes n}\cdots
P_{F_1}H^{\otimes n}P_{F_0}H^{\otimes n}\ket{0^n}\ \in\ [-1,1].
\]
This \emph{$K$-fold Forrelation} is the amplitude of a computation that alternately applies an
oracle and takes a Fourier transform; it measures the correlation of $F_0,\dots,F_{K-1}$ under
\emph{repeated} Fourier transforms.  We use it for two reasons.  First, since it consists of
alternating Hadamard layers and oracle applications, its Fourier depth follows directly.
Second, Bansal and Sinha~\cite{bansal2021k}, building on Raz and
Tal~\cite{raz2019oracle}, showed that no bounded function of low \emph{Fourier weight} can
distinguish the structured Forrelation distribution from the uniform distribution.  As we will show
that low Fourier depth
implies low Fourier weight, this is exactly the form of hardness that we need.

\subsubsection{Fourier depth as query adaptivity}\label{subsubsec:overview-simulation}

The lower bound uses the following structural property of the hierarchy: an $\FH_k$ circuit may
create superposition only at its $k$ Hadamard layers.  Consider one \emph{basis-preserving block}
between two consecutive layers.  Every gate in it is basis-preserving, so a basis
state entering the block remains a single basis state throughout, and a phase query only
multiplies it by a sign.  Thus within a block each computational-basis branch follows a
deterministic sequence of basis states and accumulates only oracle-dependent phases.  This has three
consequences, which together allow a block to be replaced by one round of parallel queries.
\begin{enumerate}[leftmargin=2em]
\item \emph{The query points are determined in advance.}  On the branch entering the block in the
state $\ket z$, the register passes through a fixed sequence of basis states, since a phase query
does not change the register.  Hence the address queried at each step is a function of $z$ alone, and
can be computed without the oracle.
\item \emph{The block acts by a product of signs.}  Its effect on the branch $\ket z$ is to
multiply it by the product of the oracle values at these precomputed addresses, and then to apply a
fixed, oracle-independent permutation and phase.
\item \emph{One parallel batch suffices.}  A bounded-round query algorithm can therefore reproduce
the block in a single round: compute all the addresses of the branch $z$ into fresh registers,
query them in one parallel batch, uncompute the addresses, and apply the fixed permutation.
\end{enumerate}
Queries before the first Hadamard layer act on the single state $\ket{0^n}$ and contribute only a
global phase, and queries after the last layer commute with the measurement.  The queries that matter
therefore lie in the $k-1$ interior blocks, each of which is replaced by one round.  This is the
simulation lemma of Section~\ref{subsec:main-results} (Lemma~\ref{lem:round-embedding}): the $k-1$
interior blocks become $k-1$ \emph{rounds} of adaptivity, each of at most $q$ parallel queries, and
the acceptance probability is reproduced exactly.

Two features of this simulation will be relevant later.  First, it holds in one direction only: an
arbitrary unitary between two rounds is not a single Hadamard layer, so a bounded-round algorithm
need not be an $\FH_k$ circuit (Remark~\ref{rem:strict-embedding}).  Second, it is specific to phase
queries.  A standard query records $f(x)$ in an ancilla that remains in superposition, so
a block can choose its later queries according to earlier answers, and the argument above no longer
reproduces a block by a single round of comparably many parallel queries
(Remark~\ref{rem:erasure-obstruction}; the simulation of
Proposition~\ref{prop:three-round-bit-embedding} uses exponentially many parallel queries).  The
standard-query lower bound therefore requires a different argument, which we describe in
Section~\ref{subsubsec:overview-standard}.

For the lower bound we use the Fourier growth theorem of~\cite{girish2024power}, which bounds the
Fourier weight of a bounded-round algorithm at every level.  Writing $L_{1,\ell}$ for the sum of the
absolute values of the level-$\ell$ Fourier coefficients, an $r$-round, $t$-parallel-query algorithm
on $M$ oracle bits satisfies
\[
L_{1,\ell}\ \lesssim\ t^{\ell}\,M^{\frac{\ell}{2}\left(1-\frac{1}{2r}\right)},
\]
and the same bound continues to hold after fixing any subset of the input bits.  Taking $r=k-1$, the
simulation turns this into a statement about the entire class: every $\FH_k$ circuit has level-$\ell$
Fourier weight at most $q^{\ell}M^{\frac{\ell}{2}(1-\frac{1}{2k-2})}$, up to a constant depending
on $k$ and $\ell$ (Corollary~\ref{cor:FHk-growth}).

\subsubsection{Odd Forrelation and the interference test}\label{subsubsec:overview-odd}

We now compare this weight bound with Forrelation.  The result of~\cite{bansal2021k} has two parts.
First, under the uniform distribution $\abs{\mathsf{forr}_K}$ is rarely large, while under their
\emph{structured distribution} $\mathcal F_K$ we have $\mathsf{forr}_K\ge\delta$ with probability
$\Omega(\delta)$, where $\delta=2^{-5K}$.  Second, \emph{any} bounded function $f$ distinguishes
$\mathcal F_K$ from the uniform distribution $\mathcal U$ only in proportion to its Fourier weight,
\[
\bigl|\E_{\mathcal F_K}[f]-\E_{\mathcal U}[f]\bigr|
\ \le\
\sum_{\ell=K}^{K(K-1)}
N^{-\frac{\ell}{2}\left(1-\frac1K\right)}\,(8K)^{14\ell}\,
\sup_\mu \mathsf{wt}^{\mu}_{\ell}(f),
\]
where $\mathsf{wt}^{\mu}_{\ell}$ is the level-$\ell$ weight measured under the product measure with
biases $\mu$, and the supremum is over all bias vectors with entries in $[-\tfrac12,\tfrac12]$.  We
call this bound the
\emph{Bansal--Sinha criterion}.

The lower bound now reduces to a comparison of exponents.  The instances used below have
$M=O_k(N)$ oracle bits in all, so both sides can be read as powers of $N$.  Fourier depth $k$ gives
Fourier weight growing like $N^{\frac{\ell}{2}(1-\frac{1}{2k-2})}$, while the criterion bites as long
as the weight stays below $N^{\frac{\ell}{2}(1-\frac1{K})}$.  The first exponent is the smaller
exactly when $2k-2<K$, that is, when twice the number of rounds available at depth $k$ is less than
the Forrelation order; the smallest odd $K$ for which this holds is $K=2k-1$.  At that order the two
exponents differ by $\ell/(2(2k-1)(2k-2))$ at every level $\ell$, so a depth-$k$ circuit cannot
distinguish the two distributions in the stated range of query counts.  As we explain next, odd
order is also what the upper bound requires.

We now turn to the upper bound at depth $k+1$.  The $K$-fold Forrelation expression of
Section~\ref{subsubsec:overview-forrelation} consists of $K+1=2k$ Hadamard layers, so preparing its
amplitude directly would use $2k$ layers, and we are allowed $k+1$.  Instead we \emph{split the expression at
its middle oracle} $F_{k-1}$.  The left half prepares
\[
\ket{\psi_L}=H^{\otimes n}P_{F_{k-2}}\cdots P_{F_0}H^{\otimes n}\ket{0^n}
\]
with exactly $k$ Hadamard layers, the right half prepares an analogous $\ket{\psi_R}$ with $k$ layers,
and a direct expansion gives
\[
\mathsf{forr}_K=\bra{\psi_L}P_{F_{k-1}}\ket{\psi_R}.
\]
Since $K$ is odd, the two halves have \emph{equal} depth $k$; at even order one half would be a
layer deeper.  We prepare them on the two branches of a control qubit, so that they share their $k$
Hadamard layers.  The oracle of Definition~\ref{def:FHk-rel} is indexed, so a single phase query can
choose which function it applies according to that control qubit, and the distinguished index
carries the constant-zero function, so the middle query can be made to act on one branch and not the
other.  A single further Hadamard, on the control qubit, then converts the resulting relative phase
into an acceptance probability of exactly $\tfrac12(1+\mathsf{forr}_K)$.  The circuit has Fourier
depth $k+1$ and makes $k$ queries (Proposition~\ref{prop:odd-interference-circuit}); a SWAP test
comparing the two halves also has depth $k+1$, but estimates $\mathsf{forr}_K^2$ and uses $2k-1$
queries (Proposition~\ref{prop:odd-swap-circuit}).  Thus $(2k-1)$-fold
Forrelation is decided at Fourier depth $k+1$, while by the simulation lemma it is subject to the
round-based lower bound at $k-1$ rounds.  The same Forrelation instance is used in both bounds.

\subsubsection{From a distributional gap to an oracle}\label{subsubsec:overview-oracle}

The two bounds so far compare \emph{distributions}, namely $\mathcal F_K$ against the uniform
distribution, whereas an oracle separation requires a \emph{worst-case} promise problem that depth
$k+1$ decides and depth $k$ does not.  We use two standard steps to pass from the distributional
statement to a worst-case one.

The first step is amplification by an OR.  The structured distribution gives a large Forrelation
value only with probability $\Omega(\delta)$, which is too small for a worst-case statement, so we
take an OR of $m=\delta^{-1}$ independent copies: a \emph{yes} instance has some copy with
$\mathsf{forr}_K\ge\delta$, and a \emph{no} instance has $\abs{\mathsf{forr}_K}\le\delta/2$ for every
copy.  The two product distributions are then supported almost entirely on the two sides of the
promise, and depth $k+1$ still decides every instance, by running the interference tests for the $m$ copies
in parallel and thresholding.

The second step is a hybrid argument followed by a diagonalization.  A depth-$k$ circuit that solved
the promise on every instance would distinguish the two product distributions with constant
advantage, and a hybrid argument then isolates a single copy accounting for that advantage, with the
remaining copies \emph{fixed}.  Fixing the other copies is permissible because the growth bound is
stable under restrictions, and the gap between the exponents gives a contradiction
(Theorem~\ref{thm:lower-worst-case}).  A standard diagonalization over $\FH_k$ machines then
combines the hard instances at the various input lengths into a single oracle $O$, together with a
unary language in $\FH_{k+1}^{O}\setminus\FH_k^{O}$ (Theorem~\ref{thm:all-level-separation}).  For
$k=2$ this says, via the simulation, that the third Hadamard layer adds one round of
adaptivity (Corollary~\ref{cor:fh2-fh3-separation}).  The argument bounds the advantage of
\emph{every} depth-$k$ circuit, not only that of a hypothetical decider: in that range of query
counts depth $k$ has distinguishing advantage $o_k(1)$, while depth $k+1$ decides with error
$2^{-n}$ (Corollary~\ref{cor:error-robust}).

\subsubsection{The standard-query separation}\label{subsubsec:overview-standard}

Theorem~B concerns the standard-query model, in which a query writes its answer bit,
$O_f\ket y\ket b=\ket y\ket{b\oplus f(y)}$.  The upper bound carries over: the same interference
test, run at standard depth $k+1$, decides $(2k+1)$-fold Forrelation, because the middle answer is
read from a register rather than from a phase.  The lower bound is where the
two models differ.  A phase query multiplies a basis state by a sign and does not change the
register, which is what allowed us to replace a block by a single parallel round.  A standard query
instead writes $f(y)$ into an ancilla, and later gates may read it and branch on it, so a block is
a coherent \emph{decision tree} whose query locations depend on earlier answers.  A single round of
parallel queries, followed by a fixed unitary, no longer reproduces such a block, so long as the
round is not much wider than the block itself.  The qualification matters: a round of $M$ queries
reads the whole truth table and can then reproduce any block, so what fails is a simulation whose
width is comparable to the block's own query count.  The round simulation is therefore unavailable
in the form used above.

We instead bound the Fourier growth of a standard-query circuit directly, by factoring each block
through the \emph{leaves} of its decision tree.  Conditioned on a leaf, the output basis map and the
phase of the block are fixed, and whether that leaf is reached is a product of affine literals in the
answers along its path.  Summing the leaves back together would destroy the norm control that the
argument needs, since different branches query different variables.  We therefore record the leaf in
an auxiliary register instead, which factors the block into an isometry followed by a contraction and
forces the two neighbouring factors of the operator product to agree on the branch.  Counting how
many distinct oracle bits a level-$\ell$ character can involve across that product then gives a
restriction-stable growth bound
(Theorem~\ref{thm:standard-query-growth}): the acceptance probability $f$ of a standard-query
$\FH_k$ circuit making $q$ queries satisfies, for every restriction $f_\rho$ of it,
\[
L_{1,\ell}(f_\rho)\ \le\ (q+1)^{\ell}\,M^{\frac{\ell}{2}\left(1-\frac1{2k-1}\right)},
\]
up to a factor $2^{O(k\ell)}$.  This exponent is weaker than the phase-query exponent
$\tfrac\ell2(1-\tfrac1{2k-2})$, and the reason is a count of factors.  The operator product here has
$2k-1$ of them: the $k-1$ interior blocks, their adjoints, and the acceptance operator of the final
block.  The round simulation produced only $2k-2$, and each further factor adds one to the
denominator.  We therefore run the argument at the larger order $K=2k+1$ in place of $2k-1$, since
the Bansal--Sinha criterion needs the growth exponent to fall below $\tfrac\ell2(1-\tfrac1K)$, and
$1-\tfrac1{2k-1}<1-\tfrac1K$ holds exactly when $K>2k-1$.  Given this bound the remainder of the argument is
as before: an OR amplifies the distributional gap, a hybrid step reduces to a single instance, and a
diagonalization gives the oracle of Theorem~B (Theorem~\ref{thm:standard-query-adjacent}).

\subsubsection{Further results and connections}\label{subsubsec:overview-further}

\paragraph{Bounded Fourier depth versus $\BQP$.}
A single problem separates the entire hierarchy from unrestricted quantum computation.  Forrelation
with order growing as $\Theta(\log n)$ is decided in $\BQP$ with polynomially many
queries, but at no Fourier depth up to $\tfrac14\log_2 n$, so
$(\bigcup_k\FH_k)^{O}\subsetneq\BQP^{O}$ relative to an oracle
(Theorem~\ref{thm:fh-vs-bqp}).  Together with the level separations this gives
$\FH_2^{O}\subsetneq\cdots\subsetneq(\bigcup_k\FH_k)^{O}\subsetneq\BQP^{O}$, and by tuning
the order one obtains a $\BQP$ problem of any prescribed Fourier depth up to $c\log n$
(Corollary~\ref{cor:prescribed-fdepth}).  Both statements hold in the standard-query model as well:
the growing-depth level separation is Corollary~\ref{cor:growing-k-std}, and the separation from
$\BQP$ is Remark~\ref{rem:tower}.  The proof is the round-based lower bound applied at a
growing Forrelation order, using the form of the growth theorem proved in
Appendix~\ref{app:gstw-growth-constant}, whose constant is singly exponential in the
number of rounds; the ceiling $c\log n$ comes from the deciding circuit, whose $2^{\Theta(k)}$
repetitions must remain polynomial in size.  In the
opposite direction, relative to a suitably encoded $\mathrm{PSPACE}$-complete oracle every level
from the second on collapses to $\mathrm{PSPACE}$, in both query models
(Proposition~\ref{prop:pspace-collapse}), so the adjacent-level question does not relativize
(Corollary~\ref{cor:nonrelativizing}): resolving Shi's conjecture without an oracle requires
non-relativizing techniques.

\paragraph{The hierarchy and the polynomial hierarchy.}
The separations above are internal to the hierarchy.  A result of Buzet and
Chailloux~\cite{buzet2026iqp} locates it against a classical class, and we record it as
Corollary~\ref{thm:fh2-not-in-ph} without claiming it as new.  In Shi's standard-query model the
first level is exactly $\BPP$ relative to every oracle (Lemma~\ref{lem:fh1-std}), hence always
inside $\PH$, whereas their IQP computation for twofold Forrelation puts the second level outside
$\PH$ relative to an oracle.  One further Hadamard layer therefore takes the hierarchy outside the
polynomial hierarchy, and by the simulation lemma the algorithm that achieves this is non-adaptive:
one batch of parallel quantum queries, followed by an oracle-independent unitary and a measurement,
suffices.  Remark~\ref{rem:fh2-ph-alternative} records an alternative route at a weaker gap, which
uses only the depth-$k$ circuit for even-order Forrelation of
Proposition~\ref{prop:even-forrelation}: on the Raz--Tal distribution~\cite{raz2019oracle} its
acceptance probability has a mean gap of $\Omega(1/n^2)$, far above the
$\mathrm{polylog}(N)/\sqrt N$ advantage to which they bound every quasipolynomial-size
constant-depth circuit.

\paragraph{On the nature of the oracle.}
The hard instances above are the correlated-Gaussian structured distributions
of~\cite{raz2019oracle,bansal2021k}, which are designed for the Fourier-weight analysis: their
moments are not available in closed form, and they are not known to be samplable together with a
certificate of the promise.  Since the adjacent-level question does not relativize
(Corollary~\ref{cor:nonrelativizing}), no oracle construction will settle Shi's conjecture, and an
explicitly defined hard distribution is a first step toward instances that could be evaluated
without an oracle.
Appendix~\ref{sec:appendix-sign} proposes a candidate at the second level: a threefold Forrelation
instance whose middle oracle is the product of the sign patterns of the Hadamard transforms of two
random oracles,
so that its moments factor exactly into Fourier coefficients of one explicit function.  Its analysis reduces to a single quantity: one Fourier
coefficient of a product of threshold functions of pairwise orthogonal linear forms, each of them
regular to the greatest possible degree, every coefficient having the same modulus, $N^{-1/2}$ of
the $\ell_2$ norm.  We
prove the conjectured rate of decay for that coefficient, with a weaker prefactor, in one of the two
parameter ranges (Theorem~\ref{thm:revealed-decay}), and conjecture it in the other
(Conjecture~\ref{conj:sharp-decay}); the conjecture would make the ensemble an explicitly defined hard distribution for the
oracle separation $\FH_2\subsetneq\FH_3$.  This is a program rather than a theorem: the class separations proved in the
main text are unconditional and do not depend on it.

\subsection{Related work}\label{sec:related-work}

\paragraph{The Fourier hierarchy.}
Shi introduced the hierarchy as part of a program to measure a computation by the number of
basis-changing gates it uses~\cite{Shi_2005,shi2002toffolicontrollednotneedlittle}.  The paper that
defines it~\cite{Shi_2005} also proves a tradeoff $T\cdot\ell=\Omega(n)$ between queries and layers
of basis-changing gates for Grover search, states the hierarchy conjecture
$\FH_k\subsetneq\FH_{k+1}$, and poses the oracle question, noting that Simon's
problem~\cite{365701} separates the first two levels and proposing iterated Simon and recursive
Fourier sampling~\cite{bernstein1997quantum} as candidates for the higher levels.

That $\FH_2$ already contains factoring follows from Kitaev's phase
estimation~\cite{kitaev1995quantum}, an observation Shi attributes to Mosca.  The parallelized
approximate quantum Fourier transform of Cleve and Watrous~\cite{cleve2000fast} is the standard
route to implementations of Shor's algorithm~\cite{shor1999polynomial,365700} with few Hadamard
layers.  Subsequent work on the low levels includes verification of circuits with few basis
changes~\cite{Demarie_2018} and Merlin--Arthur protocols and hardness of simulation statements for
the second level~\cite{Morimae_2018}; related restricted models include commuting circuits of IQP
type~\cite{shepherd2009temporally,bremner2011classical,shepherd2010quantum,bremner2016average}
and quantum Fourier sampling~\cite{fefferman2015powerquantumfouriersampling}.

Buzet
and Chailloux~\cite{buzet2026iqp} recently showed that the commuting model decides $2$-Forrelation, so that
$(\BPP^{\mathrm{IQP}})^{O}$ is not contained in $\PH^{O}$, thereby answering a question of
Girish~\cite{girish2025fourierspectrumnoisyquantum}; they also prove Fourier growth bounds for IQP
circuits in terms of the size of the accepting set.  Since an IQP computation is an $\FH_2$ circuit,
their separation gives Corollary~\ref{thm:fh2-not-in-ph}, which we state for the hierarchy and do not
claim as new.  All of these works concern the power of the low levels, and we are not aware of any
prior separation between adjacent levels beyond $\FH_1$ versus $\FH_2$.

This hierarchy should not be confused with the higher-order Fourier analysis of the Clifford hierarchy of Bu, Gu, and
Jaffe~\cite{bu2025quantumhigherorderfourier}, which classifies gates by the degree of their phase
polynomials rather than a computation by its number of Hadamard layers.

\paragraph{Rounds of adaptivity and parallel queries.}
Our separation theorem builds on the study of depth versus query trade-offs in the query model.
Girish, Sinha, Tal, and Wu~\cite{girish2024power} proved that $2r$-fold Forrelation can be solved by
$r$ rounds with one query per round, while any algorithm with $r-1$ rounds requires
$\widetilde\Omega(N^{1/r^2})$ parallel queries.  Their main technical component is a bound on the level-$\ell$ Fourier weights of algorithms with $r$
rounds of $t$ parallel queries, and it continues to hold after fixing input bits; we use it as
Theorem~\ref{thm:gstw-growth}.
Appendix~\ref{app:gstw-growth-constant} evaluates the assignment count in that proof in closed form
by M\"obius inversion, which gives the theorem with a constant singly exponential in $r\ell$ and
uniform over the Fourier levels.
For the standard-query setting we instead prove a Fourier growth theorem for coherent classical
decision trees (Section~\ref{app:standard-growth}).  The closest precedent is the work of Girish,
Tal, and Wu~\cite{girish2021parity}, who bound the level-$\ell$ Fourier growth of \emph{classical}
parity decision trees and, using the same Bansal--Sinha criterion, obtain the randomized parity-tree
lower bound $\widetilde\Omega(N^{1-1/k})$ for $k$-fold Forrelation.  Our basis-preserving blocks are
coherent rather
than classical, and we bound their growth by the operator-norm argument of~\cite{girish2024power}
rather than by their random-walk approach.  Carolan, Gilani, and Vempati~\cite{carolan2025parallel}
study parallel query complexity for total functions, exhibiting unbounded parallel quantum advantages
and a method for proving parallel quantum lower bounds.
Lemma~\ref{lem:round-embedding} relates this line of work to the Fourier hierarchy: Fourier depth $k$
is simulated exactly by $k-1$ rounds, so lower bounds for rounds give lower bounds for depth in the
phase-query model.

\paragraph{Forrelation and its lower bounds.}
Twofold Forrelation appeared first in Aaronson's Fourier Checking problem~\cite{aaronson2010bqp},
and was developed and optimized by Aaronson and Ambainis as the problem that optimally separates
quantum from classical query complexity~\cite{aaronson2014forrelationproblemoptimallyseparates};
its $k$-fold version is a standard test problem for such separations.  Raz and Tal's oracle
separation of $\BQP$ and $\PH$~\cite{raz2019oracle} introduced the Fourier weight method against
structured distributions, and Tal~\cite{tal2020towards} developed it further.  Sherstov,
Storozhenko, and Wu~\cite{sherstov2023optimal} proved the essentially tight level-$\ell$ bound
$c^{\ell}\sqrt{\binom d\ell(1+\log N)^{\ell-1}}$ for depth-$d$ decision trees, settling a conjecture
of Tal, and obtained the optimal quantum--classical query separation for a Gaussian variant of
$k$-fold Forrelation.  Bansal and Sinha~\cite{bansal2021k} proved the optimal randomized lower bound
$\widetilde\Omega(N^{1-1/k})$ for $k$-Forrelation itself, for every $k\ge2$, odd or even, using the
criterion that we import as Theorems~\ref{thm:bs-distributions}--\ref{thm:bs-transfer}.

On the classical simulation side, Aaronson and Ambainis~\cite{aaronson2014forrelationproblemoptimallyseparates}
stated an $O(N^{1-\frac1{2q}})$-query classical simulation for every $q$-query quantum algorithm;
Bravyi, Gosset, Grier, and Schaeffer~\cite{bravyi2021classical} subsequently repaired a gap in
the proof and established the general result.  Girish and
Servedio~\cite{girish2025forrelationextremallyhard} show that the \emph{extremal} Forrelation
problem, distinguishing $\mathsf{forr}=1$ from $\mathsf{forr}=-1$, requires
$\widetilde\Omega(2^{n/4})$ classical randomized queries, by a connection with bent functions over
$\Ftwo$ rather than by the analytic Gaussian rounding argument.  Their proof in fact bounds the
advantage of a $D$-query classical algorithm by $O(nD^2\,2^{-n/2})$, so the problem remains hard at
vanishing advantage; this is the regime in which we state our own separation
(Corollary~\ref{cor:error-robust}).  We use these lower bounds as black boxes: none of their
analyses is modified here, and what we add is a way to relate their parameters to Fourier depth,
through the decomposition of odd-order Forrelation.

\paragraph{Hierarchies of hybrid quantum depth.}
A distinct line of work separates hybrid models in which shallow quantum circuits are
interleaved with polynomial-time classical computation: Coudron and
Menda~\cite{coudron2020computations} and Chia, Chung, and Lai~\cite{chia2020need} proved that
quantum depth $d$ and $2d+1$ differ relative to oracles, with subsequent optimal $d$ versus
$d+1$ separations by Arora, Gheorghiu, and Singh~\cite{arora2022oracle} and by Hasegawa and Le
Gall~\cite{hasegawa2022optimal}.  As discussed in~\cite{girish2024power}, these models are
incomparable to bounded-round parallel-query algorithms: they restrict parallel queries but allow
measurement and adaptive classical control.  They are likewise incomparable to the Fourier
hierarchy, in which all computation between layers is coherent and basis-preserving.  The
two hierarchies measure different notions of quantum depth: layers of coherent interference in our
case, and alternations of quantum and classical processing in theirs.
Theorem~\ref{thm:fh-vs-bqp} is the analogue for Fourier depth of the separations between constant
and growing depth in those models: bounded Fourier depth is strictly weaker than $\BQP$
relative to an oracle.

\paragraph{Restricted quantum models and Fourier growth.}
A related line of work uses Fourier growth to separate restricted \emph{quantum} models from one
another, rather than quantum from classical.  Closest to our setting is
Girish~\cite{girish2025fourierspectrumnoisyquantum}, who proves Fourier-growth bounds for two noisy
models: $\mathrm{DQC}_k$, in which $k$ qubits are clean and the rest maximally mixed, and
$\tfrac12\BQP$, in which the input is maximally mixed but the algorithm is given the initial
state at the end of the computation.  That work shows that threefold Forrelation is solvable in
$\BQP$ but not in $\tfrac12\BQP$, and that twofold Forrelation is not in
$\mathrm{DQC}_1$, resolving conjectures of Jacobs and Mehraban~\cite{JM24}; models intermediate
between $\mathrm{DQC}_1$ and $\BQP$ were studied earlier by Aaronson, Bouland, Kuperberg, and
Mehraban~\cite{ABKM16}.  That threefold-Forrelation lower bound and ours are analogous: both use a
Forrelation problem to separate a restricted quantum model from the unrestricted one, and both go
through the Bansal--Sinha Fourier weight criterion.

The restricted resources are different: mixedness in those models, and in ours the Fourier depth,
which bounds the number of adaptive rounds.  These problems do not
separate $\FH_2$ from $\tfrac12\BQP$: the second Fourier level, which is non-adaptive by
Lemma~\ref{lem:round-embedding}, and $\tfrac12\BQP$ both decide twofold Forrelation and
neither decides threefold Forrelation, so no containment between them follows in either direction.
For $\mathrm{DQC}_1$ the comparison goes one way at the level of query complexity: the second
Fourier level decides twofold Forrelation, by the commuting circuits above, whereas Girish shows that
every $\mathrm{DQC}_1$ algorithm for it needs $\widetilde\Omega(N^{1/6})$ queries.  Whether $\mathrm{DQC}_k$ or $\tfrac12\BQP$ is in turn
contained in a fixed Fourier level, and more generally whether some explicit problem separates
mixedness from Fourier depth, remains open.

\paragraph{Pseudorandomness.}
The sign ensemble of Appendix~\ref{sec:appendix-sign} is proposed as a pseudorandom object against
distinguishers of bounded Fourier depth, analogous to the classical study of distributions that fool
tests of low degree, bounded-depth circuits, and branching
programs~\cite{nisan1994hardness,vadhan2012pseudorandomness,odonnell2021analysisbooleanfunctions}.
It is related, though not by definition, to pseudorandom quantum states and
unitaries~\cite{ji2018pseudorandom,krestchmer_pru,metger2024simpleconstructionslineardepthtdesigns,
ma2025constructrandomunitaries,ananth2024pseudorandomnessinverselesshaarrandom}: there the
distinguisher is an arbitrary efficient quantum algorithm, whereas here it is restricted to have
bounded Fourier depth, and the pseudorandom object is a distribution over classical oracles.

\subsection{Organization}\label{subsec:organization}

The remainder of the paper is organized as follows.  Section~\ref{sec:prelim} fixes the model, the
analytic tools, and the results of previous work that we use.
Section~\ref{sec:higher-level-conjecture} proves Theorem~A, and
Section~\ref{sec:standard-hierarchy} proves Theorems~B and~C.  Section~\ref{sec:landscape} locates
the hierarchy relative to the polynomial hierarchy and to $\BQP$, and shows that the adjacent-level
question does not relativize.  Section~\ref{sec:open-problems} presents the open problems.
Theorem~A is proved in Sections~\ref{subsec:odd-forrelation}--\ref{subsec:separation-theorem}, and
Theorems~B and~C in Sections~\ref{subsec:standard-leaf-factorization}--\ref{subsec:standard-separation},
which build on it; Sections~\ref{subsec:even-forrelation}, \ref{subsec:fh2-classical}
and~\ref{subsec:pointer-chasing} are not used in those proofs and can be skipped on a first reading.
Appendix~\ref{sec:appendix-sign} proposes an explicit candidate for the hard distribution at the
second level, Appendix~\ref{app:gstw-growth-constant} proves the Fourier growth bound that we
import with a single-exponential constant,
and Appendix~\ref{app:nonadaptive-proof} gives a self-contained proof of the non-adaptive
Fourier-growth bound used at the second level.

\section{Preliminaries}\label{sec:prelim}

Throughout, $\bits^n$ denotes the Boolean cube, identified with the vector space $\Ftwo^n$; for
$x,y \in \bits^n$ the dot product is $x \cdot y = \sum_{i=1}^n x_i y_i \pmod 2$.  We write
$N=2^n$.  All circuit families are uniform unless stated otherwise.  We use the computational
basis $\{\ket{x}:x \in \bits^n\}$ on $n$ qubits.  The $n$-qubit Hadamard transform is denoted
$H^{\otimes n}$ and satisfies
\begin{equation}\label{eq:hadamard-action}
H^{\otimes n}\ket{x}
=
2^{-n/2}\sum_{y \in \bits^n} (-1)^{x \cdot y}\ket{y}.
\end{equation}
Equation~\eqref{eq:hadamard-action} is the identity underlying all the Forrelation expressions in
this paper.

\subsection{The Fourier hierarchy}

\begin{definition}[Phase oracle]
If $F:\bits^n \to \pmone$, the corresponding phase oracle is the diagonal unitary
\begin{equation}\label{eq:phase-oracle-def}
P_F\ket{x}=F(x)\ket{x}.
\end{equation}
\end{definition}

\begin{remark}[Phase queries versus standard queries]\label{rem:phase-vs-bit}
For a Boolean function $f:\bits^n\to\bits$ with sign form $F=(-1)^f$, the standard query
oracle is the basis-preserving unitary $O_f\ket{x}\ket{b}=\ket{x}\ket{b\oplus f(x)}$.  Under the
indexed convention of Definition~\ref{def:FHk-rel}, where the distinguished index gives a phase
reference, the two access modes are equivalent in ordinary query complexity; a single unindexed
phase oracle is weaker, since $F$ and $-F$ then differ only by a global phase.  They are in any case
\emph{not} equivalent with respect to Fourier depth.  In one direction, a phase query can be
implemented by one standard query without increasing the depth, provided that at least one Hadamard
layer is available to prepare the target.  Prepare a single ancilla in $\ket{-}$ at the first global
Hadamard layer and use it as the target of every standard query, so that the answer is obtained by
phase kickback; route it off the designated wires at each later layer by the basis-preserving SWAP
gates of Remark~\ref{rem:register-routing}, so that it remains in $\ket{-}$ throughout.  Phase queries made before the first
layer act on a computational basis state and contribute only a global phase, and phase queries made
after the last layer commute with the measurement, so both can be deleted without changing the
acceptance probability (Step~1 in the proof of Lemma~\ref{lem:round-embedding}); the target is
therefore needed only inside the interior blocks.  Hence for every $k\ge1$, every circuit in the
phase-query $\FH_k$ is also a circuit of the same Fourier depth in the standard-query $\FH_k$.

At level $0$, which has no Hadamard layer, the two models differ: a phase query acts only by a global phase on a
computational-basis branch, so a phase-query $\FH_0$ computation obtains no information from $O$,
whereas a standard query writes its answer bit, so a standard-query $\FH_0$ computation is an
ordinary classical computation with oracle access to $O$.

In the other direction, converting phase
access back into a written bit requires interference on the target, and this increases the depth.
Moreover, with standard queries the register contents inside a basis-preserving block depend on the
oracle, so such a block can choose its later queries according to earlier answers, which a
phase-query block cannot (Lemma~\ref{lem:round-embedding}).
\end{remark}

\begin{definition}[Basis-preserving unitary]
A unitary $U$ is \emph{basis-preserving} if for every computational basis vector $\ket{z}$ there
is a basis vector $\ket{\pi(z)}$ and a phase $\lambda(z)$ with $\abs{\lambda(z)}=1$ such that
\begin{equation}\label{eq:basis-preserving-def}
U\ket{z}=\lambda(z)\ket{\pi(z)}.
\end{equation}
Equivalently, $U$ is a permutation matrix times a diagonal unitary in the computational basis.
\end{definition}

In other words, a basis-preserving unitary never creates superposition from a computational basis
state; it only permutes basis states and multiplies them by phases.  This is the property that we use
throughout to analyze circuits of low Fourier depth.

\begin{definition}[Relativized Fourier hierarchy]\label{def:FHk-rel}
Let $O=\{O_n\}_{n\ge 1}$ be an oracle family, where each $O_n=(f_{n,a})_{a\in A_n}$ is a tuple of
Boolean functions $f_{n,a}:\bits^n\to\bits$ indexed by a fixed set $A_n=\bits^{s(n)}$ with
$s(n)=O(\log n)$.  A distinguished index $0^{s(n)}\in A_n$ carries the constant-zero function, and
any index not assigned a hard function in a construction is likewise the constant-zero function.  The
functions are exposed through a single indexed oracle: the phase query is
$P_{O_n}\ket{a,y}=(-1)^{f_{n,a}(y)}\ket{a,y}$ and the standard query is
$O_n\ket{a,y}\ket b=\ket{a,y}\ket{b\oplus f_{n,a}(y)}$, so a query may select the function index $a$
coherently.  The phase and standard hierarchies use the same underlying functions, with sign form
$F_{n,a}=(-1)^{f_{n,a}}$. For $k \ge 0$, the class $\FH_k^O$ consists of promise problems
solvable with bounded error by polynomial-size uniform quantum circuits with phase-query access
to $O$ and at most $k$ global Hadamard layers, with only basis-preserving computation between
consecutive Hadamard layers.  Access is length-preserving: on an input of length $n$ the circuit
queries only the length-$n$ oracle $O_n$.  The standard-query class $\FH_{k,\mathrm{std}}^O$ of
Definition~\ref{def:standard-fh} is defined under the same convention.
\end{definition}

This indexed, length-preserving convention is the one in which the separations are proved.  It is
not the only convention in use, and Proposition~\ref{rem:standard-convention-separations} shows that
the separations also hold for a single language oracle queried at polynomially related lengths.  A
query to the distinguished index does nothing, since its phase is $+1$; several constructions below
use it to pad a batch of parallel queries, or to make a query conditional on a control bit
(Remark~\ref{rem:phase-vs-bit}, Definition~\ref{def:round-model}).

Throughout, bounded error means worst-case error at most $\tfrac13$.  Since repeated runs can share
the same Hadamard layers, any fixed error below $\tfrac12$ is reduced below $\tfrac13$ without
increasing the Fourier depth.

\begin{definition}[$\FH_k$ circuits and acceptance]\label{def:FH2FH3}
Fix an oracle family. A circuit has \emph{exact Fourier depth} $j$ if it has the form
\[
U_j\,(H^{\otimes m}\otimes I)\,U_{j-1}\cdots U_1\,(H^{\otimes m}\otimes I)\,U_0
\]
with $j$ global Hadamard layers.  A uniform $\FH_k$ circuit on $m$ Fourier wires (and polynomially many
ancilla wires) is a circuit of exact Fourier depth $j$ for some $0\le j\le k$; we write the depth-$k$
case as
\begin{equation}\label{eq:FHk-form}
U_k\,(H^{\otimes m}\otimes I)\,U_{k-1}\,(H^{\otimes m}\otimes I)\cdots
U_1\,(H^{\otimes m}\otimes I)\,U_0,
\end{equation}
where each block $U_i$ is a polynomial-size circuit over a fixed finite set of basis-preserving
gates: the classical reversible gates $\textsc{not}$, $\textsc{cnot}$ and Toffoli, the
$\textsc{swap}$ and controlled-$\textsc{swap}$ gates, and a fixed finite family of diagonal phase
gates with algebraic entries.  A block may in addition prepare ancillas in computational-basis states
and make phase queries to the oracle.  Every one
of the $k$ global Hadamard layers acts on the same $m$ designated Fourier wires
$(H^{\otimes m}\otimes I)$, the identity acting on all ancilla wires; logical registers are
routed onto these wires by the basis-preserving SWAP gates of Remark~\ref{rem:register-routing}.
At the end of the circuit, all wires are measured in the computational basis, and the circuit
accepts if the outcome lies in a designated set $S$ recognizable in deterministic polynomial time.
We write $\Pi_S$ for the projector onto $\operatorname{span}\{\ket{s}:s\in S\}$.
\end{definition}

Accepting on the value of one designated output qubit is the special case in which $S$ consists of
all strings with that qubit equal to $1$.  Computing a Boolean predicate of the outcome by a
reversible circuit before measurement is also of this form, since such a circuit is
basis-preserving and can be absorbed into $U_k$.  A general acceptance set is the standard
convention, and we need it for the threshold tests over repeated runs that appear below.

In the concluding discussion of~\cite{Shi_2005}, Shi introduces $\FH_k$ as the class of languages
decided with bounded error by a polynomial-size quantum circuit using at most $k$ Fourier
transforms, where the circuit is written over Toffoli and Hadamard gates and a Fourier transform is
a composition of Hadamard gates; he observes there that $\FH_0=\P$ and $\FH_1=\BPP$.  The class is
introduced in prose rather than as a numbered definition, and no oracle model is fixed for it.  Definition~\ref{def:FH2FH3} makes this concrete in
two respects.  Each Fourier transform becomes a layer $(H^{\otimes m}\otimes I)$ acting on a fixed
set of designated wires, so that the computation between two consecutive layers is exactly a
basis-preserving block; and the gates available inside a block are enlarged from the classical
reversible gates to include the fixed diagonal phase gates.  The lower bounds
below are proved for this larger model.  The upper-bound circuits of
Sections~\ref{sec:higher-level-conjecture} and~\ref{sec:standard-hierarchy} use only classical
reversible gates, SWAP gates and oracle gates between their Hadamard layers, so they are circuits
in Shi's model as well; the compiled commuting circuit of Corollary~\ref{thm:fh2-not-in-ph} uses
diagonal gates.

We fix a finite gate set for two reasons.  First, the uniform $\FH_k$ families are then
effectively enumerable as clocked generators with explicit polynomial query bounds, and the
diagonalizations below need that enumeration; see the proof of Theorem~\ref{thm:all-level-separation}.
Second, since the entries are algebraic, the phase attached to a basis state along any branch is a
product of finitely many fixed algebraic phases and can be computed to any polynomial precision in
polynomial time, which is all that the lower-bound analysis and the path-sum argument of
Proposition~\ref{prop:pspace-collapse} use.

The cases $k=2$ and $k=3$ occur most often below; an $\FH_2$ circuit has the form
$U_2 (H^{\otimes m}\otimes I) U_1 (H^{\otimes m}\otimes I) U_0$.

\begin{definition}[Standard-query Fourier circuits]\label{def:standard-fh}
A \emph{standard-query $\FH_k$ circuit} is an $\FH_k$ circuit (Definition~\ref{def:FH2FH3}) in which the
oracle gates inside the basis-preserving blocks are the indexed standard query
\[
O_n\ket{a,y}\ket b=\ket{a,y}\ket{b\oplus f_{n,a}(y)}
\]
of Definition~\ref{def:FHk-rel}, rather than phase gates.  The query count is the number of such standard gates.  For an oracle
family $O$, the corresponding polynomial-query class, with \emph{at most} $k$ global Hadamard
layers, is denoted $\FH_{k,\mathrm{std}}^O$.  This notation keeps it distinct from the
phase-query class $\FH_k^O$ of Definition~\ref{def:FHk-rel}.
\end{definition}

In the Fourier-growth literature the diagonal oracle $P_F$ is itself often called the \emph{standard}
quantum query oracle, as in Girish, Sinha, Tal and Wu~\cite{girish2024power}.  We follow Shi
instead: throughout this paper ``standard query'' means the oracle that writes its answer into a
register, and never the diagonal one.  Shi fixes no oracle model for $\FH_k$ separately; the oracle
gate used in~\cite{Shi_2005} is $O_x\ket i\ket b=\ket i\ket{b\oplus x_i}$, which is the query of
Definition~\ref{def:standard-fh}.  The distinction matters, since the two models have different Fourier depths for the
same problem (Theorem~\ref{thm:phase-vs-standard}).

\begin{remark}[Register routing for global Hadamard layers]\label{rem:register-routing}
In the circuits of this paper we sometimes describe a global Hadamard layer as acting on
whichever logical registers should be transformed at that stage.  This is an abbreviation for
the following implementation on a fixed set of wires.  The layer always acts on the same $m$
designated wires;
before the layer, SWAP gates exchange the logical registers that should be transformed with the
contents of the designated wires, and exchange registers that should not be transformed with
ancilla registers holding computational basis states.  We use a fresh computational-basis placeholder
at each layer.  A placeholder that has already passed through a Hadamard is retained as an
oracle-independent product-state ancilla rather than reused as a basis state; this costs only
polynomially many additional wires.  Since SWAP gates, controlled SWAP gates,
permutations, classical reversible gates, and phase queries are all basis-preserving, this
convention does not change the Fourier depth.
\end{remark}

The only feature of Definition~\ref{def:FH2FH3} that the lower bounds use is that the computation
between consecutive Hadamard layers is basis-preserving; this is what gives the round simulation of
Lemma~\ref{lem:round-embedding}.

The two lowest levels of the standard-query hierarchy are classical relative to every oracle; the
phase-query hierarchy behaves differently there (Remark~\ref{rem:phase-vs-bit}).

\begin{lemma}[The bottom of the standard-query hierarchy]\label{lem:fh0-std}
For every oracle $O$ we have $\FH_{0,\mathrm{std}}^{O}=\P^{O}$, and hence
$\P^{O}\subseteq\FH_{k,\mathrm{std}}^{O}$ for every $k\ge0$.  By contrast, the acceptance probability
of an $\FH_1^{O}$ phase circuit does not depend on $O$ (Lemma~\ref{lem:round-embedding}).
\end{lemma}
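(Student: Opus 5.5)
We prove the two inclusions of $\FH_{0,\mathrm{std}}^{O}=\P^{O}$ separately, then read off the consequence and the remark about phase circuits.

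\emph{Inclusion $\FH_{0,\mathrm{std}}^{O}\subseteq\P^{O}$.} A standard-query circuit of Fourier depth $0$ is a single basis-preserving block $U_0$ applied to a computational basis state, followed by measurement. Every gate of $U_0$ is from the fixed set of Definition~\ref{def:FH2FH3}, or is an ancilla preparation in a basis state, or is a standard query $O_n$. Each of these maps a basis state to a single basis state times a phase. So the whole computation is one deterministic path of basis states. We simulate it gate by gate with a deterministic polynomial-time oracle machine. We keep the current basis string explicitly, apply each classical reversible gate or SWAP to it, ignore the diagonal phase gates (they only change a global phase), and answer each standard query by one classical query to $f_{n,a}(y)$, XORing the answer into the target bit. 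The final string $s$ is measured with certainty, so the acceptance probability is $0$ or $1$. The circuit accepts exactly when $s\in S$, and this is decidable in deterministic polynomial time. Uniformity of the family gives the circuit description in polynomial time. The language is therefore decided by a deterministic polynomial-time machine with oracle $O$, and bounded error becomes zero error. The one point to check is the oracle convention: the indexed oracle on length-$n$ functions must be classically queryable by a $\P^{O}$ machine. Under Definition~\ref{def:FHk-rel} this is exactly what oracle access to $O$ means.

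\emph{Inclusion $\P^{O}\subseteq\FH_{0,\mathrm{std}}^{O}$.} Take a deterministic polynomial-time oracle machine. Compile it in the standard way into a polynomial-size reversible circuit of Toffoli, CNOT and NOT gates, keeping garbage on ancillas. Each oracle call becomes a standard query $O_n$ writing $f_{n,a}(y)$ into a fresh ancilla. The result is a basis-preserving block with no Hadamard layer, and we accept on the output bit. The inclusion $\P^{O}\subseteq\FH_{k,\mathrm{std}}^{O}$ for every $k$ then follows because the classes are monotone in $k$: depth "at most $k$" allows exact depth $0$.

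\emph{The phase-query remark.} For an $\FH_1$ phase circuit $U_1(H^{\otimes m}\otimes I)U_0$, the queries in $U_0$ act on the basis input state and contribute only a global phase. The queries in $U_1$ act diagonally on each branch and then are followed only by a permutation and phases. Hence $U_1$ sends distinct basis states to distinct basis states, and the oracle signs multiply amplitudes without affecting their moduli. The outcome distribution is therefore independent of $O$. This is the $k=1$, zero-round case of Lemma~\ref{lem:round-embedding}, which we invoke directly.

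No step is a real obstacle. The only care needed is the bookkeeping that phase gates inside a block contribute only phases on a single path, and that uniformity plus a polynomial bound on the gate count give a polynomial-time simulation.
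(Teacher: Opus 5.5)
Your proposal is correct and follows the paper's proof essentially step for step. You simulate the single computational-basis branch gate by gate, with the diagonal phase gates contributing only a global phase, to get $\FH_{0,\mathrm{std}}^{O}\subseteq\P^{O}$; you use Bennett-style reversible compilation with standard queries for the converse; and you cite Lemma~\ref{lem:round-embedding} for the phase-query remark, while your explicit notes on monotonicity in $k$ and on the oracle-access convention spell out what the paper leaves implicit.
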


\begin{proof}
A depth-zero standard-query circuit is a single basis-preserving block applied to $\ket{0}$.  Every
gate available in a block, namely the reversible classical gates and the indexed standard query of
Definition~\ref{def:standard-fh}, maps a computational-basis state to a computational-basis state,
and the diagonal phase gates contribute a phase that is global along the single branch and does not
affect the final measurement.  The circuit therefore carries out a deterministic classical
computation with adaptive oracle access, whose accepting set is recognizable in deterministic
polynomial time, so its language lies in $\P^{O}$.  Conversely, Bennett's reversible
simulation~\cite{bennett1973logical} converts any polynomial-time oracle computation into such a
block with polynomial overhead.
\end{proof}

\begin{lemma}[The first standard-query level is $\BPP$]\label{lem:fh1-std}
For every oracle family $O$ we have $\FH_{1,\mathrm{std}}^{O}=\BPP^{O}$.  Consequently
$\FH_{1,\mathrm{std}}^{O}\subseteq\PH^{O}$ for every $O$.
\end{lemma}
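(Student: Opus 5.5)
We prove the two inclusions separately; the containment in $\PH^{O}$ then follows from the relativizing Sipser--Lautemann theorem, which gives $\BPP^{O}\subseteq\Sigma_2^{O}\cap\Pi_2^{O}\subseteq\PH^{O}$ for every oracle $O$.

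For $\BPP^{O}\subseteq\FH_{1,\mathrm{std}}^{O}$, take a probabilistic polynomial-time oracle machine using $r=\poly(n)$ random bits. The circuit applies the single Hadamard layer to $r$ designated wires initialised to $\ket{0^r}$, producing the uniform superposition over random strings. The block $U_1$ then runs Bennett's reversible simulation~\cite{bennett1973logical} of the deterministic computation on input $x$ and random string $\rho$. That simulation uses only classical reversible gates and indexed standard queries, which are basis-preserving, and it writes the answer bit to an output wire. Measurement yields each $\rho$ with probability $2^{-r}$, followed by the machine's verdict on $\rho$, so the acceptance probability equals the machine's. The block $U_0$ is trivial apart from routing (Remark~\ref{rem:register-routing}).

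For $\FH_{1,\mathrm{std}}^{O}\subseteq\BPP^{O}$, consider a circuit $U_1(H^{\otimes m}\otimes I)U_0$ on the initial state $\ket{0}$.
\begin{enumerate}[leftmargin=2em]
\item Exactly as in Lemma~\ref{lem:fh0-std}, the block $U_0$ maps $\ket{0}$ to $\lambda\ket{z_0}$. Here $z_0$ is computed deterministically in polynomial time with adaptive oracle access, and $\lambda$ is a global phase that we discard.
\item Write $z_0=(u,w)$, with $u$ the content of the $m$ Fourier wires. The Hadamard layer gives $2^{-m/2}\sum_{y}(-1)^{u\cdot y}\ket{y,w}$.
\item The block $U_1$ maps each $\ket{y,w}$ to $\lambda(y)\ket{\pi(y,w)}$, and $\pi$ is injective because $U_1$ is a unitary permutation up to phases. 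The resulting branches are therefore orthogonal basis states with no interference, and the output $\pi(y,w)$ is observed with probability exactly $2^{-m}$ for each $y$.
\end{enumerate}
The $\BPP^{O}$ simulation samples $y$ uniformly, classically computes $\pi(y,w)$ by running the block gate by gate, answering each standard query with one oracle call, and accepts if and only if $\pi(y,w)\in S$, which is decidable in polynomial time. The acceptance probabilities agree exactly, so bounded error is preserved.

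The main point to verify is the no-interference claim in step 3, namely that distinct $y$ never collide after $U_1$. This relies on injectivity of the basis map, including the ancilla content $w$. The standard queries remain reversible because they XOR into the answer register, so they do not break injectivity. The phases $\lambda(y)$, which may be oracle-dependent only through the basis states, then play no role, since each outcome is reached by exactly one branch. No step of the argument needs numerical approximation of the phases.
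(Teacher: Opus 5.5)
Your proposal is correct and follows the paper's proof essentially step for step. For $\BPP^{O}\subseteq\FH_{1,\mathrm{std}}^{O}$ you use Bennett's reversible simulation after a single Hadamard layer; for the converse you observe that $U_0$ yields one basis state and that $U_1$, a permutation up to phases, sends distinct branches to distinct labels, so the circuit is a classical sampler, and relativized Sipser--Lautemann gives the $\PH$ containment. The exact-depth-zero circuits, which both your write-up and the paper's leave implicit, are covered by Lemma~\ref{lem:fh0-std}.
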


\begin{proof}
For the inclusion $\supseteq$, a $\BPP^{O}$ computation tosses $m=\poly(n)$ coins and runs a
deterministic polynomial-time oracle computation on the outcome.  Realize it as follows: take $U_0$
to be the identity, let the single Hadamard layer act on $m$ designated wires, so that the state
becomes the uniform superposition $2^{-m/2}\sum_{w}\ket w$ over coin strings, and let $U_1$ be
Bennett's reversible simulation~\cite{bennett1973logical} of the deterministic computation, which is
basis-preserving and uses standard queries.  Measuring gives the correct output distribution.

For the inclusion $\subseteq$, let $C=U_1\,(H^{\otimes m}\otimes I)\,U_0$ be a standard-query
$\FH_1$ circuit with accepting set $S$.  The block $U_0$ is applied to a computational-basis
state, so by the argument of Lemma~\ref{lem:fh0-std} it is a deterministic polynomial-time oracle
computation and produces $\lambda_0\ket{z_0}$ for a single basis label $z_0$, computable in
$\P^{O}$.  Writing $z_0=(z_0^{\mathrm{des}},z_0^{\mathrm{anc}})$ for its restriction to the
designated and the ancilla wires, the Hadamard layer gives
\[
2^{-m/2}\sum_{w\in\bits^{m}}(-1)^{w\cdot z_0^{\mathrm{des}}}\,\ket w\ket{z_0^{\mathrm{anc}}}.
\]
Now $U_1$ is basis-preserving and unitary, hence a permutation matrix times a diagonal unitary:
$U_1\ket{v}=\lambda(v)\ket{\pi(v)}$ with $\pi$ a permutation of the computational basis.  Distinct
branches therefore reach distinct basis labels, so no two branches interfere at the final
measurement and
\[
\Prb[C\text{ accepts}]
=2^{-m}\,\bigl|\{w\in\bits^{m}:\ \pi(w,z_0^{\mathrm{anc}})\in S\}\bigr|.
\]
For each fixed $w$ the label $\pi(w,z_0^{\mathrm{anc}})$ is obtained by simulating the
polynomial-size basis-preserving circuit $U_1$ gate by gate on a basis state, using one oracle query
per standard-query gate, and $S$ is recognizable in deterministic polynomial time.  This
probability is thus the acceptance probability of a $\BPP^{O}$ machine that samples $w$ uniformly
and runs that computation.  The final containment is the relativized Sipser--Lautemann inclusion
$\BPP^{O}\subseteq\Sigma_2^{\mathrm p,O}$.
\end{proof}

\subsection{The SWAP test}

We will use the following standard test throughout; see, for example,~\cite{nielsen2010quantum}.

\begin{lemma}[SWAP test]\label{lem:swap-test}
Let $\ket{\psi}$ and $\ket{\varphi}$ be pure states on registers of the same dimension.  The
circuit that prepares a control qubit in $\ket{+}$, applies controlled-SWAP to the two data
registers, applies a Hadamard to the control, and accepts on control outcome $0$ has acceptance
probability
\[
\frac{1+|\braket{\psi|\varphi}|^2}{2}.
\]
\end{lemma}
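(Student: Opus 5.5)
We prove the SWAP test formula by a direct state computation, tracking the joint state of the control qubit and the two data registers through the three steps. Write the initial state as $\ket{0}\ket{\psi}\ket{\varphi}$. Applying a Hadamard to the control gives $\tfrac{1}{\sqrt2}\bigl(\ket0+\ket1\bigr)\ket{\psi}\ket{\varphi}$. The controlled-SWAP then exchanges the data registers only on the $\ket1$ branch, giving
\[
\tfrac{1}{\sqrt2}\bigl(\ket0\ket{\psi}\ket{\varphi}+\ket1\ket{\varphi}\ket{\psi}\bigr).
\]
The final Hadamard on the control maps this to
\[
\tfrac12\ket0\bigl(\ket{\psi}\ket{\varphi}+\ket{\varphi}\ket{\psi}\bigr)
+\tfrac12\ket1\bigl(\ket{\psi}\ket{\varphi}-\ket{\varphi}\ket{\psi}\bigr).
\]

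The acceptance probability is the squared norm of the $\ket0$ component, namely $\tfrac14\Norm{\ket{\psi}\ket{\varphi}+\ket{\varphi}\ket{\psi}}^2$. Expand this norm using $\Norm{\ket\psi\ket\varphi}^2=\Norm{\ket\varphi\ket\psi}^2=1$, since both states are normalized. The cross term is
\[
2\,\mathrm{Re}\bigl(\bra{\psi}\bra{\varphi}\,\ket{\varphi}\ket{\psi}\bigr)=2\,\mathrm{Re}\bigl(\braket{\psi|\varphi}\braket{\varphi|\psi}\bigr)=2|\braket{\psi|\varphi}|^2.
\]
The norm therefore equals $2+2|\braket{\psi|\varphi}|^2$, and the probability is $\tfrac{1+|\braket{\psi|\varphi}|^2}{2}$, as claimed.

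No step here is a genuine obstacle. The only point needing care is that the registers have the same dimension, so that the SWAP is well defined. The inner product of the tensor products then factors as $\braket{\psi|\varphi}\braket{\varphi|\psi}$, which is real and nonnegative.
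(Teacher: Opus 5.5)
Your proposal is correct and follows the paper's proof: both write down the state after the controlled-SWAP, apply the final Hadamard, and compute the squared norm of the control-$0$ branch as $\tfrac14\bigl(2+2|\braket{\psi|\varphi}|^2\bigr)$. Your version simply spells out the cross-term evaluation $\bra{\psi}\bra{\varphi}\,\ket{\varphi}\ket{\psi}=|\braket{\psi|\varphi}|^2$, which the paper leaves implicit.
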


\begin{proof}
Immediately after controlled-SWAP, the state is
\[
\frac{1}{\sqrt2}
\bigl(\ket0\ket{\psi}\ket{\varphi}+\ket1\ket{\varphi}\ket{\psi}\bigr).
\]
The squared norm of the branch with control outcome $0$ after the final Hadamard is
\[
\frac14\left(2+2|\braket{\psi|\varphi}|^2\right)
=
\frac{1+|\braket{\psi|\varphi}|^2}{2}.
\]
\end{proof}

\subsection{Fourier weight and concentration inequalities}

We use the standard Fourier expansion on the Boolean cube; see, for
example,~\cite{odonnell2021analysisbooleanfunctions}.
\begin{definition}[Level-$\ell$ Fourier weight]\label{def:fourier-weight}
Every $f:\pmone^{M}\to\mathbb C$ has a unique multilinear expansion
$f(x)=\sum_{S\subseteq[M]}\widehat f(S)\prod_{i\in S}x_i$ with
$\widehat f(S)=\E_x[f(x)\prod_{i\in S}x_i]$; the biased weights below are defined in the same way.
The \emph{level-$\ell$ Fourier weight} is
\[
L_{1,\ell}(f)=\mathsf{wt}_\ell(f):=\sum_{\abs{S}=\ell}\abs{\widehat f(S)}.
\]
More generally, following~\cite{bansal2021k}, for a bias vector $\mu\in(-1,1)^M$ let
$p_\mu$ be the product measure with $\E_{p_\mu}[x_i]=\mu_i$, let
$\psi_i^\mu(x)=(x_i-\mu_i)/\sqrt{1-\mu_i^2}$ be the associated orthonormal characters, and
let $\widehat f^\mu(S)$ be the coefficients of $f$ in the basis
$\{\prod_{i\in S}\psi_i^\mu\}$.  The $\mu$-biased level-$\ell$ weight is
$\mathsf{wt}^\mu_\ell(f):=\sum_{\abs{S}=\ell}\abs{\widehat f^\mu(S)}$.  A restriction
$\rho\in\{-1,1,\star\}^M$ fixes the coordinates with $\rho_i\ne\star$ and leaves the others free;
$f_\rho$ denotes the restricted function on the free coordinates.
\end{definition}

\begin{lemma}[Hoeffding's inequality; see \cite{hoeffding1963probability}]\label{lem:hoeffding}
Let $\xi_1,\dots,\xi_M$ be independent random variables with $\xi_i\in[a_i,b_i]$, and let
$S=\sum_i\xi_i$.  Then for every $s\ge0$,
\[
\Prb\bigl[\abs{S-\E S}\ge s\bigr]\le 2\exp\Bigl(-\frac{2s^2}{\sum_i(b_i-a_i)^2}\Bigr).
\]
Each of the one-sided events $S-\E S\ge s$ and $S-\E S\le-s$ has probability at most half the
bound above.  In particular, for a weighted Rademacher sum $S=\sum_i w_i\xi_i$ with unbiased
signs, $\Prb[\abs{S}\ge s]\le2\exp(-s^2/(2\Norm{w}_2^2))$; and the empirical frequency of $R$ independent
Bernoulli trials deviates from its mean by more than $s$ with probability at most $2e^{-2Rs^2}$,
and exceeds (respectively, falls below) it by more than $s$ with probability at most $e^{-2Rs^2}$.
\end{lemma}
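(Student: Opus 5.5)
The plan is the Chernoff exponential-moment method, with Hoeffding's lemma as the single analytic input; the two special cases then follow by plugging in ranges. First I will prove \emph{Hoeffding's lemma}: if $\xi\in[a,b]$ and $\E\xi=0$, then $\E e^{\lambda\xi}\le e^{\lambda^2(b-a)^2/8}$ for every real $\lambda$. The proof has three steps.
\begin{enumerate}[leftmargin=2em]
\item By convexity of $t\mapsto e^{\lambda t}$ on $[a,b]$,
\[
e^{\lambda\xi}\le\frac{b-\xi}{b-a}e^{\lambda a}+\frac{\xi-a}{b-a}e^{\lambda b}.
\]
\item Taking expectations with $\E\xi=0$ gives $\E e^{\lambda\xi}\le e^{\phi(u)}$. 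Here $u=\lambda(b-a)$, $p=-a/(b-a)\in[0,1]$, and $\phi(u)=-pu+\log(1-p+pe^{u})$.
\item One checks $\phi(0)=\phi'(0)=0$. Also $\phi''(u)=\theta(1-\theta)\le\tfrac14$, where $\theta=pe^u/(1-p+pe^u)$. Taylor's theorem with remainder then yields $\phi(u)\le u^2/8$.
\end{enumerate}
I expect this second-derivative bound to be the main (though still routine) obstacle. Everything else is bookkeeping.

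For the upper tail, apply Markov's inequality to $e^{\lambda(S-\E S)}$ with $\lambda>0$. Independence factorizes the moment generating function, and Hoeffding's lemma applied to each centered $\xi_i-\E\xi_i\in[a_i-\E\xi_i,\,b_i-\E\xi_i]$ (an interval of length $b_i-a_i$) gives
\[
\Prb[S-\E S\ge s]\le\exp\Bigl(-\lambda s+\tfrac{\lambda^2}{8}\sum_i(b_i-a_i)^2\Bigr).
\]
Choosing $\lambda=4s/\sum_i(b_i-a_i)^2$ gives the one-sided bound $\exp(-2s^2/\sum_i(b_i-a_i)^2)$. The lower tail follows by applying the same argument to $-\xi_i$, and a union bound over the two tails gives the factor $2$. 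This also establishes the claim that each one-sided event has at most half the two-sided bound.

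For the special cases, take the weighted Rademacher sum with summands $w_i\xi_i\in[-\abs{w_i},\abs{w_i}]$ and $\E S=0$. Then $\sum_i(b_i-a_i)^2=4\Norm{w}_2^2$, and the bound becomes $2\exp(-s^2/(2\Norm{w}_2^2))$. For the empirical frequency of $R$ Bernoulli trials, take summands $\xi_i/R\in[0,1/R]$. Then $\sum_i(b_i-a_i)^2=1/R$, which gives $2e^{-2Rs^2}$ two-sided and $e^{-2Rs^2}$ for each one-sided deviation.
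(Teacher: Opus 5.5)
Your proof is correct. The paper gives no proof of its own and simply cites Hoeffding (1963); your argument is exactly the proof in that source: Markov's inequality on $e^{\lambda(S-\E S)}$, factorization by independence, Hoeffding's lemma via the convexity bound and $\phi''=\theta(1-\theta)\le\tfrac14$, and the choice $\lambda=4s/\sum_i(b_i-a_i)^2$, with the two special cases obtained by substituting the ranges $[-\abs{w_i},\abs{w_i}]$ and $[0,1/R]$ (the degenerate cases $a_i=b_i$ and $s=0$ are trivial).
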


\subsection{Fourier growth of bounded-round algorithms}\label{subsec:prelim-growth}

The lower bounds in this paper bound the Fourier weight of a circuit's acceptance probability by
simulating the circuit in the bounded-round query model of Girish, Sinha, Tal and
Wu~\cite{girish2024power}, and then applying their growth theorem.  We collect here the statements
of that model and of the two growth bounds we use.

\begin{definition}[Bounded-round parallel-query algorithms~\cite{girish2024power}]\label{def:round-model}
Let $x\in\pmone^{M}$.  The query oracle $O_x$, which is a phase query in the terminology of
Section~\ref{subsec:main-results} and the \emph{standard} query oracle in that
of~\cite{girish2024power}, acts on an $(M+1)$ dimensional register by
\[
O_x\ket{0}=\ket{0},
\qquad
O_x\ket{i}=x_i\ket{i}
\quad (1\le i\le M).
\]
An \emph{$r$-round algorithm with $t$ parallel queries per round} applies, to an arbitrary
input-independent initial state on the query registers and an arbitrary workspace,
\[
U_r\,(O_x^{\otimes t}\otimes I)\,U_{r-1}\cdots U_1\,(O_x^{\otimes t}\otimes I)\,U_0,
\]
where each $U_i$ is an arbitrary input-independent unitary ($t=0$ is allowed, with
$O_x^{\otimes0}=I$, so that a round may make no queries), followed by a two-outcome
measurement.  A tuple of Boolean phase oracles $F_0,\dots,F_{J-1}:\bits^n\to\pmone$ is accessed
through the concatenated string $x\in\pmone^{J\cdot N}$ of their truth tables: a phase query to
$F_a$ at the point $y$ is the query $O_x$ at the index $i=(a,y)$, and the index $0$, on which
the oracle acts trivially, is used for unneeded parallel queries.
\end{definition}

The phase-query lower bound of Section~\ref{sec:higher-level-conjecture} rests on the Fourier growth
theorem of Girish, Sinha, Tal, and Wu for bounded-round algorithms.  The standard-query lower bound
of Section~\ref{sec:standard-hierarchy} does not; it uses instead the decision-tree growth theorem
proved there, by a method adapted from the same paper.

We use the theorem in a form whose constant is singly exponential in $r\ell$, as the
growing-depth statements below require; Appendix~\ref{app:gstw-growth-constant} proves that form.
Below, $L_{1,\ell}$
denotes the level-$\ell$ Fourier weight of Definition~\ref{def:fourier-weight}.

\begin{theorem}[{Fourier growth for bounded-round algorithms; Girish--Sinha--Tal--Wu~\cite[Corollary~4.2]{girish2024power}}]\label{thm:gstw-growth}
Let $r\ge1$, let $\mathcal A$ be a
quantum query algorithm on $M$ bit inputs with arbitrarily many auxiliary qubits, making $r$ adaptive
rounds of $t$ parallel queries per round, where $1\le t\le M$, and let $f:\pmone^{M}\to[0,1]$ be its acceptance
probability.  Then for every $\ell\ge 1$,
\[
L_{1,\ell}(f)
\le
(2^{2r}-1)^{2\ell}\cdot t^{\ell}\cdot
(M/t)^{\frac12\left\lfloor\frac{(2r-1)\ell}{2r}\right\rfloor}.
\]
The same bound holds for every restriction $f_\rho$ of $f$, with $M$ replaced by the number
$\widetilde M$ of variables that $\rho$ leaves free.
\end{theorem}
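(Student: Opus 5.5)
The plan follows the operator-norm framework of Girish, Sinha, Tal and Wu, with the counting of parallel-query assignments made explicit. Write the algorithm's final state as $\ket{\psi_x}=U_r\,O^{(r)}U_{r-1}\cdots O^{(1)}U_0\ket{\psi_0}$, where $O^{(j)}=O_x^{\otimes t}\otimes I$, and the acceptance probability as $f(x)=\bra{\psi_x}\Pi\ket{\psi_x}$.

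\emph{Step 1: polynomial expansion.} Expand each round by $O_x^{\otimes t}=\sum_{T}\prod_{i\in T}x_i\,\Pi_T$, where $\Pi_T$ projects onto the query configurations whose multiset of queried nonzero indices has odd multiplicity exactly on $T$. Then $f$ becomes a sum, over tuples $(T_1,\dots,T_r,T'_r,\dots,T'_1)$ taken from the $2r$ query layers of the ket and the bra, of the monomial $\prod x^{T_j}\prod x^{T'_j}$ multiplied by a matrix element $\bra{\psi_0}A^\dagger\Pi A'\ket{\psi_0}$. Here $A$ and $A'$ are products of the fixed unitaries interleaved with the projectors.

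\emph{Step 2: extract the level-$\ell$ coefficients.} For a fixed $S$ with $|S|=\ell$, $\widehat f(S)$ is the sum of these matrix elements over tuples whose symmetric difference is $S$. The central idea of GSTW is to classify each tuple by an \emph{assignment}: a map sending each $i\in S$ to the nonempty subset of the $2r$ layers in which $i$ appears with odd multiplicity, subject to an odd total. Indices outside $S$ must cancel in pairs. We then sum out the cancelling indices using completeness of the projectors. This leaves, for each assignment pattern, an expression of the form
\[
\sum_{S}\bigl|\bra{\phi}\,\textstyle\prod_j(\text{block-diagonal partial isometries indexed by }S\cap\text{layer }j)\,\ket{\phi'}\bigr|.
\]
We bound this expression by Cauchy--Schwarz across layers. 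In a layer carrying $s_j$ of the indices, the sum over the choices of those indices costs at most $\sqrt{\binom{t}{s_j}\binom{M}{s_j}}$. The bra and ket halves pair up, so an index shared by two layers loses its $\sqrt M$ factor.

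\emph{Step 3: optimize the exponent.} Each index of $S$ appears in an odd number of the $2r$ layers. Averaged over layers, the surviving power of $M$ is then at most $(M/t)^{\frac12\lfloor(2r-1)\ell/(2r)\rfloor}$ times $t^\ell$. The combinatorial reason is that one of the $2r$ layers can absorb its indices through the other layers' normalization. This step is a short linear-programming inequality over the assignment weights.

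\emph{Step 4: the constant.} Count the assignments in closed form. Each index chooses a nonempty odd-sized pattern among $2r$ layers, and grouping the patterns by Möbius inversion on the subset lattice gives at most $(2^{2r}-1)^{\ell}$ patterns per side. Squaring to account for the bra/ket split yields $(2^{2r}-1)^{2\ell}$.

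\emph{Step 5: restrictions.} For restrictions, absorb the fixed coordinates into modified query operators. These are still diagonal with unimodular entries, and after relabeling they still have the parallel-query form on the $\widetilde M$ free coordinates. The argument above then applies verbatim with $M$ replaced by $\widetilde M$.

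The main obstacle is Step 2. We must make the summation over cancelling non-$S$ indices norm-preserving, so that it does not reintroduce factors of $M$, and we must keep the per-layer Cauchy--Schwarz losses at $\sqrt{\cdot}$ rather than linear. The approach I would try first is to insert, at each layer, an isometry that records the queried index set into a fresh register. Each partial sum then becomes an inner product of two vectors whose squared norms are bounded by $t^{s}$ times projector completeness.
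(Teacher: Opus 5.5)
\textbf{There is a genuine gap in Steps~2 and~4, and it is exactly the part that Appendix~\ref{app:gstw-growth-constant} supplies.}

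\emph{Step~2.} You group configurations by the \emph{exact} incidence pattern $\gamma$ of each $i\in S$, meaning the layers $j$ with $i\in\oplus I_j$, i.e.\ where $i$ has odd multiplicity. You then assert that each pattern's contribution is a matrix element of a product of norm-bounded local operators. Nothing in your sketch enforces that pattern, and it cannot be enforced cheaply. The exact pattern requires $i\notin\oplus I_j$ for every $j\notin\gamma$, including layers before $\min\gamma$ and after $\max\gamma$. A certificate can be checked only while it is held in a register. It can be created or erased inside the at most $t$ queried indices only at a layer where it must be present. Holding it outside $[\min\gamma,\max\gamma]$ therefore forces it to be created or erased freely among all $M$ variables. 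Then even a certificate straddling the cut pays $\binom{M}{s}^{1/2}$ at an endpoint instead of $\binom{t}{s}^{1/2}$, and the exponent is lost.

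The missing idea is GSTW's triangular change of basis, reproduced in Lemmas~\ref{lem:gstw-exact-count} and~\ref{lem:gstw-entrywise-extended}:
\begin{itemize}
\item replace the incidence-profile vector $g(s')$ by a certificate-profile vector $h(s)$ that imposes only the containment constraints $J^{(c)}\subseteq\bigcap_{i:c_i=1}\oplus I_i$, which $\{0,1\}$ transition matrices on certificate registers do enforce;
\item bound $\abs{h(s)}$ entrywise at a well-chosen cut of the $d=2r$ factors;
\item recover $g=P^{-1}h$, where $P[s,s']$ counts the assignments of each incidence type $\gamma$ to a certificate type $c\subseteq\gamma$.
\end{itemize}
Eliminating the absence constraints by inclusion--exclusion is precisely what this inversion does.

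\emph{Step~4.} This is also where the constant comes from, and your Step~4 misattributes it.
\begin{itemize}
\item M\"obius inversion does not count patterns. It inverts $P$, which is the matrix of the substitution $y_\gamma\mapsto\sum_{c\subseteq\gamma}y_c$ on degree-$\ell$ monomials. Each column of $P^{-1}$ lists the coefficients of a product of $\ell$ linear forms, each with at most $2^d-1$ unit terms, so $\Norm{P^{-1}}_1\le(2^d-1)^{\ell}$.
\item The base is $2^d-1$ because certificate types must range over all nonempty $c\subseteq[d]$. Your count produces only the $2^{d-1}$ odd patterns.
\item The second factor $(2^d-1)^{\ell}$ is the number of certificate profiles. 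It enters through $\Norm{g}_1\le\Norm{P^{-1}}_1\Norm{h}_1\le(2^d-1)^{\ell}\,\abs{\mathcal S_\ell}\,\max_s\abs{h(s)}$ with $\abs{\mathcal S_\ell}\le(2^d-1)^{\ell}$.
\item That second factor has nothing to do with a bra/ket split. Your $2r$ layers already contain both halves, so squaring for that split double counts.
\end{itemize}

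\emph{Two smaller points.} First, the exponent in Step~3 comes from choosing a cut position $p\in[d]$ and charging $\sqrt M$ only to certificates whose lifetime $[\min c,\max c]$ lies wholly on one side, with $\sum_{p}e_p\le(d-1)\ell$. It does not come from one layer absorbing the others. Second, for a restriction with $\widetilde M<t$ the operator bound gives only $t^{\ell}$. The stated form then needs the trivial estimate $L_{1,\ell}(f_\rho)\le\binom{\widetilde M}{\ell}\le t^{\ell}(\widetilde M/t)^{a}$, where $a=\frac12\lfloor(2r-1)\ell/(2r)\rfloor$.
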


The constant $(2^{2r}-1)^{2\ell}$ is singly exponential in $r\ell$ and uniform over all levels
$\ell\ge1$; it comes from the closed-form evaluation in
Appendix~\ref{app:gstw-growth-constant}.  For constant $r$ the factor is an
absorbed constant, and every application in Sections~\ref{sec:higher-level-conjecture}
and~\ref{sec:landscape} at fixed $k$ uses it only as such.

Theorem~\ref{thm:gstw-growth} holds for every number of rounds.  In the non-adaptive case $r=1$,
which contains the case of Fourier depth two (Lemma~\ref{lem:round-embedding}), a sharper bound is
available: the dependence on the number of parallel queries improves from $t^{\ell}$ to $t^{\ell/4}$,
and the constant improves from $9^{\ell}$ to $\ell+1$.  Girish, Sinha, Tal, and
Wu~\cite[Remark~1.6]{girish2024power} state this bound and give the argument for it in their proof
overview, while the detailed proof in their Section~4 establishes only the general form
(Theorem~\ref{thm:gstw-growth}).  Since our application depends on how the constant grows with
$\ell$, and since the case of two oracle factors does not involve the assignment count of
Appendix~\ref{app:gstw-growth-constant}, we give a complete proof with explicit constants.  The proof also
gives a bound that is stable under restrictions.

\begin{theorem}[{Non-adaptive Fourier growth; \cite[Remark~1.6]{girish2024power}}]\label{thm:nonadaptive-growth}
Let $M,t,m\ge1$, let $A=(\{0\}\cup[M])^{t}\times[m]$, let $u,v\in\mathbb C^{A}$ satisfy
$\Norm u,\Norm v\le1$, and let $\mathsf B\in\mathbb C^{A\times A}$ satisfy $\Norm{\mathsf B}\le1$.
Define $f:\pmone^{M}\to\mathbb C$ by
\[
f(x)=u^{\dagger}\bigl(O_x^{\otimes t}\otimes I_m\bigr)\,\mathsf B\,
\bigl(O_x^{\otimes t}\otimes I_m\bigr)\,v,
\]
where $O_x$ is the query oracle of Definition~\ref{def:round-model}.  Then for every restriction
$\rho\in\{-1,1,\star\}^{M}$ leaving $\widetilde M$ variables free and every $\ell\ge1$,
\[
L_{1,\ell}(f_\rho)\ \le\ (\ell+1)\,\bigl(\widetilde M\,\tau\bigr)^{\ell/4},
\qquad
\tau:=\min\{t,\widetilde M\}.
\]
In particular the acceptance probability of a non-adaptive quantum query algorithm making $t$
parallel queries is of the form above, with $\mathsf B=U_1^{\dagger}\Pi U_1$ for the acceptance
projector $\Pi$, and therefore obeys this bound.
\end{theorem}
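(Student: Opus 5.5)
The plan is to expand $f$ in the character basis directly from the oracle structure. This turns the level-$\ell$ weight into a bilinear form in $u$ and $v$ whose middle operator is a Schur (entrywise) product of $\mathsf B$ with a combinatorial sign pattern. That product is then bounded by the $\gamma_2$ (factorization) norm of the pattern, using $\Norm{\mathsf B}\le1$.

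\emph{Step 1: reduction and expansion.} On a basis vector $a=(i_1,\dots,i_t,j)\in A$ we have $(O_x^{\otimes t}\otimes I_m)\ket a=\chi_a(x)\ket a$ with $\chi_a(x)=\prod_s x_{i_s}$ and $x_0:=1$. A restriction $\rho$ turns every fixed $x_i$ into a constant sign. Absorbing that sign into the entries of $u$ and $v$ preserves their norms, so $f_\rho$ has the same form with those indices behaving like the index $0$. We may therefore assume there is no restriction and $M=\widetilde M$. Let $S(a)\subseteq[\widetilde M]$ be the set of indices occurring an odd number of times in $a$, so that $\chi_a=x^{S(a)}$ and $\abs{S(a)}\le\tau$. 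Then
\[
f(x)=\sum_{a,b}\overline{u_a}\,\mathsf B_{ab}\,v_b\,x^{S(a)\triangle S(b)}.
\]
For a pair with $\abs{S(a)\triangle S(b)}=\ell$, put $p=\abs{S(a)\setminus S(b)}$ and $q=\ell-p$. Splitting by $p\in\{0,\dots,\ell\}$ accounts for the factor $\ell+1$, so it suffices to bound each $p$-part of $\sum_{\abs T=\ell}\abs{\widehat f(T)}$ by $(\widetilde M\tau)^{\ell/4}$.

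\emph{Step 2: Schur-multiplier form.} Choose signs $s_T=\overline{\sgn\widehat f(T)}$. Then the $p$-part equals $u^\dagger(\mathsf B\circ\Sigma_p)v$, where
\[
(\Sigma_p)_{ab}=s_{S(a)\triangle S(b)}\,\mathbf 1\bigl[\abs{S(a)\setminus S(b)}=p,\ \abs{S(b)\setminus S(a)}=q\bigr].
\]
Its modulus is at most $\Norm{\mathsf B\circ\Sigma_p}\le\Norm{\mathsf B}\,\gamma_2(\Sigma_p)\le\gamma_2(\Sigma_p)$. I would bound $\gamma_2(\Sigma_p)$ by an explicit factorization $(\Sigma_p)_{ab}=\langle x_a,y_b\rangle$. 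Take
\[
x_a=\sum_{T_1\subseteq S(a),\,\abs{T_1}=p}\ket{S(a)\setminus T_1}\ket{T_1},\qquad
y_b=\sum_{T_2\subseteq S(b),\,\abs{T_2}=q}\ket{S(b)\setminus T_2}\otimes\sum_{\abs{T_1'}=p}s_{T_1'\cup T_2}\ket{T_1'}.
\]
The inner product forces $R:=S(a)\setminus T_1=S(b)\setminus T_2$ and $T_1'=T_1$. For fixed $(a,b)$ at most one such split exists, namely $T_1=S(a)\setminus S(b)$, and it yields exactly $s_{S(a)\triangle S(b)}$. This gives $\Norm{x_a}^2\le\binom{\tau}{p}$ and $\Norm{y_b}^2\le\binom{\tau}{q}\binom{\widetilde M}{p}$. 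By symmetry, putting the sign vector on the side of the smaller part gives $\widetilde M^{\min(p,q)/2}\le\widetilde M^{\ell/4}$.

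\emph{Step 3 (main obstacle).} The factorization above gives the $\widetilde M$-exponent $\ell/4$, but only the factor $\sqrt{\binom\tau p\binom\tau q}\lesssim\tau^{\ell/2}$ in $\tau$, whereas the statement asks for $\tau^{\ell/4}$. To close the gap I would rebalance the factorization. First, the $\ket R$ register should be weighted by $\binom{\abs{S(a)}}{p}^{-1/2}$ on one side and $\binom{\abs{S(b)}}{q}^{1/2}$ on the other, correcting through the diagonal of $\mathsf B$'s domain; this is permitted because $\gamma_2$ is invariant under diagonal rescaling only in the combined product $\Norm{x}\Norm{y}$. Second, I would use that for $\abs{S(a)\triangle S(b)}=\ell$ the sizes $\abs{S(a)}$ and $\abs{S(b)}$ differ by at most $\ell$, and combine this with the trade-off $\binom{\tau}{p}\binom{\widetilde M}{q}$ versus $\binom{\widetilde M}{p}\binom{\tau}{q}$ via a geometric mean, i.e.\ average the two factorizations with weights. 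If that still falls short, the fallback is to group the $R$-sum outside and apply Cauchy--Schwarz over $R$ separately, which is the argument sketched in \cite[Remark~1.6]{girish2024power}. I expect this bookkeeping to be where the $(\widetilde M\tau)^{\ell/4}$ exponent is won or lost. Once it holds, summing over $p$ gives the factor $\ell+1$. The final clause follows by taking $u=v=U_0\ket{\mathrm{init}}$ and $\mathsf B=U_1^\dagger\Pi U_1$, which has $\Norm{\mathsf B}\le1$.
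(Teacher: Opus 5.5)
There is a genuine gap, and it is the one you flagged yourself. With the disjointness fix described at the end of this paragraph, Steps 1--2 prove only $L_{1,\ell}(f_\rho)\le(\ell+1)\,\tau^{\ell/2}\widetilde M^{\ell/4}$, and Step 3 contains no concrete argument that removes the surplus $\tau^{\ell/4}$.

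The Schur-multiplier framework itself is sound. The loss comes from your particular factorization. It enumerates a split $T_1\subseteq S(a)$ on one side and a split $T_2\subseteq S(b)$ on the other, so its squared norms are, at worst, $\binom{\tau}{p}$ and $\binom{\tau}{q}\binom{\widetilde M}{p}$, and you pay $\binom{\tau}{p}\binom{\tau}{q}$ at once.

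The trade-off you name in Step 3, between $\binom{\tau}{p}\binom{\widetilde M}{q}$ and $\binom{\tau}{q}\binom{\widetilde M}{p}$, is the right one, but your factorization never produces those quantities. The proposed binomial reweighting also fails: it does not preserve the entries $\langle x_a,y_b\rangle$. Rescalings that do preserve them only move a constant between the two sides, which leaves the norm product on the worst pairs unchanged.

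Separately, your assertion that at most one split exists is false as written, because nothing forces $T_1\cap T_2=\varnothing$. For instance, $S(a)=S(b)$ with $p=q\ge1$ gives $\langle x_a,y_b\rangle\neq0$ where $(\Sigma_p)_{ab}=0$. Inserting $\mathbf 1[T_1'\cap T_2=\varnothing]$ into $y_b$ fixes this, though not the exponent.

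The missing idea is to put no split at all on one side. Take $y_b=\ket{S(b)}$, a single basis vector, and
\[
x_a=\sum_{T\subseteq[\widetilde M]}\mathbf 1\bigl[\abs{S(a)\setminus T}=p,\ \abs{T\setminus S(a)}=q\bigr]\,\overline{s_{S(a)\triangle T}}\,\ket{T}.
\]
Then $\langle x_a,y_b\rangle=(\Sigma_p)_{ab}$ exactly, $\Norm{y_b}=1$, and $\Norm{x_a}^2=\binom{\abs{S(a)}}{p}\binom{\widetilde M-\abs{S(a)}}{q}\le\binom{\tau}{p}\binom{\widetilde M}{q}$.

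The mirror factorization, with $x_a=\ket{S(a)}$ and the sum over $T$ moved into $y_b$, gives $\binom{\tau}{q}\binom{\widetilde M}{p}$. Hence $\gamma_2(\Sigma_p)$ is at most the smaller of the two square roots, and so at most their geometric mean $\bigl[\binom{\tau}{p}\binom{\widetilde M}{q}\binom{\tau}{q}\binom{\widetilde M}{p}\bigr]^{1/4}\le(\tau\widetilde M)^{\ell/4}$. Summing over the $\ell+1$ values of $p$ finishes the proof.

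This is exactly the paper's proof in matrix form. Its $W$ and $R'$ realize the first factorization: the row count of $W$ is $\Norm{x_a}^2$, and $R'$ is a block-diagonal copy of $\mathsf B$ that checks $T=\oplus J$ at no cost in norm. Its $R$ and $W'$ realize the mirror one.
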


Theorem~\ref{thm:nonadaptive-growth} is what gives the sharpened threshold at the second level
(Remark~\ref{rem:fh2-sharpened}) and the growth bound used in Appendix~\ref{sec:appendix-sign}.  Its
proof is self-contained and is given in Appendix~\ref{app:nonadaptive-proof}.

\section{An oracle separation at every level}\label{sec:higher-level-conjecture}

This section proves the first of the two adjacent-level separation theorems: for every constant
$k\ge2$ there
is an oracle relative to which $\FH_k$ and $\FH_{k+1}$ differ in the phase-query hierarchy
(Theorem~\ref{thm:all-level-separation}).  Section~\ref{sec:standard-hierarchy} proves the
corresponding statement for Shi's standard-query hierarchy
(Theorem~\ref{thm:standard-query-adjacent}).

Section~\ref{subsec:odd-forrelation} constructs the algorithm.  The hard problem is
$(2k-1)$-fold Forrelation.  Its defining sequence of Hadamard transforms and oracles is the standard
Forrelation circuit of Aaronson and
Ambainis~\cite{aaronson2014forrelationproblemoptimallyseparates}; we show that, for odd order, it splits at the middle oracle into two states each prepared by exactly $k$ Hadamard
layers, and that a control-qubit interference test comparing the two states uses exactly one more
layer.  Fourier depth
$k+1$ therefore suffices to estimate the Forrelation value itself on every oracle tuple.
Section~\ref{subsec:round-embedding} proves the corresponding statement about depth $k$: every
$\FH_k$ circuit is simulated exactly by a quantum query algorithm with $k-1$ rounds of parallel
queries.  Together with the Fourier growth theorem of~\cite{girish2024power}, this bounds the Fourier
weight of the acceptance probability of any depth-$k$ circuit.
Section~\ref{subsec:hard-problem} imports the input distributions and the advantage criterion of
Bansal and Sinha~\cite{bansal2021k}, which hold for every order, odd or even, and combines the
two bounds into a worst-case statement: an explicit promise problem that depth $k+1$ decides with
constantly many queries and that depth $k$ does not decide at all.
Section~\ref{subsec:separation-theorem} carries out the diagonalization.  The standard-query proof of
Section~\ref{sec:standard-hierarchy} reuses the upper-bound circuit of
Section~\ref{subsec:odd-forrelation} and the diagonalization of
Section~\ref{subsec:separation-theorem} unchanged, and replaces the round simulation by a growth
bound for decision trees.

The Fourier growth bounds and the Gaussian analysis of the structured distributions are
from~\cite{girish2024power,bansal2021k}, the former proved with a single-exponential constant in
Appendix~\ref{app:gstw-growth-constant}; the remaining steps are new, as listed in
Section~\ref{subsec:main-results}.

Throughout Sections~\ref{sec:higher-level-conjecture}--\ref{sec:landscape} we write $N:=2^n$, and we
write $\delta:=2^{-5K}$ for the gap parameter of the Bansal--Sinha distributions for $K$-fold
Forrelation;
the value of $K$ is fixed in each statement, and is not the same in every section.  In the
present section we fix a level $k\ge2$ and take
\[
K:=2k-1,
\qquad
\delta=2^{-5K}.
\]
We treat $k$, and hence $K$ and $\delta$, as constants; all quantities denoted $C_k$, $c_k$ or
$O_k(\cdot)$ may depend on $k$ but never on $n$.

\subsection{Odd Forrelation at depth \texorpdfstring{$k+1$}{k+1}}\label{subsec:odd-forrelation}

For phase oracles
\[
F_0,\dots,F_{2k-2}:\bits^n\to\pmone,
\]
define the $(2k-1)$-fold Forrelation function by
\begin{equation}\label{eq:odd-Phi}
\Phi_{2k-1}(F_0,\dots,F_{2k-2})
=
N^{-k}
\sum_{x_0,\dots,x_{2k-2}\in\bits^n}
\left(\prod_{j=0}^{2k-2}F_j(x_j)\right)
\left(\prod_{j=0}^{2k-3}(-1)^{x_j\cdot x_{j+1}}\right).
\end{equation}
For $k=2$, this is the threefold Forrelation function studied in
Appendix~\ref{sec:appendix-sign}.  Writing $z_j\in\pmone^N$ for the truth table of $F_{j-1}$ and
$\mathsf H\in\mathbb R^{N\times N}$ for the normalized Hadamard matrix
$\mathsf H_{x,y}=N^{-1/2}(-1)^{x\cdot y}$, the quantity \eqref{eq:odd-Phi} coincides with the
$K$-fold Forrelation function
\begin{equation}\label{eq:forr-dictionary}
\mathsf{forr}_K(z)
=
\frac1N\, z_1^{\top}\bigl(\mathsf H\,\mathsf Z_2\,\mathsf H\,\mathsf Z_3\cdots
\mathsf H\,\mathsf Z_{K-1}\,\mathsf H\bigr)z_K,
\qquad
\mathsf Z_i=\operatorname{diag}(z_i),
\end{equation}
of Aaronson--Ambainis~\cite{aaronson2014forrelationproblemoptimallyseparates} in the
normalization of Bansal--Sinha~\cite[Eqs.~(1.1)--(1.2)]{bansal2021k}; in particular
$\Phi_{2k-1}\in[-1,1]$ always, because $z_1/\sqrt N$ and $z_K/\sqrt N$ are unit vectors and the
matrix in \eqref{eq:forr-dictionary} is a contraction.

The definition involves a sequence of $2k-1$ variables.  Because this number is odd, fixing the
middle variable $y=x_{k-1}$ splits the sum into a part involving only the first
$k-1$ oracles and a part involving only the last $k-1$.  Define the two half-sums by
\begin{equation}\label{eq:left-profile}
L(y)
=
N^{-(k-1)/2}
\sum_{x_0,\dots,x_{k-2}\in\bits^n}
\left(\prod_{j=0}^{k-2}F_j(x_j)\right)
\left(\prod_{j=0}^{k-3}(-1)^{x_j\cdot x_{j+1}}\right)
(-1)^{x_{k-2}\cdot y}
\end{equation}
and
\begin{equation}\label{eq:right-profile}
R(y)
=
N^{-(k-1)/2}
\sum_{x_k,\dots,x_{2k-2}\in\bits^n}
(-1)^{y\cdot x_k}
\left(\prod_{j=k}^{2k-2}F_j(x_j)\right)
\left(\prod_{j=k}^{2k-3}(-1)^{x_j\cdot x_{j+1}}\right),
\end{equation}
with empty products interpreted as $1$.  Both are functions of the middle variable $y$ alone.

\begin{proposition}[Split form of odd Forrelation]\label{prop:odd-split}
For every $F_0,\dots,F_{2k-2}:\bits^n\to\pmone$,
\[
\Phi_{2k-1}(F_0,\dots,F_{2k-2})
=
N^{-1}\sum_{y\in\bits^n}L(y)\,F_{k-1}(y)\,R(y).
\]
\end{proposition}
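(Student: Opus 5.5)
This is a direct regrouping of the finite sum \eqref{eq:odd-Phi}, and I would prove it by substitution and comparison of normalizations. First, write $y=x_{k-1}$ and split the product of oracle values as
\[
\Bigl(\prod_{j=0}^{k-2}F_j(x_j)\Bigr)\,F_{k-1}(y)\,\Bigl(\prod_{j=k}^{2k-2}F_j(x_j)\Bigr).
\]
Split the chain of characters $\prod_{j=0}^{2k-3}(-1)^{x_j\cdot x_{j+1}}$ into three pieces: the internal left edges $j=0,\dots,k-3$, the two edges touching the middle variable, namely $(-1)^{x_{k-2}\cdot y}$ (the edge $j=k-2$) and $(-1)^{y\cdot x_k}$ (the edge $j=k-1$), and the internal right edges $j=k,\dots,2k-3$.

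Because the index set $\{0,\dots,2k-2\}$ has odd size $2k-1$, removing the middle index $k-1$ leaves two blocks of exactly $k-1$ variables each. Every factor other than $F_{k-1}(y)$ then depends only on $y$ together with variables from a single block. So, for fixed $y$, the inner sum over $(x_0,\dots,x_{k-2})$ and $(x_k,\dots,x_{2k-2})$ factorizes as a product of a left sum and a right sum. These are exactly the unnormalized forms of \eqref{eq:left-profile} and \eqref{eq:right-profile}, with the boundary edges attached to the appropriate side.

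It then remains to check the normalization. We have $L(y)R(y)$ carrying the factor $N^{-(k-1)/2}\cdot N^{-(k-1)/2}=N^{-(k-1)}$, and multiplying by the prefactor $N^{-1}$ in the proposition gives $N^{-k}$, which matches \eqref{eq:odd-Phi}.

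There are no real obstacles here. The only point needing care is the degenerate case $k=2$, where the left block is the single variable $x_0$ and the internal edge products are empty. Here I would verify that the empty-product convention gives $L(y)=N^{-1/2}\sum_{x_0}F_0(x_0)(-1)^{x_0\cdot y}$ and the analogous $R$, so the same bookkeeping applies unchanged. I would also confirm that each boundary edge $(-1)^{x_{k-2}\cdot y}$ and $(-1)^{y\cdot x_k}$ is counted exactly once, in $L$ and in $R$ respectively.
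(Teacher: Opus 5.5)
Your proposal is correct and matches the paper's proof: set $y=x_{k-1}$, observe that the two blocks $x_0,\dots,x_{k-2}$ and $x_k,\dots,x_{2k-2}$ separate, each boundary edge going to its own side, and check that $N^{-1}\cdot N^{-(k-1)/2}\cdot N^{-(k-1)/2}=N^{-k}$. Your edge count and your check of the case $k=2$ are slightly more explicit than the paper's one-paragraph argument, but the route is the same.
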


\begin{proof}
Start from \eqref{eq:odd-Phi} and set $y=x_{k-1}$.  The variables
$x_0,\dots,x_{k-2}$ appear only in the left part of the expression, the variables
$x_k,\dots,x_{2k-2}$ appear only in the right part, and the middle oracle contributes
the factor $F_{k-1}(y)$.  The normalizations multiply as
$N^{-1}\cdot N^{-(k-1)/2}\cdot N^{-(k-1)/2}=N^{-k}$.  Grouping the left and right sums gives
the identity.
\end{proof}

This identity has an equivalent form in terms of quantum states, which is what relates it to the
hierarchy.  Define
\[
\ket{\psi_L}
=
H^{\otimes n}P_{F_{k-2}}H^{\otimes n}\cdots
P_{F_1}H^{\otimes n}P_{F_0}H^{\otimes n}\ket{0^n},
\qquad
\ket{\psi_R}
=
H^{\otimes n}P_{F_k}H^{\otimes n}\cdots
P_{F_{2k-2}}H^{\otimes n}\ket{0^n}.
\]
Each state is prepared by exactly $k$ Hadamard layers with only phase queries in between, and a
direct expansion gives
\[
\langle y|\psi_L\rangle=N^{-1/2}L(y),
\qquad
\langle y|\psi_R\rangle=N^{-1/2}R(y).
\]
Hence Proposition~\ref{prop:odd-split} is equivalently the identity
\begin{equation}\label{eq:odd-overlap}
\Phi_{2k-1}(F_0,\dots,F_{2k-2})
=
\bra{\psi_L}P_{F_{k-1}}\ket{\psi_R}:
\end{equation}
that is, the $(2k-1)$-fold Forrelation value is the inner product of two states of Fourier depth
$k$, after one further basis-preserving query.  Aaronson and
Ambainis~\cite{aaronson2014forrelationproblemoptimallyseparates} compare the two half-states using a
control-qubit interference test, which estimates $\mathsf{forr}$ itself with $\lceil K/2\rceil$
queries.  We instead compare them with a SWAP test, whose only gate that is not basis-preserving is one
Hadamard on the control qubit.  Two of its properties are relevant here: it estimates the \emph{square}
$\mathsf{forr}^2$ rather than $\mathsf{forr}$, and, since the two halves have the same depth $k$, its
Fourier depth is exactly $k+1$.  This gives the following proposition.  We state it for an arbitrary oracle tuple, since both
the worst-case statements of this section and the statements about specific ensembles in
Appendix~\ref{sec:appendix-sign} are obtained from the same circuit.

\begin{proposition}[A SWAP test for odd Forrelation at depth $k+1$]\label{prop:odd-swap-circuit}
For every fixed $k\ge 2$ there is a uniform polynomial-size $\FH_{k+1}$ circuit family making
$2k-1$ phase queries whose acceptance probability on every oracle tuple
$(F_0,\dots,F_{2k-2})$ equals
\[
p_{\mathrm{swap}}^{(k)}(F_0,\dots,F_{2k-2})
=
\frac{1+\Phi_{2k-1}(F_0,\dots,F_{2k-2})^2}{2},
\]
where $\Phi_{2k-1}$ is the $(2k-1)$-fold Forrelation of \eqref{eq:odd-Phi}.
\end{proposition}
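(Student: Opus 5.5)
The plan is to run the two half-circuits of \eqref{eq:odd-overlap} side by side on two $n$-qubit registers $A$ and $B$, so that they share their $k$ Hadamard layers, and then apply a SWAP test (Lemma~\ref{lem:swap-test}) whose control qubit is created by the first of those layers and read out through one additional layer. Concretely, I would take $m=2n+1$ designated wires, holding $A$, $B$ and a control qubit $c$, all initialised to $\ket{0}$, and set $U_0=I$. The first global layer then produces $\ket{+}_c$ together with the uniform superposition on $A$ and on $B$.

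For each interior block $U_j$ with $1\le j\le k-1$, I apply $P_{F_{j-1}}$ to $A$ and $P_{F_{2k-1-j}}$ to $B$. Each of these is one indexed phase query of Definition~\ref{def:FHk-rel}, with the function index written classically into a basis-state index register. The control qubit is routed off the designated wires by SWAPs, and a fresh placeholder is put in its place, as in Remark~\ref{rem:register-routing}. This keeps $c$ in $\ket{+}$ through layers $2,\dots,k$. After the $k$th layer, a direct check against the definitions of $\ket{\psi_L}$ and $\ket{\psi_R}$ shows the following. Register $A$ holds $\ket{\psi_L}$, having seen $F_0,\dots,F_{k-2}$ in that order. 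Register $B$ holds $\ket{\psi_R}$, having seen $F_{2k-2},\dots,F_k$ in that order. The only point to verify is that the $j$th interior query on $B$ is $F_{2k-1-j}$, which matches the right-to-left order in $\ket{\psi_R}$.

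In the final block $U_k$, I apply $P_{F_{k-1}}$ to $B$ (one more query), route $c$ back, and apply controlled-SWAP of $A$ and $B$. All of these are basis-preserving. The $(k+1)$st layer must act on $c$ and nowhere it matters, so before it I swap $A$ and $B$ off the designated wires and put fresh basis-state placeholders in their place. The accepting set $S$ is $\{c=0\}$, which is trivially polynomial-time recognizable. Lemma~\ref{lem:swap-test} with $\ket{\psi}=\ket{\psi_L}$ and $\ket{\varphi}=P_{F_{k-1}}\ket{\psi_R}$ gives acceptance probability $\tfrac12\bigl(1+|\bra{\psi_L}P_{F_{k-1}}\ket{\psi_R}|^2\bigr)$. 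By \eqref{eq:odd-overlap}, which is real-valued, this equals $\tfrac12(1+\Phi_{2k-1}^2)$. The query count is $(k-1)+(k-1)+1=2k-1$, the Fourier depth is exactly $k+1$, and the circuit uses only SWAPs, controlled-SWAPs, classical index preparation and oracle gates between layers, so uniformity and polynomial size are clear.

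The main obstacle is the bookkeeping of the fixed-wire convention. I need to confirm three things. First, the placeholders that absorb the later Hadamard layers, and the ancilla-wire contents, are the same on both branches of the control. This holds because they are oracle-independent product states untouched by the controlled-SWAP, so they factor out of the inner product and do not spoil the SWAP-test calculation. Second, the control genuinely passes through only the first and last layers. Third, queries acting on the hidden-away registers are not needed. I expect these to be routine once the routing is written down explicitly.
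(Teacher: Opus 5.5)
Your proposal is correct and is the same construction as the paper's proof. You prepare the two half-states in parallel on two registers that share the $k$ Hadamard layers, put the control into $\ket{+}$ at the first layer and route it off until the $(k+1)$st, then apply the middle query $P_{F_{k-1}}$ to the right register and a controlled-SWAP, and Lemma~\ref{lem:swap-test} with \eqref{eq:odd-overlap} gives $\tfrac12(1+\Phi_{2k-1}^2)$ using $2(k-1)+1=2k-1$ queries. Your extra bookkeeping, that the placeholders and ancillas factor out as oracle-independent product states, is exactly what Remark~\ref{rem:register-routing} provides, and the paper relies on it implicitly in the same way.
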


\begin{proof}
The circuit uses two $n$-qubit data registers, denoted $R_L$ and $R_R$, one control qubit $c$, and
ancilla registers used for routing.

\emph{The circuit.}
We first prepare $\ket{\psi_L}$ on $R_L$ and $\ket{\psi_R}$ on $R_R$ in parallel.  Each preparation
uses exactly $k$ global Hadamard layers, and the two preparations share these layers: at the $j$th
layer the designated wires are those of $R_L$ and $R_R$, together with, at the first layer only, the
control qubit $c$, which is thereby prepared in $\ket{+}$.  Before each later layer, SWAP gates
exchange $c$ with an ancilla in a computational basis state, so that the layer does not act on it
(Remark~\ref{rem:register-routing}).  Between consecutive Hadamard layers the circuit performs only
phase queries.  After the $k$th layer, apply the middle phase oracle
$P_{F_{k-1}}$ to $R_R$, then a SWAP of $R_L$ and $R_R$ controlled by $c$, then route $c$ back
among the designated wires, apply the $(k+1)$st global Hadamard layer to $c$, and measure.
The circuit accepts if the outcome of $c$ is $0$.

\emph{Depth and query count.}
All gates other than the global Hadamard layers are basis-preserving: phase queries are
diagonal in the computational basis, and SWAP and controlled SWAP gates are permutations of
computational basis states.  Hence the circuit uses $k+1$ global Hadamard layers and is a uniform
$\FH_{k+1}$ circuit.  It makes
$2(k-1)$ phase queries in the two preparations and one middle query, for a total of $2k-1$.

\emph{The acceptance probability.}
By the SWAP test identity (Lemma~\ref{lem:swap-test}) applied to the states
$\ket{\psi_L}$ and $P_{F_{k-1}}\ket{\psi_R}$, the probability of outcome $0$ is
\[
\frac{1+\abs{\bra{\psi_L}P_{F_{k-1}}\ket{\psi_R}}^2}{2}
=
\frac{1+\Phi_{2k-1}(F_0,\dots,F_{2k-2})^2}{2},
\]
where the last equality is \eqref{eq:odd-overlap} and the Forrelation value is real.
\end{proof}

The SWAP test is not the sharpest way to use the split.  Because the oracle of
Definition~\ref{def:FHk-rel} is indexed, a single phase query can select which function it applies
according to a control qubit held in superposition, and the distinguished index carries the
constant-zero function, so a query can be made to act on one branch of that control and not on the
other.  This turns the control-qubit interference test of Aaronson and
Ambainis~\cite{aaronson2014forrelationproblemoptimallyseparates} into an $\FH_{k+1}$ circuit that
estimates the Forrelation value itself, and does so with fewer queries.

\begin{proposition}[An interference test for odd Forrelation at depth $k+1$]\label{prop:odd-interference-circuit}
For every fixed $k\ge 2$ there is a uniform polynomial-size $\FH_{k+1}$ circuit family making $k$
phase queries whose acceptance probability on every oracle tuple $(F_0,\dots,F_{2k-2})$ equals
\[
p_{\mathrm{int}}^{(k)}(F_0,\dots,F_{2k-2})
=
\frac{1+\Phi_{2k-1}(F_0,\dots,F_{2k-2})}{2},
\]
where $\Phi_{2k-1}$ is the $(2k-1)$-fold Forrelation of \eqref{eq:odd-Phi}.
\end{proposition}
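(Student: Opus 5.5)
The plan is to realize the control-qubit interference test of Aaronson and Ambainis on a \emph{single} $n$-qubit data register. The two branches of a control qubit $c$ prepare $\ket{\psi_L}$ and $\ket{\psi_R}$ respectively, sharing the same $k$ Hadamard layers. Then the middle oracle is applied on the $c=1$ branch only, and one final Hadamard on $c$ reads out $\operatorname{Re}\bra{\psi_L}P_{F_{k-1}}\ket{\psi_R}$. By \eqref{eq:odd-overlap} this quantity equals $\Phi_{2k-1}$, and it is real.

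\emph{The circuit.} I would place the functions $F_0,\dots,F_{2k-2}$ at fixed nonzero indices $a_0,\dots,a_{2k-2}\in A_n$, with the distinguished index $0^{s(n)}$ carrying the constant-zero function, as in Definition~\ref{def:FHk-rel}. The first global layer acts on the data register $D$ (initially $\ket{0^n}$) together with $c$, so that $c$ becomes $\ket{+}$. Before each later layer, $c$ is swapped out for a fresh basis-state placeholder (Remark~\ref{rem:register-routing}).

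In the $j$th interior block, $1\le j\le k-1$, a classical reversible circuit controlled by $c$ writes into an index register $I$ the value $a_{j-1}$ if $c=0$ and $a_{2k-1-j}$ if $c=1$. The circuit then makes one indexed phase query on $(I,D)$ and uncomputes $I$. All of these gates are basis-preserving. Reading off the order of the factors in $\ket{\psi_L}$ and $\ket{\psi_R}$, after the $k$th layer the branch $c=0$ holds $\ket{\psi_L}$ and the branch $c=1$ holds $\ket{\psi_R}$. These are the same left-to-right factor orders as in the definitions, since $\ket{\psi_R}$ applies $F_{2k-2}$ first and $F_k$ last.

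In the final block, the circuit writes $a_{k-1}$ into $I$ if $c=1$ and $0^{s(n)}$ otherwise, makes one query, and uncomputes $I$. This applies $P_{F_{k-1}}$ on the $c=1$ branch only. The state is then
\[
\tfrac1{\sqrt2}\bigl(\ket0\ket{\psi_L}+\ket1P_{F_{k-1}}\ket{\psi_R}\bigr).
\]

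\emph{Readout, depth and count.} Next, route $c$ back onto the designated wires and route $D$ off them, replacing it with fresh placeholders. Apply the $(k+1)$st layer and accept when $c=0$. The placeholders are oracle-independent product states, unentangled with $(c,D)$, so they do not affect the outcome statistics of $c$. The acceptance probability is then
\[
\tfrac14\bigl\|\ket{\psi_L}+P_{F_{k-1}}\ket{\psi_R}\bigr\|^2=\tfrac12\bigl(1+\operatorname{Re}\bra{\psi_L}P_{F_{k-1}}\ket{\psi_R}\bigr)=\tfrac12(1+\Phi_{2k-1}).
\]
The circuit makes $k-1$ preparation queries and one middle query, for $k$ queries in total. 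It uses $k+1$ Hadamard layers, and every other gate is basis-preserving. Uniformity and polynomial size are clear, since the index values are fixed strings of length $O(\log n)$.

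The step needing the most care is the bookkeeping of the shared layers. I must check that the routing keeps $c$ unaffected at layers $2,\dots,k$ and keeps $D$ unaffected at layer $k+1$. I must also check that the controlled index choice really is a basis-preserving oracle gate with no residual entanglement in $I$, and that uncomputing $I$ via the same $c$-controlled circuit accomplishes this.
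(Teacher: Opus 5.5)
Your proposal is correct and is essentially the paper's own construction. Both use a single data register with a control qubit that selects $F_{j-1}$ or $F_{2k-1-j}$ through the indexed oracle in the blocks after layers $1,\dots,k-1$. Both make the middle query on the $c=1$ branch only, using the distinguished constant-zero index on the other branch. Both apply the $(k+1)$st layer to $c$ alone, which gives $\tfrac14\Norm{\ket{\psi_L}+P_{F_{k-1}}\ket{\psi_R}}^2=\tfrac12(1+\Phi_{2k-1})$ by \eqref{eq:odd-overlap}. One small point of terminology: the block you call ``final'' is the last interior block, between layers $k$ and $k+1$, and that is exactly where the middle query must sit and where the paper places it.
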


\begin{proof}
The circuit uses one $n$-qubit data register $D$, a control qubit $c$, an index register, and
ancillas used for routing.

\emph{The two half-states, on the two branches of the control.}
The first global Hadamard layer acts on $c$ and on $D$, leaving $\ket+_c\otimes\Had\ket{0^n}_D$.
For $j=1,\dots,k-1$, the block following the $j$th layer computes into the index register, by a
classical reversible circuit reading $c$, the index of $F_{j-1}$ when $c=0$ and the index of
$F_{2k-1-j}$ when $c=1$; it then makes one indexed phase query and uncomputes the index.  The query is diagonal and the
index register is entangled with $c$, so this single query applies $F_{j-1}$ on one branch and
$F_{2k-1-j}$ on the other.  The
$(j+1)$st layer acts on $D$ alone, with $c$ routed off the designated wires by the
basis-preserving SWAP gates of Remark~\ref{rem:register-routing}.  On the branch $c=0$ the data
register is therefore acted on by $F_0,\dots,F_{k-2}$ in order, separated by the layers
$1,\dots,k$, and on the branch $c=1$ by $F_{2k-2},F_{2k-3},\dots,F_k$; after the $k$th layer the
state is
\[
\frac{1}{\sqrt2}\bigl(\ket0_c\ket{\psi_L}+\ket1_c\ket{\psi_R}\bigr),
\]
with $\ket{\psi_L}$ and $\ket{\psi_R}$ as in \eqref{eq:odd-overlap}.

\emph{The middle query, on one branch only.}
The block following the $k$th layer makes the $k$th query.  It computes into the index register the
distinguished index $0^{s(n)}$ when $c=0$ and the index of $F_{k-1}$ when $c=1$, makes one indexed
phase query, and uncomputes the index.  The distinguished index carries the constant-zero function,
so the branch $c=0$ acquires the phase $+1$ and the branch $c=1$ acquires $F_{k-1}(y)$ at each
point $y$, leaving
\[
\frac{1}{\sqrt2}\bigl(\ket0_c\ket{\psi_L}+\ket1_c P_{F_{k-1}}\ket{\psi_R}\bigr).
\]
\emph{Reading the relative phase.}
The $(k+1)$st layer acts on $c$ alone, and $c$ is measured; the circuit accepts on outcome $0$.
That layer produces
\[
\frac12\Bigl(\ket0\bigl(\ket{\psi_L}+P_{F_{k-1}}\ket{\psi_R}\bigr)
+\ket1\bigl(\ket{\psi_L}-P_{F_{k-1}}\ket{\psi_R}\bigr)\Bigr),
\]
so, by $\Norm{a+b}^2=\Norm a^2+\Norm b^2+2\Re\langle a,b\rangle$ and since
$\ket{\psi_L}$ and $P_{F_{k-1}}\ket{\psi_R}$ are unit vectors,
\[
\Prb[c=0]
=\frac14\Norm{\ket{\psi_L}+P_{F_{k-1}}\ket{\psi_R}}^2
=\frac14\bigl(2+2\Re\bra{\psi_L}P_{F_{k-1}}\ket{\psi_R}\bigr)
=\frac{1+\Phi_{2k-1}}{2},
\]
where the last equality is \eqref{eq:odd-overlap} and the Forrelation value is real.

\emph{Depth and query count.}
The circuit makes $k-1$ queries in the blocks following the layers $1,\dots,k-1$ and one after the
$k$th layer, so $k$ in all.  Every gate outside the $k+1$ global Hadamard layers is
basis-preserving: indexed phase queries are diagonal in the computational basis, and the index
computation, its uncomputation and the register routing are classical reversible operations.  The
circuit is therefore a uniform $\FH_{k+1}$ circuit.
\end{proof}

\begin{remark}[Why odd-order Forrelation]\label{rem:why-odd}
A $2k$-fold Forrelation expression has $2k+1$ Hadamard layers when written as an amplitude, and
cutting it at any
query leaves at least $k+1$ layers on one side.  Preparing that side therefore already requires $k+1$
layers, and comparing the two halves would give Fourier depth $k+2$.  An odd-order expression, on the other
hand, splits
into two halves of depth $k$ together with one basis-preserving middle query, and this is the only
splitting that places the comparison at depth $k+1$.
\end{remark}

\subsection{Simulation of Fourier depth by bounded adaptivity}\label{subsec:round-embedding}

We now prove the structural statement used for the lower bound, in the bounded-round model of
Definition~\ref{def:round-model}.

\begin{lemma}[Round simulation]\label{lem:round-embedding}
Let $k\ge 1$ and let $C$ be an $\FH_k$ oracle circuit that makes $q$ phase queries in total to a
tuple of Boolean phase oracles with concatenated truth table length $M$.  Then there is a quantum
query algorithm $\mathcal A$ with $k-1$ adaptive rounds of at most $q$ parallel queries per round
such that, for every oracle tuple,
\[
\Prb[\mathcal A\text{ accepts}]=\Prb[C\text{ accepts}].
\]
In particular, every $\FH_2$ computation making polynomially many queries is simulated exactly
by a non-adaptive quantum query algorithm with polynomially many parallel queries, and the
acceptance probability of an $\FH_1$ computation making $q$ queries is independent of the oracle.
\end{lemma}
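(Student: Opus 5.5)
The plan is to transform $C=U_k(H^{\otimes m}\otimes I)U_{k-1}\cdots U_1(H^{\otimes m}\otimes I)U_0$ block by block. The outer blocks are deleted and each interior block $U_1,\dots,U_{k-1}$ is replaced by one round of parallel queries sandwiched between oracle-independent unitaries.

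\emph{Step 1: the outer blocks carry no oracle information.} The block $U_0$ acts on the fixed basis state $\ket{0}$. Within a basis-preserving block, a phase query does not change the register, so the branch passes through a fixed, oracle-independent sequence of basis states. Each query therefore contributes one sign, and the net effect is a global phase $\pm1$ times the oracle-independent state $\lambda_0\ket{z_0}$. Replacing $U_0$ by its oracle-free version (all queries deleted) leaves the acceptance probability unchanged. The same reasoning applies to $U_k$: for each basis input $\ket z$, the queries multiply the branch by a sign depending on $z$ and do not alter its output label. Hence $U_k=\widetilde U_k D_O$ with $D_O$ diagonal and $\widetilde U_k$ oracle-free. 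We get $\Norm{\Pi_S U_k\ket\phi}^2=\Norm{\Pi_S\widetilde U_k D_O\ket\phi}^2$, but we cannot simply drop $D_O$, since $\widetilde U_k$ may permute labels. The way around this is to write $U_k\ket z=\lambda(z)s_O(z)\ket{\pi(z)}$ with $\pi$ oracle-independent. Because $\pi$ is a bijection, the measurement probabilities $\abs{\langle\pi(z)|U_k|\phi\rangle}^2=\abs{\langle z|\phi\rangle}^2$ do not depend on the signs. Deleting the queries of $U_k$ is therefore harmless. For $k=1$ nothing remains, so the acceptance probability is oracle-independent, which is the last claim of the lemma.

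\emph{Step 2: structure of an interior block.} Fix $1\le i\le k-1$ and a basis input $\ket z$. By induction along the gates of $U_i$, the trajectory of basis labels is independent of the oracle. The $j$th query (in gate order) is made at an address $a_j(z)=(a,y)$ that is a polynomial-time computable function of $z$; when the branch takes no query at position $j$, we set $a_j(z)=0$. This gives $U_i\ket z=\lambda_i(z)\,\prod_{j\le q_i}x_{a_j(z)}\,\ket{\pi_i(z)}$, where $x$ is the concatenated truth table of Definition~\ref{def:round-model}, $x_0:=1$, and $q_i\le q$ is the number of queries in the block. So $U_i=\widetilde U_i\,D_i(x)$ with $\widetilde U_i$ oracle-free and $D_i(x)\ket z=\prod_j x_{a_j(z)}\ket z$.

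\emph{Step 3: one round implements $D_i(x)$.} Introduce $q$ fresh query registers initialized to $\ket0$. An oracle-free reversible circuit $V_i$ maps $\ket z\ket{0,\dots,0}\mapsto\ket z\ket{a_1(z),\dots,a_{q_i}(z),0,\dots}$, with Bennett garbage uncomputed. Applying $O_x^{\otimes q}$ then multiplies the state by exactly $\prod_j x_{a_j(z)}$, and $V_i^\dagger$ restores the registers to $\ket0$. Thus $V_i^\dagger O_x^{\otimes q}V_i=D_i(x)\otimes\ket{0}\!\bra{0}+(\text{action on the orthogonal complement})$, which agrees with $D_i(x)$ on every state with clean registers. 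Since the registers are returned to $\ket0$ after each use, they can be reused in every round.

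\emph{Step 4: assembly.} The algorithm $\mathcal A$ consists of the oracle-free initial state from Step 1, followed, for $i=1,\dots,k-1$, by the fixed unitary $\widetilde U_{i-1}(H^{\otimes m}\otimes I)$ and then $V_i^\dagger,O_x^{\otimes q},V_i$, with the last three groupings merged into single input-independent unitaries. It ends with $\widetilde U_k(H^{\otimes m}\otimes I)$ and the two-outcome measurement $\{\Pi_S,I-\Pi_S\}$. This has the form of Definition~\ref{def:round-model} with $k-1$ rounds of $q$ parallel queries, and its final state coincides with that of $C$ up to the clean query registers and the discarded signs, so the acceptance probabilities agree. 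The case $k=2$ gives one round, i.e.\ a non-adaptive algorithm.

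The main obstacle is Step 2. One has to check carefully that phase gates, classical reversible gates, controlled SWAPs and ancilla preparations never make the trajectory depend on $x$. One also has to handle branches with fewer than $q_i$ queries by padding with the trivial index $0$, and make sure the address computation $V_i$ uses only oracle-free gates.
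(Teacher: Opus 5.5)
Your proposal is correct and follows the paper's proof step for step: the queries of $U_0$ contribute only a global phase, the queries of $U_k$ are harmless because its output permutation is oracle-independent, and each interior block, whose query addresses are oracle-independent functions of the incoming basis label, becomes one compute--query--uncompute round. Two small bookkeeping fixes: the factor $\widetilde U_{k-1}$ has dropped out of your assembly, so the last stage should be $\widetilde U_k(H^{\otimes m}\otimes I)\widetilde U_{k-1}$ followed by the measurement; and since an $\FH_k$ circuit may have exact depth $j<k$, you should pad the resulting $j-1$ rounds with $k-j$ empty rounds, as the paper does.
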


\begin{proof}
Let $j\le k$ be the exact Fourier depth of $C$; Definition~\ref{def:FHk-rel} permits $j<k$.  If
$j=0$, then $C$ starts in a computational-basis state and applies only basis-preserving gates and
phase queries, so the measured label is oracle-independent and each phase query contributes only a
global phase; the acceptance probability is then oracle-independent and is reproduced by a
zero-query algorithm with $k-1$ empty rounds.  Assume henceforth $j\ge1$.  The argument below
gives a $(j-1)$-round simulation; appending $k-j$ empty rounds, which make no queries, yields
the stated $(k-1)$-round algorithm.  We therefore prove the claim for exact depth $k$.  Write the
circuit as
\[
C=U_k\,(H^{\otimes m}\otimes I)\,U_{k-1}\,(H^{\otimes m}\otimes I)\cdots
U_1\,(H^{\otimes m}\otimes I)\,U_0,
\]
where each block $U_j$ is a basis-preserving circuit composed of oracle-independent
basis-preserving gates and phase queries, followed by a computational basis measurement of all
wires with accepting set $S$ (Definition~\ref{def:FH2FH3}).

\emph{Step 1: the queries outside the interior blocks can be removed.}
Consider first the initial block $U_0$.  All ancillas start in computational basis states, and
every gate of $U_0$ is basis-preserving, so the state remains a single computational basis state
throughout $U_0$.  A phase query applied to a basis state multiplies the global state by the
scalar $F_a(y)\in\pmone$, where $(a,y)$ is determined by the current basis state.  A global
scalar of modulus one does not affect any outcome probability, so all queries in $U_0$ may be
deleted.

Consider next the final block $U_k$.  Every gate of $U_k$ is basis-preserving, so there are a
permutation $\pi_k$ of the computational basis and phases $\lambda^{(x)}(w)$ of modulus one,
possibly depending on the oracle string $x$ through the queried values, such that
$U_k\ket{w}=\lambda^{(x)}(w)\ket{\pi_k(w)}$; the permutation $\pi_k$ is oracle-independent
because phase queries do not move basis states.  For the acceptance projector
$\Pi_S=\sum_{s\in S}\ket{s}\bra{s}$,
\[
\bra{w}U_k^{\dagger}\Pi_S U_k\ket{w'}
=
\overline{\lambda^{(x)}(w)}\lambda^{(x)}(w')\,
\bra{\pi_k(w)}\Pi_S\ket{\pi_k(w')}
=
\delta_{w,w'}\,\mathbf 1\{\pi_k(w)\in S\},
\]
since $\pi_k$ is injective and $\Pi_S$ is diagonal.  Hence
$U_k^{\dagger}\Pi_SU_k=\Pi_{\pi_k^{-1}(S)}$ regardless of the oracle, so $U_k$ may be deleted
altogether provided the final measurement uses the accepting set $\pi_k^{-1}(S)$ in place of $S$.
All queries in $U_k$ are removed with it.

After Step 1 the circuit makes all its queries inside the $k-1$ interior blocks
$U_1,\dots,U_{k-1}$; let $q_j$ be the number of queries in $U_j$, so $\sum_j q_j\le q$.  If
$k=1$ there are no interior blocks and the acceptance probability is oracle-independent,
which proves the last assertion.

\emph{Step 2: inside a block, the query addresses are oracle-independent.}
Fix an interior block $U_j$ with gate sequence $g_1,\dots,g_T$, and let $\ket{z}$ be a
computational basis state entering the block.  We claim that the register content after any
prefix $g_1,\dots,g_i$ is a basis state $\ket{w_i(z)}$ whose label does not depend on the oracle,
and that the state equals a product of queried oracle values and oracle-independent phases times
$\ket{w_i(z)}$.  This holds by induction on $i$: an oracle-independent basis-preserving gate maps
$\ket{w}\mapsto\lambda(w)\ket{\sigma(w)}$ with fixed $\lambda,\sigma$; a phase query multiplies
the state by the oracle value at the address currently held in the queried registers and does
not change the register content.  Consequently the address of the $i$th query of the block on
branch $z$ is a fixed function $a_j^{(i)}(z)$ of the branch label, computable by composing the
oracle-independent permutations of the block, and the block acts as
\[
U_j\ket{z}
=
\lambda_j(z)\,
\Bigl(\prod_{i=1}^{q_j}x_{a_j^{(i)}(z)}\Bigr)\,
\ket{\pi_j(z)},
\]
with $\lambda_j$, $\pi_j$, and the address maps $a_j^{(i)}$ all oracle-independent.

\emph{Step 3: one round for each block.}
Set $t:=\max_j q_j\le q$.  The simulating algorithm $\mathcal A$ uses the register of $C$
together with $t$ fresh address registers, each of dimension $M+1$ and initialized to
$\ket{0}$.  Its initial unitary prepares $(H^{\otimes m}\otimes I)\,U_0'\ket{0\cdots 0}$, where
$U_0'$ is the query-free remainder of $U_0$; this is an input-independent unitary.  Round $j$
(for $1\le j\le k-1$) is defined as follows.
\begin{enumerate}[leftmargin=2em]
\item Apply the input-independent unitary
$V_j:\ket{z}\ket{0}^{\otimes t}\mapsto
\ket{z}\ket{a_j^{(1)}(z)}\cdots\ket{a_j^{(q_j)}(z)}\ket{0}^{\otimes(t-q_j)}$,
implemented by XOR of the reversibly computed addresses into the address registers.
\item Apply the parallel batch $O_x^{\otimes t}$ to the address registers.  This multiplies
branch $z$ by $\prod_{i=1}^{q_j}x_{a_j^{(i)}(z)}$; the unneeded registers hold $\ket 0$ and
contribute no phase.
\item Apply $V_j^{\dagger}$ to restore the address registers to $\ket{0}^{\otimes t}$, then apply
the input-independent unitary $D_j:\ket z\mapsto\lambda_j(z)\ket{\pi_j(z)}$, and then the next
global Hadamard layer $(H^{\otimes m}\otimes I)$.  Thus round $j$ reproduces $U_j$ followed by the
$(j+1)$-st Hadamard layer; in particular round $k-1$ applies the $k$-th and final Hadamard layer
of $C$.  The oracle-free final block $U_k$ acts, by Step~1, only as the relabeling $\pi_k$, so
no further unitary is applied and the two-outcome measurement instead projects onto
$\pi_k^{-1}(S)$.
\end{enumerate}
Items (1) and (3) above are absorbed into the arbitrary unitaries between rounds
(Definition~\ref{def:round-model}).  By Step 2, round $j$ reproduces the action of
$U_j$ followed by the next Hadamard layer exactly, on every branch and for every oracle; together
with the initial unitary, which provides the first Hadamard layer, the $k-1$ rounds apply all $k$
Hadamard layers of $C$, and the final two-outcome measurement is the projection onto
$\pi_k^{-1}(S)$.  Hence $\mathcal A$ has
$k-1$ rounds of at most $t\le q$ parallel queries and the same acceptance probability as $C$ on
every oracle tuple.
\end{proof}

\begin{remark}[The simulation is one-way]\label{rem:strict-embedding}
Lemma~\ref{lem:round-embedding} embeds $\FH_k$ into the bounded-round model
of~\cite{girish2024power} in one direction only, and the round model is syntactically more general.
An algorithm with $k-1$ rounds interleaves its $k-1$ parallel-query batches with $k$
\emph{arbitrary} input-independent unitaries, whereas an $\FH_k$ circuit is the special case in which
each such unitary consists of a single global Hadamard layer together with basis-preserving
computation.  A general unitary is not a single Hadamard layer, so a bounded-round algorithm need not
be an $\FH_k$ circuit.  The weight bound for the larger round class therefore applies in particular
to $\FH_k$, which is all that the lower bound requires, but it does not give an \emph{exact}
query-complexity characterization of the phase-query class $\FH_k$; we leave this as an open question
(Section~\ref{sec:open-problems}).
\end{remark}

For the standard oracle model, in which a query performs
$\ket{y}\ket{b}\mapsto\ket{y}\ket{b\oplus f_a(y)}$ and is likewise basis-preserving, the
register contents inside a block are no longer oracle-independent, since query answers are written
into the register and later addresses may depend on them.  Each branch of a block then evolves as an
adaptive classical query process, and the branches remain in superposition.  The adaptivity can still be replaced by parallelism,
but only at the price of exponentially many queries and a larger number of rounds
(Proposition~\ref{prop:three-round-bit-embedding}); this is why the standard-query lower bound needs
a different argument.

Two statements make this precise.  Lemma~\ref{lem:classical-prefix-lifting} shows that an
initial stage of adaptive classical queries costs only a polynomial factor in the Fourier weight; it
is how the initial block of a standard-query circuit is handled in
Section~\ref{sec:standard-hierarchy}.  Proposition~\ref{prop:three-round-bit-embedding} is the direct
simulation of a standard-query circuit by parallel rounds, at exponential width; it is useful for
comparing the two models but is not used in the lower bound.

\begin{lemma}[Classical preprocessing lifting]\label{lem:classical-prefix-lifting}
Suppose an algorithm first makes at most $d_0$ adaptive classical queries to an $M$-bit oracle string
and then, at each leaf $z$ of the resulting decision tree, runs an arbitrary continuation
$\mathcal A_z$.  Let $f$ be the acceptance probability of the whole algorithm and let $f_z$ be the
acceptance probability of $\mathcal A_z$ after the answers on the path to $z$ have been fixed.
Then, for every restriction $\rho$ and every $\ell\ge1$, the level-$\ell$ Fourier weight
$L_{1,\ell}$ of Definition~\ref{def:fourier-weight} satisfies
\[
L_{1,\ell}(f_\rho)
\le
\sum_{a=0}^{\ell}\binom{d_0}a\max_z L_{1,\ell-a}((f_z)_\rho).
\]
(The weaker bound with $d_0^a$ in place of $\binom{d_0}a$ follows at once.)  In particular, if every
continuation $\mathcal A_z$ is an $r$-round quantum query algorithm, $r\ge1$, with $t$ parallel phase
queries per round, where $1\le t\le M$, then
\[
L_{1,\ell}(f_\rho)
\le
2^{O_{r,\ell}(1)}(d_0+t)^\ell
M^{\frac{\ell}{2}(1-\frac1{2r})}.
\]
\end{lemma}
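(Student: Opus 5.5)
The plan is to write $f$ as a sum over the leaves of the classical decision tree and then expand the leaf indicators. For a leaf $z$, let $P_z$ be its root-to-leaf path: the set of at most $d_0$ queried coordinates $i$ together with the prescribed answers $b_i\in\pmone$. The indicator that input $x$ reaches $z$ is
\[
\mathbf 1_z(x)=\prod_{i\in P_z}\frac{1+b_ix_i}{2}.
\]
The continuation's acceptance probability $f_z$ may be taken to be a function of the coordinates outside $P_z$, since on $\mathbf 1_z$ the path coordinates are fixed. This gives $f=\sum_z \mathbf 1_z\cdot f_z$.

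Under a restriction $\rho$, leaves contradicting $\rho$ vanish, and surviving leaves keep only their free path coordinates. Restriction commutes with this decomposition, so $f_\rho=\sum_z(\mathbf 1_z)_\rho (f_z)_\rho$. In this product, $(\mathbf 1_z)_\rho$ involves only free coordinates of $P_z$, and $(f_z)_\rho$ involves only coordinates outside $P_z$. They therefore have disjoint supports, so their Fourier expansions multiply without collisions.

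Expanding $(\mathbf 1_z)_\rho=2^{-|P_z|_{\rm free}}\prod(1+b_ix_i)\cdot c_z$, where $c_z\in\{0,1\}$ records consistency with $\rho$, the level-$\ell$ part of the product is
\[
\sum_{a}\ \sum_{A\subseteq P_z,\ |A|=a}\ 2^{-|P_z|}\,\pm\chi_A\cdot\bigl[(f_z)_\rho\bigr]^{=\ell-a}.
\]
Summing absolute values over $z$, I would bound the contribution for a fixed set $A$ by grouping leaves. The weight $2^{-|P_z|}$ is the probability that a uniform random walk down the tree reaches $z$ (restricted to free coordinates). So $\sum_z 2^{-|P_z|}\,L_{1,\ell-a}((f_z)_\rho)\le\max_z L_{1,\ell-a}((f_z)_\rho)$.

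This must then be summed over the choices of $A$ of size $a$ along each path. There are $\binom{|P_z|}{a}\le\binom{d_0}{a}$ such choices per leaf. Using the probability weights, the total is at most $\binom{d_0}{a}\max_z L_{1,\ell-a}$, which yields the first inequality after summing over $a$ from $0$ to $\ell$. I expect this counting step to be the main obstacle. The point to get right is that I bound per leaf first, counting the subsets of that leaf's own path, and only then average with the probability weights. Distinct leaves may query different coordinates, and the triangle inequality over leaves, averaged this way, handles that without needing the coefficients to combine.

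For the corollary, I apply Theorem~\ref{thm:gstw-growth} to each $(f_z)_\rho$, which is legitimate because that theorem is restriction-stable. For $\ell-a\ge1$ this gives
\[
L_{1,\ell-a}\le 2^{O_{r,\ell}(1)}\,t^{\ell-a}\,M^{\frac{\ell-a}{2}(1-\frac1{2r})},
\]
using $\lfloor\cdot\rfloor\le$ its argument and $(M/t)^{\cdot}\le M^{\cdot}$. For $a=\ell$, $L_{1,0}\le1$ since $f_z\in[0,1]$. Then
\[
\sum_a\binom{d_0}a t^{\ell-a}M^{\frac{\ell-a}{2}(1-\frac1{2r})}\le M^{\frac\ell2(1-\frac1{2r})}\sum_a\binom{\ell}{a}d_0^a t^{\ell-a}=(d_0+t)^\ell M^{\frac\ell2(1-\frac1{2r})},
\]
using $\binom{d_0}{a}\le d_0^a$, which absorbs the remaining constants into $2^{O_{r,\ell}(1)}$.
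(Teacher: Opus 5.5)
Your proposal is correct and follows the paper's proof essentially step for step. The shared steps are the leaf decomposition $f=\sum_z\chi_z f_z$, the disjoint-support product that gives each surviving leaf weight $2^{-d_z'}\binom{d_z'}{a}$ at level $a$, the fact that the leaves consistent with $\rho$ partition the free cube so that $\sum_z 2^{-d_z'}=1$, and Theorem~\ref{thm:gstw-growth} combined with the binomial-theorem bound for the second inequality. The one detail the paper makes explicit is first pruning the tree to be read-once on each root-to-leaf path, so that $P_z$ is a set of distinct coordinates; your random-walk reading of $2^{-|P_z|}$ needs this, although the partition-of-the-cube argument gives $\sum_z 2^{-|P_z|}=1$ regardless.
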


\begin{proof}
We may assume that the decision tree is read-once on every root-to-leaf path, by retaining an answer
whenever the original preprocessing would query the same coordinate again.  For a leaf $z$, let $\chi_z$ be
its path indicator and let $d_z\le d_0$ be its depth.  Define $f_z$ to be the acceptance probability
of the continuation after all coordinates queried on the path to $z$ have been fixed to
their leaf values, viewed as a function on the full cube by extending it constantly in those path
coordinates.  Thus $f=\sum_z\chi_zf_z$.  After a restriction $\rho$ consistent with $z$, the
functions $(\chi_z)_\rho$ and $(f_z)_\rho$ depend on disjoint sets of free coordinates, so their
Fourier supports combine without cancellation of degree and
\[
L_{1,\ell}\bigl((\chi_zf_z)_\rho\bigr)
\le\sum_{a=0}^{\ell}L_{1,a}((\chi_z)_\rho)\,L_{1,\ell-a}((f_z)_\rho).
\]
If $d_z'$ path coordinates remain free, then $L_{1,a}((\chi_z)_\rho)=2^{-d_z'}\binom{d_z'}{a}$, and
the consistent leaves partition the remaining cube, so $\sum_z2^{-d_z'}=1$.  Since
$\binom{d_z'}{a}\le\binom{d_0}a$, summing over the leaves proves the first inequality.  For $a<\ell$, apply
Theorem~\ref{thm:gstw-growth} at level $\ell-a\ge1$ to each quantum continuation, using its restriction
clause; the summand $a=\ell$ is at most $\binom{d_0}\ell$, since $L_{1,0}(g)=\abs{\E g}\le1$ for every $g$ with
values in $[0,1]$.  Since $1\le t\le M$, every summand is at most
$\binom{d_0}a(2^{2r}-1)^{2\ell}t^{\ell-a}M^{\frac\ell2(1-\frac1{2r})}$, and
$\binom{d_0}a\le d_0^{\,a}\le\binom\ell a d_0^{\,a}$ gives
$\sum_{a=0}^{\ell}\binom{d_0}at^{\ell-a}\le\sum_{a=0}^{\ell}\binom\ell a d_0^{\,a}t^{\ell-a}=(d_0+t)^\ell$,
which is the second inequality.
\end{proof}

\begin{proposition}[Three-round standard-query simulation]\label{prop:three-round-bit-embedding}
Let $k\ge2$ and let $C$ be a standard-query $\FH_k$ circuit making $q$ queries in total.  Its
acceptance probability is also that of an algorithm consisting of classical preprocessing of depth at
most $q$, followed by
\[
r=3k-2
\]
rounds of at most $t\le2^{q+1}$ parallel phase queries.  The equality holds on every oracle
tuple, and hence also after every restriction of the oracle bits.
\end{proposition}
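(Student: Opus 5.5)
The plan is to simulate the circuit block by block.  Each standard-query block becomes a coherent adaptive decision tree, and we replace it by parallel phase queries that read \emph{every} node of that tree at once.  The first step treats the initial block $U_0$ as in the proof of Lemma~\ref{lem:fh0-std}.  $U_0$ acts on a computational-basis state and all its gates are basis-preserving, so it is a deterministic classical computation with at most $q_0\le q$ adaptive standard queries.  It becomes the classical preprocessing of depth at most $q$: at each leaf of its decision tree the basis label $z_0$ entering the first Hadamard layer is fixed, and the continuation from that leaf is an oracle algorithm that prepares $(H^{\otimes m}\otimes I)\ket{z_0}$ with an input-independent unitary.  The only phase from $U_0$ is global along the single branch, so it can be dropped.

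Next fix a later block $U_j$ with $q_j$ queries.  On a basis branch $\ket z$ it is an adaptive classical query process: the next address depends on $z$ and on the answers written so far.  Its decision tree $T_j(z)$ therefore has depth $q_j$ and at most $2^{q_j}-1$ internal nodes, whose addresses are a fixed, oracle-independent function of $z$ and of the node.  The output label $w=\pi_j^{x}(z)$ and the phase $\lambda_j^{x}(z)$ are computable from $z$ together with the answers at the nodes of $T_j(z)$.

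A round of phase queries can \emph{write} these answer bits, and the interface works as follows.  For each node, reversibly compute its address into an address register.  Put a fresh answer qubit in $\ket+$, and make the phase query conditional on that qubit, using the distinguished index $0$ on the $\ket0$ branch.  A Hadamard on the qubit then turns the phase $x_a$ into the bit $\tfrac{1-x_a}{2}$.  The Hadamards and address computations are input-independent, so they are absorbed into the unitaries between rounds; only the controlled queries form the parallel batch.  Applying the same procedure to a register already holding the bit XORs the answer in again, and this is how written answers are erased.

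An interior block then uses three rounds, each with at most $2^{q_j}+2^{q_j}\le 2^{q+1}$ parallel queries.
\begin{enumerate}[leftmargin=2em]
\item Round~1 writes the answers $A(z)$ at all nodes of $T_j(z)$.  A reversible circuit reading $(z,A)$ then follows the correct path and XORs $w=\pi_j^x(z)$ into a fresh register, applying the phase $\lambda_j^x(z)$.
\item Round~2, in one parallel batch, erases $A(z)$ by re-querying the same nodes.  It also writes the answers $B(w)$ at all nodes of the decision tree of $U_j^{-1}$ at $w$; that inverse is again a basis-preserving block with $q_j$ standard queries.  From $(w,B)$ we recompute $z$ and XOR it away.
\item Round~3 erases $B(w)$ using $w$ alone.
\end{enumerate}
Afterwards the register holds exactly $\lambda_j^x(z)\ket{\pi_j^x(z)}$ and all ancillas are clean.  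The next global Hadamard layer is absorbed into the following unitary.  The final block $U_k$ needs no uncomputation, since everything is measured.  A single round writes the answers of its tree, and the two-outcome measurement accepts iff the label computed from $(w,A)$ lies in $S$.  The total is $3(k-1)+1=3k-2$ rounds, and empty rounds pad smaller blocks.  Each step reproduces its block exactly on every branch and for every oracle string, so the acceptance probabilities agree pointwise, and hence after every restriction.

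The step I expect to be the main obstacle is the uncomputation.  One has to check that the inverse of a standard-query basis-preserving block is itself a depth-$q_j$ adaptive tree, namely the reversed gate sequence, which uses the same self-inverse queries.  One also has to check that it is legitimate to erase $A(z)$ and write $B(w)$ in the same batch: both address sets are determined by registers ($z$ and $w$) that are present simultaneously after round~1's classical computation.  This is what keeps the count at three rounds per interior block rather than four, and it is the source of the width bound $2^{q+1}$.
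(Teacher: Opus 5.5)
Your proposal is correct and follows the paper's proof essentially step for step. The initial block becomes the classical preprocessing, and each interior block is simulated in three phase-query rounds: fill the forward answer bank; in one batch erase it and fill the bank for $U_j^{-1}$ at $w$, then XOR away $z$; erase the inverse bank. The final block takes one round, and answer bits are written by a phase query routed to the dummy index $0$ on one branch of a Hadamard-conjugated target, which gives $3(k-1)+1=3k-2$ rounds of width below $2^{q+1}$.

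The only point the paper spells out that you leave implicit is the reduction to exact depth $k$. Exact depth $0$ is absorbed entirely into the preprocessing, and exact depth $j_0<k$ yields $3j_0-2$ rounds, padded by $3(k-j_0)$ empty ones.
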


\begin{proof}
\emph{Reduction to exact depth $k$.}
The statement covers every circuit of Fourier depth at most $k$.  If the exact depth is $0$, the whole
circuit is one basis-preserving block applied to a computational basis state, hence a classical
decision tree of depth at most $q$, and the claim holds with $3k-2$ empty rounds appended.  If the
exact depth is $j_0$ with $1\le j_0<k$, the construction below gives $3j_0-2$ rounds, and appending
$3(k-j_0)$ empty rounds, which make no queries, gives $3k-2$.  Assume therefore exact depth $k$.

\emph{The initial block becomes classical preprocessing.}
Write $C$ as in \eqref{eq:FHk-form}, now with standard queries, and let $q_j$ be the number of
queries in $U_j$.  The initial block $U_0$ starts in one basis state.  It is therefore a
classical decision tree of depth at most $q_0$, whose leaf specifies the basis state entering the
first Hadamard layer; its phase is global and can be discarded.  This is the classical preprocessing in the statement.

\emph{Notation for one interior block.}
Fix an interior block $U_j$ and a basis state $z$ entering it.  Its standard queries define a
decision tree $T_{j,z}$ of depth at most $q_j$.  Let $\mathsf{Addr}_j(z)$ be the canonically ordered
list of all query locations occurring at nodes of this tree; it has at most $2^{q_j}-1$ entries, and
we write $A$ for the tuple of oracle answers at those locations.  From $z$ and $A$ an
oracle-independent reversible circuit computes both the output permutation $\pi_j(z;A)$ and the
phase $\lambda_j(z;A)$ of the block.  Apply the same notation to the inverse block $U_j^{-1}$: from
an output label $w$ it computes the list $\mathsf{Addr}'_j(w)$ of query locations, and from $w$ and
the tuple $A'$ of answers at those locations it computes the unique input label $\sigma_j(w;A')$.
For the actual oracle,
\[
\sigma_j(\pi_j(z;A);A')=z.
\]

\emph{Three rounds per interior block.}
We simulate $U_j$ in three phase-query rounds.  In the first round, coherently compute
$\mathsf{Addr}_j(z)$, query its entries into a fresh answer bank $A$, uncompute the addresses, and compute
$\lambda_j(z;A)\ket{\pi_j(z;A)}$ while retaining $z$ and $A$.  In the second round, compute both
the forward list from $z$ and the inverse list from $\pi_j(z;A)$.  One parallel batch erases the
forward answer bank and fills a fresh inverse bank $A'$; then uncompute the two address lists and
XOR $\sigma_j(\pi_j(z;A);A')=z$ into the retained copy of $z$.  The state is now
$\lambda_j(z;A)\ket{A'}\ket{\pi_j(z;A)}$.  In the third round, query the inverse list once more
to erase $A'$ and uncompute its addresses.  Thus all work registers are restored to $\ket0$ and
the exact action of $U_j$ remains.  The second round uses fewer than $2(2^{q_j}-1)$ phase queries,
and the other two use fewer than $2^{q_j}$.

\emph{Standard-query batches from phase-query batches.}
Each standard-query batch is implemented from one phase-query batch as follows: apply a Hadamard
gate to each answer target $b$, route the corresponding address register to the queried index when
$b=1$ and to the dummy index $0$ when $b=0$, which is a basis-preserving operation controlled on
$b$, make the phase query, undo the routing, and apply a Hadamard gate to $b$ again.  Between the
two Hadamard gates the phase $(-1)^{bf(i)}$ is applied, which implements $b\mapsto b\oplus f(i)$.
The Hadamard gates and the routing are part of the arbitrary inter-round unitaries.

\emph{The final block, and the totals.}
After the final Hadamard layer, simulate $U_k$ in one additional round: compute its full
list of possible query locations from the entering basis label, query it into a bank, and use the answers to
compute whether the final basis label lies in the accepting set.  No erasure is needed because
no Hadamard layer follows.  This produces the exact acceptance probability.  There are
$3(k-1)+1=3k-2$ rounds after the preprocessing, and padding every batch by dummy index-$0$ queries gives
$t\le2^{q+1}$ throughout.
\end{proof}

\begin{remark}[Why the round simulation fails for standard queries]\label{rem:erasure-obstruction}
We explain here why the one-round simulation of Lemma~\ref{lem:round-embedding} does not extend
directly to standard-query blocks, that is, why a block cannot be replaced by one round of parallel
queries of width comparable to its query count.  A round of $M$ parallel queries queries every
coordinate and coherently stores the full truth table, after which it simulates any block without
further queries, so the statement concerns width.  With
standard queries a block can write an answer, branch on it,
and then erase it with a second query.  Notice that an answer bit discarded by the block is not a function of
the output of the block, so the one-round simulation of Lemma~\ref{lem:round-embedding} does not
extend to it:
the discarded bit either survives, in which case it spoils the interference at the next Hadamard layer,
or it requires a further sequential interaction with the oracle to be removed.  This is the
feature of standard queries that the phase-query round simulation does not capture, and it is why
that simulation does not by itself give the standard-query lower bound.  The
leaf-space factorization of Section~\ref{app:standard-growth} keeps this decision-tree structure
rather than attempting to parallelize it.
\end{remark}

Combining Lemma~\ref{lem:round-embedding} with the Fourier growth theorem of Girish, Sinha, Tal and
Wu, in the form of Theorem~\ref{thm:gstw-growth}, bounds the Fourier weight of the entire
class $\FH_k$.  Here $L_{1,\ell}$ denotes the level-$\ell$ Fourier weight of
Definition~\ref{def:fourier-weight}, and every application below uses only levels $\ell\ge K\ge2$.

\begin{corollary}[Fourier growth of the Fourier hierarchy]\label{cor:FHk-growth}
Let $k\ge2$.  Let $C$ be an $\FH_k$ circuit making $q$ phase queries to oracles with concatenated
truth table length $M$, and let $f:\pmone^{M}\to[0,1]$ be its acceptance probability.  Then for
every $\ell\ge 1$ and every restriction $f_\rho$ of $f$,
\[
L_{1,\ell}(f_\rho)
\le
(2^{2k-2}-1)^{2\ell}\cdot q^{\ell}\cdot
M^{\frac12\left\lfloor\frac{(2k-3)\ell}{2k-2}\right\rfloor}
\le
(2^{2k-2}-1)^{2\ell}\,q^{\ell}\,
M^{\frac{\ell}{2}\left(1-\frac{1}{2k-2}\right)}.
\]
\end{corollary}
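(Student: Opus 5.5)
The plan is to compose Lemma~\ref{lem:round-embedding} with Theorem~\ref{thm:gstw-growth} at $r=k-1$, and then handle the mismatch between the query counts of the two statements.

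\emph{Simulation.} Apply Lemma~\ref{lem:round-embedding} to $C$. This gives a quantum query algorithm $\mathcal A$ with $r=k-1\ge1$ rounds, at most $q$ parallel queries per round, and the same acceptance probability $f$ on every oracle tuple. The acceptance probability of $\mathcal A$ is therefore literally $f$, and so is every restriction $f_\rho$. Since $k\ge2$, we have $r\ge1$, so Theorem~\ref{thm:gstw-growth} applies.

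\emph{Normalizing the width.} That theorem requires $1\le t\le M$ parallel queries, where all $t$ queries of a round may be real queries. I would take $t:=\max(1,\min(q,M))$, padding rounds with dummy index-$0$ queries where needed.
\begin{itemize}
\item If $q\ge M$, each round can be replaced by one round of exactly $M$ queries that reads every coordinate. The queried truth table is stored coherently, the round is simulated, and the table is then uncomputed inside the same batch structure. A cleaner route is to note that padding never hurts and $t=M\le q$ is allowed.
\item If $q=0$, the acceptance probability is oracle-independent, so $L_{1,\ell}(f_\rho)=0$ for $\ell\ge1$ and the bound is trivial.
\end{itemize}
I expect the only real care needed is making this $t\le M$ reduction rigorous. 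If it becomes awkward, the fallback is to observe that the theorem's bound is monotone in $t$ once written as $t^{\ell/2}M^{\cdot}$ with $t\le q$.

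\emph{Plugging in.} Theorem~\ref{thm:gstw-growth} with $r=k-1$ gives
\[
L_{1,\ell}(f_\rho)\le(2^{2k-2}-1)^{2\ell}\,t^{\ell}\,(M/t)^{\frac12\lfloor\frac{(2k-3)\ell}{2k-2}\rfloor},
\]
including the restriction clause with $\widetilde M\le M$. The exponent $e:=\frac12\lfloor\frac{(2k-3)\ell}{2k-2}\rfloor$ is nonnegative. So $(M/t)^e\le M^e$ since $t\ge1$, and $t^\ell\le q^\ell$ since $t\le q$. For a restriction, $\widetilde M^{e}\le M^{e}$. This yields the first inequality.

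\emph{Dropping the floor.} The second inequality follows from $\lfloor x\rfloor\le x$ and $M\ge1$:
\[
M^{\frac12\lfloor\frac{(2k-3)\ell}{2k-2}\rfloor}\le M^{\frac{\ell}{2}\cdot\frac{2k-3}{2k-2}}=M^{\frac{\ell}{2}(1-\frac1{2k-2})}.
\]
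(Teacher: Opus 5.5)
Your plan matches the paper's proof: compose Lemma~\ref{lem:round-embedding} with Theorem~\ref{thm:gstw-growth} at $r=k-1$, then use $t\le q$ and $(M/t)^{a}\le M^{a}$, and drop the floor. The $q=0$ case and the use of index-$0$ dummy queries to force $t\ge1$ are fine.

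The step that does not work as written is the case $q>M$. There the simulating algorithm may have a round with more than $M$ queries, which lies outside the hypothesis $t\le M$, and none of your three suggestions fixes this.
\begin{itemize}
\item Padding with index-$0$ queries only increases the width, so it cannot bring $q$ down to $M$.
\item Monotonicity of $t^{\ell}(M/t)^{a}=t^{\ell-a}M^{a}$ in $t$ is useless unless the theorem can first be applied at some admissible $t$, and that is exactly what is missing.
\item In the read-the-whole-table route, a single parallel batch cannot both write a coherent copy of the truth table and erase it, since erasing needs a further query. That route can be rescued by not erasing. The stored table $\ket{b(x)}$ does not depend on the branch, so it is an unentangled product factor that leaves later interference intact. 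That observation, not uncomputation, is what the argument would need.
\end{itemize}

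The paper closes the gap by cancelling repeated queries. Since $x_i^2=1$, the phase $\prod_{i}x_{a_j^{(i)}(z)}$ picked up by branch $z$ in block $U_j$ equals $\prod_{i\in E_j(z)}x_i$. Here $E_j(z)$ is the set of addresses occurring an odd number of times on that branch. By Step~2 of the round simulation this set is computable from $z$ without the oracle, and it has at most $\min(q_j,M)$ elements. So Step~3 can query only its elements, padded with index $0$, which gives $1\le t\le\min(q,M)$.

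An even shorter fix is the trivial bound. Since $f_\rho$ takes values in $[0,1]$, we have $\abs{\widehat{f_\rho}(S)}\le1$. When $q\ge M$ this gives $L_{1,\ell}(f_\rho)\le\binom{\widetilde M}{\ell}\le M^{\ell}\le q^{\ell}$, which is already below the claimed bound.
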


\begin{proof}
If $q=0$ then $f$ is oracle-independent, so $L_{1,\ell}(f_\rho)=0$ for every $\ell\ge1$ and the
bound holds.  Assume $q\ge1$.  If the algorithm of Lemma~\ref{lem:round-embedding} makes no
queries at all, which happens when every query of $C$ lies in $U_0$ or $U_k$, then $f$ is again
oracle-independent and the bound holds.  Otherwise the claim is immediate from
Lemma~\ref{lem:round-embedding} and Theorem~\ref{thm:gstw-growth} with $r=k-1\ge1$ and
$1\le t\le\min(q,M)$ (after coherently cancelling
repeated queries via $x_i^2=1$, one round requires at most $M$ distinct parallel queries), using
$t\le q$, $(M/t)^{a}\le M^{a}$ and
$\frac12\lfloor(2r-1)\ell/(2r)\rfloor\le\frac{\ell}{2}(1-\frac1{2r})$.
\end{proof}

Corollary~\ref{cor:FHk-growth} is the bound on the Fourier weight that the lower bound against
$\FH_k$ uses: the $\ell_1$ weight is polynomial at the lowest levels and grows at the rate
$M^{\frac12(1-\frac1{2k-2})}$ per level thereafter.  Notice that a direct bound on the $\ell_1$ weight of the
coefficients of the quadratic form computed by the two layers would not suffice: already at
$k=2$ that weight can be linear in the number of oracle bits.  It is the round simulation that gives
the correct weight bound, and one that is stable under restrictions.

At $k=2$ the round simulation produces a non-adaptive algorithm, for which the sharper growth bound of
Theorem~\ref{thm:nonadaptive-growth} applies.

\begin{corollary}[Fourier growth at the second level]\label{cor:fh2-growth}
Let $C$ be an $\FH_2$ circuit making $q\ge1$ phase queries to oracles with concatenated truth table
length $M$, and let $f:\pmone^{M}\to[0,1]$ be its acceptance probability.  Then for every $\ell\ge1$
and every restriction $f_\rho$ leaving $\widetilde M$ variables free,
\[
L_{1,\ell}(f_\rho)\ \le\ (\ell+1)\,\bigl(\widetilde M\,q\bigr)^{\ell/4}.
\]
\end{corollary}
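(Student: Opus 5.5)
The plan is to repeat the proof of Corollary~\ref{cor:FHk-growth} with $k=2$, replacing Theorem~\ref{thm:gstw-growth} by the sharper non-adaptive bound of Theorem~\ref{thm:nonadaptive-growth}. Lemma~\ref{lem:round-embedding} with $k=2$ gives an algorithm with one round of at most $q$ parallel phase queries whose acceptance probability equals $f$ on every oracle. I will first write that algorithm in the form demanded by Theorem~\ref{thm:nonadaptive-growth}. Take $t:=\max(q_1,1)\le q$, where $q_1$ is the number of queries in the interior block $U_1$; padding with index-$0$ queries makes this harmless. Let $v=(V_1\otimes I)(H^{\otimes m}\otimes I)U_0'\ket{0\cdots0}$ be the input-independent state before the batch. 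After the batch, apply $W=(H^{\otimes m}\otimes I)D_1V_1^\dagger$ and measure with $\Pi=\Pi_{\pi_2^{-1}(S)}$. Then
\[
f(x)=v^\dagger\bigl(O_x^{\otimes t}\otimes I\bigr)W^\dagger\Pi W\bigl(O_x^{\otimes t}\otimes I\bigr)v,
\]
which has the required form with $u=v$ and $\mathsf B=W^\dagger\Pi W$. Both vectors are unit vectors and $\Norm{\mathsf B}\le1$. The workspace register (the system register together with ancillas) plays the role of $[m]$ in the index set $A=(\{0\}\cup[M])^t\times[m]$. The address registers have dimension $M+1$ as required.

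Next I apply Theorem~\ref{thm:nonadaptive-growth} to an arbitrary restriction $\rho$. It gives $L_{1,\ell}(f_\rho)\le(\ell+1)(\widetilde M\tau)^{\ell/4}$ with $\tau=\min\{t,\widetilde M\}\le t\le q$. Monotonicity in $\tau$ then yields $(\ell+1)(\widetilde Mq)^{\ell/4}$. Because the representation of $f$ holds identically on the cube, restricting $f$ is the same as restricting the function in the theorem.

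Degenerate cases are handled as in Corollary~\ref{cor:FHk-growth}. If every query lies in $U_0$ or $U_2$, or the exact depth is below $2$, then $f$ is oracle-independent and $L_{1,\ell}(f_\rho)=0$ for $\ell\ge1$. This is why the hypothesis $q\ge1$ is only needed to make the right-hand side meaningful, and it also keeps $t\ge1$. The one step needing care is to confirm that the final relabelled measurement from Step~1 of Lemma~\ref{lem:round-embedding} and the interior unitaries fold into a single contraction $\mathsf B$ sandwiched between two identical query batches. That is exactly the shape $u^\dagger O\,\mathsf B\,O\,v$ required by the theorem, so I expect no real obstacle.
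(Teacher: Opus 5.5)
Your proposal is correct and follows the paper's proof. Lemma~\ref{lem:round-embedding} at $k=2$ gives a non-adaptive algorithm with $t\le q$ parallel queries, the oracle-independent case is handled separately, and Theorem~\ref{thm:nonadaptive-growth} is applied with $\tau=\min\{t,\widetilde M\}\le q$; your explicit choice $u=v$ and $\mathsf B=W^\dagger\Pi W$ simply writes out what the paper leaves to the final sentence of Theorem~\ref{thm:nonadaptive-growth}.
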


\begin{proof}
By Lemma~\ref{lem:round-embedding} with $k=2$, $C$ is reproduced exactly by a non-adaptive
algorithm making $t\le q$ parallel queries.  If $t=0$ the acceptance probability is
oracle-independent and $L_{1,\ell}(f_\rho)=0$ for $\ell\ge1$.  Otherwise $1\le t\le q$ and the
acceptance probability has the form required by Theorem~\ref{thm:nonadaptive-growth}; apply that
theorem and bound $\tau=\min\{t,\widetilde M\}\le q$.
\end{proof}

\begin{remark}[The second level, sharpened]\label{rem:fh2-sharpened}
At $k=2$ the query threshold of Theorem~\ref{thm:lower-worst-case} can be improved.  Running Step~4
of its proof with Corollary~\ref{cor:fh2-growth} in place of Corollary~\ref{cor:FHk-growth}, the
level-$\ell$ term contains $q^{\ell/4}N^{\ell/4}$ together with the Bansal--Sinha factor
$N^{-\frac\ell2(1-\frac13)}=N^{-\ell/3}$, so it equals $O(1)\,q^{\ell/4}N^{-\ell/12}$ and decays
whenever $q\le N^{c}$ with $c<\tfrac13$.  Taking $c=\tfrac16$, every term is at most
$N^{-\ell/24}\le N^{-1/8}$, since $\ell\ge K=3$.  Hence Theorem~\ref{thm:lower-worst-case} holds at
$k=2$ with $c_2=\tfrac16$ in place of $c_2=\tfrac1{24}$.  This improvement applies only to the second
level, since Theorem~\ref{thm:nonadaptive-growth} concerns a single round of queries.
\end{remark}

\subsection{A promise problem separating depth \texorpdfstring{$k$}{k} from depth \texorpdfstring{$k+1$}{k+1}}\label{subsec:hard-problem}

We now construct the hard problem.  We use the input distribution of Bansal and
Sinha~\cite{bansal2021k}, which is defined for every order $K\ge2$.  We identify
$\Phi_{2k-1}=\mathsf{forr}_K$ using \eqref{eq:forr-dictionary}, and recall that $\delta=2^{-5K}$.  We
need three of their statements.  The first gives the distributions.  Note that the structured
distribution guarantees a large Forrelation value only with probability of order $\delta$; this is
what makes the construction below necessary.

\begin{theorem}[{Bansal--Sinha~\cite[Section~3, Theorem~3.1]{bansal2021k}}]\label{thm:bs-distributions}
For every $K\ge 2$ there is a distribution $\mathcal F_K$ on $\pmone^{KN}$, the
\emph{structured distribution}, such that, with $\mathcal U$ the uniform distribution on
$\pmone^{KN}$:
\begin{enumerate}[leftmargin=2em]
\item $\displaystyle \Prb_{z\sim\mathcal U}\bigl[\,\abs{\mathsf{forr}_K(z)}\ge\delta/2\,\bigr]\le\frac{4}{\delta^2N}$;
\item $\displaystyle \Prb_{z\sim\mathcal F_K}\bigl[\,\mathsf{forr}_K(z)\ge\delta\,\bigr]\ge 6\delta$.
\end{enumerate}
\end{theorem}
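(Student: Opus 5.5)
Since this statement is imported from~\cite{bansal2021k}, the plan is to reconstruct its two parts along the lines of their argument rather than to add anything new. The two parts are of quite different difficulty. Part~1 is a pure second-moment computation. Part~2 requires constructing $\mathcal F_K$ and lower-bounding the expected Forrelation value.

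\emph{Part 1.} I would write $\mathsf{forr}_K(z)=N^{-1}z_1^{\top}A\,z_K$ as in \eqref{eq:forr-dictionary}, with
\[
A=\mathsf H\mathsf Z_2\mathsf H\cdots\mathsf Z_{K-1}\mathsf H .
\]
For every fixed choice of $z_2,\dots,z_{K-1}$, the matrix $A$ is a product of orthogonal matrices, hence orthogonal, so $\Norm{A}_F^2=N$. Since $z_1,z_K$ are independent uniform sign vectors, $\E[(z_1^{\top}Az_K)^2]=\sum_{x,y}A_{x,y}^2=N$. This gives $\E_{\mathcal U}[\mathsf{forr}_K^2]=1/N$ exactly. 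Chebyshev at threshold $\delta/2$ then yields the bound $4/(\delta^2N)$ claimed in item~1, with no slack needed.

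\emph{Part 2.} I would define $\mathcal F_K$ by a Gaussian chain. Sample $g_1\sim\mathcal N(0,I_N)$ and set $g_{i+1}=\mathsf H g_i$ for $i<K$, possibly alternating with an independent reflection so that the chain matches the diagonal pattern of \eqref{eq:forr-dictionary}. Then round coordinatewise, either by $z_i=\sgn(g_i)$ or by the truncated randomized rounding $\Pr[z_i(x)=1]=\tfrac12(1+\mathrm{trnc}(\varepsilon g_i(x)))$ of Raz--Tal. Each marginal is then a uniform-looking sign vector, while consecutive blocks are correlated through $\mathsf H$. The key estimate is a lower bound $\E_{\mathcal F_K}[\mathsf{forr}_K(z)]\ge 7\delta$. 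To obtain it, I would expand $\mathsf{forr}_K$ as the sum \eqref{eq:odd-Phi} over paths $x_1,\dots,x_K$ and use Gaussian multilinearity to take expectations path by path. The diagonal (``aligned'') paths then contribute a positive main term of order $c^K$, and the remaining paths are controlled by Wick/Isserlis bounds or by the truncation error.

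Given the mean bound, the probability statement follows from $\abs{\mathsf{forr}_K}\le1$. Indeed,
\[
7\delta\le\E[\mathsf{forr}_K]\le \delta+\Pr[\mathsf{forr}_K\ge\delta],
\]
so $\Pr[\mathsf{forr}_K\ge\delta]\ge6\delta$.

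The main obstacle is the mean bound in Part~2: showing that after rounding, the correlation surviving $K-1$ Hadamard steps is a constant $2^{-O(K)}$ rather than $\varepsilon^{K-1}$ with $\varepsilon\sim1/\log N$. I would first attempt this with $\sgn$-rounding, where Sheppard's formula $\E[\sgn(a)\sgn(b)]=\tfrac2\pi\arcsin\rho$ gives a constant per step. I would then bound the cross terms by hypercontractivity. If that fails, I would simply defer to \cite[Theorem~3.1]{bansal2021k} for the constant $\delta=2^{-5K}$, which is how the paper uses the statement.
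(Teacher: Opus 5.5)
Your Part~1 is correct. Conditionally on $z_2,\dots,z_{K-1}$ the matrix $A$ is orthogonal, so $\E_{\mathcal U}[\mathsf{forr}_K^2]=1/N$, and Chebyshev gives item~1. The paper does the same computation at $K=3$ in Proposition~\ref{prop:promise-density-sign}; for this theorem it gives no proof and simply imports it from Bansal--Sinha, which is your fallback.

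Part~2, however, fails, and it fails precisely at the orders the paper uses, $K=2k-1$ and $K=2k+1$.
\begin{itemize}
\item \emph{Odd $K$.} A centered Gaussian vector rounded coordinatewise by an odd rule (by $\sgn$, or with conditional mean $\mathrm{trnc}(\varepsilon g)$) gives a law of $z$ invariant under the global flip $z\mapsto-z$. This holds whatever the covariance and whatever linear reflections you insert. Meanwhile $\mathsf{forr}_K$ takes exactly one variable from each block, so it changes by $(-1)^K$. For odd $K$ your construction therefore has $\E_{\mathcal F_K}[\mathsf{forr}_K]=0$ exactly, and the key estimate $\E\ge7\delta$ is false, not merely hard.
\item \emph{Truncated rounding, every $K$.} By multilinearity $\E[\mathsf{forr}_K(z)\mid g]=\mathsf{forr}_K(w)$ with $w_i=\mathrm{trnc}(\varepsilon g_i)$. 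Since $\Norm{\mathsf H}=1$ and $\Norm{\operatorname{diag}(w_i)}\le1$, this gives $\abs{\E\,\mathsf{forr}_K}\le N^{-1}\E\bigl[\Norm{w_1}\Norm{w_K}\bigr]\le\varepsilon^2\to0$.
\item \emph{Collapse of the chain.} $\mathsf H^2=I$ makes $g_{i+2}=g_i$, so the unreflected chain has only two distinct blocks. Already at $K=4$ with $\sgn$ rounding one computes $\E\,\mathsf{forr}_4=O(N^{-1/2})$.
\end{itemize}
Only $K=2$ with $\sgn$ rounding, Aaronson's pair $\sgn(g),\sgn(\mathsf Hg)$, goes through.

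There is also a structural reason your distribution cannot be the $\mathcal F_K$ the paper needs. Item~2 in isolation is cheap: for odd $K$ the point mass on the all-ones tuple has $\mathsf{forr}_K=1$, as in the proof of Theorem~\ref{thm:all-level-separation}. The content of the import is that the same $\mathcal F_K$ also satisfies Theorem~\ref{thm:bs-bias}. That bound has no terms below level $K$. Applied to $f=\tfrac12(1+\chi)$ for a character $\chi$ of degree between $1$ and $K-1$, it forces $\E_{\mathcal F_K}[\chi]=0$, so the structured distribution must be $(K-1)$-wise uniform. For $K\ge3$ your chain violates this already at degree two, since under $\sgn$ rounding $\E[z_1(x)z_2(y)]=\tfrac2\pi\arcsin\mathsf H_{x,y}\ne0$.

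The missing idea is a non-Gaussian construction in the spirit of the product middle oracle the paper attributes to Bansal--Sinha (compare $F_1=U\cdot V$ in Definition~\ref{def:yes-no-ensembles}). For instance, build the blocks from entrywise products of independent pieces, each piece shared by two adjacent blocks. Then moments of order below $K$ vanish, each piece's sign flip acts on two adjacent blocks so odd order is no longer forced to cancel, and a path moment of the form $\prod_i\mathsf H_{x_i,x_{i+1}}$ survives. What remains is to prove $\Prb[\mathsf{forr}_K\ge\delta]\ge6\delta$, with $\delta=2^{-5K}$ independent of $N$, for the distribution to which the analysis behind Theorem~\ref{thm:bs-bias} applies. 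That is the actual content of Bansal--Sinha's Theorem~3.1, and your proposal does not supply it.
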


The second statement is the main input to the lower bound.  It bounds the advantage of an
\emph{arbitrary} bounded function in distinguishing $\mathcal F_K$ from the uniform distribution,
purely in terms of its Fourier weight at levels between $K$ and $K(K-1)$, measured with respect to
biased product measures.

\begin{theorem}[{Bansal--Sinha~\cite[Theorem~3.2]{bansal2021k}}]\label{thm:bs-bias}
For every $f:\pmone^{KN}\to[0,1]$,
\[
\bigl|\E_{\mathcal F_K}[f]-\E_{\mathcal U}[f]\bigr|
\le
\sup_{\mu\in[-\frac12,\frac12]^{KN}}\;
\sum_{\ell=K}^{K(K-1)}
\left(\frac{1}{\sqrt N}\right)^{\ell\left(1-\frac1K\right)}
(8K)^{14\ell}\;
\mathsf{wt}^{\mu}_{\ell}(f),
\]
where $\mathsf{wt}^{\mu}_{\ell}$ is the level $\ell$ Fourier weight with respect to the
$\mu$ biased product measure (Definition~\ref{def:fourier-weight}).
\end{theorem}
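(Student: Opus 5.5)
Since this is Theorem~3.2 of~\cite{bansal2021k}, used as a black box, the plan is to recall the structure of the Bansal--Sinha argument rather than reprove it in full; no result of the present paper is needed for it. First, recall the construction behind Theorem~\ref{thm:bs-distributions}. One samples a centered Gaussian vector $G\in\mathbb R^{KN}$ whose covariance couples consecutive blocks through $\mathsf H$ in the alternating-chain pattern of \eqref{eq:forr-dictionary}, scales it by $\epsilon$, truncates coordinatewise to $[-1,1]$, and rounds each coordinate independently to a sign with that mean. For a bounded $f$, the multilinear extension $f(p)$ at $p\in[-1,1]^{KN}$ equals the expectation of $f$ under independent rounding with means $p$. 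So it suffices to compare $\E f(\mathrm{trunc}(\epsilon G))$ with $f(0)=\E_{\mathcal U}f$.

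\emph{Random-walk interpolation.} I would write $\epsilon G=\sum_{s=1}^{T}\epsilon T^{-1/2}G^{(s)}$ with independent copies $G^{(s)}$, and set $p_s=\mathrm{trunc}(\sum_{s'\le s}\epsilon T^{-1/2}G^{(s')})$. The difference then telescopes into $\sum_s\E[f(p_s)-f(p_{s-1})]$. On the event that no truncation occurs, which fails with probability $\exp(-\Omega(1/\epsilon^2))$ and can be charged to the error term, every intermediate point has $|p_{s-1}|\le\tfrac12$. Expanding $f$ around $p_{s-1}=\mu$ in the $\mu$-biased basis gives
\[
f(\mu+\Delta)-f(\mu)=\sum_{S\ne\emptyset}\widehat f^{\mu}(S)\prod_{i\in S}\frac{\Delta_i}{\sqrt{1-\mu_i^2}} .
\]
Since the step $\Delta$ is independent of $\mu$, each term contributes $\widehat f^{\mu}(S)$ times a Gaussian moment $\E\prod_{i\in S}\Delta_i$, up to the factor $(1-\mu_i^2)^{-1/2}\le 2/\sqrt3$.

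\emph{Moment bound (the main step).} The crux is a bound on the Gaussian moments. By Wick's formula, $\E\prod_{i\in S}G_i$ is a sum over perfect matchings of $S$ of products of covariances. Each covariance entry is $\pm N^{-1/2}$ between adjacent blocks and $0$ otherwise, so nonzero matchings require $S$ to be spread along the chain. A counting argument over the block pattern then shows that only $|S|\ge K$ contributes at first order, and that summing over a level gives $|\E\prod_{i\in S}\Delta_i|\lesssim(\epsilon^2/T)^{|S|/2}N^{-\frac{|S|}{2}(1-\frac1K)}(cK)^{O(|S|)}$. Levels above $K(K-1)$, and the higher-order contributions in $1/T$, are absorbed into the error as $T\to\infty$; the $1/T$ factors over the $T$ steps cancel only for the quadratic-type contributions, which is why the sum is cut off at finite level. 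Taking the supremum over $\mu$ and summing over $s$ yields the displayed bound with the constant $(8K)^{14\ell}$.

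I expect this moment computation to be the main obstacle: controlling the Wick sum uniformly over subsets $S$ of a given level, including sets with repeated block indices. This is exactly the combinatorial heart of~\cite[Section~3]{bansal2021k}. I would cite that analysis directly rather than redo it, checking only that their normalization of $\mathsf{forr}_K$ agrees with \eqref{eq:forr-dictionary}.
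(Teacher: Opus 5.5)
Citing \cite[Theorem~3.2]{bansal2021k} and checking that \eqref{eq:forr-dictionary} matches their normalization is exactly what the paper does, and that is all that is needed here. The outline you give of their argument, however, is wrong and could not prove the statement for $K\ge3$.

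\emph{The distribution you describe cannot be $\mathcal F_K$.} The inequality itself forces $\mathcal F_K$ to be exactly $(K-1)$-wise uniform. Take $f=\tfrac12(1+\chi_S)$ with $1\le\abs S<K$. Writing $x_i=\mu_i+\sqrt{1-\mu_i^2}\,\psi_i^\mu$ shows that the $\mu$-biased expansion of $\chi_S$ lives on levels at most $\abs S$. So the right side vanishes for every $\mu$, and hence $\E_{\mathcal F_K}[\chi_S]=0$. A coordinatewise rounding of $\mathrm{trunc}(\epsilon G)$, with $G$ a centered Gaussian, is pairwise uniform only when $G$ has diagonal covariance, and it is then exactly uniform. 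With your covariance $\pm N^{-1/2}$ between adjacent blocks one has $\E[z_{1,x}z_{2,y}]\neq0$. So your distribution violates the statement, and your claim that only $\abs S\ge K$ contributes already fails at $\abs S=2$: a pair straddling two adjacent blocks has a nonzero Wick matching.

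\emph{The interpolation is also the wrong mechanism.} A small-step Gaussian random walk is the Raz--Tal argument for $K=2$. One step contributes $O(T^{-\ell/2})$ at level $\ell$, so after $T$ steps only level $2$ survives as $T\to\infty$, as your own remark about the $1/T$ factors says. It therefore cannot produce a bound supported on levels $K,\dots,K(K-1)$. Combined with your vanishing claim below level $K$, it would even make $\mathcal F_K$ indistinguishable from $\mathcal U$ up to the truncation error, contradicting Theorem~\ref{thm:bs-distributions}(2).

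\emph{The truncation step fails too.} The probability that some coordinate is truncated is not $e^{-\Omega(1/\epsilon^2)}$: it grows with the number $KN$ of coordinates and is small only when $\epsilon=O(1/\sqrt{\log N})$. The gap $\delta=2^{-5K}$, by contrast, does not depend on $N$. In any case the statement has no additive term to absorb this error.

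\emph{What is missing is the construction.} The hard distribution must be a non-linear function of Gaussians. For instance, take each interior block to be a coordinatewise product of two Gaussian vectors attached to its neighbouring Hadamard links. This kills every moment of degree below $K$, and before truncation its surviving level-$\ell$ moments are sums of products of $\ell(1-\tfrac1K)$ Hadamard entries, which is the factor $N^{-\frac\ell2(1-\frac1K)}$ in the statement. Bansal and Sinha then interpolate in the underlying Gaussians, using Gaussian integration by parts and new identities for rounding truncated Gaussian vectors rather than discarding a bad event; this is where the range of levels $K$ through $K(K-1)$ comes from.
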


The bound is stated in this two-sided form in~\cite{bansal2021k}; it is in any case equivalent to
the corresponding one-sided bound, since $f$ and $1-f$ have the same level-$\ell$ Fourier weights
for $\ell\ge1$.

Theorem~\ref{thm:gstw-growth} bounds weight with respect to the uniform measure; the third
statement converts such a bound into one for the biased measures appearing above.
That conversion needs a bound for all restrictions, which is exactly what the last clause of
Theorem~\ref{thm:gstw-growth} provides.

\begin{theorem}[{Bansal--Sinha~\cite[Theorem~3.4]{bansal2021k}}]\label{thm:bs-transfer}
Let $L\ge 1$, let $f:\pmone^{L}\to\mathbb R$, and let $\ell\in[L]$.  If $w$ is such that
$\mathsf{wt}_{\ell}(f_\rho)\le w$ for every restriction $\rho\in\{-1,1,\star\}^{L}$, where the
weight is with respect to the uniform measure, then
$\mathsf{wt}^{\mu}_{\ell}(f)\le 4^{\ell}w$ for every $\mu\in[-\tfrac12,\tfrac12]^{L}$.
\end{theorem}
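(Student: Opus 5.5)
The plan is to realize the biased measure $p_\mu$ as a \emph{random restriction} followed by uniform sampling on the free coordinates. Then each biased coefficient is an average of uniform-measure coefficients of restrictions, and the hypothesis on $\mathsf{wt}_\ell(f_\rho)$ applies inside the average. The first step is an explicit formula for $\widehat f^\mu(T)$. Write $f(x)=\sum_S\widehat f(S)\prod_{i\in S}x_i$ and substitute $x_i=\mu_i+\sqrt{1-\mu_i^2}\,\psi_i^\mu(x)$. Expanding each product and collecting terms gives
\[
\widehat f^\mu(T)=\Bigl(\prod_{i\in T}\sqrt{1-\mu_i^2}\Bigr)\sum_{S\supseteq T}\widehat f(S)\prod_{i\in S\setminus T}\mu_i ,
\]
so $\abs{\widehat f^\mu(T)}\le\bigl|\sum_{S\supseteq T}\widehat f(S)\prod_{i\in S\setminus T}\mu_i\bigr|$.

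Next I would define a random restriction $\rho$ that does not depend on $T$; this is essential, since a $T$-dependent restriction could not be summed over $T$. Independently for each $i$, leave $i$ free with probability $\tfrac12$. Otherwise fix $\rho_i\in\pmone$ with $\E[\rho_i\mid i\text{ fixed}]=2\mu_i$, which is a legitimate bias because $\abs{\mu_i}\le\tfrac12$. For a set $T$, the uniform coefficient of $f_\rho$ at $T$ is $\widehat{f_\rho}(T)=\sum_{S\supseteq T,\,S\setminus T\subseteq\mathrm{fixed}(\rho)}\widehat f(S)\prod_{i\in S\setminus T}\rho_i$ whenever $T\subseteq\mathrm{free}(\rho)$. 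I set it to $0$ otherwise.

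Taking expectations, the events for different coordinates are independent. We have $\Pr[T\subseteq\mathrm{free}]=2^{-\abs T}$, and for each $i\notin T$ the factor $\rho_i\mathbf 1\{i\text{ fixed}\}$ has mean $\tfrac12\cdot2\mu_i=\mu_i$. Hence
\[
\E_\rho\bigl[\widehat{f_\rho}(T)\,\mathbf 1\{T\subseteq\mathrm{free}(\rho)\}\bigr]=2^{-\abs T}\sum_{S\supseteq T}\widehat f(S)\prod_{i\in S\setminus T}\mu_i .
\]
Combining this with the formula for $\widehat f^\mu(T)$ and the triangle inequality gives $\abs{\widehat f^\mu(T)}\le2^{\ell}\,\E_\rho\abs{\widehat{f_\rho}(T)}$ for $\abs T=\ell$. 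Summing over $\abs T=\ell$ and exchanging sum and expectation yields
\[
\mathsf{wt}^\mu_\ell(f)\le2^\ell\,\E_\rho\,\mathsf{wt}_\ell(f_\rho)\le2^\ell w\le4^\ell w .
\]
The last step uses the hypothesis for every restriction $\rho$.

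The only delicate point is the bookkeeping in the expectation identity. One must check that the coordinates of $S\setminus T$ must all be fixed for $S$ to contribute, and that independence lets the expectation factor over $i\in T$ and $i\in S\setminus T$. It is also worth noting that $\mu_i$ is the product of the fixing probability $\tfrac12$ and the conditional bias $2\mu_i$; this is where the hypothesis $\mu\in[-\tfrac12,\tfrac12]^L$ enters. The argument in fact gives the constant $2^\ell$, comfortably within the stated $4^\ell$.
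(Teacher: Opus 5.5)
Your proof is correct and complete. The paper does not prove this statement: it imports it from Bansal--Sinha and uses it only to pass from the restriction-stable uniform growth bounds to the biased weights in Theorem~\ref{thm:bs-bias}. There is therefore no proof in the paper to compare against. What you have written is a self-contained replacement for the citation, using the standard random-restriction device: realize $p_\mu$ as a density-$\tfrac12$ random restriction followed by uniform sampling on the free coordinates.

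The three ingredients check out:
\begin{enumerate}
\item The change of basis $\widehat f^\mu(T)=\prod_{i\in T}\sqrt{1-\mu_i^2}\,\sum_{S\supseteq T}\widehat f(S)\prod_{i\in S\setminus T}\mu_i$ follows from $x_i=\mu_i+\sqrt{1-\mu_i^2}\,\psi_i^\mu$ and uniqueness of the expansion in the orthonormal basis.
\item The coefficient of $f_\rho$ at $T\subseteq\mathrm{free}(\rho)$ is the sum over those $S$ with $S\cap\mathrm{free}(\rho)=T$, as you state.
\item The expectation factors coordinatewise to $2^{-\abs T}\prod_{i\in S\setminus T}\mu_i$, with $\abs{\mu_i}\le\tfrac12$ used exactly to make $2\mu_i$ a legal conditional bias.
\end{enumerate}
Your remark that $\rho$ must not depend on $T$ is the key point: it is what lets the sum over $\abs T=\ell$ be taken inside a single expectation.

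Compared with the statement as quoted, your route gives two small improvements. First, the constant is $2^\ell$ instead of $4^\ell$. Second, the hypothesis is needed only on average, since you actually prove $\mathsf{wt}^\mu_\ell(f)\le 2^\ell\,\E_\rho\,\mathsf{wt}_\ell(f_\rho)$ for one explicit distribution of restrictions.

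Neither improvement matters for the paper. The factor $4^\ell$ is absorbed into $C_k$ in Step~4 of Theorem~\ref{thm:lower-worst-case}. In the growing-depth statements it sits among other $2^{O(k)\ell}$ factors. So no exponent of $N$ changes.
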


For a worst-case lower bound we need distributions that are concentrated with sufficiently high
probability on the two sides of a promise, whereas Theorem~\ref{thm:bs-distributions}(2) gives only
mass $6\delta$ on the yes side.  Taking a direct product of constantly many independent copies amplifies this: some
copy has a large Forrelation value with probability exponentially close to one in the number of
copies, and the OR of the corresponding tests is still decidable one level up.

\begin{definition}[The OR of Forrelation instances]\label{def:hard-problem}
Set $m:=\delta^{-1}=2^{5K}$, a constant (an integer since $\delta=2^{-5K}$).  An \emph{instance} at input length $n$ is a
tuple
\[
Z=(z^{(1)},\dots,z^{(m)})\in\bigl(\pmone^{KN}\bigr)^{m},
\]
i.e., a tuple of $mK$ Boolean phase oracles on $n$ bits, with combined truth table length
$M=mKN=O_k(N)$.  The promise sets are
\[
\Pi^{\Yes}_n=\Bigl\{Z:\ \max_{j\in[m]}\mathsf{forr}_K\bigl(z^{(j)}\bigr)\ge\delta\Bigr\},
\qquad
\Pi^{\No}_n=\Bigl\{Z:\ \max_{j\in[m]}\abs{\mathsf{forr}_K\bigl(z^{(j)}\bigr)}\le\delta/2\Bigr\},
\]
and $\mathrm{ORF}_{K,n}$ is the promise problem of deciding, given phase-query access to $Z$
promised to lie in $\Pi^{\Yes}_n\cup\Pi^{\No}_n$, which case holds.  The associated
distributions are
\[
\mathcal Y_n:=\mathcal F_K^{\otimes m},
\qquad
\mathcal N_n:=\mathcal U^{\otimes m}.
\]
\end{definition}

\begin{proposition}[The distributions respect the promise]\label{prop:promise-density}
For all sufficiently large $n$,
\[
\Prb_{Z\sim\mathcal Y_n}\bigl[Z\in\Pi^{\Yes}_n\bigr]\ge 1-e^{-6}\ge 0.99,
\qquad
\Prb_{Z\sim\mathcal N_n}\bigl[Z\in\Pi^{\No}_n\bigr]\ge 1-\frac{4m}{\delta^2N}.
\]
\end{proposition}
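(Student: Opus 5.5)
Both inequalities come from Theorem~\ref{thm:bs-distributions}, applied to the $m$ copies separately and combined using their independence under the product distributions $\mathcal Y_n=\mathcal F_K^{\otimes m}$ and $\mathcal N_n=\mathcal U^{\otimes m}$. No Fourier analysis is involved; this is a counting statement about the two product measures.

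\emph{The yes side.}  Under $\mathcal Y_n$ the copies $z^{(1)},\dots,z^{(m)}$ are independent, and each has law $\mathcal F_K$.  By Theorem~\ref{thm:bs-distributions}(2), each copy satisfies $\mathsf{forr}_K(z^{(j)})\ge\delta$ with probability at least $6\delta$.  If $Z\notin\Pi^{\Yes}_n$, then every copy fails this event, so by independence
\[
\Prb_{Z\sim\mathcal Y_n}\bigl[Z\notin\Pi^{\Yes}_n\bigr]\le(1-6\delta)^{m}\le e^{-6\delta m}=e^{-6},
\]
because $m=\delta^{-1}$.  Since $e^{-6}<0.0025$, this gives $1-e^{-6}\ge0.99$.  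The bound holds for every $n$; no largeness condition is needed here.

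\emph{The no side.}  Under $\mathcal N_n$ each copy is uniform on $\pmone^{KN}$.  By Theorem~\ref{thm:bs-distributions}(1), each copy has $\abs{\mathsf{forr}_K(z^{(j)})}\ge\delta/2$ with probability at most $4/(\delta^2N)$.  A union bound over the $m$ copies bounds the probability that some copy violates $\abs{\mathsf{forr}_K}\le\delta/2$ by $4m/(\delta^2N)$, and independence is not needed for this step.  One point needs care: Theorem~\ref{thm:bs-distributions}(1) bounds the event $\abs{\mathsf{forr}_K}\ge\delta/2$, while $\Pi^{\No}_n$ is defined by $\abs{\mathsf{forr}_K}\le\delta/2$.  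The complement of the promise event is the strict inequality $\abs{\mathsf{forr}_K}>\delta/2$, which is contained in the event the theorem controls, so the bound applies as stated.  The phrase ``for all sufficiently large $n$'' is needed only so that the right-hand side $1-4m/(\delta^2N)$ is a nontrivial quantity.  Since $m$ and $\delta$ are constants depending on $k$, this holds once $N=2^n>4m/\delta^2=2^{15K+2}$.

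There is no real obstacle in this argument.  The only things to check are the direction of the inequalities in the boundary case $\abs{\mathsf{forr}_K}=\delta/2$ and the elementary bound $1-x\le e^{-x}$.
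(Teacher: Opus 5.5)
Your proof is correct and follows the paper's argument: on the yes side you use independence of the $m$ copies and $(1-6\delta)^m\le e^{-6\delta m}=e^{-6}$, and on the no side a union bound over the copies with Theorem~\ref{thm:bs-distributions}(1). Your handling of the boundary case $\abs{\mathsf{forr}_K}=\delta/2$ is a correct detail that the paper leaves implicit.
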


\begin{proof}
Under $\mathcal Y_n$ the copies $z^{(j)}$ are independent samples of $\mathcal F_K$, so by
Theorem~\ref{thm:bs-distributions}(2) the probability that no copy has
$\mathsf{forr}_K\ge\delta$ is at most $(1-6\delta)^{m}\le e^{-6\delta m}\le e^{-6}$.  Under
$\mathcal N_n$, a union bound over the $m$ copies and
Theorem~\ref{thm:bs-distributions}(1) give the second claim.
\end{proof}

For the upper bound we run the interference test of
Proposition~\ref{prop:odd-interference-circuit} several times in parallel on each copy, compare the empirical acceptance frequency of each copy with a threshold lying
between the two promised expectations, and accept if some copy exceeds it.  All gates outside the
shared Hadamard layers are basis-preserving, so the Fourier depth does not increase.

\begin{theorem}[Depth $k+1$ decides the problem in the worst case]\label{thm:upper-worst-case}
For every fixed $k\ge2$ there is a uniform polynomial-size $\FH_{k+1}$ circuit family
$B=\{B_n\}$, making $O_k(1)$ phase queries, such that for all sufficiently large $n$:
\[
Z\in\Pi^{\Yes}_n
\ \Longrightarrow\
\Prb[B_n\text{ accepts }Z]\ge\tfrac23,
\qquad
Z\in\Pi^{\No}_n
\ \Longrightarrow\
\Prb[B_n\text{ accepts }Z]\le\tfrac13.
\]
Making $O_k(n)$ phase queries instead, the error bound $\tfrac13$ improves to $2^{-n}$.
\end{theorem}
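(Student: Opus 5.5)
The plan is to run the interference test of Proposition~\ref{prop:odd-interference-circuit} $R$ times in parallel on each of the $m=\delta^{-1}$ copies $z^{(1)},\dots,z^{(m)}$, all sharing the same $k+1$ global Hadamard layers, and then threshold the empirical frequencies inside the final basis-preserving block. Concretely, for each pair $(j,r)\in[m]\times[R]$ we allocate its own data register $D_{j,r}$ and control qubit $c_{j,r}$. At every layer the designated wires are the union of the registers that Proposition~\ref{prop:odd-interference-circuit} would place there: all $D_{j,r}$ and $c_{j,r}$ at layer $1$, only the $D_{j,r}$ at layers $2,\dots,k$, and only the $c_{j,r}$ at layer $k+1$. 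We use the routing of Remark~\ref{rem:register-routing} to move registers on and off these wires. The indexed queries of run $(j,r)$ select the functions of copy $j$ through its index. Since the copies' functions are distinct indices of $O_n$, the index space $A_n$ must have size at least $mK+1=O_k(1)$, which is within $s(n)=O(\log n)$.

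The runs act on disjoint registers, and everything between layers is basis-preserving. So the joint state before measurement is the tensor product of the single-run states, and the measured control bits $b_{j,r}$ are independent. By Proposition~\ref{prop:odd-interference-circuit}, each $b_{j,r}$ equals $0$ with probability exactly $p_j=\tfrac12(1+\mathsf{forr}_K(z^{(j)}))$. The total number of queries is $mRk$, and the circuit has Fourier depth exactly $k+1$.

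For the acceptance set $S$, let $\hat p_j$ be the fraction of $r$ with $b_{j,r}=0$. We accept iff $\hat p_j\ge\theta:=\tfrac12+\tfrac{3\delta}{8}$ for some $j$; this set is recognizable in polynomial time. Under $\Pi^{\Yes}_n$, some $j$ has $p_j\ge\tfrac12+\tfrac\delta2$. Under $\Pi^{\No}_n$, every $j$ has $p_j\le\tfrac12+\tfrac\delta4$. In both cases $\theta$ sits at distance $\delta/8$ from the relevant mean.

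Lemma~\ref{lem:hoeffding} then bounds the error. On a yes instance the good copy falls below $\theta$ with probability at most $e^{-2R(\delta/8)^2}$. On a no instance a union bound over the $m$ copies gives error at most $m\,e^{-2R(\delta/8)^2}$. Taking $R=C\delta^{-2}\log m$ for a suitable absolute constant $C$ makes both errors at most $\tfrac13$, and then $R$, and with it the query count $mRk$, is $O_k(1)$. Taking instead $R=\Theta(\delta^{-2}(n+\log m))=O_k(n)$ drives both errors below $2^{-n}$.

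The main obstacle I expect is the bookkeeping that keeps the parallel composition inside Definition~\ref{def:FH2FH3}. The placeholders swapped in at each layer must be fresh or oracle-independent product states, as Remark~\ref{rem:register-routing} requires. The index computations for different runs must be uncomputed so that no cross-run entanglement remains. The final threshold computation must be absorbed into $U_{k+1}$ as a reversible predicate. All of this is routine but has to be checked against the fixed gate set. Uniformity is immediate, since $m$, $R$ and $k$ are explicit.
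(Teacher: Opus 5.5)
Your proposal is correct and follows the paper's proof essentially step for step. Both run $R$ parallel copies of the interference test per instance, sharing the $k+1$ Hadamard layers, and threshold at $\tfrac12+\tfrac{3\delta}{8}$, which lies at distance $\delta/8$ from both promised means. Both then apply one-sided Hoeffding in the yes case and a union bound over the $m$ copies in the no case, with $R=O(\delta^{-2}\log m)$ (the paper takes $R=\lceil 32\,\delta^{-2}\ln(6m)\rceil$), and $R=O_k(n)$ for error $2^{-n}$.
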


\begin{proof}
For each copy index $j\in[m]$, run $R$ independent parallel repetitions of the interference test
of Proposition~\ref{prop:odd-interference-circuit} on the instance $z^{(j)}$, where $R=O_k(1)$ is
fixed below.  All $mR$ repetitions share the same $k+1$ global Hadamard layers by
Remark~\ref{rem:register-routing}, and all gates outside those layers are basis-preserving, so
the combined circuit is a uniform $\FH_{k+1}$ circuit making $mRk$ phase queries.  All
control qubits are measured; the accepting set is
\[
S=\Bigl\{\text{outcomes}:\ \exists j\in[m]\ \text{with}\
\#\{\text{accepting repetitions for copy }j\}\ge \tfrac{R}{2}\bigl(1+\tfrac34\delta\bigr)\Bigr\},
\]
recognizable in polynomial time.

Fix a copy $j$ and condition on $Z$.  Each repetition for copy $j$ accepts independently with
probability $p_j=\frac12(1+\mathsf{forr}_K(z^{(j)}))$.  If
$\mathsf{forr}_K(z^{(j)})\ge\delta$ then $p_j\ge\frac12(1+\delta)$, which exceeds the
threshold fraction by $\frac{\delta}{8}$; if
$\abs{\mathsf{forr}_K(z^{(j)})}\le\delta/2$ then $p_j\le\frac12(1+\delta/2)$, which falls
below it by $\frac{\delta}{8}$.  By Hoeffding's inequality
(Lemma~\ref{lem:hoeffding}), the vote for copy $j$ fails with probability at most
$\exp(-2R(\delta/8)^2)$.  Choosing $R=\lceil 32\,\delta^{-2}\ln(6m)\rceil=O_k(1)$ makes
this at most $\frac{1}{6m}$.

On $Z\in\Pi^{\Yes}_n$ some copy $j^*$ has $\mathsf{forr}_K(z^{(j^*)})\ge\delta$, and the
circuit accepts unless the vote for $j^*$ fails, so it accepts with probability at least
$1-\frac{1}{6m}\ge\frac23$.  On $Z\in\Pi^{\No}_n$ every copy satisfies
$\abs{\mathsf{forr}_K}\le\delta/2$, and by a union bound the circuit accepts with probability
at most $m\cdot\frac{1}{6m}=\frac16\le\frac13$.  For the second statement take
$R=\lceil C\delta^{-2}\,n\rceil$ with a suitable absolute constant $C$, which drives both
error bounds below $2^{-n}$.
\end{proof}

We now prove the lower bound, in four steps.  A circuit that decides the promise problem on every
instance must distinguish the two product distributions with constant advantage, since each
distribution is supported almost entirely inside its promise set.  A hybrid argument then gives a
single copy accounting for a constant fraction of that advantage, with all the other copies fixed;
fixing the other copies is permissible because the growth bound of Theorem~\ref{thm:gstw-growth} is
stable under restrictions.  The restricted circuit is a bounded function of one instance, and its biased
Fourier weight is bounded using the round simulation and the transfer theorem.  The advantage
criterion of Theorem~\ref{thm:bs-bias} then reduces the argument to a comparison of two exponents,
and the comparison is strict because $K$ is larger than twice the number of rounds.

\begin{theorem}[Depth $k$ cannot decide the problem]\label{thm:lower-worst-case}
Fix $k\ge2$ and set $c_k:=\frac{1}{4(2k-1)(2k-2)}$.
There is $n_0=n_0(k)$ such that for every $n\ge n_0$ the following holds.  If $C$ is any
$\FH_k$ circuit, of arbitrary size and not necessarily uniform, making $q\le N^{c_k}$ phase queries, then $C$ fails to decide
$\mathrm{ORF}_{K,n}$ with error $\le\frac13$: there exists an instance
$Z\in\Pi^{\Yes}_n\cup\Pi^{\No}_n$ on which $C$ fails with probability greater than $\frac13$.
\end{theorem}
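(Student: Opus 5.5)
I argue by contradiction, following the four steps announced before the statement. Suppose $C$ is an $\FH_k$ circuit with $q\le N^{c_k}$ phase queries that decides $\mathrm{ORF}_{K,n}$ with error at most $\tfrac13$ on every promised instance, and let $f:\pmone^{M}\to[0,1]$, $M=mKN$, be its acceptance probability.

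\emph{Step 1 (distributional gap).} By Proposition~\ref{prop:promise-density}, $\mathcal Y_n$ puts mass at least $0.99$ on $\Pi^{\Yes}_n$ and $\mathcal N_n$ puts mass at least $1-4m/(\delta^2N)$ on $\Pi^{\No}_n$. Since $f\in[0,1]$ everywhere, this gives
\[
\E_{\mathcal Y_n}[f]-\E_{\mathcal N_n}[f]\ \ge\ 0.99\cdot\tfrac23-\tfrac13-\tfrac{4m}{\delta^2N}\ \ge\ 0.3
\]
for large $n$.

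\emph{Step 2 (hybrid).} Interpolate through the hybrids $\mathcal F_K^{\otimes j}\otimes\mathcal U^{\otimes(m-j)}$. Some index $j$ carries advantage at least $0.3/m$. Averaging over the other $m-1$ copies, which are drawn from fixed product distributions, we can fix them to a particular assignment. This fixing is a restriction $\rho$ leaving only copy $j$ free, with $\widetilde M=KN$, and it keeps the advantage. The restricted function $g=f_\rho:\pmone^{KN}\to[0,1]$ then satisfies $|\E_{\mathcal F_K}g-\E_{\mathcal U}g|\ge 0.3/m=\Omega_k(1)$.

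\emph{Step 3 (biased weight).} By Corollary~\ref{cor:FHk-growth}, applied to $f$ and to every further restriction of $g$ (each of which is itself a restriction of $f$, with at most $M$ free variables), every restriction of $g$ has
\[
L_{1,\ell}\ \le\ (2^{2k-2}-1)^{2\ell}\,q^{\ell}\,M^{\frac\ell2(1-\frac1{2k-2})}.
\]
Theorem~\ref{thm:bs-transfer} then bounds $\mathsf{wt}^\mu_\ell(g)$ by $4^\ell$ times this quantity, uniformly over $\mu\in[-\tfrac12,\tfrac12]^{KN}$.

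\emph{Step 4 (exponent comparison; the main obstacle).} Substitute this into Theorem~\ref{thm:bs-bias}. The level-$\ell$ term is at most
\[
C_k^{\ell}\,q^{\ell}\,N^{\frac\ell2\left(\frac1K-\frac1{2k-2}\right)},
\qquad C_k=4(8K)^{14}(2^{2k-2}-1)^2(mK)^{1/2},
\]
where the factor $(mK)^{1/2}$ comes from writing $M=mKN$. With $K=2k-1$, the exponent of $N$ is
\[
-\frac{\ell}{2(2k-1)(2k-2)}=-2c_k\ell.
\]
Using $q\le N^{c_k}$, each term is at most $(C_kN^{-c_k})^{\ell}\le N^{-c_k\ell/2}$ once $N^{c_k/2}\ge C_k$. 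Summing over $\ell=K,\dots,K(K-1)$ gives at most $K^2N^{-c_kK/2}=o(1)$. This contradicts the $\Omega_k(1)$ advantage from Step 2 for $n\ge n_0(k)$.

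The delicate points are bookkeeping. First, the restriction clause must be used both for the hybrid fixing and for the transfer theorem; the growth bound is stated with $M$, so this is safe. Second, the degenerate cases $q=0$, or all queries lying in $U_0$ or $U_k$, need checking; there the weight is zero and Corollary~\ref{cor:FHk-growth} already covers them. Third, the per-level constant $C_k^\ell$ must be absorbed. This works because $\ell\le K(K-1)$ is bounded and the exponent gap $c_k$ is only half used. Non-uniformity and circuit size never enter, since only $q$ appears in the bounds.
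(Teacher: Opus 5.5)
Your proposal is correct and follows the paper's proof essentially step for step. It uses the same four ingredients: the promise-mass gap from Proposition~\ref{prop:promise-density}; the hybrid over the $m$ copies with the other copies fixed as a restriction; the restriction-stable bound of Corollary~\ref{cor:FHk-growth} passed through the transfer theorem; and the exponent gap $-2c_k\ell$ in the Bansal--Sinha criterion. The only differences are cosmetic: you get a gap of $0.3$ where the paper gets $\tfrac14$, and you track an explicit per-level constant $C_k^{\ell}$, absorbed once $N^{c_k/2}\ge C_k$, whereas the paper folds all such factors into a single $k$-dependent constant using $\ell\le K(K-1)$.
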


\begin{proof}
Suppose toward a contradiction that $C$ accepts every $Z\in\Pi^{\Yes}_n$ with probability at
least $\frac23$ and every $Z\in\Pi^{\No}_n$ with probability at most $\frac13$.  Let
$g:\pmone^{mKN}\to[0,1]$ be the acceptance probability of $C$ as a function of the
concatenated truth tables.

\emph{Step 1: distributional advantage.}
By Proposition~\ref{prop:promise-density},
\[
\E_{\mathcal Y_n}[g]
\ \ge\
\tfrac23\,\Prb_{\mathcal Y_n}[\Pi^{\Yes}_n]
\ \ge\ \tfrac23\cdot 0.99
\ \ge\ 0.65,
\qquad
\E_{\mathcal N_n}[g]
\ \le\
\tfrac13+\Prb_{\mathcal N_n}\bigl[\neg\Pi^{\No}_n\bigr]
\ \le\ \tfrac13+\frac{4m}{\delta^2N},
\]
so for $n$ large,
\begin{equation}\label{eq:distributional-advantage}
\E_{\mathcal Y_n}[g]-\E_{\mathcal N_n}[g]\ \ge\ \tfrac14.
\end{equation}

\emph{Step 2: hybrid argument.}
For $0\le j\le m$ let $H_j:=\mathcal F_K^{\otimes j}\otimes\,\mathcal U^{\otimes(m-j)}$, so
$H_0=\mathcal N_n$ and $H_m=\mathcal Y_n$.  By \eqref{eq:distributional-advantage} there is an
index $j$ with
\[
\bigl|\E_{H_j}[g]-\E_{H_{j-1}}[g]\bigr|\ \ge\ \frac{1}{4m}.
\]
The two hybrids differ only in copy $j$, and the remaining copies are drawn from the same
distribution under both.  Writing $\Delta(w)$ for the difference of the two
inner expectations at a fixed value $w$ of the other copies, the displayed bound says
$\abs{\E_w[\Delta(w)]}\ge1/(4m)$, and $\abs{\E_w[\Delta(w)]}\le\max_w\abs{\Delta(w)}$.  Hence there
exist \emph{fixed} values
$w=(z^{(1)},\dots,z^{(j-1)},z^{(j+1)},\dots,z^{(m)})$ of the other copies such that the
restricted function
\[
h:\pmone^{KN}\to[0,1],
\qquad
h(z):=g\bigl(z^{(1)},\dots,z^{(j-1)},\,z,\,z^{(j+1)},\dots,z^{(m)}\bigr)
\]
satisfies
\begin{equation}\label{eq:one-block-advantage}
\bigl|\E_{z\sim\mathcal F_K}[h(z)]-\E_{z\sim\mathcal U}[h(z)]\bigr|\ \ge\ \frac{1}{4m}.
\end{equation}

\emph{Step 3: Fourier weight of the restricted function.}
The function $h$ is a restriction of $g$, and so is every further restriction $h_\rho$.  By
Corollary~\ref{cor:FHk-growth}, whose bound holds for all restrictions, for every
$\ell\ge1$ and every restriction $\rho$,
\[
\mathsf{wt}_{\ell}(h_\rho)
=
L_{1,\ell}(h_\rho)
\le
(2^{2k-2}-1)^{2\ell}\,q^{\ell}\,M^{\frac{\ell}{2}\left(1-\frac{1}{2k-2}\right)},
\qquad M=mKN.
\]
By Theorem~\ref{thm:bs-transfer}, for every bias vector $\mu\in[-\frac12,\frac12]^{KN}$,
\[
\mathsf{wt}^{\mu}_{\ell}(h)
\le
4^{\ell}\cdot(2^{2k-2}-1)^{2\ell}\,q^{\ell}\,M^{\frac{\ell}{2}\left(1-\frac{1}{2k-2}\right)}
=:
A_{\ell}.
\]

\emph{Step 4: the advantage criterion.}
The remaining task is to show that the right-hand side of Theorem~\ref{thm:bs-bias}, applied to
$h$, is $o_k(1)$; this is where the exponent gap $2(k-1)<K=2k-1$ is used.
By Theorem~\ref{thm:bs-bias} applied to $h$,
\[
\bigl|\E_{\mathcal F_K}[h]-\E_{\mathcal U}[h]\bigr|
\le
\sum_{\ell=K}^{K(K-1)}
N^{-\frac{\ell}{2}\left(1-\frac1K\right)}\,(8K)^{14\ell}\,A_{\ell}.
\]
Since $M=mKN$ with $m,K=O_k(1)$, and since $\ell\le K(K-1)=O_k(1)$ throughout the sum, the
quantities $(2^{2k-2}-1)^{2\ell}$, $(8K)^{14\ell}$, $4^{\ell}$, and $(mK)^{\ell}$ are all
$O_k(1)$ and may be absorbed into a constant $C_{k}$:
\[
\bigl|\E_{\mathcal F_K}[h]-\E_{\mathcal U}[h]\bigr|
\le
C_k\sum_{\ell=K}^{K(K-1)}
q^{\ell}\;
N^{\frac{\ell}{2}\left(1-\frac{1}{2k-2}\right)-\frac{\ell}{2}\left(1-\frac{1}{2k-1}\right)}
=
C_k\sum_{\ell=K}^{K(K-1)}
q^{\ell}\,N^{-\frac{\ell}{2(2k-1)(2k-2)}},
\]
using $\frac12(1-\frac1{2k-2})-\frac12(1-\frac1{2k-1})
=-\frac{1}{2(2k-1)(2k-2)}$.
With $q\le N^{c_k}$ and $c_k=\frac{1}{4(2k-1)(2k-2)}$, each term is at most
$N^{-\ell c_k}\le N^{-Kc_k}$, so
\[
\bigl|\E_{\mathcal F_K}[h]-\E_{\mathcal U}[h]\bigr|
\le
C_k'\,N^{-\frac{2k-1}{4(2k-1)(2k-2)}}
=
C_k'\,N^{-\frac{1}{4(2k-2)}}.
\]
For $n\ge n_0(k)$ this is smaller than $\frac{1}{4m}$, contradicting
\eqref{eq:one-block-advantage}.
\end{proof}

The proof in fact bounds the distinguishing advantage of \emph{every} depth-$k$ circuit, and not only
that of a hypothetical decider, so the separation holds at vanishing advantage: depth $k$ has distinguishing
advantage $o_k(1)$ in the stated range of query counts.

\begin{corollary}[The lower bound holds at vanishing advantage]\label{cor:error-robust}
Fix $k\ge2$ with $c_k$ as in Theorem~\ref{thm:lower-worst-case}.  There is $n_0(k)$ such that for
every $n\ge n_0(k)$, every $\FH_k$ circuit $C$ making $q\le N^{c_k}$ phase queries, with acceptance
probability $p_C$, satisfies
\[
\bigl|\E_{\mathcal Y_n}[p_C]-\E_{\mathcal N_n}[p_C]\bigr|
\ \le\
C_k\,N^{-\frac{1}{4(2k-2)}}
\ =\ o_k(1).
\]
The structured and uniform ensembles are therefore $o_k(1)$-indistinguishable at Fourier depth $k$
with $N^{c_k}$ queries.  No such circuit decides $\mathrm{ORF}_{K,n}$ with worst-case error below
$\tfrac{1-e^{-6}}{2-e^{-6}}-o_k(1)$, a constant just below $\tfrac12$, whereas the depth-$(k+1)$
circuit of Theorem~\ref{thm:upper-worst-case} fails with probability $2^{-n}$.
\end{corollary}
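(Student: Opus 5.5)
The first claim is a direct reading of Steps~2--4 in the proof of Theorem~\ref{thm:lower-worst-case}; the decider hypothesis enters only in Step~1. Let $g=p_C$ be the acceptance probability as a function of the concatenated truth tables.

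\emph{Hybrid decomposition.} With the hybrids $H_j=\mathcal F_K^{\otimes j}\otimes\mathcal U^{\otimes(m-j)}$, telescoping gives
\[
\bigl|\E_{\mathcal Y_n}[g]-\E_{\mathcal N_n}[g]\bigr|\le\sum_{j=1}^m\bigl|\E_{H_j}[g]-\E_{H_{j-1}}[g]\bigr|.
\]
Each summand is at most $\max_w\abs{\Delta_j(w)}$, where $w$ ranges over fixed values of the other copies. For each fixed $w$, the restricted function $h_{j,w}$ of copy $j$ satisfies, for every further restriction, the growth bound of Corollary~\ref{cor:FHk-growth}. Theorem~\ref{thm:bs-transfer} converts this into a biased-weight bound $A_\ell$. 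Theorem~\ref{thm:bs-bias} together with the exponent computation of Step~4 then gives
\[
\abs{\Delta_j(w)}\le C_k'N^{-\frac1{4(2k-2)}},
\]
uniformly in $j$ and $w$. Summing over the $m=O_k(1)$ copies yields the stated bound with $C_k=mC_k'$, which is $o_k(1)$. This step is entirely routine.

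\emph{Error threshold.} Suppose $C$ has worst-case error at most $\varepsilon$ on the promise. Then
\[
\E_{\mathcal Y}[g]\ge(1-\varepsilon)\,p_Y,\qquad p_Y:=\Pr_{\mathcal Y}[\Pi^{\Yes}]\ge1-e^{-6}.
\]
For the no side,
\[
\E_{\mathcal N}[g]\le\varepsilon+\Pr_{\mathcal N}[\neg\Pi^{\No}]\le\varepsilon+o(1),
\]
by Proposition~\ref{prop:promise-density}. The advantage is therefore at least $(1-\varepsilon)p_Y-\varepsilon-o(1)$. Since the advantage is $o_k(1)$, this forces
\[
\varepsilon\ge\frac{p_Y}{1+p_Y}-o_k(1)\ge\frac{1-e^{-6}}{2-e^{-6}}-o_k(1),
\]
using that $x\mapsto x/(1+x)$ is increasing. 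This is the stated constant.

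\emph{Upper bound and the main difficulty.} The $2^{-n}$ error at depth $k+1$ is the second clause of Theorem~\ref{thm:upper-worst-case}. I expect no real obstacle here. The only point needing care is to make explicit that the restriction-stability of Corollary~\ref{cor:FHk-growth} gives bounds uniform over all fixings $w$ and all indices $j$, so that the per-copy estimate from Step~4 can be summed over the copies.
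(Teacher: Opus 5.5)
Your proposal is correct and follows the paper's proof step for step. You use the telescoping hybrid decomposition, apply the single-block bound of Steps~3--4 uniformly over all fixings of the other copies, and sum over the $m=2^{5K}$ copies to get $C_k=mC_k'$; then you carry out the same error-threshold computation via Proposition~\ref{prop:promise-density} (you keep $p_Y$ symbolic and use monotonicity of $x\mapsto x/(1+x)$ where the paper substitutes $1-e^{-6}$ directly, which is equivalent).
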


\begin{proof}
Run the proof of Theorem~\ref{thm:lower-worst-case} without assuming a decider.  Steps~3--4 bound,
for \emph{every} fixing of the other $m-1$ copies, the single-block advantage
$\bigl|\E_{\mathcal F_K}[h]-\E_{\mathcal U}[h]\bigr|\le C_k'\,N^{-1/(4(2k-2))}$, uniformly over
restrictions.  By the hybrid decomposition of Step~2, the full advantage
$\bigl|\E_{\mathcal Y_n}[p_C]-\E_{\mathcal N_n}[p_C]\bigr|$ is a sum of $m$ such single-block
advantages, hence at most $m\,C_k'\,N^{-1/(4(2k-2))}$ with $m=2^{5K}=O_k(1)$, which is the bound
above.  Finally, a circuit with worst-case error $\varepsilon$ accepts every $\Yes$ instance with
probability $\ge1-\varepsilon$ and every $\No$ instance with probability $\le\varepsilon$.  By
Proposition~\ref{prop:promise-density}, $\E_{\mathcal Y_n}[p_C]\ge(1-\varepsilon)(1-e^{-6})$ and
$\E_{\mathcal N_n}[p_C]\le\varepsilon+o_k(1)$, so the advantage is at least
$(1-e^{-6})-(2-e^{-6})\varepsilon-o_k(1)$; the bound above then forces
$\varepsilon\ge\tfrac{1-e^{-6}}{2-e^{-6}}-o_k(1)$.
\end{proof}

The direct product in Definition~\ref{def:hard-problem} converts the $\Theta(\delta)$ mass guarantee
of Theorem~\ref{thm:bs-distributions}(2) into the high-probability guarantee of
Proposition~\ref{prop:promise-density}, and a worst-case solver contradicts that guarantee in Step~1.  The
hybrid argument of Step~2 then reduces the analysis to a single Forrelation instance, where
Theorems~\ref{thm:bs-bias} and~\ref{thm:bs-transfer} apply, and the clause of
Theorem~\ref{thm:gstw-growth} allowing input bits to be fixed in advance is exactly what Step~3
needs.

\begin{remark}[Parametric form of the lower bound]\label{rem:parametric-lower-bound}
The proof of Theorem~\ref{thm:lower-worst-case} used only two properties of the circuit: that
it is simulated exactly by an algorithm with $r=k-1$ rounds of $t=q$ parallel queries
(Lemma~\ref{lem:round-embedding}), and that $2r<K$.  The same proof, run verbatim with general parameters,
shows: for every $K\ge2$ and $r\ge1$ with $2r<K$, no quantum query algorithm with $r$ adaptive
rounds and
\[
t\ \le\ N^{c(r,K)},
\qquad
c(r,K):=\frac14\Bigl(\frac{1}{2r}-\frac1K\Bigr)>0,
\]
parallel queries per round decides the OR problem built from $K$-fold Forrelation instances
(Definition~\ref{def:hard-problem} with $\delta=2^{-5K}$ and $m=2^{5K}$) with error at most
$\frac13$ on every instance of the promise, for all $n\ge n_0(K)$.
Theorem~\ref{thm:lower-worst-case} is the case $r=k-1$, $K=2k-1$, $t=q$, for which $c(r,K)=c_k$.

Three exponents should be kept apart here.  The first is the value stated in
Theorem~\ref{thm:lower-worst-case}, namely $c_k=\frac{1}{4(2k-1)(2k-2)}$, chosen for a convenient
margin.  The second is the supremum that the present proof gives: each term of the sum in Step~4
carries the power $N^{-\ell\left(\frac12\left(\frac1{2r}-\frac1K\right)-c\right)}$, so the sum decays
for every $c<\tfrac12\bigl(\tfrac1{2r}-\tfrac1K\bigr)$, which is twice $c(r,K)$; at $k=2$ the
sharpened growth bound of Remark~\ref{rem:fh2-sharpened} raises this to every $c<\tfrac13$, of which
the stated $c_2=\tfrac16$ is again half.  The third is the optimal exponent, which we do not
determine.  What limits the second is the Bansal--Sinha criterion together with the growth bound, not
the construction; improving beyond it requires a better growth bound, or a criterion that does not
pass through Fourier weight.
\end{remark}

\begin{corollary}[Growing depth]\label{cor:growing-k}
There are absolute constants $c,c'>0$ and $n_0$ with the following two properties.
\begin{enumerate}[leftmargin=2em]
\item For every $n\ge n_0$ and every $k$ with $2\le k\le c\,n^{1/3}$, the promise problem
$\mathrm{ORF}_{2k-1,n}$, whose $mK=2^{O(k)}$ functions are indexed by $O(k)$ bits, is decided at
Fourier depth $k+1$ by a circuit of size $2^{O(k)}\poly(n)$ making $2^{O(k)}$ phase queries, while
every circuit of Fourier depth $k$ making at most $2^{c'n/k^2}$ phase queries fails on some
promised instance.
\item Let $k(\cdot)$ be nondecreasing and polynomial-time computable with $2\le k(n)\le c\log n$
for all sufficiently large $n$.  Then the index width is $O(\log n)$ and the deciding circuit has
polynomial size at those lengths, so the diagonalization of
Theorem~\ref{thm:all-level-separation}, run over those lengths and with the answers at the
finitely many remaining lengths hardwired into the deciding family, gives an oracle relative to
which Fourier depth $k(n)+1$ decides a language that no uniform family of Fourier depth $k(n)$
decides (Definition~\ref{def:variable-depth}); for $k(n)=o(\log n)$ the deciding circuit makes
$n^{o(1)}$ queries.
\end{enumerate}
\end{corollary}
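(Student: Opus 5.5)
The plan is to rerun the proofs of Theorems~\ref{thm:upper-worst-case} and~\ref{thm:lower-worst-case} while tracking every constant that was absorbed into $O_k(1)$ or $C_k$. This requires the growth bound with a constant singly exponential in $r\ell$, which is the form of Theorem~\ref{thm:gstw-growth} proved in Appendix~\ref{app:gstw-growth-constant}.

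\emph{Upper bound.} With $K=2k-1$ we have $\delta=2^{-5K}$ and $m=2^{5K}$. The index set has $mK=2^{O(k)}$ elements and therefore needs $O(k)$ bits. In Theorem~\ref{thm:upper-worst-case} we take $R=\lceil 32\delta^{-2}\ln(6m)\rceil=2^{O(k)}$. The circuit then consists of $mR=2^{O(k)}$ copies of the interference test of Proposition~\ref{prop:odd-interference-circuit}, each of size $\poly(n)$ and making $k$ queries. This gives size $2^{O(k)}\poly(n)$ and $2^{O(k)}$ queries at Fourier depth $k+1$. The Hoeffding computation is unchanged.

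\emph{Lower bound.} We follow Steps 1--4 of the proof of Theorem~\ref{thm:lower-worst-case} and keep explicit constants.

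Step 1 needs $4m/(\delta^2N)=2^{O(k)}/N$ to be small. This holds because $k\le cn^{1/3}$ makes $2^{O(k)}=N^{o(1)}$.

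Step 2 gives a single-block advantage of at least $1/(4m)=2^{-O(k)}$.

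In Step 3, Corollary~\ref{cor:FHk-growth} together with Theorem~\ref{thm:bs-transfer} gives
\[
A_\ell\le 4^{\ell}\,2^{4(k-1)\ell}\,q^{\ell}\,(mKN)^{\frac\ell2(1-\frac1{2k-2})}.
\]

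In Step 4, each term of the sum is at most $2^{O(k\ell)+14\ell\log_2(8K)}\,q^\ell\,N^{-\ell/(2(2k-1)(2k-2))}$. We set $q\le 2^{c'n/k^2}$ with $c'$ a small absolute constant, say $c'\le 1/64$. Then $q^\ell N^{-\ell/(2(2k-1)(2k-2))}\le 2^{-\ell c''n/k^2}$, and each term is at most $2^{\ell(O(k)-c''n/k^2)}$.

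The sum has at most $K^2=O(k^2)$ terms and starts at $\ell\ge K$. Its total is therefore $O(k^2)\,2^{K(O(k)-c''n/k^2)}$. We need this to be below $2^{-O(k)}$, which holds once $n/k^2\gg k$, that is, once $k\le c\,n^{1/3}$ for a small absolute $c$. This is exactly where the range $k\le cn^{1/3}$ comes from.

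\emph{Main obstacle.} The main obstacle I expect is checking that the Bansal--Sinha criterion (Theorem~\ref{thm:bs-bias}) and the distributional statements (Theorem~\ref{thm:bs-distributions}) hold uniformly in $K$ with the stated explicit constants $(8K)^{14\ell}$ and $\delta=2^{-5K}$, rather than only for fixed $K$ with hidden constants. As stated in the paper they are explicit, so I would take them as they are. The condition "sufficiently large $n$" in Proposition~\ref{prop:promise-density} must also be made uniform; I would handle this with the bound $2^{O(k)}/N\le 2^{O(n^{1/3})-n}$.

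\emph{Part 2.} For $k(n)\le c\log n$ the index width $O(k)$ is $O(\log n)$, as Definition~\ref{def:FHk-rel} requires. The circuit size $2^{O(k)}\poly(n)$ is polynomial. The query budget $2^{c'n/k^2}$ is superpolynomial, so it exceeds the running time of every polynomial-time machine at large $n$.

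We then run the diagonalization of Theorem~\ref{thm:all-level-separation} over the enumeration of clocked depth-$k(n)$ generators. At each chosen length $n$ we use part~1 to pick a promised instance on which the current machine fails. We choose lengths spaced far enough apart that machines querying only length-$n$ oracles do not interfere with one another. The deciding family hardwires the answers at the finitely many lengths below $n_0$.

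When $k(n)=o(\log n)$, the query count $2^{O(k)}$ is $n^{o(1)}$. Since $k(\cdot)$ is polynomial-time computable, the deciding family is uniform.
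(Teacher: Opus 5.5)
Your proposal is correct and follows the paper's proof essentially step for step. You track $m,R,\delta^{-1}=2^{\Theta(k)}$ through the upper bound, and you use the singly exponential constant of Theorem~\ref{thm:gstw-growth} so that every elementary factor in Step~4 is $2^{O(k)\ell}$; the decay $q^{\ell}N^{-\ell/(2(2k-1)(2k-2))}=2^{-\Omega(\ell n/k^2)}$ beats these factors, and stays below $1/(4m)=2^{-\Theta(k)}$, exactly when $k^3\le c\,n$. The one unnecessary element is spacing the lengths apart in your diagonalization: under the length-preserving convention of Definition~\ref{def:FHk-rel}, circuits at different lengths query disjoint oracles and cannot interfere.
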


\begin{proof}
We track the dependence on $k$ in the argument for constant $k$.  The upper bound depends only mildly
on $k$: with
$\delta=2^{-5K}$, $m=2^{5K}=2^{\Theta(k)}$, and $R=O(\delta^{-2}\log m)=2^{\Theta(k)}$, the
depth-$(k+1)$ circuit of Theorem~\ref{thm:upper-worst-case} makes $mRk=2^{\Theta(k)}$ queries,
and its size is $2^{\Theta(k)}\poly(n)$; the instance uses $mK=2^{\Theta(k)}$ functions, so its index width
is $\Theta(k)$ bits.  The width and the size are polynomial for $k\le c\log n$ with $c$ small, and the
query count is $n^{o(1)}$ for $k=o(\log n)$.

For the lower bound, Step~4 of Theorem~\ref{thm:lower-worst-case} bounds the advantage by a sum of at
most $K(K-1)=\Theta(k^2)$ terms, the term at level $\ell$ being a product of
$N^{-\ell/(2(2k-1)(2k-2))}$ with the factors $4^\ell$, $(8K)^{14\ell}$, $q^\ell$, $(mK)^\ell$ (from
$M=mKN$), and $(2^{2k-2}-1)^{2\ell}$.  Every factor is now singly exponential in $k$ per unit of
$\ell$: with $m=2^{\Theta(k)}$ and $K=\Theta(k)$,
\[
 4^{\ell}\,(8K)^{14\ell}\,(mK)^{\ell}\,(2^{2k-2}-1)^{2\ell}=2^{O(k)\cdot\ell},
 \qquad
 q^{\ell}\,N^{-\frac{\ell}{2(2k-1)(2k-2)}}\le N^{-\frac{\ell}{4(2k-1)(2k-2)}},
\]
where the second inequality uses $q\le N^{c_k}$ with $c_k=\tfrac1{4(2k-1)(2k-2)}$.  The term at
level $\ell$ is therefore at most $2^{\ell(O(k)-n/(4(2k-1)(2k-2)))}$, and once $k^3\le c''n$ for a
suitable absolute constant $c''$, the exponent is at most $-\ell n/(8(2k-1)(2k-2))$.  Summing the
geometric series from $\ell=K$ gives advantage at most $2^{-\Omega(n/k)}$, which is below the
threshold $\Theta(1/m)=2^{-\Theta(k)}$ of Steps~1--2 because $n/k\ge n^{2/3}\gg k$ in the stated
range.  Thus every depth-$k$ circuit with $2^{c'n/k^2}$ queries fails on the promise.  For
$k(n)\le c\log n$ the instances fit the convention of Definition~\ref{def:FHk-rel} and the decider is
a uniform polynomial-size family, so the diagonalization of Theorem~\ref{thm:all-level-separation},
run against the enumeration of the families obeying the schedule
(Definition~\ref{def:variable-depth}), separates the two depths.
\end{proof}

\subsection{The separation theorem}\label{subsec:separation-theorem}

It remains to combine the worst-case statements, one for each input length, into a single oracle.
This is a standard diagonalization: we enumerate the uniform depth-$k$ machines, use
Theorem~\ref{thm:lower-worst-case} to make the $i$th machine fail at a dedicated input length, and let
the depth-$(k+1)$ circuit of Theorem~\ref{thm:upper-worst-case}, which is correct on every instance
of the promise, decide the resulting language.

\begin{theorem}[Oracle separation at every level]\label{thm:all-level-separation}
For every constant $k\ge 2$ there exists an oracle family $O$ such that
\[
\FH_k^{O}\ \subsetneq\ \FH_{k+1}^{O}
\]
in the phase-query Fourier hierarchy of Definition~\ref{def:FHk-rel}.
\end{theorem}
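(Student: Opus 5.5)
The plan is a standard diagonalization that combines the worst-case upper bound of Theorem~\ref{thm:upper-worst-case} with the worst-case lower bound of Theorem~\ref{thm:lower-worst-case}. First fix an effective enumeration $M_1,M_2,\dots$ of uniform $\FH_k$ circuit generators. Each is clocked, so that on inputs of length $n$ it outputs a circuit over the fixed gate set of Definition~\ref{def:FH2FH3} with at most $n^{i}+i$ gates, and hence at most that many queries. Because the gate set is finite, this enumeration covers every polynomial-size uniform $\FH_k$ family. Next choose rapidly increasing input lengths $n_1<n_2<\cdots$. Each $n_i$ exceeds $n_0(k)$ from both theorems, satisfies $n_i^{i}+i\le N_i^{c_k}$ with $N_i=2^{n_i}$, and is larger than every length queried at earlier stages. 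Length-preserving access makes the last condition automatic, since a circuit on input $1^{n}$ queries only $O_n$. The target language is unary: $L=\{1^{n_i}: Z_i\in\Pi^{\Yes}_{n_i}\}$, where $Z_i$ is the instance placed in $O_{n_i}$.

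The stages proceed in order. At stage $i$, the tuple $O_{n_i}$ consists of $mK=O_k(1)$ functions. Place them at indices $1,\dots,mK$ of $A_{n_i}$, which fits in $s(n)=O(\log n)$ bits with room to spare, keep index $0^{s}$ constant-zero, and set all other indices to zero. Consider the circuit $C_i=M_i(1^{n_i})$. Viewed as a function of $Z\in(\pmone^{KN})^m$ with the unused indices fixed, it is an $\FH_k$ circuit making at most $N_i^{c_k}$ phase queries. Its acceptance probability is therefore a restriction of the acceptance probability over all indices, and the growth bound is stable under restrictions. Concretely, I would rerun Theorem~\ref{thm:lower-worst-case} with $M$ counting the whole tuple and then restrict, or more simply note that queries to zero indices are equivalent to queries to the distinguished index. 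The theorem then yields a promised instance $Z_i$ on which $C_i$ errs with probability greater than $1/3$. Set $O_{n_i}$ to encode $Z_i$, and put $1^{n_i}$ in $L$ exactly when $Z_i$ is a yes instance. At every length outside $\{n_i\}$, let $O_n$ be all zero and $1^n\notin L$.

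For membership $L\in\FH_{k+1}^{O}$, the decider on input $1^n$ first checks, in polynomial time, whether $n=n_i$ for some $i$; the sequence $n_i$ must be computable, so I would choose it as a fixed fast-growing computable function. If not, it rejects. If so, it runs the circuit $B_{n}$ of Theorem~\ref{thm:upper-worst-case} on the indexed oracle $O_n$, which is correct with error at most $1/3$ on every promised instance. Finitely many small lengths can be hardwired. For $L\notin\FH_k^{O}$, any uniform $\FH_k$ family equals some $M_i$, and it errs on $1^{n_i}$ by construction. The query and size bounds hold because the padded clocked enumeration lists every family infinitely often, or because $n_i$ is large enough that any fixed polynomial falls below $N_i^{c_k}$.

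The main obstacle is bookkeeping rather than analysis. The lower bound must apply to circuits that query the whole indexed oracle, including the zero indices and arbitrary coherent index selection, and not merely to the $mK$ hard functions. I would handle this by treating the full truth table as the input string and $Z$ as a subblock: the fixed zero indices form a restriction, and restriction stability in Corollary~\ref{cor:FHk-growth} carries the rest of Steps~3 and~4 through with $M=|A_n|\,N=\poly(n)\,N$. This only changes constants by factors of $\poly(n)^{\ell}=N^{o(1)}$, which the exponent gap absorbs for large $n$.
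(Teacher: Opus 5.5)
Your argument is essentially the paper's diagonalization: you enumerate clocked $\FH_k$ generators, defeat the $i$th one at a dedicated length $n_i$ via Theorem~\ref{thm:lower-worst-case}, and decide the unary language with the circuit of Theorem~\ref{thm:upper-worst-case}. The one difference is that you leave the other lengths trivial and have the decider test whether $n$ is designated, which requires that set to be polynomial-time decidable (a tower $n_{i+1}=2^{n_i}$ works, whereas mere computability does not); the paper instead plants a promised $\No$ instance at every other length, so that $B$ runs at all lengths and the $n_i$ need not be effective at all.
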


\begin{proof}
\emph{The oracle and the language.}
At each input length $n$, an oracle instance is a tuple $Z_n$ of $mK$ Boolean functions on $n$
bits, accessed by phase queries as in Definition~\ref{def:hard-problem}; the oracle family is
$O=\{Z_n\}_{n\ge1}$, and we define the unary language
\[
L_O=\bigl\{1^n:\ Z_n\in\Pi^{\Yes}_n\bigr\}.
\]
Choose each $Z_n$ in $\Pi^{\Yes}_n\cup\Pi^{\No}_n$.  Note that
$\Pi^{\Yes}_n\ne\emptyset$ at \emph{every} length: for the all-ones tuple, every copy satisfies
$\mathsf{forr}_K=\frac1N\mathbf 1^{\top}\mathsf H^{K-1}\mathbf 1=\frac1N\mathbf 1^{\top}\mathbf 1=1$
by \eqref{eq:forr-dictionary}, since $\mathsf H^{2}=I$ and $K-1$ is even; and
$\Pi^{\No}_n\ne\emptyset$ for all sufficiently large $n$ by
Proposition~\ref{prop:promise-density}.

\emph{An effective enumeration of the depth-$k$ families.}
Enumerate as $(A^{(i)})_{i\ge1}$ all uniform polynomial-size $\FH_k$ oracle circuit families, that
is, all clocked generators $(G_i,c_i)$ where $G_i$ is a Turing machine truncated after $n^{c_i}$
steps.  An output that is malformed or has more than $k$ Hadamard layers is replaced by a fixed
rejecting circuit, so every uniform $\FH_k$ family occurs in the enumeration.  This is an effective
enumeration with explicit query bounds $q_i(n)\le n^{c_i}$, so that each $q_i$ is polynomial in $n$.  Choose an increasing sequence of lengths $n_1<n_2<\cdots$ such that for
each $i$:
$n_i\ge n_0(k)$ from Theorem~\ref{thm:lower-worst-case}, and $q_i(n_i)\le 2^{c_kn_i}$, which
holds for all sufficiently large $n_i$ since $q_i$ is polynomial.

\emph{Defeating the $i$th family at its own length.}
Fix $i$.  By Theorem~\ref{thm:lower-worst-case} applied to $A^{(i)}_{n_i}$, there is an
instance $Z_{n_i}\in\Pi^{\Yes}_{n_i}\cup\Pi^{\No}_{n_i}$ on which $A^{(i)}_{n_i}$ fails with
probability greater than $\frac13$: either $Z_{n_i}\in\Pi^{\Yes}_{n_i}$ and
$\Prb[A^{(i)}_{n_i}\text{ accepts}]<\frac23$, or $Z_{n_i}\in\Pi^{\No}_{n_i}$ and
$\Prb[A^{(i)}_{n_i}\text{ accepts}]>\frac13$.  Fix such a choice.  At every length
$n\notin\{n_1,n_2,\dots\}$, fix $Z_n$ to be an arbitrary instance of $\Pi^{\No}_n$, and at the
finitely many remaining lengths where $\Pi^{\No}_n$ is empty, fix $Z_n$ to be the all ones
instance of $\Pi^{\Yes}_n$.

\emph{The two containments.}
By construction, no $\FH_k$ family decides $L_O$ with bounded error: family $A^{(i)}$ fails at
length $n_i$.  Hence $L_O\notin\FH_k^{O}$.  On the other hand, every $Z_n$ lies in the promise
$\Pi^{\Yes}_n\cup\Pi^{\No}_n$, and the circuit family $B$ of
Theorem~\ref{thm:upper-worst-case} decides membership of $1^n$ in $L_O$ with error at most
$\frac13$ at every length $n\ge n_B(k)$, where $n_B(k)$ is the threshold beyond which that theorem
applies.  For the finitely many lengths $n<n_B(k)$ we hardwire the correct membership bit for
$1^n\in L_O$ into the uniform family $B$; since this alters $B$ on only constantly many lengths it remains a
uniform polynomial-size $\FH_{k+1}$ family, now correct at every length.  Therefore
$L_O\in\FH_{k+1}^{O}$.
\end{proof}

The construction above uses a different oracle for each $k$.  In fact a single oracle makes every
level strict simultaneously.

\begin{corollary}[One oracle strict at every level]\label{cor:single-oracle-all-levels}
There is a single oracle $O$ for which
\[
\FH_k^{O}\ \subsetneq\ \FH_{k+1}^{O}\qquad\text{for every constant }k\ge2
\]
simultaneously, in the phase-query Fourier hierarchy.
\end{corollary}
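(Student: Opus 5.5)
I would combine the per-level constructions into one oracle by interleaving them, following the proof of Theorem~\ref{thm:all-level-separation}. Reserve for each level $k\ge2$ its own function indices. At length $n$ the oracle $O_n$ carries, for every $k$ with $2\le k\le k_{\max}(n)$, a separate block of $mK=2^{5(2k-1)}(2k-1)$ functions $Z_n^{(k)}$. Here $k_{\max}(n)$ grows slowly, say $k_{\max}(n)=\lfloor c\log\log n\rfloor$, so that the total index width $\sum_k O(k)=O(\log n)$ respects $s(n)=O(\log n)$ in Definition~\ref{def:FHk-rel}. For each $k$ define the unary language
\[
L^{(k)}_O=\{1^n:\ k\le k_{\max}(n),\ Z^{(k)}_n\in\Pi^{\Yes}_n\}.
\]
The upper bound $L^{(k)}_O\in\FH_{k+1}^O$ is the circuit $B$ of Theorem~\ref{thm:upper-worst-case}, aimed at the block for level $k$. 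It is hardwired on the finitely many lengths where $k>k_{\max}(n)$ or $n<n_B(k)$. This is valid because, for fixed $k$, all the other blocks are simply ignored.

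For the lower bounds I would run one diagonalization over all pairs $(k,i)$, where $A^{(k,i)}$ is the $i$th uniform $\FH_k$ family. Dovetail the pairs and assign to each its own fresh length $n_{k,i}$. At that length choose only the level-$k$ block adversarially. All other blocks at that length, and all blocks at unassigned lengths, are fixed to a $\No$ instance (or to the all-ones $\Yes$ instance where $\Pi^{\No}_n$ is empty). The key point is that once the other blocks are fixed, the acceptance probability of $A^{(k,i)}_{n_{k,i}}$ is a \emph{restriction} of a depth-$k$ circuit's acceptance function to the variables of block $k$. Corollary~\ref{cor:FHk-growth} holds for all restrictions, and the truth-table length $M$ now also counts the other blocks. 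So the proof of Theorem~\ref{thm:lower-worst-case} applies verbatim with $M=2^{O(k_{\max})}N$ in place of $mKN$. Since $k$ is fixed for this pair, the extra factor $(M/N)^{\ell}$ is $n^{o(1)}$ per level and is swamped by $N^{-\ell c_k}$. This yields an instance $Z^{(k)}_{n_{k,i}}$ in the promise on which $A^{(k,i)}$ fails. Choose $n_{k,i}$ large enough that $k\le k_{\max}(n_{k,i})$, $n_{k,i}\ge n_0(k)$, and $q_{k,i}(n_{k,i})\le N^{c_k}/\mathrm{polylog}$.

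The main obstacle, I expect, is the bookkeeping showing that the extra blocks do not hurt either direction. On the lower-bound side, the circuit may query all the other blocks, so the argument must go through the restriction clause rather than treating the circuit as acting on one instance. It also needs the growth bound's dependence on the total $M$ to be only polynomial in the number of blocks. On the upper-bound side, every block at every length must lie inside its own promise, which the construction guarantees, and the index width must stay $O(\log n)$, which forces the slowly growing $k_{\max}$. If the index-width constraint were awkward, I would instead place different levels on disjoint sets of input lengths, for example lengths $n\equiv k$ in a pairing scheme. With one block per length, the restriction step becomes unnecessary.
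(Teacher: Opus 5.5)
Your main argument is correct, but it takes a different route from the paper's proof; the fallback in your last sentence is essentially the paper's proof.

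\textbf{The paper's route.} The paper partitions the input lengths into infinitely many infinite, polynomial-time decidable classes $S_k=\{n:\nu_2(n)=k-2\}$, and places a single level-$k$ instance at each length of $S_k$. By the length-preserving convention, a circuit on $1^n$ with $n\in S_k$ queries only $O_n$, which is a level-$k$ instance. So the diagonalization of Theorem~\ref{thm:all-level-separation} and the decider of Theorem~\ref{thm:upper-worst-case} run verbatim on $S_k$, with the test $n\in S_k$ done classically. There is no growing $k_{\max}$, no index-width accounting, and no change to $M$.

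\textbf{Your route.} You instead stack blocks for all levels $k\le k_{\max}(n)$ at every length and handle the foreign blocks through the restriction clause. This is sound. With the other blocks fixed, the acceptance function at $n_{k,i}$ is a restriction of a depth-$k$ circuit's acceptance function. Corollary~\ref{cor:FHk-growth} applies to it and to all of its further restrictions. The extra factor $(M/N)^{O_k(1)}$ is $\mathrm{polylog}(n)^{O_k(1)}$, or $\poly(n)$ if you also count the constant-zero indices, and $N^{-\ell c_k}$ absorbs it. In fact the restriction clause of Theorem~\ref{thm:gstw-growth} is stated in terms of the number of free variables, so the foreign blocks need not enlarge $M$ at all.

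\textbf{Comparison.} Your route shows that the separating instances for all levels can coexist at a single length. It also exercises the same restriction-robustness the paper needs later, in Proposition~\ref{rem:standard-convention-separations}, once circuits may read other slices. The paper's route has the advantage that nothing beyond the single-level proof needs to be rechecked.

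One small correction to your bookkeeping: the index width is the logarithm of the total number of functions at length $n$, which is $O(k_{\max}(n))$, not $\sum_k O(k)$. Your estimate is loose but harmless.
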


\begin{proof}
Split the input lengths into infinitely many infinite classes $\{S_k\}_{k\ge2}$, each decidable in
polynomial time; for concreteness take $S_k=\{n:\nu_2(n)=k-2\}$, where $\nu_2$ is the $2$-adic
valuation.  On the lengths of $S_k$ run the level-$k$ construction of
Definition~\ref{def:hard-problem}, choosing each instance in $\Pi^{\Yes}_n\cup\Pi^{\No}_n$ as in the
proof above; this fixes one oracle $O$.  For each $k$ set
$L_k=\{1^n:n\in S_k,\ Z_n\in\Pi^{\Yes}_n\}$.

Fix $k$.  By the length-preserving convention of Definition~\ref{def:FHk-rel}, a circuit on input
$1^n$ with $n\in S_k$ queries only $O_n$, which is a level-$k$ instance; the diagonalization in the
proof of Theorem~\ref{thm:all-level-separation}, restricted to the lengths of $S_k$, therefore
defeats every $\FH_k$ family on $L_k$ through Theorem~\ref{thm:lower-worst-case}, giving
$L_k\notin\FH_k^{O}$.  A depth-$(k+1)$ circuit decides $L_k$ by testing $n\in S_k$ classically and
then running the level-$k$ test of Theorem~\ref{thm:upper-worst-case}, so $L_k\in\FH_{k+1}^{O}$.
As $k$ was arbitrary, $O$ separates every adjacent level.
\end{proof}

\begin{proposition}[Uniformly efficient oracles compile away]
\label{prop:uniform-oracle-compiles}
If a fixed oracle family $O$ has a deterministic polynomial-time evaluator
$E(1^n,a,x)=f_{n,a}(x)$, then for every $k\ge0$,
\[
\FH_k^O=\FH_k,
\qquad
\FH_{k,\mathrm{std}}^O=\FH_{k,\mathrm{std}}.
\]
\end{proposition}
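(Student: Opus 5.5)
The plan is to prove both inclusions directly by compiling oracle gates into basis-preserving oracle-free subcircuits. The inclusion $\FH_k\subseteq\FH_k^O$ is immediate: a circuit that never queries is still a valid oracle circuit. For the reverse inclusion, fix an $\FH_k^O$ circuit family and replace every oracle gate by an oracle-free circuit that computes the same unitary. This replacement must not add any Hadamard layer.

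\emph{Standard queries.} Since $E$ runs in deterministic polynomial time, Bennett's reversible simulation~\cite{bennett1973logical} gives a polynomial-size circuit $W_n$ over $\textsc{not}$, $\textsc{cnot}$ and Toffoli. This circuit maps $\ket{a,y}\ket b\ket{0}_{\mathrm{anc}}$ to $\ket{a,y}\ket{b\oplus f_{n,a}(y)}\ket0_{\mathrm{anc}}$ on all basis states, with its workspace restored. The index is read from the register, so coherent selection of $a$ comes for free. Every gate of $W_n$ is a classical reversible gate, hence basis-preserving. Substituting $W_n$ for each indexed standard query therefore keeps each block $U_j$ basis-preserving and leaves the Hadamard layers untouched. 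The circuit family remains uniform, because $E$ is a fixed machine and the generator can emit $W_n$ in polynomial time. This gives $\FH_{k,\mathrm{std}}^O\subseteq\FH_{k,\mathrm{std}}$.

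\emph{Phase queries.} This case needs a phase of $(-1)^{f_{n,a}(y)}$ without the phase-kickback trick, because kickback needs an ancilla in $\ket-$ and so a Hadamard. The plan is to compute $f_{n,a}(y)$ into a fresh ancilla with $W_n$, apply a diagonal $Z$ gate to that ancilla, and then apply $W_n^{-1}$. On a basis state this multiplies by $(-1)^{f_{n,a}(y)}$ and restores the ancilla, so it implements $P_{O_n}$ exactly with basis-preserving gates only.

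\emph{Main obstacle.} The one point to check is the gate set. The step needs $Z$, or equivalently some phase gate with entry $-1$, to belong to the fixed finite family of diagonal gates in Definition~\ref{def:FH2FH3}. I would argue that it does, since $Z$ is the basic diagonal algebraic phase gate. Even if it were not, a $-1$ phase can be produced from the algebraic diagonal gates available, provided the family contains a gate whose entries generate $-1$. If neither holds, I would state the convention that $Z$ is included.

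The remaining bookkeeping is routine. Query counts become gate counts and stay polynomial. Ancillas are prepared in basis states, as the definition allows. Bounded error is preserved exactly, because the compiled circuit computes the identical unitary on every input, so the acceptance probabilities are equal and the class membership transfers.
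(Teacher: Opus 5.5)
Your proposal is correct and follows the paper's proof: you compile $E$ into a reversible \textsc{not}/\textsc{cnot}/Toffoli circuit, replace each standard query by compute--copy--uncompute and each phase query by compute--$Z$--uncompute, observe that every replacement gate is basis-preserving and so stays inside its block, and note that the reverse inclusions need no queries. On the gate-set caveat, the paper simply treats $Z$ as an allowed diagonal gate, and the worry can be bypassed anyway: for $k\ge1$, use kickback onto an ancilla put in $\ket{-}$ by the first Hadamard layer and routed off afterwards (Remark~\ref{rem:phase-vs-bit}); for $k=0$, phase queries contribute only a global phase.
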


\begin{proof}
Compile the deterministic evaluator into a polynomial-size reversible circuit over
\textsc{not}, \textsc{cnot}, and Toffoli gates.  A phase query is replaced by computing
$f_{n,a}(x)$ into a clean ancilla, applying $Z$ to that ancilla, and uncomputing.  A standard query is
replaced by computing the value, copying it into the target, and uncomputing.  Every gate in either
replacement is basis-preserving, so the replacement lies inside the same block $U_i$ and adds no
Hadamard layer.  This proves the nontrivial inclusions from left to right; the reverse inclusions use
no oracle queries.
\end{proof}

\begin{remark}[On efficiently computable oracles]\label{rem:efficient-oracle}
We distinguish three notions here.  First, Proposition~\ref{prop:uniform-oracle-compiles} shows that
a fixed oracle computable in uniform polynomial time cannot give a relativized separation unless the
corresponding unrelativized levels already differ.

Second, an oracle may be \emph{succinct at each length}, meaning that its functions have
polynomial-size circuits containing a length-dependent secret key, while algorithms are given only
black-box query access.  This is the sense in which the cryptographic instantiation of Girish and
Servedio~\cite{girish2025forrelationextremallyhard} is efficient; their lower bound is stated in the
truth-table query model and does not give the algorithm the circuit descriptions.  Their local
form
\[
f(x)=\langle A_{\mathrm{up}}x,A_{\mathrm{dn}}x\rangle+\langle x,a\rangle+h(A_{\mathrm{dn}}x)
\]
permits each derived-oracle query to be simulated by one query to $h$, so replacing $h$ by a
pseudorandom function is justified by a standard hybrid.  In our setting the hybrid would require a
pseudorandom function secure against superposition queries~\cite{zhandry2012}, which the
Goldreich--Goldwasser--Micali construction~\cite{GGM86} provides from any post-quantum one-way
function~\cite{HILL99}.  A growing hidden key
is nonuniform advice, however, so the resulting fixed oracle is succinct in $\mathrm P/\poly$,
not uniformly computable from $(n,a,x)$.

Third, in a \emph{white-box} problem the descriptions of the evaluator circuits are part of the
input.  Ordinary PRF security gives no guarantee once these descriptions reveal their keys and affine
maps.  Such a construction would give a conditional separation $\FH_2\ne\FH_3$ in the unrelativized
setting, not an efficient relativized separation.

Neither of the hard distributions used in the present separation has the local structure required for
the second notion: the Bansal--Sinha distribution~\cite{bansal2021k} rounds correlated Gaussians
globally, and the sign ensemble of Appendix~\ref{sec:appendix-sign} uses
$F_1=U\cdot V$, the product of the tie-broken signs of the two Hadamard transforms
(Definition~\ref{def:yes-no-ensembles}), whose value at one point depends on an entire transform.  Efficiently computable hard instances are left open.
\end{remark}

We state the case $k=2$ explicitly.  A candidate ensemble for this case, whose $\FH_2$-hardness we
reduce to a single conjecture on the Fourier coefficients of a product of threshold functions, is presented in
Appendix~\ref{sec:appendix-sign}.

\begin{corollary}[$\FH_2$ versus $\FH_3$]\label{cor:fh2-fh3-separation}
There is an oracle $O$ with $\FH_2^{O}\subsetneq\FH_3^{O}$.  Concretely, the OR of $2^{15}$
threefold Forrelation instances is decided at Fourier depth $3$ by parallel interference tests, while
every circuit of Fourier depth $2$ making at most $2^{n/6}$ phase queries fails on some
instance of the promise.  By Lemma~\ref{lem:round-embedding}, computations of Fourier depth $2$
are simulated exactly by \emph{non-adaptive} quantum query algorithms, so this exhibits the same
rounds-versus-parallel-queries phenomenon studied by~\cite{girish2024power}, established here via
their Fourier-growth theorem applied to threefold Forrelation and the criterion
of~\cite{bansal2021k}: the third Hadamard layer provides the additional round represented by the simulation.
\end{corollary}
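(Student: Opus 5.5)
The corollary is the case $k=2$ of Theorem~\ref{thm:all-level-separation}, sharpened by the non-adaptive growth bound. I would assemble it in three steps.

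\emph{Step 1: parameters and upper bound.} Set $k=2$, so $K=2k-1=3$. Then $\delta=2^{-15}$ and $m=\delta^{-1}=2^{15}$. The problem is $\mathrm{ORF}_{3,n}$ of Definition~\ref{def:hard-problem}, the OR of $2^{15}$ threefold Forrelation instances. By Theorem~\ref{thm:upper-worst-case}, this problem is decided at Fourier depth $3$. The deciding circuit runs, for each copy, $R=O(1)$ parallel repetitions of the interference test of Proposition~\ref{prop:odd-interference-circuit}, which makes $k=2$ queries. All repetitions share the three Hadamard layers, and the circuit thresholds the vote counts.

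\emph{Step 2: lower bound with $q\le 2^{n/6}=N^{1/6}$.} I would rerun the proof of Theorem~\ref{thm:lower-worst-case}. Steps~1--2 (the distributional advantage $\tfrac14$ and the hybrid reduction to a single restricted copy $h$ with advantage $\ge 1/(4m)$) are unchanged. In Step~3 I would replace Corollary~\ref{cor:FHk-growth} by Corollary~\ref{cor:fh2-growth}, which gives $L_{1,\ell}(h_\rho)\le(\ell+1)(Mq)^{\ell/4}$ with $M=3mN$ for every restriction. Theorem~\ref{thm:bs-transfer} then gives $\mathsf{wt}^\mu_\ell(h)\le 4^\ell(\ell+1)(Mq)^{\ell/4}$. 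In Step~4, Theorem~\ref{thm:bs-bias} sums over $\ell\in[3,6]$, and each term is
\[
O(1)\,N^{-\ell/3}\,(Nq)^{\ell/4}=O(1)\,q^{\ell/4}N^{-\ell/12}.
\]
With $q\le N^{1/6}$ each term is at most $O(1)N^{-\ell/24}\le O(1)N^{-1/8}$, which is below $1/(4m)$ for large $n$. This is exactly Remark~\ref{rem:fh2-sharpened}, so the worst-case failure holds for $q\le 2^{n/6}$.

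\emph{Step 3: diagonalization and interpretation.} I would run the diagonalization of Theorem~\ref{thm:all-level-separation} verbatim. The only change is that the length condition becomes $q_i(n_i)\le 2^{n_i/6}$, which any polynomial satisfies eventually. This yields an oracle $O$ and a unary language in $\FH_3^O\setminus\FH_2^O$. The interpretive sentence is then Lemma~\ref{lem:round-embedding} with $k=2$: an $\FH_2$ circuit is simulated exactly by a $1$-round, that is non-adaptive, algorithm with at most $q$ parallel queries. The depth-$3$ decider corresponds to the additional round.

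The main obstacle is Step~2. There I need the restriction-stable form of Theorem~\ref{thm:nonadaptive-growth} to be applied in the same way the general growth theorem was, with the constant $(\ell+1)4^\ell(3m)^{\ell/4}$ absorbed because $\ell\le 6$. Everything else is bookkeeping inherited from the general-$k$ argument.
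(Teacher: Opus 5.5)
Your proposal is correct and follows essentially the same approach as the paper: it specializes Theorem~\ref{thm:all-level-separation} to $k=2$, $K=3$, $m=2^{15}$, sharpens the threshold from $c_2=\tfrac1{24}$ to $c_2=\tfrac16$ by running Step~4 with Corollary~\ref{cor:fh2-growth} as in Remark~\ref{rem:fh2-sharpened}, and leaves the diagonalization unchanged. The restriction-stability you flag as the main obstacle is not a real obstacle, because Corollary~\ref{cor:fh2-growth} is already stated for every restriction. The Bansal--Sinha factor $(8K)^{14\ell}$, which you did not list, is likewise absorbed into the constant since $\ell\le 6$.
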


\begin{proof}
This is Theorem~\ref{thm:all-level-separation} with $k=2$, $K=3$ and $m=2^{15}$.  The general
threshold of Theorem~\ref{thm:lower-worst-case} is $c_2=\frac{1}{4\cdot3\cdot2}=\frac{1}{24}$;
at this level it improves to $c_2=\frac16$ by Remark~\ref{rem:fh2-sharpened}, and the
diagonalization is unchanged.
\end{proof}

\subsection{Even Forrelation at Fourier depth \texorpdfstring{$k$}{k}}\label{subsec:even-forrelation}

Remark~\ref{rem:why-odd} explains why the separating problem uses odd-order Forrelation.  We now show
that the choice is also tight in the other direction: the next lower order, which is even, is
decided at Fourier depth $k$, one layer below, and therefore cannot separate $\FH_k$ from
$\FH_{k+1}$.  Define the $(2k-2)$-fold Forrelation function
\begin{equation}\label{eq:even-Phi}
\Phi_{2k-2}(F_0,\dots,F_{2k-3})
=
N^{-(2k-1)/2}
\sum_{x_0,\dots,x_{2k-3}\in\bits^n}
\Bigl(\prod_{j=0}^{2k-3}F_j(x_j)\Bigr)
\Bigl(\prod_{j=0}^{2k-4}(-1)^{x_j\cdot x_{j+1}}\Bigr),
\end{equation}
again with $\Phi_{2k-2}\in[-1,1]$ by \eqref{eq:forr-dictionary}.

\begin{proposition}[Even Forrelation at Fourier depth $k$]\label{prop:even-forrelation}
For every fixed $k\ge2$ there is a uniform polynomial-size $\FH_k$ circuit family making $2k-2$
phase queries whose acceptance probability on every oracle tuple $(F_0,\dots,F_{2k-3})$ equals
\[
\frac{1+\Phi_{2k-2}(F_0,\dots,F_{2k-3})^2}{2}.
\]
\end{proposition}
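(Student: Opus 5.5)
The plan is to split the even-order expression so that both halves end just before a shared final Hadamard layer. We then compare them with a destructive SWAP test whose single non-basis-preserving step is exactly that layer. This is the circuit the introduction describes, which accepts on an inner-product parity of the two measured registers.

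\emph{The split.} The amplitude form of $\Phi_{2k-2}$ has $2k-1$ Hadamard layers. Let
\[
\ket{\chi}=P_{F_{k-2}}H^{\otimes n}\cdots P_{F_0}H^{\otimes n}\ket{0^n},
\qquad
\ket{\gamma}=P_{F_{k-1}}H^{\otimes n}P_{F_k}\cdots H^{\otimes n}P_{F_{2k-3}}H^{\otimes n}\ket{0^n}.
\]
Each state uses $k-1$ Hadamard layers and ends with a phase query. $\ket\chi$ makes the $k-1$ queries $F_0,\dots,F_{k-2}$, and $\ket\gamma$ makes the $k-1$ queries $F_k,\dots,F_{2k-3},F_{k-1}$. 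As in Proposition~\ref{prop:odd-split}, a direct expansion of \eqref{eq:even-Phi} should give
\[
\Phi_{2k-2}=\bra{\gamma}H^{\otimes n}\ket{\chi}.
\]
This uses that $P_{F_{k-1}}$ is real and diagonal, so it may be moved onto either side. The normalisations $N^{-(k-1)/2}\cdot N^{-1/2}\cdot N^{-(k-1)/2}$ combine to $N^{-(2k-1)/2}$. We prepare $\ket\chi$ on a register $A$ and $\ket\gamma$ on a register $B$ in parallel, with the two preparations sharing their first $k-1$ global layers via Remark~\ref{rem:register-routing}. All queries in between are phase queries, $2k-2$ in total.

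\emph{The test.} The destructive SWAP test on $(\ket\alpha,\ket\beta)$ applies a transversal $\mathrm{CNOT}_{B\to A}$ and then $H^{\otimes n}$ on $B$. It measures both registers and accepts iff $a\cdot b\equiv 0\pmod 2$, with acceptance probability $\tfrac12(1+\abs{\braket{\alpha|\beta}}^2)$. This follows from Lemma~\ref{lem:swap-test} applied qubit-wise, or by a direct Bell-basis computation, which I would include. We apply it to $\ket\alpha=H^{\otimes n}\ket\chi$ and $\ket\beta=\ket\gamma$. The identity $\mathrm{CNOT}_{B\to A}(H_A\otimes I)=(H_A\otimes I)\,\mathrm{CZ}_{AB}$, applied transversally, rewrites the circuit as follows: a transversal $\mathrm{CZ}$ (diagonal, hence basis-preserving) on $\ket\chi\ket\gamma$, then one global Hadamard layer on both $A$ and $B$, then a computational-basis measurement with accepting set $\{(a,b):a\cdot b\equiv0\}$. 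That set is recognizable in polynomial time. The only new layer is therefore the $k$th, so the circuit is a uniform $\FH_k$ circuit. Since $\Phi_{2k-2}$ is real, its acceptance probability is $\tfrac12(1+\bra{\gamma}H^{\otimes n}\ket\chi^2)=\tfrac12(1+\Phi_{2k-2}^2)$.

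\emph{Main obstacle.} The crux is the layer count. A naive SWAP test of $H^{\otimes n}\ket\chi$ against $\ket\gamma$ would cost $k$ layers for the preparation plus one for the test. The step to verify carefully is that the Hadamard conjugation turns the test's $\mathrm{CNOT}$ into a diagonal $\mathrm{CZ}$, so that the last preparation layer on $A$ and the test's Hadamard on $B$ merge into one global layer. I would check this identity on a single qubit pair and then tensor it. I would also verify the destructive SWAP acceptance formula with the parity convention matching the chosen orientation of the $\mathrm{CNOT}$. The remaining checks are routine: the split identity, and the case $k=2$, where $\ket\gamma=P_{F_1}H^{\otimes n}\ket{0^n}$ and $\ket\chi=P_{F_0}H^{\otimes n}\ket{0^n}$ each have one layer.
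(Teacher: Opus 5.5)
Your circuit is the paper's circuit: the same schedule of phase queries on $A$ and $B$, the transversal $\mathrm{CZ}$ in the last interior block, one final layer on all $2n$ wires, and acceptance on $s_A\cdot s_B=0$. Your verification is also correct; only it takes a different route from the paper's. The paper computes the acceptance probability directly. It writes the probability as $\tfrac12(1+X)$ with $X=\sum_{s_A,s_B}(-1)^{s_A\cdot s_B}\abs{\mathcal A(s_A,s_B)}^2$, expands over pairs of branches, and sums out the outcomes. It then uses $u\cdot v+u'\cdot v'+(u\oplus u')\cdot(v\oplus v')\equiv u\cdot v'+u'\cdot v$ to factor $X=N^{-1}\Sigma^2$. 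You instead combine the split $\Phi_{2k-2}=\bra{\gamma}H^{\otimes n}\ket{\chi}$ with the rewriting $H_AH_B\,\mathrm{CZ}=H_B\,\mathrm{CNOT}_{B\to A}H_A$. This identifies the circuit as a Bell-basis (destructive) SWAP test on $(H^{\otimes n}\ket\chi,\ket\gamma)$. Your route explains conceptually why the parity accepting set does the work of the SWAP test's extra layer, which the paper only remarks on afterwards (Remark~\ref{rem:even-tightness}). The paper's route is self-contained and needs no destructive-SWAP lemma, which the paper never states. If you write yours up, do not cite Lemma~\ref{lem:swap-test} ``qubit-wise'': that lemma concerns a controlled SWAP with an ancilla control, not a Bell measurement. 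What you need is the per-pair computation: after $\mathrm{CNOT}_{B_i\to A_i}$ and $H_{B_i}$, the outcome $(1,1)$ occurs exactly on the singlet. Hence the accepting projector pulls back to $\tfrac12(I+\mathrm{SWAP}_{AB})$, and your Bell-basis alternative supplies exactly this. Finally, say how $\mathrm{CZ}$ is realized: either as a fixed diagonal gate of Definition~\ref{def:FH2FH3}, or, as the paper does, by Toffoli gates onto a $\ket-$ ancilla prepared at the first layer, which keeps the circuit within Shi's gate set.
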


\begin{proof}
\emph{The circuit.}
It uses two $n$-qubit registers $A$ and $B$ and $k$ global Hadamard layers acting on all
$2n$ wires.  Between consecutive layers it performs only phase queries: in the $j$th interior
block ($1\le j\le k-1$) it applies $P_{F_{j-1}}$ to $A$ and $P_{F_{2k-2-j}}$ to $B$, so that $A$
queries $F_0,\dots,F_{k-2}$ in order and $B$ queries $F_{2k-3},F_{2k-4},\dots,F_{k-1}$ in reverse.
In the last interior block it also applies the coupling
$\ket a_A\ket b_B\mapsto(-1)^{a\cdot b}\ket a_A\ket b_B$, realized by $n$ Toffoli gates with controls
$A_i,B_i$ and target an ancilla prepared in $\ket-$ by the first Hadamard layer and routed past the
later layers (Remark~\ref{rem:register-routing}); equivalently, by the diagonal gates
$\prod_{i=1}^n\mathrm{CZ}(A_i,B_i)$.  After the $k$th layer all $2n$
wires are measured, and the circuit accepts iff $s_A\cdot s_B=0$, where $s_A,s_B\in\bits^n$ are the
two register outcomes.  Phase queries are diagonal and Toffoli gates permute the computational basis, so all gates
between the layers are basis-preserving and the circuit is a uniform $\FH_k$ circuit making $2(k-1)=2k-2$ phase
queries, and its accepting set is decidable in polynomial time.

\emph{The two register states.}
Let $\ket A=\sum_u\alpha(u)\ket u$ and $\ket B=\sum_v\beta(v)\ket v$ be the states of the two
registers after the first $k-1$ layers and all $2k-2$ queries, but before the coupling and the final
layer:
\[
\ket A=P_{F_{k-2}}H^{\otimes n}\cdots P_{F_0}H^{\otimes n}\ket{0^n},
\qquad
\ket B=P_{F_{k-1}}H^{\otimes n}\cdots P_{F_{2k-3}}H^{\otimes n}\ket{0^n},
\]
with real amplitudes $\alpha,\beta$.  The coupling multiplies the joint amplitude of
$\ket u_A\ket v_B$ by $(-1)^{u\cdot v}$, and the final Hadamard layer then gives, on outcome
$(s_A,s_B)$,
\[
\mathcal A(s_A,s_B)=N^{-1}\sum_{u,v}\alpha(u)\beta(v)(-1)^{u\cdot v}(-1)^{u\cdot s_A+v\cdot s_B}.
\]
\emph{The accepting parity supplies the missing interference.}
The circuit accepts exactly on the outcomes with $s_A\cdot s_B=0$, and for a bit
$b\in\{0,1\}$ we have $\mathbf 1[b=0]=\tfrac12(1+(-1)^b)$, so the acceptance probability is
$\tfrac12\bigl(1+X\bigr)$ with $X=\sum_{s_A,s_B}(-1)^{s_A\cdot s_B}\abs{\mathcal A(s_A,s_B)}^2$, using
$\sum_{s_A,s_B}\abs{\mathcal A}^2=1$.
Expanding $\abs{\mathcal A}^2$ over pairs of branches $(u,v),(u',v')$ and summing the outcomes with
\[
\sum_{s_A,s_B\in\bits^n}(-1)^{s_A\cdot s_B+d_A\cdot s_A+d_B\cdot s_B}=N\,(-1)^{d_A\cdot d_B},
\qquad d_A=u\oplus u',\quad d_B=v\oplus v'
\]
(sum over $s_B$ first, forcing $s_A=d_B$, then over $s_A$) yields
\[
X=N^{-1}\sum_{u,v,u',v'}\alpha(u)\alpha(u')\beta(v)\beta(v')\,
(-1)^{u\cdot v+u'\cdot v'+(u\oplus u')\cdot(v\oplus v')}.
\]
Modulo $2$, $u\cdot v+u'\cdot v'+(u\oplus u')\cdot(v\oplus v')=u\cdot v'+u'\cdot v$, so the summand
factors across the disjoint pairs $\{u,v'\}$ and $\{u',v\}$:
\[
X=N^{-1}\Bigl(\sum_{u,v}\alpha(u)\beta(v)(-1)^{u\cdot v}\Bigr)^{2}=N^{-1}\Sigma^2.
\]
The coupling $(-1)^{u\cdot v}$ joins the $A$-chain $x_0\!-\!\cdots\!-\!x_{k-2}\!=\!u$ to the reversed
$B$-chain $v\!=\!x_{k-1}\!-\!\cdots\!-\!x_{2k-3}$ into one chain through the $2k-2$ oracles
$F_0,\dots,F_{2k-3}$ in order, so $\Sigma=N^{1/2}\Phi_{2k-2}(F_0,\dots,F_{2k-3})$ by
\eqref{eq:even-Phi}.  Hence $X=\Phi_{2k-2}^2$.
\end{proof}

\begin{remark}[The inner-product accepting set and tightness of the Forrelation order]\label{rem:even-tightness}
This circuit replaces the control-qubit SWAP test of Proposition~\ref{prop:odd-swap-circuit} by an
accepting set defined by the inner product $s_A\cdot s_B$.  That parity provides, between the two
interfering copies, exactly the coupling $(-1)^{d_A\cdot d_B}$ that the SWAP test creates using its extra layer.  This
saves one layer, but only at even order, where the two halves query the same number of oracles
through the shared layers.  Since the construction is symmetric between the two halves, it always
uses an even number $2(k-1)$ of oracles, which is why it does not extend to $2k-1$.  For $k=2$ the
circuit is closely related to the depth-two IQP construction of Buzet and
Chailloux~\cite{buzet2026iqp} for twofold Forrelation, which uses the same quadratic identity;
Proposition~\ref{prop:even-forrelation} extends it to every even order.

The $(2k-1)$-fold instance of Theorem~\ref{thm:all-level-separation} is therefore the
\emph{smallest-order} problem $\mathrm{ORF}_{K,n}$ separating $\FH_k$ from $\FH_{k+1}$.  Indeed, amplifying
Proposition~\ref{prop:even-forrelation} to bounded error, exactly as in
Theorem~\ref{thm:upper-worst-case}, places the $(2k-2)$-fold Forrelation problem in $\FH_k$, whereas
the $(2k-1)$-fold problem is not in $\FH_k$ (Theorem~\ref{thm:lower-worst-case}).  On the promise problems $\mathrm{ORF}_{K,n}$ of Definition~\ref{def:hard-problem}, $\FH_k$ and the
$(k-1)$-round parallel-query model into which it embeds (Lemma~\ref{lem:round-embedding}) agree at
every fixed order $K$.  For even $K\le2k-2$ both decide $\mathrm{ORF}_{K,n}$, by
Proposition~\ref{prop:even-forrelation} at order $K$.  For odd $K\le2k-3$ both decide it, by the
SWAP test of Proposition~\ref{prop:odd-swap-circuit} at Fourier depth $(K+3)/2\le k$.  For
$K\ge2k-1$ neither decides it within the width bound of Remark~\ref{rem:parametric-lower-bound},
applied with $r=k-1$ and $2r<K$, and in particular neither does so with polynomially many parallel
queries.  Hence no single fixed-order problem $\mathrm{ORF}_{K,n}$ witnesses a separation
of $\FH_k$ from the round model; whether a planted, mixed or length-varying construction from these
problems does so remains open.
\end{remark}

\subsection{Classical simulation at the second level}\label{subsec:fh2-classical}

This subsection is a digression: nothing in Sections~\ref{sec:standard-hierarchy}--\ref{sec:landscape} depends on it.

\begin{remark}[Classical simulation of low Fourier depth]\label{rem:classical-simulation}
Lemma~\ref{lem:round-embedding} also brings $\FH_k$ with polynomially many queries within the scope
of the work on classical simulation of bounded-round quantum algorithms.
Aaronson and Ambainis~\cite{aaronson2014forrelationproblemoptimallyseparates} stated an
$O(M^{1-\frac1{2r}})$-query classical simulation of every $r$-query quantum algorithm, and Bravyi, Gosset,
Grier, and Schaeffer~\cite{bravyi2021classical} repaired a gap in the proof and established the
general result; this is the
one-query-per-round ($t=1$) base case of Conjecture~1.7 of~\cite{girish2024power}, which predicts
that $r$ rounds of $t$ parallel
queries admit classical simulation with $\widetilde O_{t,r}(M^{1-\frac1{2r}})$ queries.  If that
conjecture holds, then via Lemma~\ref{lem:round-embedding} every $\FH_k$ computation with $q$ phase
queries is classically simulable with $\widetilde O_{q,k}(M^{1-\frac{1}{2(k-1)}})$ queries, where the
dependence on $q$ is the one the conjecture leaves unspecified; a polynomial-query simulation of
polynomial-query $\FH_k$ circuits follows only if that dependence is polynomial in $q$.  The sequential bound alone, applied to a $q$-query $\FH_k$ circuit, gives only the generic
$M^{1-\frac1{2q}}$; the improvement to the round exponent $M^{1-1/(2(k-1))}$ is exactly the parallel case
$t>1$.
\end{remark}

For the second level, the non-adaptive case of that conjecture holds unconditionally.

\begin{proposition}[Classical simulation at the second level]\label{prop:fh2-classical}
There is a classical randomized algorithm with the following property.  Let $C$ be an $\FH_2$
circuit with $w$ Fourier wires making $q$ phase queries to an oracle, and write $D=2^{w}$.  Given classical query access
to that oracle and a description of $C$, the algorithm estimates $\Pr[C\text{ accepts}]$ to additive
error $\varepsilon\in(0,1)$ with probability $1-\varrho$, for any $\varrho\in(0,1)$, using
$O\!\bigl(q\sqrt D\,\varepsilon^{-1}\log(2/\varrho)\bigr)$ classical queries.  For circuits with
$w\le n+O(\log n)$ Fourier wires, so that $D=\widetilde O(N)$, this is
$\widetilde O\bigl(q\sqrt N\,\varepsilon^{-1}\log(2/\varrho)\bigr)$.
\end{proposition}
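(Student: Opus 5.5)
By Lemma~\ref{lem:round-embedding} with $k=2$, the circuit $C$ has a bilinear form in which each oracle-dependent amplitude needs only $q$ queries to evaluate. My plan is to write $\Pr[C\text{ accepts}]$ as such a form and estimate it classically by a U-statistic over a random set of about $\sqrt D/\varepsilon$ branches. The median-of-means trick then gives the $\log(2/\varrho)$ confidence factor.

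\emph{Step 1: the bilinear form.} By Step~1 of the proof of Lemma~\ref{lem:round-embedding}, the queries in $U_0$ and $U_2$ can be deleted. $U_0$ becomes a fixed basis label $z_0=(z_0^{\mathrm{des}},z_0^{\mathrm{anc}})$, and $U_2$ becomes a relabelled accepting set $S'=\pi_2^{-1}(S)$. After the first layer the state is $D^{-1/2}\sum_{w\in\bits^w}(-1)^{w\cdot z_0^{\mathrm{des}}}\ket{w,z_0^{\mathrm{anc}}}$. By Step~2 of that proof, $U_1$ maps the branch $w$ to $g(w)\ket{\pi_1(w)}$, where
\[
g(w)=\lambda_1(w)\,(-1)^{w\cdot z_0^{\mathrm{des}}}\prod_{i=1}^{q_1}x_{a^{(i)}(w)}.
\]
Here $\lambda_1$, $\pi_1$ and the addresses $a^{(i)}$ are oracle-independent and computable from the description of $C$. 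Writing $u_w=D^{-1/2}g(w)$ gives a unit vector $u\in\mathbb C^{D}$, and
\[
\Pr[C\text{ accepts}]=u^{\dagger}\mathsf B\,u,
\qquad
\mathsf B_{w,w'}=\bra{\pi_1(w)}(H^{\otimes w}\otimes I)\,\Pi_{S'}\,(H^{\otimes w}\otimes I)\ket{\pi_1(w')}.
\]
The matrix $\mathsf B$ is a compression of a projector, so $\Norm{\mathsf B}\le1$ and $\Norm{\mathsf B}_F^2\le D$. Its entries depend only on $C$, so computing them costs no queries; only the running time is large, and the proposition counts queries only. Each value $g(w)$ costs at most $q$ classical queries.

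\emph{Step 2: the estimator.} Sample a uniformly random set $T$ of $t=\Theta(\sqrt D/\varepsilon)$ distinct branches, and query $g$ on all of $T$, for $qt$ queries in total. Output
\[
\widehat p=\frac1t\sum_{w\in T}\mathsf B_{ww}\abs{g(w)}^2+\frac{D}{t(t-1)}\sum_{w\ne w'\in T}\overline{g(w)}\,\mathsf B_{ww'}\,g(w').
\]
The first sum is unbiased for the diagonal part $D^{-1}\sum_w\mathsf B_{ww}$, because $\abs{g}=1$. The second sum is an unbiased U-statistic for the off-diagonal part.

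\emph{Step 3: the variance.} This is the step I expect to be the main obstacle. The Hoeffding decomposition of the U-statistic has a degree-one term and a degree-two term. The degree-one term has variance $O(1/t)$ times $\E_w\abs{(\mathsf B u)_w}^2\cdot D\le\Norm{\mathsf B u}^2\le1$. Because $\abs{g}=1$ and $\Norm{\mathsf B}\le1$, this contribution is $O(1/t)$. The degree-two term has variance about $\frac{D^2}{t^2}\cdot\frac{1}{D^2}\sum_{w,w'}\abs{\mathsf B_{ww'}}^2\le\frac{\Norm{\mathsf B}_F^2}{t^2}\le\frac{D}{t^2}$. This is at most $\varepsilon^2/16$ once $t\ge4\sqrt D/\varepsilon$, and it is the source of the $\sqrt D$. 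The diagonal term has variance at most $1/t$. The delicate part is the bookkeeping for sampling without replacement and for the correlation between the diagonal and off-diagonal parts. I would first try sampling with replacement and simply discarding the coinciding pairs, which changes the constants only.

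\emph{Step 4: confidence and the final count.} Chebyshev's inequality then gives error $\varepsilon$ with probability at least $3/4$. Taking the median of $O(\log(2/\varrho))$ independent copies gives failure probability $\varrho$. The total is $O(q\sqrt D\,\varepsilon^{-1}\log(2/\varrho))$ queries. For $w\le n+O(\log n)$ we have $D=N\cdot\poly(n)$, which gives the $\widetilde O(q\sqrt N\,\varepsilon^{-1}\log(2/\varrho))$ form.
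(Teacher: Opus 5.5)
This is essentially the paper's proof. You use the same reduction, through Steps~1--2 of Lemma~\ref{lem:round-embedding}, to a quadratic form $D^{-1}\sum_{w,w'}\overline{g(w)}\,\mathsf B_{ww'}\,g(w')$, with an oracle-free $0\preceq\mathsf B\preceq I$ and a unit-modulus vector $g$ costing at most $q$ queries per entry. You then use the same degree-two $U$-statistic on $\Theta(\sqrt D/\varepsilon)$ sampled branches, the same variance bound $O(1/t+D/t^{2})$ from the H\'ajek projection and $\Norm{\mathsf B}_F^{2}\le D$, and the same Chebyshev-and-median finish. The bookkeeping needs tightening in three places.

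\emph{The small-$\varepsilon$ regime.} This is the one real omission. The H\'ajek term has variance of order $1/t$; since the kernel involves the off-diagonal part $\mathsf B^{\circ}$, you have $\Norm{\mathsf B^{\circ}g}\le2\sqrt D$ rather than $\sqrt D$. Now $1/t=\varepsilon/(C\sqrt D)$ is $O(\varepsilon^{2})$ only when $\varepsilon\gtrsim D^{-1/2}$. Your Step~4 applies Chebyshev without checking this, although the statement allows every $\varepsilon\in(0,1)$. When $C\sqrt D/\varepsilon<D$ the condition holds automatically. When $C\sqrt D/\varepsilon\ge D$, you cannot draw $t$ distinct branches, and i.i.d.\ sampling can leave variance of order $\varepsilon/\sqrt D\gg\varepsilon^{2}$. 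The paper handles this with a case split: in that regime, query $g$ on all $D$ branches at cost $qD\le qC\sqrt D/\varepsilon$, and evaluate the form exactly.

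\emph{The normalization.} For $t$ distinct branches drawn without replacement, an ordered pair of distinct branches is present with probability $t(t-1)/(D(D-1))$. The unbiased normalization is therefore $(D-1)/(t(t-1))$. Your $D/(t(t-1))$ is exact for i.i.d.\ draws with coinciding pairs discarded. Without replacement it introduces a bias of at most $1/(D-1)$, which is harmless in the sampling regime but not ``unbiased'' as you claimed.

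\emph{The diagonal.} Since $\abs g=1$, the diagonal part is exactly $\operatorname{Tr}(\mathsf B)/D$, which you already noted costs no queries. Compute it rather than sample it, as the paper does. Then the correlation between the two parts that you flagged disappears; it was harmless anyway, since $\operatorname{Var}(X+Y)\le2\operatorname{Var}X+2\operatorname{Var}Y$.
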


\begin{proof}
\emph{Reduction to a quadratic form in an oracle-dependent phase vector.}
If the exact Fourier depth of $C$ is at most one, or if $w=0$, its acceptance probability does not
depend on the oracle (Lemma~\ref{lem:round-embedding}), and the algorithm outputs it without
queries.  Assume exact depth two and $w\ge1$, and write $C=U_2\,\Lambda\,U_1\,\Lambda\,U_0$ with
$\Lambda=H^{\otimes w}\otimes I$ and accepting set $S$.  Queries in $U_0$ act on a computational
basis state and contribute a global phase, and the oracle-dependent phases of $U_2$ cancel in
$U_2^{\dagger}\Pi_SU_2$ (Step~1 in the proof of Lemma~\ref{lem:round-embedding}); deleting them,
$U_0\ket0=e^{i\theta}\ket a_F\ket\eta_A$ for an oracle-independent basis state with Fourier part
$a\in\bits^w$ and ancilla part $\eta$, and $U_2^{\dagger}\Pi_SU_2=\Pi_{S'}$ with $S'=\pi_2^{-1}(S)$
oracle-independent.  The first layer produces $D^{-1/2}\sum_{x\in\bits^w}(-1)^{a\cdot x}\ket x\ket\eta$.
On the branch $\ket x\ket\eta$ entering $U_1$, every gate is basis-preserving, so the register
follows a fixed trajectory and the address and function index of each query are determined by $x$
and the fixed $\eta$ alone.  Hence
\begin{equation}\label{eq:fh2-branch-phase}
U_1\ket x\ket\eta=d_{\mathrm{int}}(x)\,\ket{\sigma(x)},
\qquad
d_{\mathrm{int}}(x)=\lambda(x)\prod_{j=1}^{q_{\mathrm{int}}}F_{\alpha_j(x)}\bigl(a_j(x)\bigr),
\qquad \abs{d_{\mathrm{int}}(x)}=1,
\end{equation}
where $\sigma$ is an oracle-independent injection of $\bits^w$ into the basis states of the full
register, $\lambda(x)$ is an oracle-independent phase, and $a_j(x),\alpha_j(x)$ are the
oracle-independent query data, and $q_{\mathrm{int}}\le q$ is the number of queries in $U_1$.
Put $d(x):=(-1)^{a\cdot x}d_{\mathrm{int}}(x)$; a single value $d(x)$ costs $q_{\mathrm{int}}$
classical queries.

Writing $V$ for the isometry $\ket x\mapsto\ket{\sigma(x)}$, the state before the final measurement
is $D^{-1/2}U_2\Lambda Vd$, so
\begin{equation}\label{eq:fh2-quadratic-form}
\Pr[C\text{ accepts}]=\frac1D\,d^{*}Qd,
\qquad
Q:=V^{*}\Lambda^{*}\Pi_{S'}\Lambda V.
\end{equation}
\emph{Two bounds on $Q$.}
The matrix $Q$ is oracle-free, and $0\preceq Q\preceq I$ because $\Lambda^{*}\Pi_{S'}\Lambda$ is a
projector and $V$ is an isometry.  Two consequences of $0\preceq Q\preceq I$ are all we need:
\begin{equation}\label{eq:fh2-Q-bounds}
\Norm{Q}_{F}^{2}=\operatorname{Tr}(Q^{2})\le\operatorname{Tr}Q\le D,
\qquad
\Norm{Q^{\circ}d}_{2}\le\Norm{Qd}_{2}+\Norm{(\operatorname{diag}Q)d}_{2}\le2\sqrt D,
\end{equation}
where $Q^{\circ}:=Q-\operatorname{diag}Q$; the second uses $\Norm{Q}_{\mathrm{op}}\le1$,
$\Norm d_2=\sqrt D$, and $\sum_xQ_{xx}^{2}\le\sum_xQ_{xx}\le D$.

\emph{The estimator.}
The diagonal of \eqref{eq:fh2-quadratic-form} contributes $\operatorname{Tr}(Q)/D$, which is
oracle-free and needs no queries.  Put $s:=\lceil C_0\sqrt D/\varepsilon\rceil$, where $C_0$ is an
absolute constant fixed below.  If $s\ge D$, evaluate $d$ at all $D$ points and return
\eqref{eq:fh2-quadratic-form} exactly, using $q_{\mathrm{int}}D\le q_{\mathrm{int}}s$ classical
queries and no randomness.  Assume therefore $s<D$.  For the off-diagonal part, draw $s$ points
$\mathcal X\subseteq\bits^w$ uniformly without replacement, evaluate $d$ at each of them using
$q_{\mathrm{int}}s$ classical queries, and form
\[
\widehat p=\frac{\operatorname{Tr}Q}{D}
+\frac{D-1}{s(s-1)}\sum_{\substack{x,x'\in\mathcal X\\ x\ne x'}}\overline{d(x)}\,Q_{x,x'}\,d(x').
\]
\emph{Its variance.}
Since $\mathcal X$ contains each ordered pair of distinct points with probability $s(s-1)/(D(D-1))$, the
estimator is unbiased.  It is a $U$-statistic of degree two~\cite{hoeffding1948class} with kernel
$a(x,x')=\overline{d(x)}Q_{x,x'}d(x')$, which is real after symmetrization because $Q$ is Hermitian,
so its variance is $O\bigl(\zeta_1/s+\zeta_2/s^{2}\bigr)$ after the scaling by $D-1$, where
$\zeta_1$ is the variance of the Hájek projection and $\zeta_2$ bounds the second moment of the
kernel.  By \eqref{eq:fh2-Q-bounds},
\[
(D-1)^{2}\zeta_1\le\frac{\Norm{Q^{\circ}d}_2^{2}}{D}\le4,
\qquad
(D-1)^{2}\zeta_2\le\frac{(D-1)\Norm{Q}_{F}^{2}}{D}\le D ,
\]
the first because the projection at $x$ is $\overline{d(x)}(Q^{\circ}d)_x/(D-1)$ and $\abs{d(x)}=1$,
the second because the kernel has mean square at most $\Norm Q_F^2/(D(D-1))$.  Hence
\[
\operatorname{Var}(\widehat p)=O\!\left(\frac1s+\frac{D}{s^{2}}\right).
\]
Since $s<D$ and $s\ge C_0\sqrt D/\varepsilon$, we have $\varepsilon>C_0D^{-1/2}$, hence
$1/s\le\varepsilon/(C_0\sqrt D)<\varepsilon^{2}/C_0^{2}$ and $D/s^{2}\le\varepsilon^{2}/C_0^{2}$.
Choosing $C_0$ large enough therefore gives $\operatorname{Var}(\widehat p)\le\varepsilon^{2}/8$, and
Chebyshev's inequality gives $\abs{\widehat p-\Pr[C\text{ accepts}]}\le\varepsilon$ with probability
at least $\tfrac78$.  In both branches the query count is $O(q\sqrt D\,\varepsilon^{-1})$, and the
median of $O(\log(2/\varrho))$ independent runs gives the claim.
\end{proof}

\begin{remark}[What the estimator uses]\label{rem:fh2-classical-scope}
The argument uses only that the acceptance probability is a quadratic form
\eqref{eq:fh2-quadratic-form} in an oracle-free contraction $Q$ with a unit-modulus oracle-dependent
vector $d$, and that one entry of $d$ costs $q$ queries.  It therefore applies to every
$\FH_2$ circuit, with no restriction on the ancillas but with the bound depending on the number of
Fourier wires through $D=2^{w}$, and in particular to commuting (IQP) circuits on $w$ qubits.  The corresponding statement at Fourier depth $k\ge3$ fails for this estimator, since the
acceptance probability is then a form of degree $2(k-1)$ in $d$ rather than a quadratic one.
\end{remark}

For $w=n$ this bound is tight for constant $q$ and $\varepsilon$, up to polylogarithmic factors, and
it applies to a problem of interest: the sign-decision version of $2$-Forrelation considered by Buzet
and Chailloux~\cite{buzet2026iqp} is decided at Fourier depth two.  Indeed, the identity
$Q(x)+Q(y)+Q(x{+}y)=x\cdot y+\abs{x}\,\abs{y}\pmod 2$ for the quadratic form $Q(x)=\sum_{i<j}x_ix_j$ lets a
single diagonal layer between two Hadamard layers reproduce the Forrelation circuit, which they
realize as a commuting (IQP) circuit $H^{\otimes n}DH^{\otimes n}$, a special case of $\FH_2$
satisfying the hypothesis of Proposition~\ref{prop:fh2-classical}.  Since $2$-Forrelation requires $\widetilde\Omega(\sqrt N)$
classical randomized queries~\cite{aaronson2014forrelationproblemoptimallyseparates}, the classical
query complexity of such second-level circuits is $\widetilde\Theta(\sqrt N)$.

Since $\FH_2$ embeds into a single query round (Lemma~\ref{lem:round-embedding}),
Proposition~\ref{prop:fh2-classical} proves unconditionally the non-adaptive case $r=1$ of the
classical simulation conjecture of~\cite{girish2024power} (their Conjecture~1.7) for the non-adaptive algorithms that arise from $\FH_2$ circuits whose number $w$ of Fourier wires
satisfies $2^{w}=\widetilde O(M)$, with the bound $\sqrt M$ that the conjecture predicts, which is
below the sequential bound $M^{1-\frac1{2q}}$ for every $q\ge2$ and matches it at $q=1$.  Circuits with many more Fourier wires than $\log M$, such as the
amplified circuits of Theorem~\ref{thm:upper-worst-case}, and arbitrary non-adaptive quantum algorithms
are not covered.  The method is specific to this case.  Writing the acceptance probability of an $\FH_k$
circuit as $\abs a^2$ pairs the amplitude with its conjugate, so the oracle positions are the
$k-1$ interior blocks together with their conjugates.  At $k=2$ these are the two endpoint positions $x$ and $x'$
of \eqref{eq:fh2-quadratic-form}, coupled only through the oracle-free matrix $Q$; this is why $Q$ is
known and the variance of the estimator remains $O(D/s^2)$.  For $k\ge3$ the expression contains
adjacent oracle positions, and we do not know how to control the variance of the corresponding
estimator.  The higher levels correspond
to the case $t>1$ of that conjecture, which is open.

\section{The standard-query hierarchy}\label{sec:standard-hierarchy}
\label{app:standard-growth}

In the standard-query hierarchy a query writes its answer into a register.  In that model a block
can carry out a
coherent, classically adaptive computation between two Hadamard layers, so the round simulation of
Lemma~\ref{lem:round-embedding} is no longer available.  We prove a Fourier growth bound for such
circuits directly, by factoring each block through the leaves of its decision tree, and deduce the
adjacent-level separation in this model as well
(Theorem~\ref{thm:standard-query-adjacent}).  Comparing the two lower bounds then shows that a
standard query is strictly more powerful than a phase query at a fixed Fourier depth
(Theorem~\ref{thm:phase-vs-standard}), and that no constant number of Hadamard layers, nor any
classical preprocessing, recovers the difference
(Corollary~\ref{cor:std-not-in-phase} and Proposition~\ref{prop:classical-prefix-separation}).

Throughout this section $\mathrm{ORF}_{K,n}$ denotes the promise sets $\Pi^{\Yes}_n,\Pi^{\No}_n$ of
Definition~\ref{def:hard-problem}, with the underlying Boolean functions presented through the
standard-query oracle of Definition~\ref{def:standard-fh} rather than through phase queries.

The relevant Forrelation order changes between the two query models.  In
Section~\ref{sec:higher-level-conjecture} the separating order is $K=2k-1$, because Fourier
depth $k$ decides $(2k-2)$-fold Forrelation and no higher order.  Standard access reaches one order
further: retaining the answer bit lets a depth-$k$ circuit run the Aaronson--Ambainis interference
test on $(2k-1)$-fold Forrelation (Proposition~\ref{prop:standard-forrelation-test} with $s=k$).  This
order is therefore unsuitable for separating standard depth $k$ from standard depth $k+1$,
since standard depth $k$ already solves it.  The next odd order is $2k+1$, and it is decided at
standard depth $k+1$ by the same test.  What has to be shown is that standard depth $k$ does not
decide it, and that is exactly what the growth bound of this section provides.  Concretely, we prove
that a standard-query $\FH_k$ circuit making $q$ queries has level-$\ell$ Fourier weight at most
$2^{O_{k,\ell}(1)}(q+1)^{\ell}M^{\frac\ell2(1-\frac1{2k-1})}$
(Theorem~\ref{thm:standard-query-growth}); the exponent $\tfrac12(1-\tfrac1{2k-1})$ falls below the
$\tfrac12(1-\tfrac1{K})$ that the Bansal--Sinha criterion needs at $K=2k+1$, which is what rules the
problem out.  The rest of the argument, the direct product, the hybrid, and the diagonalization, is the
one already used in Section~\ref{sec:higher-level-conjecture}.

A standard-query block is therefore a coherent classical decision tree rather than a non-adaptive
batch of phase queries.  The proof of the growth bound has three parts.  We first factor each adaptive block through a \emph{leaf space}, in
which the query path taken on a branch is recorded as an orthogonal classical label; this makes
the adaptivity explicit and bounded, and is the step that replaces the round simulation. We then
group the degree-$\ell$ Fourier contributions by the set of positions at which a Fourier variable
is queried, and pass by a finite unitriangular change of basis to \emph{certificate profiles},
which can be written as an operator product.  Finally, we cut that product at a
position chosen to minimize the number of certificates lying entirely on one side of the cut. A
certificate of that kind contributes a power of the ambient dimension $M$, whereas one meeting the cut
lies on a query path of length at most $q$; averaging over the cut positions balances the two and
gives the exponent of Lemma~\ref{lem:standard-path-growth}.

\subsection{A decision-tree factorization}
\label{subsec:standard-leaf-factorization}

Throughout, the input variables are $x\in\pmone^M$. On each incoming computational basis state a
standard-query basis-preserving block acts classically: it queries input variables adaptively,
reaches a leaf, accumulates a phase, and outputs another basis state.

We first explain why such a block cannot be treated directly.  The argument of
Section~\ref{subsec:standard-path-growth} requires the operator product to have factors of bounded
norm whose oracle dependence lies in explicit monomials.  Expanding a block over the leaves of its
decision tree is the standard Fourier expansion of a decision
tree~\cite{odonnell2021analysisbooleanfunctions}, and it does produce such monomials.
Unfortunately, summing the leaves back together destroys the norm control, because different
branches query different variables and the resulting operator mixes them.  We therefore do not sum the branches.  The
branch taken is a function of the incoming basis label and the input, so we compute it into an
auxiliary register, in the same way that a classically adaptive computation is
made coherent by retaining its branch rather than uncomputing
it~\cite{nielsen2010quantum}.  The block then factors into an isometry followed by a contraction,
adjacent factors of the product are forced to agree on the branch, and the product structure that the
growth argument needs is restored.  The factorization is elementary; it restores the norm control
that the diagonal oracles of~\cite{girish2024power} provide automatically.  The following definition abstracts the
behavior described above, and allows leaf coefficients of modulus at most one so that it covers the
diagonal acceptance operator of Theorem~\ref{thm:standard-query-growth} as well.

\begin{definition}[Decision-tree contraction]\label{def:dt-contraction}
Let $\mathcal H_Z$ be a computational-basis space with basis $\{\ket z:z\in Z\}$. A
\emph{decision-tree contraction of depth at most $q$} is specified, for each $z\in Z$, by a
deterministic decision tree of depth at most $q$.

Delete any branch that assigns two different answers to the same input variable, and suppress a
repeated query whenever its answer has already been determined earlier on the path. Every
remaining leaf $p$ then carries a set $P_p\subseteq[M]$ of distinct variables queried on the path,
an answer pattern $\sigma_p\in\pmone^{P_p}$, an output label $w_p\in Z$, and a coefficient
$c_p\in\mathbb C$ with $\abs{c_p}\le1$. On input $x$, let $p(z,x)$ be the unique leaf in the tree
associated with $z$ whose answer pattern agrees with $x$. The resulting input-dependent operator
is $B(x)\ket z=c_{p(z,x)}\ket{w_{p(z,x)}}$. We require that, for every fixed $x$, the map
$z\mapsto w_{p(z,x)}$ is injective on the set of $z$ for which $c_{p(z,x)}\neq0$. Hence
$\Norm{B(x)}\le1$ for every $x$.
\end{definition}

Two cases will occur. A standard-query basis-preserving unitary is a decision-tree contraction
with $\abs{c_p}=1$ at every reachable leaf. An input-dependent diagonal operator whose $z$th entry
is computed by a depth-$q$ decision tree and lies in the unit disk is also one: use that tree for
the label $z$, take $w_p=z$, and let $c_p$ be the diagonal value at the leaf. The second case is
the operator $\Theta(x)$ of Theorem~\ref{thm:standard-query-growth}.

It is convenient to record Fourier subsets in an auxiliary register $\mathcal T:=\ell_2(2^{[M]})$
with orthonormal basis $(e_S)_{S\subseteq[M]}$, on which $X^Ae_S=e_{S\triangle A}$ for
$A\subseteq[M]$. Writing $B(x)=\sum_{A\subseteq[M]}x^AB_A$ with $x^A:=\prod_{j\in A}x_j$, set
\[
\mathcal G_B:=\sum_{A\subseteq[M]}B_A\otimes X^A .
\]
Multiplication by the monomial $x^A$ has become the shift $S\mapsto S\triangle A$, so a product of
input-dependent operators becomes an ordinary product of the corresponding $\mathcal G$; and after a
Boolean Fourier transform on $\mathcal T$, $\mathcal G_B$ is the direct sum of the matrices $B(x)$
over $x\in\pmone^M$.

For a leaf $p$, define
\begin{equation}\label{eq:standard-leaf-factor}
\Phi_p
:=
\sum_{A\subseteq P_p}
2^{-\abs{P_p}}\sigma_p^A X^A,
\qquad
\sigma_p^A:=\prod_{j\in A}\sigma_p(j).
\end{equation}
On the Fourier block indexed by $x$, the operator $\Phi_p$ is multiplication by
\[
\prod_{j\in P_p}\frac{1+\sigma_p(j)x_j}{2}
=
\mathbf 1[x|_{P_p}=\sigma_p].
\]
Thus, for the leaves of any one decision tree, the operators $\Phi_p$ are orthogonal projections
summing to the identity.

\begin{lemma}[Decision-tree factorization]
\label{lem:standard-leaf-factorization}
Let $B$ be a decision-tree contraction. Write $\mathsf{Leaves}(z)$ for the leaves of the tree
associated with $z$, and let
\[
\mathcal H_P
:=
\operatorname{span}
\{\ket{z,p}:p\text{ is a leaf of the tree associated with }z\}
\]
be its leaf space, and set
\begin{equation}\label{eq:standard-LR}
\begin{aligned}
\mathcal L_B
&:={\sum_z\sum_{p\in\mathsf{Leaves}(z)}}
\ket{z,p}\!\bra z\otimes\Phi_p,\\
\mathcal R_B
&:={\sum_z\sum_{p\in\mathsf{Leaves}(z)}}
c_p\ket{w_p}\!\bra{z,p}\otimes\Phi_p.
\end{aligned}
\end{equation}
Then
\begin{equation}\label{eq:standard-leaf-factorization}
\mathcal R_B\mathcal L_B=\mathcal G_B,
\qquad
\Norm{\mathcal L_B}=1,
\qquad
\Norm{\mathcal R_B}\le1.
\end{equation}
Moreover, $\mathcal G_{B^\dagger} = \mathcal L_B^\dagger\mathcal R_B^\dagger$.
\end{lemma}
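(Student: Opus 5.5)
The plan is to pass to the Fourier side of the register $\mathcal T$: after a Boolean Fourier transform on $\mathcal T$, every operator of the form $\sum_A C_A\otimes X^A$ becomes block diagonal over $x\in\pmone^M$, with block $\sum_A x^A C_A$. Under this transform, $\Phi_p$ acts on block $x$ as the scalar $\mathbf 1[x|_{P_p}=\sigma_p]$, and $\mathcal G_B$ becomes $\bigoplus_x B(x)$. It therefore suffices to check all four claims blockwise, for each fixed $x$. This is legitimate because the transform is unitary, norms are maxima of block norms, and products and adjoints respect the block decomposition.

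Fix $x$. In block $x$, $\mathcal L_B$ maps $\ket z$ to $\sum_{p\in\mathsf{Leaves}(z)}\mathbf 1[x|_{P_p}=\sigma_p]\ket{z,p}$. After the deletion of inconsistent branches and suppression of repeated queries, exactly one leaf $p(z,x)$ of each tree is consistent with $x$. Hence $\ket z\mapsto\ket{z,p(z,x)}$, which sends distinct $z$ to distinct orthonormal vectors. So block $x$ of $\mathcal L_B$ is an isometry, and $\Norm{\mathcal L_B}=1$. Block $x$ of $\mathcal R_B$ maps $\ket{z,p}$ to $c_p\ket{w_p}$ when $p=p(z,x)$, and to $0$ otherwise. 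Its nonzero columns are therefore indexed by the pairs $(z,p(z,x))$ with $c_{p(z,x)}\ne0$, and they have disjoint supports by the injectivity hypothesis in Definition~\ref{def:dt-contraction}. They also have norm $\abs{c_p}\le1$, so $\Norm{\mathcal R_B}\le1$. Composing the two blocks gives $\ket z\mapsto c_{p(z,x)}\ket{w_{p(z,x)}}=B(x)\ket z$, which is block $x$ of $\mathcal G_B$. This proves $\mathcal R_B\mathcal L_B=\mathcal G_B$.

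For the adjoint identity, I would first check that $\mathcal G_{B^\dagger}=\mathcal G_B^\dagger$. Two facts give this. First, $B(x)^\dagger=\sum_A x^AB_A^\dagger$ because $x^A$ is real. Second, $(X^A)^\dagger=X^A$. Then $\mathcal G_{B^\dagger}=(\mathcal R_B\mathcal L_B)^\dagger=\mathcal L_B^\dagger\mathcal R_B^\dagger$.

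The only delicate point is the claim that exactly one leaf per tree is consistent with each $x$. After pruning inconsistent branches and suppressing repeated queries, the leaves' answer patterns should partition $\pmone^M$. I would verify this by induction on depth: at each surviving internal node, the two children split the consistent inputs according to the value of a fresh variable. This is the partition-of-unity fact already noted for the $\Phi_p$ just before the lemma, so I expect it to go through routinely.
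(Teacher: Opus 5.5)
Your proposal is correct and follows the paper's proof: both arguments work block by block after the Boolean Fourier transform on $\mathcal T$, and use the unique active leaf $p(z,x)$ to show that $\mathcal L_B$ is an isometry, that the nonzero columns of $\mathcal R_B$ are orthogonal with norm at most one, and that the composition equals $B(x)$. You also spell out two points the paper leaves implicit: the identity $\mathcal G_{B^\dagger}=\mathcal G_B^\dagger$, which follows because $x^A$ is real and $(X^A)^\dagger=X^A$, and the uniqueness of the consistent leaf, which the paper builds into Definition~\ref{def:dt-contraction} rather than proving.
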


\begin{proof}
Fix a Fourier block $x$. On this block the leaf projections become scalar indicators. For every
basis label $z$, exactly one leaf $p(z,x)$ is active, and therefore $\mathcal L_B:\ket
z\mapsto\ket{z,p(z,x)}$. Hence $\mathcal L_B$ is an isometry on every Fourier block.

The second map acts by $\mathcal R_B:\ket{z,p(z,x)}\mapsto c_{p(z,x)}\ket{w_{p(z,x)}}$. For fixed
$x$, the nonzero output basis labels are distinct as $z$ varies, and each coefficient has modulus
at most one. Thus the nonzero columns of $\mathcal R_B$ are orthogonal and have norm at most one,
so $\Norm{\mathcal R_B}\le1$.

The composition agrees with $B(x)$ on every Fourier block, proving $\mathcal R_B\mathcal
L_B=\mathcal G_B$. Taking adjoints gives the final identity.
\end{proof}

The adaptive query path is determined by the common leaf label $(z,p)$. We insert an auxiliary
operator $Q$ acting on this label as $\mathcal R_B Q\mathcal L_B$.  For an adjoint factor
$B^\dagger$, the corresponding insertion is $\mathcal L_B^\dagger Q\mathcal R_B^\dagger$. In
either case the two factors adjacent to $Q$ are forced to refer to the same leaf.

\subsection{Fourier growth along the operator product}
\label{subsec:standard-path-growth}

The argument follows the proof of~\cite[Theorem~4.1]{girish2024power}, from which we take four
devices: grouping the Fourier configurations into profiles, passing to an auxiliary vector $h$
related to the target weights by a triangular matrix, enlarging the operator product by subset
registers, and cutting the product to bound row and column sums.  The additional ingredient required
in the coherent standard-query setting is the leaf-space factorization of
Lemma~\ref{lem:standard-leaf-factorization} and the passage from incidence patterns to certificates.

We prove a slightly more general statement. Let
\begin{equation}\label{eq:standard-general-chain}
f(x)
=
u^\dagger B_1(x)\Lambda_1B_2(x)\Lambda_2\cdots \Lambda_{d-1}B_d(x)v,
\end{equation}
where $\Norm{u},\Norm{v}\le1$, every separator satisfies $\Norm{\Lambda_i}\le1$, and every $B_i$
is either a depth-$q$ decision-tree contraction or the adjoint of one. All products are assumed to
be dimensionally compatible; the ambient dimensions are otherwise unrestricted.

The argument below carries several indices at once, so we fix them here and use them consistently.

\begin{center}
\begin{tabular}{@{}ll@{}}
$d$ & the number of factors $B_i$ in the product \eqref{eq:standard-general-chain}\\
$\ell$ & the Fourier level being bounded\\
$q$ & the depth of each decision tree, so at most $q$ variables on any one path\\
$M$ & the number of input variables left free by the restriction\\
$\mathcal C$ & the nonempty subsets of $[d]$, used to label sets of positions\\
$\gamma\in\mathcal C$ & the positions whose query paths happen to reach a given variable\\
$c\in\mathcal C$ & the positions at which a variable is \emph{required} to be reachable\\
$m$, $s$ & how many variables carry each $\gamma$, respectively each $c$\\
$J_c$ & the set of variables assigned the requirement $c$, of size $s_c$\\
$r$ & the position at which the product is cut\\
$e_r$ & how many variables have all their required positions on one side of the cut
\end{tabular}
\end{center}

\noindent
Only $\gamma$ and $c$ need care: $\gamma$ records what the query paths make available, and $c$
records what we choose to demand of them.  The passage from the first to the second is Step~2 below,
and it is what makes the count finite.

Let $\mathcal C:=2^{[d]}\setminus\{\emptyset\}$. An element $\gamma\in\mathcal C$ will describe
the positions in the product whose decision-tree paths queried a given Fourier variable.  We call
$\gamma$ a \emph{factor-incidence pattern}.  A profile is a vector
$m=(m_\gamma)_{\gamma\in\mathcal C} \in\mathbb Z_{\ge0}^{\mathcal C}$. For fixed total size
$\ell$, the number of profiles satisfying $\sum_\gamma m_\gamma=\ell$ is
\begin{equation}\label{eq:standard-profile-count}
D_{d,\ell}
:=
\binom{\ell+2^d-2}{2^d-2}.
\end{equation}

The passage from incidence patterns to certificates is governed by the following counting matrix.
A source variable of incidence pattern $\gamma$ may receive a certificate type $c$ only when
$c\subseteq\gamma$, and the matrix records how many assignments realize a given pair of profiles.

\begin{lemma}[Certificate count]\label{lem:standard-cert-count}
For profiles $s,m$ of total size $\ell$, let $N[s,m]$ be the assignment count
\begin{equation}\label{eq:standard-flow-count}
\begin{aligned}
N[s,m]
:=\#\bigl\{\varphi:
&\bigsqcup_{\gamma\in\mathcal C}
\bigl(\{\gamma\}\times[m_\gamma]\bigr)
\longrightarrow\mathcal C:\\
&\varphi(\gamma,t)\subseteq\gamma\text{ for every }(\gamma,t),\quad
\abs{\varphi^{-1}(c)}=s_c\text{ for every }c\in\mathcal C
\bigr\}.
\end{aligned}
\end{equation}
Then $N$ is invertible, and
\begin{equation}\label{eq:standard-flow-inverse}
\Norm{N^{-1}}_1
\le
(2^d-1)^{\ell},
\end{equation}
where $\Norm{\,\cdot\,}_1$ denotes the maximum absolute column sum.
\end{lemma}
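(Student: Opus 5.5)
The plan is to recognise $N$ as the degree-$\ell$ part of a linear change of variables on a polynomial ring, and then to invert that substitution by M\"obius inversion on the Boolean lattice. The column-sum bound is then read off from the inverse substitution.

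\emph{Generating-function form.} Introduce commuting indeterminates $(y_c)_{c\in\mathcal C}$, and for a profile $s$ write $y^s:=\prod_c y_c^{s_c}$. Consider the ring endomorphism $\zeta$ of $\mathbb C[y_c:c\in\mathcal C]$ defined by
\[
\zeta(y_\gamma)=\sum_{c\in\mathcal C,\ c\subseteq\gamma}y_c .
\]
Expanding $\zeta(y^m)=\prod_\gamma\bigl(\sum_{c\subseteq\gamma}y_c\bigr)^{m_\gamma}$ means choosing, independently for each of the $m_\gamma$ copies of $\gamma$, one subset $c\subseteq\gamma$. That is exactly a map $\varphi$ as in \eqref{eq:standard-flow-count}, and it contributes the monomial $y^s$ with $s_c=\abs{\varphi^{-1}(c)}$. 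Hence
\[
\zeta(y^m)=\sum_s N[s,m]\,y^s .
\]
Since $\zeta$ is linear and homogeneous of degree one, it preserves the space $V_\ell$ of homogeneous polynomials of degree $\ell$. On $V_\ell$, in the monomial basis $\{y^s\}$, its matrix is precisely $N$, with columns indexed by $m$.

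\emph{Inverse.} Define $\mu$ by
\[
\mu(y_\gamma)=\sum_{c\in\mathcal C,\ c\subseteq\gamma}(-1)^{\abs{\gamma\setminus c}}y_c .
\]
Each interval $[c,\gamma]$ with $\emptyset\ne c\subseteq\gamma$ is a Boolean lattice. Its M\"obius function is $(-1)^{\abs{\gamma\setminus c}}$, and the inversion identity $\sum_{c\subseteq b\subseteq\gamma}(-1)^{\abs{\gamma\setminus b}}=\mathbf 1[c=\gamma]$ holds inside the poset $\mathcal C$, because removing the empty set does not affect intervals whose lower end is nonempty. It follows that $\mu\circ\zeta$ and $\zeta\circ\mu$ fix every generator $y_\gamma$. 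Both are ring homomorphisms, so they are mutually inverse automorphisms. Restricting to $V_\ell$ shows that $N$ is invertible and that $N^{-1}$ is the matrix of $\mu$ on $V_\ell$, so $N^{-1}[s,m]$ is the coefficient of $y^s$ in $\mu(y^m)$.

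\emph{Column sums.} For a fixed column $m$,
\[
\sum_s\abs{N^{-1}[s,m]}\le\prod_\gamma\Bigl(\sum_{c\subseteq\gamma}1\Bigr)^{m_\gamma}=\prod_\gamma(2^{\abs\gamma}-1)^{m_\gamma}\le(2^d-1)^{\ell}.
\]
The first inequality holds because the absolute values of the coefficients of a product of polynomials are bounded by the product of their coefficient $\ell_1$-norms. The last step uses $\sum_\gamma m_\gamma=\ell$. This gives \eqref{eq:standard-flow-inverse}.

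The only delicate step is identifying $N$ with the substitution matrix in the correct orientation, with columns indexed by $m$ and rows by $s$, so that the maximum column sum is the quantity controlled by $\mu(y^m)$. A secondary check is that excluding $\emptyset$ from $\mathcal C$ leaves M\"obius inversion valid; it does, since every interval used has a nonempty lower end.
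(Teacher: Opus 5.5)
Your proposal is correct and follows essentially the same route as the paper: you identify $N$ as the matrix, in the monomial basis of degree-$\ell$ homogeneous polynomials, of the substitution $y_\gamma\mapsto\sum_{c\subseteq\gamma}y_c$, invert it by M\"obius inversion on the Boolean lattice, and bound each column of $N^{-1}$ by submultiplicativity of the coefficient $\ell_1$-norm. Your explicit checks, that removing $\emptyset$ from $\mathcal C$ leaves the inversion valid and that the columns are indexed by $m$, together with the slightly sharper intermediate bound $\prod_\gamma(2^{\abs\gamma}-1)^{m_\gamma}$, are details the paper leaves implicit.
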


\begin{proof}
The matrix is inverted exactly, by M\"obius inversion on the Boolean
lattice~\cite[the M\"obius inversion formula, Prop.~3.7.1, and the Boolean lattice,
Ex.~3.8.3]{stanley2012enumerative}.  Introduce commuting
variables $y_c$, $c\in\mathcal C$.  Since an assignment sends each of the $m_\gamma$ labeled
sources of pattern $\gamma$ independently to some $c\subseteq\gamma$, the definition of $N$ gives,
for every profile $m$ of total size $\ell$,
\begin{equation}\label{eq:standard-flow-genfun}
\sum_{s}N[s,m]\prod_{c\in\mathcal C}y_c^{\,s_c}
=
\prod_{\gamma\in\mathcal C}
\Bigl(\sum_{c\subseteq\gamma}y_c\Bigr)^{m_\gamma},
\end{equation}
the inner sums running over nonempty $c$.  Thus $N$ is the matrix, in the monomial basis of the
homogeneous polynomials of degree $\ell$, of the substitution
$y_\gamma\mapsto\sum_{c\subseteq\gamma}y_c$.  The transformation
$u_\gamma=\sum_{c\subseteq\gamma}y_c$ of the variables is the zeta transform of the lattice
$\mathcal C$, and M\"obius inversion inverts it explicitly:
$y_\gamma=\sum_{c\subseteq\gamma}(-1)^{\abs\gamma-\abs c}u_c$.  Hence $N$ is invertible, and
$N^{-1}$ is the matrix of the inverse substitution.  A column of $N^{-1}$ lists the coefficients of
$\prod_\gamma\bigl(\sum_{c\subseteq\gamma}(-1)^{\abs\gamma-\abs c}y_c\bigr)^{m_\gamma}$, a product
of $\ell$ linear forms, each with at most $2^d-1$ terms of unit modulus.  Since the sum of absolute
values of coefficients is submultiplicative under polynomial multiplication, every absolute column
sum is at most $(2^d-1)^{\ell}$, which is \eqref{eq:standard-flow-inverse}.
\end{proof}

\begin{lemma}[Fourier growth of an adaptive product]
\label{lem:standard-path-growth}
For every restriction $\rho$ leaving $\widetilde M$ variables free and every $\ell\ge1$,
\begin{equation}\label{eq:standard-chain-growth}
L_{1,\ell}(f_\rho)
\le
A_{d,\ell}(q+1)^{\ell}
\max\left\{
1,
\widetilde M^{\frac12\lfloor(d-1)\ell/d\rfloor}
\right\},
\end{equation}
where
\[
A_{d,\ell}
:=
D_{d,\ell}\,(2^d-1)^{\ell}
\le
(2^d-1)^{2\ell},
\]
the last inequality because a profile of total size $\ell$ is a multiset of $\ell$ elements of
$\mathcal C$, so $D_{d,\ell}\le(2^d-1)^{\ell}$.
\end{lemma}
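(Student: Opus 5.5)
We would follow the template of \cite[Theorem~4.1]{girish2024power}, with the leaf-space factorization of Lemma~\ref{lem:standard-leaf-factorization} taking the place of diagonal oracles.

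\emph{Reduction and enlargement.} First apply the restriction $\rho$. Fixing variables inside a decision tree yields another decision-tree contraction of depth at most $q$ on the $\widetilde M$ free variables, so we may assume $\rho$ is trivial and $M=\widetilde M$. Next lift each $B_i$ to $\mathcal G_{B_i}$ on $\mathcal H\otimes\mathcal T$ and factor it as $\mathcal R_{B_i}\mathcal L_{B_i}$, or as $\mathcal L^\dagger\mathcal R^\dagger$ for an adjoint factor. The Fourier coefficient $\widehat f(S)$ is then the $e_\emptyset\to e_S$ matrix element of the product
\[
(u^\dagger\otimes e_S^\dagger)\,\mathcal G_{B_1}(\Lambda_1\otimes I)\cdots\mathcal G_{B_d}\,(v\otimes e_\emptyset).
\]
Expanding every $\Phi_p$ over subsets $A\subseteq P_p$ writes $\widehat f(S)$ as a sum over choices of leaves $p_1,\dots,p_d$ and subsets $A_i\subseteq P_{p_i}$ with $\triangle_iA_i=S$. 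For each $j\in S$, let $\gamma(j)=\{i:j\in P_{p_i}\}$ be its factor-incidence pattern. Grouping the level-$\ell$ contributions by the profile $m$ of these patterns gives quantities $g_m$ with $L_{1,\ell}(f)\le\sum_m|g_m|$, which is at most $D_{d,\ell}$ terms. We do not bound $g_m$ directly. Instead we define certificate-profile quantities $h_s$, in which each variable of $J_c$ is \emph{required} to lie on the paths at the positions in $c$ (no constraint elsewhere), and use the relation $h=Ng$ with $N$ from Lemma~\ref{lem:standard-cert-count}. Then $|g_m|\le\Norm{N^{-1}}_1\max_s|h_s|\le(2^d-1)^\ell\max_s|h_s|$, which accounts for the prefactor $A_{d,\ell}$. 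Verifying this unitriangular relation requires that the leaf label $(z,p)$ be the same for every factor sharing it; that is exactly why we insert the operator $Q$ between $\mathcal R$ and $\mathcal L$.

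\emph{Bounding $h_s$ by a cut.} Fix a certificate profile $s$. We write $|h_s|$ as an inner product $|\langle\alpha_r,\beta_r\rangle|$ across a cut between positions $r$ and $r+1$. Here $\alpha_r$ collects factors $1,\dots,r$ together with the registers $e_{J_c}$ for those $c$ that meet $[1,r]$, and $\beta_r$ is built symmetrically from the other side. Enforcing the certificate constraint at a given position means inserting, on the leaf label, the diagonal operator that projects onto leaves whose paths contain the prescribed variables. Such an insertion has norm at most one, so norms are preserved. For Cauchy--Schwarz, sum over the assignments of variables to each $J_c$:
\begin{itemize}[leftmargin=2em]
\item If $c$ lies entirely on one side of the cut, the sum over its variables is free on that side, costing $M^{s_c/2}$ through the norm of the superposition over choices.
\item If $c$ meets both sides, the variables are pinned by a single leaf path on the left, which has at most $q$ variables. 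Pairing costs at most $(q+1)^{s_c}$ (the $+1$ absorbs the empty/dummy choice).
\end{itemize}
This gives $|h_s|\le(q+1)^\ell M^{e_r/2}$, where $e_r$ counts variables whose required set lies on one side of the cut.

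\emph{Averaging over the cut.} Finally choose $r$ to minimize $e_r$. Each nonempty $c\subseteq[d]$ fails to straddle at most $d-1$ of the $d$ cut positions $r=0,\dots,d-1$; we use a cyclic convention, treating the outer vectors $u$, $v$ as a final position, to obtain $d$ cuts. Averaging therefore gives $\min_re_r\le\lfloor(d-1)\ell/d\rfloor$, which yields \eqref{eq:standard-chain-growth}; the $\max\{1,\cdot\}$ covers the case $e_r=0$.

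\emph{Main obstacle.} We expect the hard step to be the straddling-certificate bound. We must show that, after cutting, the left half determines the leaf at a straddling position, so the shared variables are confined to one path of length at most $q$, and that this can be written as a contraction without losing the norm bound. We would first attempt this by keeping the leaf register $(z,p)$ uncontracted at the cut and letting $Q$ act diagonally on it.
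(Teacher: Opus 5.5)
\textbf{The gap is in where you cut the product and in the averaging step that depends on it.}

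You cut \emph{between} factors $r$ and $r+1$, and charge $M^{s_c/2}$ to every certificate type $c$ lying wholly on one side. A singleton type $c=\{i\}$ (that is, $\min c=\max c$) never meets both sides of such a cut. So your claim that each $c$ fails to straddle at most $d-1$ of the $d$ cuts is false for singletons. The extra ``cyclic'' cut at the outer vectors does not help, because every type is one-sided there.

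For a certificate profile supported on singletons you therefore get $e_r=\ell$ for every cut. Each such variable costs $\binom q1^{1/2}M^{1/2}$, so the method yields only $\abs{h_s}\lesssim q^{\ell/2}M^{\ell/2}$, which is the trivial exponent. For instance, with $d=3$, $3\mid\ell$ and $s_{\{1\}}=s_{\{2\}}=s_{\{3\}}=\ell/3$, every gap cut gives $M^{\ell/2}$, whereas the lemma asserts $M^{\ell/3}$. For $d=1$ the lemma asserts no power of $M$ at all. The separation needs an exponent strictly below $\tfrac\ell2(1-\tfrac1K)$, so this is fatal, not a loss in constants.

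\textbf{The fix, as in the paper, is to cut \emph{through} a factor.} Choose a position $r\in[d]$ and insert the split operator $W$ on the leaf label $(z,p)$ of the $r$th factor, between $\mathcal R_{B_r}$ and $\mathcal L_{B_r}$. Both halves then see the same leaf $p_r$, so any type containing $r$ is pinned inside $P_{p_r}$ at the split. Only types with $\min c>r$ or $\max c<r$ pay $M^{s_c/2}$. Each $c$ is wholly to one side of exactly $(\min c-1)+(d-\max c)\le d-1$ of the $d$ positions, including singletons, and this is what gives $\sum_r e_r\le(d-1)\ell$.

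The price is a more careful count at $W$. You bound $\Norm{W}\le\sqrt{\Norm{W}_1\Norm{W}_\infty}$ with the row-side and column-side free choices kept separate:
\begin{itemize}
\item types created or erased at $r$ contribute $\binom q{s_c}^{1/2}$ there, matched by $\binom q{s_c}^{1/2}$ at their other endpoint;
\item strictly crossing types contribute nothing;
\item only $c=\{r\}$ contributes the full $\binom q{s_{\{r\}}}$.
\end{itemize}
So the total power of $q$ is still at most $\ell$. The operator $W$ is also the natural place to impose the covering condition $T\triangle T'=\bigcup_c\widetilde J_c$ and the dual phase $a(T\triangle T')$. With the cut at a factor, that condition determines $J_{\{r\}}$ uniquely, so each entry of $W$ is a single phase.

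Your closing remark about keeping the leaf register uncontracted at the cut is the right instinct. It has to be applied at the cut itself, and doing so moves the cut onto a factor.

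\textbf{A smaller slip.} Your bound $\abs{g_m}\le\Norm{N^{-1}}_1\max_s\abs{h_s}$ uses a row sum of $N^{-1}$, whereas Lemma~\ref{lem:standard-cert-count} bounds column sums. Use $\Norm{g}_1\le\Norm{N^{-1}}_1\Norm{h}_1\le D_{d,\ell}\Norm{N^{-1}}_1\max_s\abs{h_s}$ instead.
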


\begin{proof}
Restricting variables only prunes the decision trees and fixes some answers on their paths. It
does not increase their depth, and the resulting operators remain decision-tree contractions. If
no variables remain free, then $L_{1,\ell}(f_\rho)=0$ for $\ell\ge1$. We may therefore assume
$\widetilde M\ge1$, replace $M$ by $\widetilde M$, and prove the unrestricted statement, writing
$M$ again for the number of free variables.

The proof has six steps. We first expand the degree-$\ell$ Fourier weight into leaf configurations
and group them by factor-incidence profile, and then use Lemma~\ref{lem:standard-cert-count} to
replace those profiles by certificate profiles, reducing the bound to a single profile $s$. We
next choose a split position $r$ in the product, minimizing the number of certificates whose entire
lifetime lies on one side of it. We then represent the resulting quantity $h(s)$ as a scalar
product of two vectors built from auxiliary certificate registers, verify that this representation
is exact, and finally bound the norms of the two vectors by counting the certificates on each side
of the split.

\emph{Step 1: Fourier configurations.} Choose phases $a(S)$ of modulus one so that
\begin{equation}\label{eq:standard-dual-phases}
L_{1,\ell}(f)
=
\sum_{\abs S=\ell}a(S)\widehat f(S).
\end{equation}
Take $a(S)=\overline{\widehat f(S)}/\abs{\widehat f(S)}$ when $\widehat f(S)\neq0$ and $a(S)=1$
otherwise, extended arbitrarily to subsets of size other than $\ell$, whose values never
contribute.

For every leaf $p$,
\begin{equation}\label{eq:standard-leaf-expansion}
\mathbf 1[x\text{ follows }p]
=
\sum_{A\subseteq P_p}
2^{-\abs{P_p}}\sigma_p^A x^A.
\end{equation}
Expanding all decision-tree factors in \eqref{eq:standard-general-chain} produces terms indexed by
a leaf tuple $\mathbf p=(p_1,\ldots,p_d)$ and subsets $\mathbf A=(A_1,\ldots,A_d)$ with
$A_i\subseteq P_{p_i}$.  An adjoint factor is expanded through the leaves of the underlying
decision-tree contraction, with the local bra and ket reversed and the leaf coefficient
conjugated. Put $S(\mathbf A):=A_1\triangle\cdots\triangle A_d$. Let $\omega(\mathbf p,\mathbf A)$
denote the resulting scalar coefficient, including the endpoint vectors, the leaf coefficients,
the separator matrix elements, the factors $2^{-\abs{P_{p_i}}}\sigma_{p_i}^{A_i}$, and the dual
phase $a(S(\mathbf A))$ when $\abs{S(\mathbf A)}=\ell$. Set $\omega(\mathbf p,\mathbf A)=0$ when
$\abs{S(\mathbf A)}\neq\ell$.

For each $j\in S(\mathbf A)$, define its \emph{factor-incidence pattern}
\begin{equation}\label{eq:standard-incidence-pattern}
\gamma_{\mathbf p}(j)
:=
\{i\in[d]:j\in P_{p_i}\}
\in\mathcal C.
\end{equation}
This records the positions whose query paths make $j$ available. It does not record the positions
for which $j\in A_i$; that parity information is carried by the XOR shifts and will be enforced at
the split operator.

Write $m_\gamma(\mathbf p,\mathbf A):=\abs{\{j\in S(\mathbf A):\gamma_{\mathbf p}(j)=\gamma\}}$
for the incidence profile of the configuration, and collect the configurations with a given
profile into
\[
g(m)
:=
\sum_{\substack{\mathbf p,\mathbf A\\
 \text{incidence profile }m}}
\omega(\mathbf p,\mathbf A),
\]
so that, by construction,
\begin{equation}\label{eq:standard-L1-to-g}
L_{1,\ell}(f)
=
\left|\sum_m g(m)\right|
\le
\sum_m|g(m)|
=
\Norm{g}_1.
\end{equation}

\emph{Step 2: certificate profiles.} Fix a profile $s=(s_c)_{c\in\mathcal C}$ with $\sum_c
s_c=\ell$. For a configuration $(\mathbf p,\mathbf A)$ with $S=S(\mathbf A)$, a \emph{certificate
tuple of profile $s$} is a family $(J_c)_{c\in\mathcal C}$ satisfying
\begin{equation}\label{eq:standard-certificates}
\abs{J_c}=s_c,
\qquad
J_c\subseteq\bigcap_{i\in c}P_{p_i},
\qquad
\bigcup_{c\in\mathcal C}J_c=S.
\end{equation}
Because $\sum_c\abs{J_c}=\abs S=\ell$, the sets in a valid tuple are automatically pairwise
disjoint. Thus each $j\in S$ receives a unique certificate type $c$, and the containment condition
is exactly $c\subseteq\gamma_{\mathbf p}(j)$. Weighting each configuration by the number of
certificate tuples of a given profile that it admits, set
\begin{equation}\label{eq:standard-hNg}
h(s)
:=
\sum_{\mathbf p,\mathbf A}
\omega(\mathbf p,\mathbf A)
\cdot
\#\{\text{certificate tuples of profile }s
 \text{ for }(\mathbf p,\mathbf A)\}.
\end{equation}
A configuration of incidence profile $m$ admits exactly $N[s,m]$ compatible certificate tuples, so
$h=Ng$. Combining \eqref{eq:standard-L1-to-g}, \eqref{eq:standard-hNg}, and
\eqref{eq:standard-flow-inverse}, we obtain
\begin{align}
L_{1,\ell}(f)
&\le \Norm{g}_1
\le \Norm{N^{-1}}_1\Norm{h}_1\notag\\
&\le
D_{d,\ell}\Norm{N^{-1}}_1\max_s\abs{h(s)}
\le
D_{d,\ell}\,(2^d-1)^{\ell}
\max_s\abs{h(s)}.
\label{eq:standard-reduce-h}
\end{align}
It remains to bound $h(s)$ for one fixed certificate profile.

\emph{Step 3: choose the split point.} For $c\in\mathcal C$, regard $J_c$ as having a lifetime
from its first required position $\min c$ to its last required position $\max c$. For a split
point $r\in[d]$, define
\begin{equation}\label{eq:standard-er}
e_r
:=
\sum_{\min c>r}s_c
+
\sum_{\max c<r}s_c.
\end{equation}
Thus $e_r$ counts the certificate variables whose entire lifetime lies strictly to one side of the
split.

A certificate of type $c$ lies wholly to the right of exactly $\min c-1$ split points and wholly
to the left of exactly $d-\max c$ split points. Therefore
\begin{align}
\sum_{r=1}^d e_r
&=
\sum_{c\in\mathcal C}
s_c\bigl((\min c-1)+(d-\max c)\bigr)\notag\\
&\le
(d-1)\sum_cs_c
=
(d-1)\ell.
\label{eq:standard-er-average}
\end{align}
Hence some $r\in[d]$ satisfies
\begin{equation}\label{eq:standard-best-r}
e_r
\le
\left\lfloor\frac{(d-1)\ell}{d}\right\rfloor.
\end{equation}
Fix such an $r$.

\emph{Step 4: certificate registers and local transition matrices.} For each $c$ with $s_c>0$,
introduce a register $R_c$ whose basis values are $\bot$ and the subsets of $[M]$ of size $s_c$.
Every such register is initialized and postselected in $\bot$. The auxiliary matrices below have
entries in $\{0,1\}$ except for the dual phase inserted at the split.

For $i<r$, define $Q_i$ at the leaf interface of the $i$th factor. We read its matrix entries in
the direction in which the left boundary row vector is propagated toward the split. A nonzero
transition acts on the certificate register of a type $c$ in one of three ways. If $i=\min c$, it
replaces $\bot$ by a set $J_c\subseteq P_p$ of size $s_c$. If $\min c<i\le\max c$ and $i\in c$, it
keeps $J_c$ unchanged and requires $J_c\subseteq P_p$. Otherwise it copies the register unchanged.
In particular, a certificate whose lifetime ends before the split is retained until the split,
where it will be erased.

For $i>r$, define $Q'_i$ in the same left-to-right matrix orientation, now propagating from the
split toward the right boundary. A nonzero transition keeps every active $J_c$ unchanged, requires
$J_c\subseteq P_p$ whenever $i\in c$, and, when $i=\max c$, erases $J_c$ after performing this
check. Thus a certificate needed only on the right is created at the split, checked at the
positions in $c$, and erased at its final required position.

At the split position $r$, use an operator $W$ that is diagonal in the common leaf label $(z,p)$
and acts between a row Fourier-subset label $T$, a column Fourier-subset label $T'$, and the row
and column certificate-register values. Say that $c$ \emph{crosses} $r$ when $\min c\le r\le\max
c$. Its certificate transition is:
\[
\begin{array}{c|c}
\text{type of }c & \text{row-to-column action at the split}\\ \hline
\max c<r
 & \text{erase the left-created set }J_c\\
\min c>r
 & \text{create a right-going set }J_c\\
\min c<r<\max c,\ r\notin c
 & \text{copy }J_c\\
\min c<r<\max c,\ r\in c
 & \text{copy }J_c\text{ and require }J_c\subseteq P_p\\
r=\min c<\max c
 & \text{create }J_c\subseteq P_p\\
r=\max c>\min c
 & \text{require }J_c\subseteq P_p\text{ and erase it}\\
c=\{r\}
 & \text{sum over }J_c\subseteq P_p,\ \abs{J_c}=s_c.
\end{array}
\]
Let $\widetilde J_c$ denote the set visible at the split in the relevant case, with $\widetilde
J_c=\emptyset$ when $s_c=0$. A matrix entry of $W$ is zero unless
\begin{equation}\label{eq:standard-pivot-cover}
\abs{T\triangle T'}=\ell,
\qquad
T\triangle T'
=
\bigcup_{c\in\mathcal C}\widetilde J_c.
\end{equation}
Notice that, because $\sum_c|\widetilde J_c|=\ell$, these two conditions also force the sets
$\widetilde J_c$ to be pairwise disjoint. When \eqref{eq:standard-pivot-cover} holds, the matrix entry is
$a(T\triangle T')$.

Each entry of $W$ is therefore $0$ or a single phase of modulus one. Indeed, a fixed row and
column fix the leaf $(z,p)$, the labels $T,T'$, and every visible certificate $\widetilde J_c$
with $c\ne\{r\}$; the covering condition \eqref{eq:standard-pivot-cover} then determines the one
internal set $J_{\{r\}}=(T\triangle T')\setminus\bigcup_{c\ne\{r\}}\widetilde J_c$, so the
internal sum defining that entry has at most one surviving term. Consequently the row- and
column-sum estimates below count reachable rows and columns, with no coincident configurations
adding coherently.

\emph{Step 5: the scalar product equals $h(s)$.} Lift the chain
\eqref{eq:standard-general-chain} to $\mathcal T$ by replacing each factor $B_i(x)$ with $\mathcal
G_{B_i}$, factor every $\mathcal G_{B_i}$ by Lemma~\ref{lem:standard-leaf-factorization}, adjoin
the certificate registers of Step~4, and insert an auxiliary operator at the leaf interface of
each factor,
\[
\mathcal Q_i:=Q_i\quad(i<r),
\qquad
\mathcal Q_r:=W,
\qquad
\mathcal Q_i:=Q'_i\quad(i>r).
\]
The factor at position $i$ thus becomes
\[
\widetilde B_i
:=
\begin{cases}
\mathcal R_{B_i}\,\mathcal Q_i\,\mathcal L_{B_i},
&\text{$B_i$ a decision-tree contraction},\\[2pt]
\mathcal L_{B_i}^\dagger\,\mathcal Q_i\,\mathcal R_{B_i}^\dagger,
&\text{$B_i$ the adjoint of one},
\end{cases}
\]
which reduces to $\mathcal G_{B_i}$, respectively $\mathcal G_{B_i^\dagger}$, when $\mathcal Q_i$
is the identity.  Tensor every separator $\Lambda_i$ with the identity on $\mathcal T$ and on the
certificate registers, and initialize and postselect $\mathcal T$ in $e_\emptyset$ and every
certificate register in $\bot$: writing $e_\bot$ for the basis vector of the certificate registers
all of whose entries are $\bot$, set $\widetilde u:=u\otimes e_\emptyset\otimes e_\bot$ and
$\widetilde v:=v\otimes e_\emptyset\otimes e_\bot$.  We claim that
\begin{equation}\label{eq:standard-augmented-product}
h(s)
=
\widetilde u^\dagger\,
\widetilde B_1(\Lambda_1\otimes I)\widetilde B_2(\Lambda_2\otimes I)
\cdots(\Lambda_{d-1}\otimes I)\widetilde B_d\,
\widetilde v.
\end{equation}

First, the two factors adjacent to an auxiliary operator share the same leaf-space basis. They
therefore use the same leaf $p_i$ and the same adaptive query path.

Second, the left and right copies of the leaf projection at position $i$ contribute subsets
$A_i^L,A_i^R\subseteq P_{p_i}$. Starting and ending the Fourier-subset register at $\emptyset$,
their XOR shifts imply
\[
T\triangle T'
=
\bigtriangleup_{i=1}^d(A_i^L\triangle A_i^R),
\]
the total Fourier monomial of the scalar expansion.

Third, the certificate-register lifetimes enforce $\abs{J_c}=s_c, \qquad J_c\subseteq\bigcap_{i\in
c}P_{p_i}$. The split condition \,\eqref{eq:standard-pivot-cover} enforces that these sets are
pairwise disjoint and have union equal to the total Fourier set. Thus the auxiliary registers
count exactly the certificate tuples in \eqref{eq:standard-certificates}.

Finally, for each fixed $A_i=A_i^L\triangle A_i^R$,
\[
\sum_{A_i^L\triangle A_i^R=A_i}
\sigma_{p_i}^{A_i^L}\sigma_{p_i}^{A_i^R}2^{-2\abs{P_{p_i}}}
=
\sigma_{p_i}^{A_i}2^{-\abs{P_{p_i}}}.
\]
Indeed, each $A_i^L$ determines the unique set $A_i^R=A_i^L\triangle A_i$, and
$\sigma_{p_i}^{A_i^L}\sigma_{p_i}^{A_i^R}=\sigma_{p_i}^{A_i}$. This is precisely the coefficient
in \eqref{eq:standard-leaf-expansion}.  The same verification applies to an adjoint factor, with
its local leaf factorization reversed. Hence \eqref{eq:standard-augmented-product} holds.

\emph{Step 6: norm bounds.} We use the standard inequality, valid for every matrix $G$,
$\Norm{G}\le\sqrt{\Norm{G}_1\Norm{G}_\infty}$, where $\Norm{G}_1$ is the maximum absolute column
sum and $\Norm{G}_\infty$ is the maximum absolute row sum.

For $Q_i$, each column has at most one predecessor, whereas a fixed row has at most
\[
\prod_{\min c=i}\binom{\abs{P_p}}{s_c}
\le
\prod_{\min c=i}\binom q{s_c}
\]
possible successors. Therefore
\[
\Norm{Q_i}_1\le1,
\qquad
\Norm{Q_i}_\infty
\le
\prod_{\min c=i}\binom q{s_c},
\]
and
\begin{equation}\label{eq:standard-Q-norm}
\Norm{Q_i}
\le
\prod_{\min c=i}\binom q{s_c}^{1/2}.
\end{equation}
For $Q'_i$, each row has at most one successor, while a fixed column has at most the possible
certificate sets erased at $i=\max c$. Hence
\[
\Norm{Q'_i}_\infty\le1,
\qquad
\Norm{Q'_i}_1
\le
\prod_{\max c=i}\binom q{s_c},
\]
and
\begin{equation}\label{eq:standard-Qprime-norm}
\Norm{Q'_i}
\le
\prod_{\max c=i}\binom q{s_c}^{1/2}.
\end{equation}

Now fix a row of $W$ and count the columns it can reach. Reading the transition table at the
split, the column value of a certificate register is determined by the row except in three cases:
a type with $\min c>r$, whose set is created at the split and may be chosen from all $M$
variables; a type with $r=\min c<\max c$, whose set is created at the split inside $P_p$; and the
type $c=\{r\}$, whose set is summed over inside $P_p$. Once these are fixed,
\eqref{eq:standard-pivot-cover} determines $T'$ uniquely from $T$. Thus
\[
\Norm{W}_\infty
\le
\prod_{\min c>r}\binom M{s_c}
\prod_{r=\min c<\max c}\binom q{s_c}
\cdot\binom q{s_{\{r\}}}.
\]
Fixing a column gives the mirror-image estimate, in which the free row-side choices are the types
with $\max c<r$, erased at the split from all $M$ variables, those with $r=\max c>\min c$, erased
at the split from inside $P_p$, and again $c=\{r\}$:
\[
\Norm{W}_1
\le
\prod_{\max c<r}\binom M{s_c}
\prod_{r=\max c>\min c}\binom q{s_c}
\cdot\binom q{s_{\{r\}}}.
\]
A type that crosses the split strictly, $\min c<r<\max c$, appears on neither side: its set is
copied at the split, so it is constrained there but not free, whether or not $r\in c$.
Consequently,
\begin{align}
\Norm{W}
&\le
\sqrt{\Norm{W}_1\Norm{W}_\infty}\notag\\
&\le
M^{e_r/2}
\prod_{\substack{r\in\{\min c,\max c\}\\ \min c\ne\max c}}\binom q{s_c}^{1/2}
\cdot\binom q{s_{\{r\}}}.
\label{eq:standard-W-norm}
\end{align}

All leaf factors in \eqref{eq:standard-augmented-product} have norm at most one by
Lemma~\ref{lem:standard-leaf-factorization}, and every separator $\Lambda_i$ is a contraction. Multiplying \,\eqref{eq:standard-Q-norm},
\eqref{eq:standard-Qprime-norm}, and \eqref{eq:standard-W-norm} gives a factor of at most
\[
(q+1)^{E_r},
\qquad
E_r
:=
\frac12\sum_{\min c<r}s_c
+
\frac12\sum_{\max c>r}s_c
+
\frac12\sum_{\substack{r\in\{\min c,\max c\}\\ \min c\ne\max c}}s_c
+
s_{\{r\}}.
\]
The coefficient of a fixed $s_c$ in $E_r$ is at most $1$. Indeed, if $\max c<r$ or $\min c>r$,
exactly one of the first two sums contributes and the third does not, giving $\tfrac12$; if $\min
c<r<\max c$, both of the first two contribute and the third does not, giving $1$; if $r=\min
c<\max c$ or $r=\max c>\min c$, exactly one of the first two contributes and the third does, again
giving $1$; and if $c=\{r\}$, only the last term contributes, giving $1$. Since $\sum_cs_c=\ell$,
we have $E_r\le\ell$. Therefore
\begin{equation}\label{eq:standard-h-bound}
\abs{h(s)}
\le
(q+1)^{\ell}M^{e_r/2}.
\end{equation}
Combining \eqref{eq:standard-reduce-h}, \eqref{eq:standard-h-bound}, and
\eqref{eq:standard-best-r} proves \eqref{eq:standard-chain-growth}.
\end{proof}

Notice where the saving lies.  The exponent is $E_r\le\ell$, rather than the $2\ell$ that would
result from counting every type with $r\in c$ in full, because the row-side and the column-side counts at the split are
kept separate until the geometric mean is taken. A certificate created at the split is unrestricted
on the column side only; one erased at the split is unrestricted on the row side only; and one that
passes through the split is unrestricted on neither side. Each of the first two therefore contributes
$\binom q{s_c}^{1/2}$, and the third contributes nothing, in place of the full factor
$\binom q{s_c}$. Only the type $c=\{r\}$, both created and erased at the same
position, is unrestricted on both sides and contributes the full factor. This is the same counting
that gives the sharper non-adaptive phase-query bound of Theorem~\ref{thm:nonadaptive-growth},
where the two ways of factoring a single level interchange the two binomial factors between the
ambient dimension and the number of queries.

\subsection{Application to standard-query Fourier depth}
\label{subsec:standard-growth-application}

\begin{theorem}[Fourier growth for standard-query Fourier circuits]
\label{thm:standard-query-growth}
Fix $k\ge2$. Let $C$ be a standard-query $\FH_k$ circuit making $q$ queries, with acceptance
probability $f:\pmone^M\to[0,1]$. For every restriction $\rho$ leaving $\widetilde M$ variables
free and every $\ell\ge1$,
\begin{equation}\label{eq:standard-FH-growth}
L_{1,\ell}(f_\rho)
\le
A'_{k,\ell}(q+1)^{\ell}
\max\left\{
1,
\widetilde M^{
\frac12
\left\lfloor
\frac{(2k-2)\ell}{2k-1}
\right\rfloor}
\right\},
\end{equation}
where $A'_{k,\ell}\le(2^{2k-1}-1)^{2\ell}=2^{O(k\ell)}$.
\end{theorem}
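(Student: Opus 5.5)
The plan is to write the acceptance probability in the chain form \eqref{eq:standard-general-chain} with $d=2k-1$ factors and then invoke Lemma~\ref{lem:standard-path-growth}.

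\emph{The initial block.} As in the proof of Proposition~\ref{prop:three-round-bit-embedding}, the initial block $U_0$ acts on a single basis state, so it is a classical decision tree of depth $q_0\le q$. I would handle it with Lemma~\ref{lem:classical-prefix-lifting}, whose first inequality is valid for arbitrary continuations; the binomial factor $\binom{q_0}{a}$ it introduces is absorbed into $(q+1)^\ell$. Alternatively, and more uniformly, the output label $z_0(x)$ of $U_0$ can be folded into the first interior factor: the map $x\mapsto \ket{z_0(x)}$ composed with the first Hadamard layer is absorbed by letting the first factor $B_1$ be the contraction $\ket{w}\mapsto U_1\ket{w}$ prefixed by the tree of $U_0$. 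The cleanest choice, I think, is the lifting lemma applied at each leaf $z$ of $U_0$. There the continuation starts from the fixed basis state $\ket{z_0}$, so $u:=(H^{\otimes m}\otimes I)\ket{z_0}$ is a fixed unit vector.

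\emph{The chain.} With the initial state fixed, the pre-measurement state is $U_k\Lambda U_{k-1}\Lambda\cdots U_1 u$, where $\Lambda=H^{\otimes m}\otimes I$. The final block contributes only through $U_k^\dagger\Pi_SU_k$. For a standard-query block this operator is diagonal, with entry $\mathbf 1[\pi_k(w;x)\in S]$ computed by the depth-$q$ decision tree of $U_k$ on label $w$. It is therefore a decision-tree contraction $\Theta(x)$ of the diagonal kind described after Definition~\ref{def:dt-contraction}. Hence
\[
f(x)=u^\dagger\,U_1^\dagger\Lambda U_2^\dagger\cdots\Lambda U_{k-1}^\dagger\,\Lambda\,\Theta(x)\,\Lambda\,U_{k-1}\Lambda\cdots\Lambda U_1\,u .
\]
This product has $(k-1)+1+(k-1)=2k-1$ input-dependent factors, each a depth-$q$ decision-tree contraction or the adjoint of one. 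The separators are Hadamard layers or identities, both contractions, and the endpoints are unit vectors. Restrictions preserve this form, as noted at the start of the proof of Lemma~\ref{lem:standard-path-growth}.

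\emph{Applying the lemma.} Lemma~\ref{lem:standard-path-growth} with $d=2k-1$ gives
\[
L_{1,\ell}\le A_{2k-1,\ell}(q+1)^\ell\max\{1,\widetilde M^{\frac12\lfloor(2k-2)\ell/(2k-1)\rfloor}\},
\]
with $A_{2k-1,\ell}\le(2^{2k-1}-1)^{2\ell}$. Summing over the leaves of $U_0$ through Lemma~\ref{lem:classical-prefix-lifting} contributes $\sum_a\binom{q_0}{a}(q+1)^{\ell-a}\le(2q+2)^\ell$. This factor $2^\ell$ should be absorbed into $A'_{k,\ell}$; if the stated constant $(2^{2k-1}-1)^{2\ell}$ is meant literally, the folding alternative avoids the extra factor.

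\emph{Degenerate cases.} Exact depth $j<k$ gives fewer factors, hence a smaller exponent, which is still bounded by the stated one. Exact depth $0$ is a pure decision tree, for which $L_{1,\ell}\le\binom{q}{\ell}$.

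\emph{Main obstacle.} I expect the main difficulty to be the bookkeeping around the initial block. The lifting lemma's $\binom{q_0}{a}$ must fit within $(q+1)^\ell$ and the constant $A'_{k,\ell}$, and the lemma must apply uniformly in the leaf $z$, since $u$ depends on $z$ while the bound of Lemma~\ref{lem:standard-path-growth} does not. A second point to check is that $\Theta$ is legitimately a contraction: the accepting-set entries lie in $\{0,1\}$ and $w_p=w$, so injectivity is automatic.
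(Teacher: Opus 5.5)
Your argument is the paper's proof. It uses the same chain of $2k-1$ factors $U_1^\dagger,\dots,U_{k-1}^\dagger,\Theta,U_{k-1},\dots,U_1$ with the Hadamard layers as separators, Lemma~\ref{lem:standard-path-growth} at $d=2k-1$, and Lemma~\ref{lem:classical-prefix-lifting} over the leaves of $U_0$.

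The extra factor $2^\ell$ you flag cannot be absorbed into $(2^{2k-1}-1)^{2\ell}$: at $\ell=1$ one already has $A_{d,1}=(2^d-1)^2$ exactly. It also never arises if you keep the bookkeeping the paper uses. The chain involves only the blocks $U_1,\dots,U_k$, since $\Theta$ uses the tree of $U_k$, so all its factors have depth at most $q_*:=\max_{j\ge1}q_j$. Because the block query counts sum to $q$, you have $q_0+q_*\le q$. The lifting sum is then
\[
\sum_{a}\binom{q_0}{a}(q_*+1)^{\ell-a}\le(q_0+q_*+1)^{\ell}\le(q+1)^{\ell}.
\]
Your folding alternative also gives the literal constant, but "prefixed by the tree of $U_0$" leaves out how the output of $U_0$ passes through the first Hadamard layer. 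The missing identity is $\Lambda\ket{z_0(x)}=D(x)\Lambda\ket{0}$, where
\[
D(x)\colon\ket{w,0}\mapsto(-1)^{w\cdot z_0^{\mathrm{des}}(x)}\ket{w,z_0^{\mathrm{anc}}(x)}
\]
and $D(x)$ is zero on other ancilla labels. This is a depth-$q_0$ decision-tree contraction, injective for each fixed $x$. The outermost factors then become $U_1D$ and its adjoint, of depth $q_0+q_1\le q$, the endpoint is the fixed vector $\Lambda\ket{0}$, and the factor count stays $2k-1$.
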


\begin{proof}
\emph{Reduction to exact depth $k$.}
Let $j\le k$ be the exact Fourier depth of $C$.  If $j\le1$, then $f$ is the acceptance probability
of a randomized decision tree of depth at most $q$ (Lemmas~\ref{lem:fh0-std} and~\ref{lem:fh1-std}),
and so is every restriction $f_\rho$; expanding over the leaves,
$L_{1,\ell}(f_\rho)\le\binom q\ell\le(q+1)^{\ell}$, which is at most the right side of
\eqref{eq:standard-FH-growth}.  If $2\le j<k$, apply the argument below at depth $j$; its bound is
monotone in the depth, since $(2^{2j-1}-1)^{2\ell}\le(2^{2k-1}-1)^{2\ell}$ and
$\lfloor(2j-2)\ell/(2j-1)\rfloor\le\lfloor(2k-2)\ell/(2k-1)\rfloor$.  So assume exact depth $k$.

\emph{The two query counts.}
Write the circuit as in \eqref{eq:FHk-form}, let $q_j$ be the number of queries in the block $U_j$,
and set $q_*:=\max_{1\le j\le k}q_j$.  Each block is represented, on every incoming computational
basis state, by a decision tree of depth at most its own query count, so the blocks
$U_1,\dots,U_k$ give decision trees of depth at most $q_*$ and the initial block $U_0$ one of depth
at most $q_0$.  Since the $q_j$ sum to $q$, we have $q_0+q_*\le q$, and this is the only way the two
counts are combined below.

\emph{The initial block.}
The initial block $U_0$ begins on one computational basis state.  At each leaf of its classical
decision tree it produces one basis state up to a global phase. Conditioned on that leaf, the
state immediately after the first Hadamard layer is therefore a fixed unit vector depending only
on the leaf.

\emph{The final block.}
For the final block, let $\Pi_S$ be the acceptance projector and define
$\Theta(x):=U_k(x)^\dagger\Pi_SU_k(x)$. Because $U_k(x)$ is basis-preserving, $\Theta(x)$ is
diagonal. Its diagonal entry at $\ket z$ is the indicator that the decision tree followed by $U_k$
from $z$ ends in the accepting set. Thus $\Theta$ is a depth-$q$ decision-tree contraction.

\emph{The product form.}
Fix a leaf of the initial tree and let $v_z$ be the corresponding unit vector after the first
Hadamard layer. The conditional acceptance probability is
\[
\begin{aligned}
&v_z^\dagger
U_1^\dagger H\cdots H U_{k-1}^\dagger H\,
\Theta\,
H U_{k-1}H\cdots H U_1v_z,
\end{aligned}
\]
with identities on non-Fourier wires suppressed.  Hence it has the product form
\eqref{eq:standard-general-chain} with $d=2(k-1)+1=2k-1$ decision-tree factors
\[
U_1^\dagger,\ldots,U_{k-1}^\dagger,
\Theta,
U_{k-1},\ldots,U_1,
\]
and with the intervening global Hadamards as the contractions $\Lambda_i$.
Lemma~\ref{lem:standard-path-growth} applies after every further restriction.

\emph{Combining the two counts.}
Apply the classical-preprocessing lifting lemma, Lemma~\ref{lem:classical-prefix-lifting}, to the
decision tree for $U_0$, whose depth is at most $q_0$. For $0\le a\le\ell-1$, apply
Lemma~\ref{lem:standard-path-growth} at level $\ell-a$ with the factor depth $q_*$; for $a=\ell$,
use $L_{1,0}(f_z)=\abs{\widehat f_z(\emptyset)}\le1$. Since
\[
\left\lfloor
\frac{(2k-2)(\ell-a)}{2k-1}
\right\rfloor
\le
\left\lfloor
\frac{(2k-2)\ell}{2k-1}
\right\rfloor,
\]
we obtain, using that $A_{2k-1,\ell-a}\le(2^{2k-1}-1)^{2(\ell-a)}\le(2^{2k-1}-1)^{2\ell}=:A'_{k,\ell}$
for every $0\le a\le\ell$,
\begin{align*}
L_{1,\ell}(f_\rho)
&\le
A'_{k,\ell}
\max\left\{
1,
\widetilde M^{
\frac12
\left\lfloor
\frac{(2k-2)\ell}{2k-1}
\right\rfloor}
\right\}
\sum_{a=0}^\ell\binom{q_0}a(q_*+1)^{\ell-a}\\
&\le
A'_{k,\ell}(q+1)^{\ell}
\max\left\{
1,
\widetilde M^{
\frac12
\left\lfloor
\frac{(2k-2)\ell}{2k-1}
\right\rfloor}
\right\}.
\end{align*}
For the last inequality we used $\binom{q_0}a\le q_0^{\,a}\le\binom\ell a q_0^{\,a}$, so that the
sum is at most $\sum_{a=0}^{\ell}\binom\ell a q_0^{\,a}(q_*+1)^{\ell-a}=(q_0+q_*+1)^{\ell}$, and
then $q_0+q_*\le q$. This is \eqref{eq:standard-FH-growth}.
\end{proof}

We conclude with the standard-query upper bound used in Theorems~\ref{thm:standard-query-lower}
and~\ref{thm:phase-vs-standard}. Aaronson and
Ambainis~\cite{aaronson2014forrelationproblemoptimallyseparates} gave a control-qubit interference
test for $K$-fold Forrelation that makes $\lceil K/2\rceil$ queries and has acceptance probability
$\tfrac12(1+\Phi_K)$, ending with a measurement of the control qubit in the $X$ basis. For odd $K$
this test has a fixed standard-query Fourier depth: the middle query writes its answer into an
ancilla, so that the final comparison in the $X$ basis becomes a predicate in the computational
basis and no additional Hadamard layer is needed. We state this property below.

\begin{proposition}[The Aaronson--Ambainis interference test at standard Fourier depth]
\label{prop:standard-forrelation-test}
For every fixed $s\ge2$, there is a uniform standard-query $\FH_s$ circuit making $s$ queries
whose acceptance probability on $(F_0,\ldots,F_{2s-2})$ is
\begin{equation}\label{eq:standard-forrelation-test}
\frac{1+\Phi_{2s-1}(F_0,\ldots,F_{2s-2})}{2},
\end{equation}
where $\Phi_{2s-1}$ is the $(2s-1)$-fold Forrelation of \eqref{eq:odd-Phi}.
\end{proposition}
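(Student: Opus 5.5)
The plan is to reuse the circuit of Proposition~\ref{prop:odd-interference-circuit} at $k=s$, stopped after its $s$th Hadamard layer, and to replace the final Hadamard on the control qubit by a computational-basis predicate. That proposition builds the state
\[
\tfrac{1}{\sqrt2}\bigl(\ket0_c\ket{\psi_L}+\ket1_c\ket{\psi_R}\bigr)
\]
with $s$ global Hadamard layers and $s-1$ queries. In the standard-query model each of those $s-1$ queries is implemented by phase kickback, as in Remark~\ref{rem:phase-vs-bit}. The target $\ket-$ is prepared at the first layer and then routed off the designated wires, so the depth does not change. The indexed oracle selects $F_{j-1}$ on the branch $c=0$ and $F_{2s-1-j}$ on the branch $c=1$, exactly as before.

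The new ingredient concerns the middle oracle $F_{s-1}$ and the final readout, and should avoid a $(s+1)$st layer. Measuring $c$ in the $X$ basis is equivalent to applying $H$ to $c$ and then measuring. Instead, I would fold the $(s+1)$st Hadamard of the phase circuit into the $s$th layer. To do this, I would reorganize the circuit so that the control qubit is hit by the $s$th global layer together with the data register, and no Hadamard is ever applied after it. I would then check directly that outcome $c=0$ has probability $\tfrac12(1+\Re\bra{\psi_L}P_{F_{s-1}}\ket{\psi_R})$.

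The middle query is made after this last layer as a standard query. It writes $f_{s-1}(y)$ into a fresh ancilla $b$ on the branch where it applies, and $b$ is then measured in the computational basis. The acceptance set is a parity predicate in $(c,b)$ and, if necessary, in the measured data register. This is the standard-query analogue of the inner-product trick in Proposition~\ref{prop:even-forrelation}: a sign $(-1)^{f(y)}$ that would have had to interfere is replaced by a written bit XORed into the accepted outcome.

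So the key steps, in order, are the following. First, build the two half-states on the two branches of the control with shared layers $1,\dots,s$. Second, check that the pre-measurement amplitudes after the $s$th layer can be written explicitly using \eqref{eq:hadamard-action}. Third, choose the predicate on $(c,b,y)$, and verify by a character-sum computation like the one in Proposition~\ref{prop:even-forrelation} that its probability equals \eqref{eq:standard-forrelation-test}, using \eqref{eq:odd-overlap}. Fourth, count depth and queries: there are $s-1$ kickback queries and one written query, for $s$ in all, all gates outside the $s$ layers are basis-preserving, and so the circuit is a uniform standard-query $\FH_s$ circuit.

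The main obstacle is the third step, namely showing that the relative phase between the two branches can really be read off by a computational-basis predicate, without the extra layer on $c$. My first attempt would be to put $c$ through the $s$th layer, so that its two branches become $\ket\pm$-type superpositions of the half-states. I would then compute the middle answer $b=f_{s-1}(y)$ conditioned on the appropriate branch, and accept on $c\oplus b=0$. I would expand the outcome probability over pairs of branches, as in the even-order proof. If the cross terms do not collapse to $\Re\bra{\psi_L}P_{F_{s-1}}\ket{\psi_R}$, I would fall back to a SWAP-style predicate using a second data copy.
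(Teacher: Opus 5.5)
Your circuit is the paper's: the kickback target $\ket-$ from the first layer, the indexed query selecting $F_{j-1}$ or $F_{2s-1-j}$ according to $c$ after layers $1,\dots,s-1$, the control sent through the $s$th layer together with the data register, and the middle answer written into a fresh bit after the last layer, with acceptance on $c\oplus b=0$. What is missing is the verification you call the main obstacle, and two of your descriptions of it would fail.

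First, the middle query must write $f_{s-1}(y)$ for \emph{both} values of $c$, not only ``on the branch where it applies.'' After the $s$th layer the amplitude at $(c,w)=(b,y)$ is
\[
\frac{1}{2\sqrt N}\bigl(L(y)+(-1)^bR(y)\bigr),
\]
so each computational value of $c$ carries both halves, and there is no longer an $L$-branch and an $R$-branch to select. For example, querying only when $c=1$ makes the circuit accept with probability $1$ when $f_{s-1}\equiv1$, instead of $\tfrac12(1-\braket{\psi_L|\psi_R})$.

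Second, the outcome $c=0$ alone has probability $\tfrac12(1+\braket{\psi_L|\psi_R})$, which does not depend on $F_{s-1}$. No relative phase $F_{s-1}$ is applied before the Hadamard on $c$; it is the joint event $c=f_{s-1}(y)$ that carries the Forrelation value.

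With the unconditional query the computation is one line. Acceptance forces $(-1)^b=F_{s-1}(y)$, and distinct $y$ do not interfere after the last layer, so
\[
\Pr[\mathrm{accept}]=\frac1{4N}\sum_y\bigl(L(y)+F_{s-1}(y)R(y)\bigr)^2=\frac1{4N}\Bigl(2N+2\sum_yL(y)F_{s-1}(y)R(y)\Bigr)=\frac{1+\Phi_{2s-1}}2 .
\]
This uses $\sum_yL(y)^2=\sum_yR(y)^2=N$, since the half-states are unit vectors, and Proposition~\ref{prop:odd-split}. You need no expansion over pairs of branches and no predicate on the data register.

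Your fallback would not prove the statement. A SWAP test yields $\tfrac12(1+\Phi_{2s-1}^2)$, not $\tfrac12(1+\Phi_{2s-1})$. Its control also needs a Hadamard after the controlled SWAP, which is an $(s+1)$st layer (Proposition~\ref{prop:odd-swap-circuit}).
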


\begin{proof}
\emph{The registers.}
Use a control qubit $c$, an $n$-qubit index register $w$, a persistent query target $t$, and a
fresh answer bit $a$. Before the first global Hadamard layer, prepare $t$ in $\ket1$.

\emph{The layer schedule.}
The global Hadamard layers are scheduled as follows. Layer $1$ acts on $c$, $w$ and $t$, creating
$\ket+_c\ket+_w\ket-_t$; when $s\ge3$, layers $2,\dots,s-1$ act on $w$ alone; and layer $s$ acts
on both $c$ and $w$. When a layer should not act on $c$ or $t$, route that logical register off
the designated Fourier wires using basis-preserving SWAP gates and fresh computational-basis
placeholder wires.

\emph{The first $s-1$ queries, on the two branches of the control.}
For $j=1,\ldots,s-1$, the basis-preserving block following layer $j$ makes one standard query
using $t$ for phase kickback. Conditioned on $c=0$, it queries $F_{j-1}$ at $w$; conditioned on
$c=1$, it queries $F_{2s-1-j}$ at $w$.  Between consecutive such blocks, the next global Hadamard
layer Fourier-transforms $w$.

\emph{The middle query, written into a register rather than a phase.}
After the $s$th Hadamard layer, the final block makes the $s$th standard query to the middle
oracle $F_{s-1}$. Write $F_{s-1}(y)=(-1)^{f_{s-1}(y)}$. This last query is deliberately not used
for phase kickback: it writes $f_{s-1}(w)$ into the fresh answer bit $a$. The circuit accepts
exactly when $c\oplus a=0$. All routing and controlled selection of the indexed oracle are
basis-preserving, so the circuit lies in standard-query $\FH_s$ and makes exactly $s$ queries.

\emph{The acceptance probability.}
Let $L,R$ be the profiles from \eqref{eq:left-profile} and \eqref{eq:right-profile}, with $k=s$.
The two control branches prepare the left and right profiles, respectively.  After the
final Hadamard on $c$, the amplitude at $(c,w)=(b,y)$, immediately before the middle standard
query, is
\[
\frac{1}{2\sqrt N}
\bigl(L(y)+(-1)^bR(y)\bigr).
\]
Acceptance requires $b=f_{s-1}(y)$, and hence
\begin{align*}
\Pr[\mathrm{accept}]
&=
\frac{1}{4N}
\sum_y
\bigl(L(y)+F_{s-1}(y)R(y)\bigr)^2\\
&=
\frac{1}{4N}
\sum_y
\bigl(
L(y)^2+R(y)^2
+2L(y)F_{s-1}(y)R(y)
\bigr)\\
&=
\frac{1+\Phi_{2s-1}}2,
\end{align*}
by Parseval and Proposition~\ref{prop:odd-split}.
\end{proof}

\subsection{The adjacent-level separation}\label{subsec:standard-separation}

\begin{theorem}[Standard-query lower bound]\label{thm:standard-query-lower}
Fix $k\ge2$, and set
\[
d:=2k-1,\qquad K:=2k+1,\qquad \eta_k:=\frac{1}{2dK}.
\]
For all sufficiently large $n$, $\mathrm{ORF}_{K,n}$ is decided by a uniform
standard-query $\FH_{k+1}$ circuit with $O_k(1)$ queries, whereas every
standard-query $\FH_k$ circuit making at most $N^{\eta_k}$ queries fails on
some promised input with error greater than $\frac13$.
\end{theorem}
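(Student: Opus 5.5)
The plan mirrors the phase-query argument of Theorems~\ref{thm:upper-worst-case} and~\ref{thm:lower-worst-case}, with the round simulation replaced by the standard-query growth bound of Theorem~\ref{thm:standard-query-growth}.

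\emph{Upper bound.} Apply Proposition~\ref{prop:standard-forrelation-test} with $s=k+1$. It gives a uniform standard-query $\FH_{k+1}$ circuit with $k+1$ queries whose acceptance probability is $\tfrac12(1+\Phi_{2k+1})=\tfrac12(1+\mathsf{forr}_K)$ for $K=2k+1$. Run $R=O_k(1)$ repetitions of this circuit on each of the $m=2^{5K}$ copies in parallel. The repetitions share the $k+1$ layers by Remark~\ref{rem:register-routing}, and each per-copy answer bit and control outcome is recorded in the computational basis. The accepting set is the same thresholded OR as in Theorem~\ref{thm:upper-worst-case}. The Hoeffding calculation from that proof then carries over verbatim, since only the acceptance probability $\tfrac12(1+\mathsf{forr}_K)$ per repetition was used there. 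This gives $O_k(1)$ queries and error at most $\tfrac13$.

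\emph{Lower bound.} Suppose a standard-query $\FH_k$ circuit with $q\le N^{\eta_k}$ queries decides $\mathrm{ORF}_{K,n}$ on every promised instance.
\begin{enumerate}[leftmargin=2em]
\item Steps~1 and~2 of Theorem~\ref{thm:lower-worst-case} apply unchanged. Together with Proposition~\ref{prop:promise-density} at order $K$, they produce a restriction $h$ of the acceptance probability $g$ with single-copy advantage at least $1/(4m)$ between $\mathcal F_K$ and $\mathcal U$.
\item Theorem~\ref{thm:standard-query-growth} bounds every further restriction: $L_{1,\ell}(h_\rho)\le 2^{O(k\ell)}(q+1)^\ell M^{\frac\ell2(1-\frac1d)}$, where $d=2k-1$ and $M=mKN$.
\item Theorem~\ref{thm:bs-transfer} converts this to the biased weights at the cost of a factor $4^\ell$.
\item Theorem~\ref{thm:bs-bias} at order $K$ bounds the advantage by
\[
C_k\sum_{\ell=K}^{K(K-1)}(q+1)^\ell N^{\frac\ell2(1-\frac1d)-\frac\ell2(1-\frac1K)}
=C_k\sum_{\ell}(q+1)^\ell N^{-\frac{\ell}{2}\left(\frac1d-\frac1K\right)}.
\]
\end{enumerate}

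\emph{Exponent comparison.} Here $\frac1d-\frac1K=\frac{2}{dK}$, so the power of $N$ is $N^{-\ell/(dK)}$. With $q+1\le 2N^{\eta_k}=2N^{1/(2dK)}$, each term is at most $2^\ell N^{-\ell/(2dK)}$. Since $\ell\ge K$, the advantage is $O_k(N^{-1/(2d)})=o(1)$, which is below $1/(4m)$ for large $n$. This is the desired contradiction.

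The only step needing care is checking that the growth bound of Theorem~\ref{thm:standard-query-growth} holds for every restriction of $h$, including the factor $(q+1)$ rather than $q$. The theorem is stated for all restrictions $\rho$, and restrictions of $h$ are restrictions of $g$, so this goes through. The main conceptual obstacle, the growth bound itself, is already in hand. What remains is to confirm that $K=2k+1$ is strictly larger than the $d=2k-1$ factors, which is what makes the exponent gap positive.
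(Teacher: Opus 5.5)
Your proposal is correct and follows the paper's proof essentially step for step. The upper bound is Proposition~\ref{prop:standard-forrelation-test} at $s=k+1$ with the thresholded OR over parallel repetitions. The lower bound combines Steps~1--2 of Theorem~\ref{thm:lower-worst-case}, the restriction-stable growth bound of Theorem~\ref{thm:standard-query-growth}, the transfer theorem and the Bansal--Sinha criterion, with the same exponent gap $\tfrac{\ell}{2}\bigl(\tfrac1d-\tfrac1K\bigr)=\tfrac{\ell}{dK}$, of which the budget $q\le N^{\eta_k}$ consumes exactly half.
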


\begin{proof}
For the upper bound, apply Proposition~\ref{prop:standard-forrelation-test}
with $s=k+1$, so that $2s-1=K$ and the acceptance probability on a copy is
$\frac12(1+\mathsf{forr}_K)$.  On a $\Yes$ copy this is at least $\frac12+\frac\delta2$, and on
every $\No$ copy at most $\frac12+\frac\delta4$.  Taking the threshold $\frac12+\frac{3\delta}8$,
Hoeffding's inequality shows that $R=O(\delta^{-2}\log m)=O_k(1)$ repetitions per copy decide that
copy correctly except with probability $\frac1{3m}$; a union bound over the $m=2^{5K}$ copies of
Definition~\ref{def:hard-problem} and their OR then decides the promise with error at most
$\frac13$.  All repetitions and copies share the same $k+1$ Hadamard layers, and the final OR is a
basis-preserving predicate.

For the lower bound, suppose that a circuit $C$ with $q\le N^{\eta_k}$
decides the promise.  Recall Steps~1--2 of Theorem~\ref{thm:lower-worst-case}; they apply
without change here: after fixing all but one Forrelation block, we obtain a
restriction $h$ with
\[
 \left|\E_{\mathcal F_K}[h]-\E_{\mathcal U}[h]\right|\ge\frac1{4m}.
 \tag{\(\ddagger\)}
\]
Every further restriction of $h$ is also a restriction of the acceptance
probability of $C$.  Theorem~\ref{thm:standard-query-growth} therefore gives,
with $M=mKN$ and every $\ell\ge1$,
\[
 \mathsf{wt}_{\ell}(h_\rho)
 \le A'_{k,\ell}(q+1)^{\ell}
 M^{\frac{\ell}{2}(1-1/d)}.
\]
The restriction-stable transfer theorem (Theorem~\ref{thm:bs-transfer}) and
the Bansal--Sinha bias bound (Theorem~\ref{thm:bs-bias}) now imply, after
absorbing constants depending only on $k$ and the bounded range of $\ell$,
\begin{align*}
 \left|\E_{\mathcal F_K}[h]-\E_{\mathcal U}[h]\right|
 &\le C_k\sum_{\ell=K}^{K(K-1)}
 (q+1)^{\ell}
 N^{-\ell/(dK)}.
\end{align*}
Here the exponent is strict because
\[
 \frac\ell2(1-1/d)-\frac\ell2(1-1/K)=-\frac\ell{dK},
\]
using $K=d+2$.  We also used $\frac12\lfloor(d-1)\ell/d\rfloor\le\frac\ell2(1-1/d)$ and
$M=mKN=O_k(N)$.  Since $q\le N^{\eta_k}$, we have $(q+1)^{\ell}\le2^{\ell}N^{\eta_k\ell}$, with
$2^{\ell}$ absorbed into the constant, and $\eta_k=1/(2dK)$ gives
$-\ell/(dK)+\eta_k\ell=-\ell/(2dK)$.  The sum is therefore at most
\[
 C_k'\sum_{\ell=K}^{K(K-1)}N^{-\ell/(2dK)}=o_k(1),
\]
contradicting \((\ddagger)\).  This proves the lower bound.
\end{proof}

As in the phase-query model, the singly exponential growth constant permits the depth to grow.

\begin{corollary}[Growing depth, standard queries]\label{cor:growing-k-std}
There are absolute constants $c,c'>0$ and $n_0$ with the following two properties.
\begin{enumerate}[leftmargin=2em]
\item For every $n\ge n_0$ and every $k$ with $2\le k\le c\,n^{1/3}$, the promise problem
$\mathrm{ORF}_{2k+1,n}$, whose $2^{O(k)}$ functions are indexed by $O(k)$ bits, is decided by a
standard-query circuit of Fourier depth $k+1$ and size $2^{O(k)}\poly(n)$ making $2^{O(k)}$
queries, while every standard-query circuit of Fourier depth $k$ making at most $2^{c'n/k^2}$
queries fails on some promised instance.
\item Let $k(\cdot)$ be nondecreasing and polynomial-time computable with $2\le k(n)\le c\log n$
for all sufficiently large $n$.  Then the index width and the deciding circuit are polynomial at
those lengths, and the diagonalization of Theorem~\ref{thm:standard-query-adjacent}, with the
finitely many remaining lengths hardwired, separates
$\FH^{O}_{k(\cdot),\mathrm{std}}$ from $\FH^{O}_{k(\cdot)+1,\mathrm{std}}$, the standard-query
analogue of Definition~\ref{def:variable-depth}, relative to an oracle; for $k(n)=o(\log n)$ the
deciding circuit makes $n^{o(1)}$ queries.
\end{enumerate}
\end{corollary}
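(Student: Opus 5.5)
The plan is to repeat the proof of Corollary~\ref{cor:growing-k} in the standard-query setting, now keeping track of every dependence on $k$. Theorem~\ref{thm:standard-query-lower} supplies the two bounds with $K=2k+1$, $d=2k-1$, $\delta=2^{-5K}$, $m=2^{5K}$.

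\emph{Upper bound.} Proposition~\ref{prop:standard-forrelation-test} with $s=k+1$ gives a depth-$(k+1)$ standard-query test. We take $R=O(\delta^{-2}\log m)=2^{O(k)}$ repetitions per copy, as in Theorem~\ref{thm:standard-query-lower}. The resulting circuit makes $mR(k+1)=2^{O(k)}$ queries and has size $2^{O(k)}\poly(n)$. There are $mK=2^{O(k)}$ functions, which need $O(k)$ index bits.

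\emph{Lower bound.} Steps~1--2 of Theorem~\ref{thm:lower-worst-case} still reduce to a single block $h$ with advantage at least $1/(4m)=2^{-\Theta(k)}$. Theorem~\ref{thm:standard-query-growth} gives a growth constant $A'_{k,\ell}\le(2^{2k-1}-1)^{2\ell}=2^{O(k\ell)}$, which is singly exponential. This is the key point, and unlike the phase case no appendix is needed.

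Combining this with Theorem~\ref{thm:bs-transfer} ($4^\ell$), Theorem~\ref{thm:bs-bias} ($(8K)^{14\ell}$) and $M=mKN$ ($(mK)^{\ell}$), the level-$\ell$ term is at most
\[
2^{O(k)\ell}\,(q+1)^{\ell}\,N^{-\ell/(dK)}.
\]
Put $q\le 2^{c'n/k^2}$ with $c'$ small enough that $(q+1)^\ell\le N^{\ell/(2dK)}$; this works since $dK\le 4k^2$. The term is then at most $2^{\ell(O(k)-n/(2dK))}$. Once $k^3\le c''n$, the exponent is at most $-\ell n/(4dK)$. The sum has $O(k^2)$ levels starting at $\ell\ge K$, so it totals $2^{-\Omega(n/k)}$. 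Since $n/k\ge n^{2/3}/c\gg k$, this lies below $2^{-\Theta(k)}$, which contradicts the one-block advantage. This gives part~1.

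\emph{Part 2.} For $k(n)\le c\log n$ the index width $O(k)=O(\log n)$ fits Definition~\ref{def:FHk-rel}, and the decider is polynomial size and uniform, since $k(\cdot)$ is polynomial-time computable. The diagonalization of Theorem~\ref{thm:standard-query-adjacent} runs against an enumeration of families obeying the depth schedule. At each chosen length, the polynomial query bound of the $i$th family is below $2^{c'n/k(n)^2}$ for large $n$, because $k(n)=O(\log n)$. Answers at finitely many lengths are hardwired. For $k=o(\log n)$ the query count $2^{O(k)}$ is $n^{o(1)}$.

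\emph{Main obstacle.} The step that needs the most care is the constant bookkeeping in the lower bound. One must check that every $k$-dependent factor is $2^{O(k)\ell}$, including the binomial and profile counts inside $A'_{k,\ell}$ and the $(q+1)$ versus $q$ factor. One must also check that the threshold $2^{-\Theta(k)}$ from the hybrid step is beaten in the stated range of $k$.
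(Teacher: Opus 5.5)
Your proposal is correct and follows essentially the same route as the paper. Both track the $k$-dependence in Theorem~\ref{thm:standard-query-lower}: the upper bound comes from Proposition~\ref{prop:standard-forrelation-test} at $s=k+1$ repeated $2^{\Theta(k)}$ times, and the lower bound from the singly exponential constant $A'_{k,\ell}\le(2^{2k-1}-1)^{2\ell}$ of Theorem~\ref{thm:standard-query-growth}, which gives level-$\ell$ terms $2^{O(k)\ell}N^{-\ell/(2dK)}$ summing to $2^{-\Omega(n/k)}$ once $k^3\le c''n$, below the hybrid threshold $1/(4m)=2^{-\Theta(k)}$.

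The differences are cosmetic. You bound the factor coming from $M=mKN$ by $(mK)^{\ell}$ where the paper uses the sharper $(mK)^{\ell/2}$. You also spell out the Part~2 diagonalization (the polynomial query bound falls below $2^{c'n/k(n)^2}$ when $k(n)=O(\log n)$), which the paper leaves as ``as before''.
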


\begin{proof}
We track the dependence on $k$ in the proof of Theorem~\ref{thm:standard-query-lower}, with $d=2k-1$
and $K=2k+1$.  The upper bound is the test of Proposition~\ref{prop:standard-forrelation-test} with
$s=k+1$, repeated $R=O(\delta^{-2}\log m)=2^{\Theta(k)}$ times on each of the $m=2^{5K}$ copies; it
makes $2^{\Theta(k)}$ queries in total, and its size is $2^{\Theta(k)}\poly(n)$.

For the lower bound, the constants absorbed into $C_k$ there are now tracked.  With $q\le N^{\eta_k}$
and $\eta_k=\tfrac1{2dK}$, the term at level $\ell$ of that sum is at most
\[
A'_{k,\ell}\;4^{\ell}\,(8K)^{14\ell}\,(mK)^{\ell/2}\,2^{\ell}\;N^{-\ell/(2dK)}
\ \le\ 2^{C_0k\ell}\,N^{-\ell/(2dK)}
\]
for an absolute constant $C_0$, using $A'_{k,\ell}\le(2^{2k-1}-1)^{2\ell}$
(Theorem~\ref{thm:standard-query-growth}) and $m=2^{5K}$.  Once $k^{3}\le c''n$ for a suitable
absolute constant $c''$, the exponent is at most $-\ell n/(4dK)$, and the geometric sum from
$\ell=K$ is at most $2^{-\Omega(n/k)}$, below the threshold $\tfrac1{4m}=2^{-\Theta(k)}$ of
\((\ddagger)\) because $n/k\ge n^{2/3}\gg k$ in the stated range.  The contradiction and the
diagonalization are then as before.
\end{proof}

\begin{theorem}[Adjacent standard-query oracle separation]
\label{thm:standard-query-adjacent}
For every constant $k\ge2$, there is an oracle family $O$ such that
\[
 \FH_{k,\mathrm{std}}^{O}\subsetneq\FH_{k+1,\mathrm{std}}^{O}.
\]
\end{theorem}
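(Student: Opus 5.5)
The plan is to run the diagonalization of Theorem~\ref{thm:all-level-separation} verbatim, replacing the phase-query ingredients with their standard-query counterparts from Theorem~\ref{thm:standard-query-lower}. Set $d=2k-1$ and $K=2k+1$. At each input length $n$ the oracle $O_n$ is a tuple $Z_n$ of $mK$ Boolean functions on $n$ bits with $m=2^{5K}$, exposed through the indexed standard query of Definition~\ref{def:standard-fh}. The language is the unary language
\[
L_O=\{1^n: Z_n\in\Pi^{\Yes}_n\}.
\]
Every $Z_n$ will be chosen inside $\Pi^{\Yes}_n\cup\Pi^{\No}_n$. As before, $\Pi^{\Yes}_n$ is nonempty at every length: for the all-ones tuple $\mathsf{forr}_K=1$, because $K-1$ is even. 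The set $\Pi^{\No}_n$ is nonempty for large $n$ by Proposition~\ref{prop:promise-density}, which concerns only the underlying functions and not the access model.

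For the lower bound, enumerate the uniform standard-query $\FH_k$ families as clocked generators $(G_i,c_i)$. Malformed outputs, and outputs with more than $k$ layers, are replaced by a fixed rejecting circuit. This gives query bounds $q_i(n)\le n^{c_i}$. Then choose increasing lengths $n_1<n_2<\cdots$ such that $n_i$ exceeds the threshold of Theorem~\ref{thm:standard-query-lower} and $q_i(n_i)\le N_i^{\eta_k}=2^{\eta_k n_i}$; the second condition holds eventually because $q_i$ is polynomial. At length $n_i$, Theorem~\ref{thm:standard-query-lower} supplies a promised instance on which $A^{(i)}_{n_i}$ errs with probability greater than $\tfrac13$, and we fix $Z_{n_i}$ to be that instance. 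At every other length we take an arbitrary $\No$ instance, or the all-ones $\Yes$ instance at the finitely many lengths where $\Pi^{\No}_n$ is empty. Length-preservation (Definition~\ref{def:FHk-rel}) ensures that $A^{(i)}$ on input $1^{n_i}$ sees only $Z_{n_i}$, so the family fails there, and hence $L_O\notin\FH_{k,\mathrm{std}}^O$.

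For the upper bound, use the standard-query $\FH_{k+1}$ decider from the first half of Theorem~\ref{thm:standard-query-lower}. It runs Proposition~\ref{prop:standard-forrelation-test} with $s=k+1$ on all $m$ copies with $R=O_k(1)$ repetitions, shares the $k+1$ layers, and ends with a basis-preserving OR-of-thresholds predicate. Since every $Z_n$ satisfies the promise, it decides $L_O$ with error at most $\tfrac13$ for all $n\ge n_B(k)$. The finitely many shorter lengths are hardwired, which keeps the family uniform and polynomial-size. This gives $L_O\in\FH_{k+1,\mathrm{std}}^O$, and the inclusion $\FH_{k,\mathrm{std}}^O\subseteq\FH_{k+1,\mathrm{std}}^O$ is syntactic.

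Essentially no step is hard, since all the analytic work is already in Theorem~\ref{thm:standard-query-lower}, through Theorem~\ref{thm:standard-query-growth}. The one point that needs care is that the lower bound there is stated for arbitrary, possibly nonuniform, circuits with at most $N^{\eta_k}$ queries, with no dependence on circuit size. This matters because the enumerated families may contain ancillas and Fourier wires of any polynomial size. The growth bound of Lemma~\ref{lem:standard-path-growth} allows unrestricted ambient dimensions, so this is covered. I would also check that the diagonal acceptance operator and the initial-block lifting cover circuits whose exact depth is smaller than $k$. The reduction at the start of the proof of Theorem~\ref{thm:standard-query-growth} handles those cases.
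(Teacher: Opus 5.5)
Your proposal is correct and follows the paper's proof essentially step for step. It uses the same unary language over $\mathrm{ORF}_{2k+1,n}$ instances, the same clocked enumeration with lengths chosen so that $q_i(n_i)\le 2^{\eta_k n_i}$ and Theorem~\ref{thm:standard-query-lower} applies, and the same depth-$(k+1)$ decider with finitely many lengths hardwired. The only difference is cosmetic: the paper fills every non-designated length with the all-ones $\Yes$ instance, which avoids needing $\Pi^{\No}_n\neq\emptyset$, whereas you use $\No$ instances as in Theorem~\ref{thm:all-level-separation}; either choice works.
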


\begin{proof}
Put $K=2k+1$.  At length $n$, let the oracle contain one instance $Z_n$ of
$\mathrm{ORF}_{K,n}$, and define the unary language
\[
 L_O:=\{1^n:Z_n\in\Pi_n^{\Yes}\}.
\]
Since $K=2k+1$ is odd, $K-1$ is even, and for the all-ones tuple
\eqref{eq:forr-dictionary} gives
$\mathsf{forr}_K=\frac1N\mathbf1^\top\mathsf H^{K-1}\mathbf1=\frac1N\mathbf1^\top\mathbf1=1$ by
$\mathsf H^2=I$.  Hence the all-ones instance belongs to $\Pi_n^{\Yes}$ at every length, so there is
always a promised choice.

Enumerate the uniform polynomial-query standard-query $\FH_k$ families as $(A^{(i)})_{i\ge1}$, as
clocked generators with malformed or over-depth outputs replaced by a rejecting circuit (as in the
proof of Theorem~\ref{thm:all-level-separation}), and let $q_i(n)$ be a polynomial query bound for
$A^{(i)}$.

Choose increasing lengths $n_i$ so large that
$q_i(n_i)\le 2^{\eta_k n_i}$ and Theorem~\ref{thm:standard-query-lower}
applies.  At length $n_i$, choose a promised instance on which
$A^{(i)}_{n_i}$ has error greater than $1/3$, which that theorem guarantees.
At all remaining lengths use the all-ones instance.  Then no
standard-query $\FH_k$ family decides $L_O$ with bounded error, since its
enumerated member fails at its dedicated length.  Conversely, the
standard-query $\FH_{k+1}$ circuit from
Theorem~\ref{thm:standard-query-lower} decides $L_O$ at every sufficiently
large length.  Hardwiring the finitely many remaining answers gives a uniform
standard-query $\FH_{k+1}$ family, so
$L_O\in\FH_{k+1,\mathrm{std}}^O\setminus\FH_{k,\mathrm{std}}^O$.
\end{proof}

Comparing the two models at the \emph{same} Fourier depth gives a further separation.

\begin{theorem}[Standard queries are more powerful than phase queries at fixed depth]\label{thm:phase-vs-standard}
For every constant $k\ge2$ there is an oracle $O$ such that
\[
 \FH_k^{O}\ \subsetneq\ \FH_{k,\mathrm{std}}^{O}.
\]
The separating problem is the OR of $(2k-1)$-fold Forrelation instances.
\end{theorem}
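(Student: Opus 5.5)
The plan is to reuse the separating problem $\mathrm{ORF}_{K,n}$ of Definition~\ref{def:hard-problem} at the order $K=2k-1$ (so $\delta=2^{-5K}$, $m=2^{5K}$), and to obtain the two containments from results already proved: the standard-query upper bound of Proposition~\ref{prop:standard-forrelation-test} and the phase-query lower bound of Theorem~\ref{thm:lower-worst-case}. The language and oracle come from the same diagonalization as in Theorem~\ref{thm:all-level-separation}. Both models use the same underlying functions $f_{n,a}$ (Definition~\ref{def:FHk-rel}), so a single oracle family serves both classes.

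\emph{Upper bound.} Apply Proposition~\ref{prop:standard-forrelation-test} with $s=k$. It gives a standard-query $\FH_k$ circuit making $k$ queries whose acceptance probability on one copy is $\tfrac12(1+\Phi_{2k-1})=\tfrac12(1+\mathsf{forr}_K)$.

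\begin{itemize}
\item Run $R=O(\delta^{-2}\log m)=O_k(1)$ parallel repetitions on each of the $m$ copies, all sharing the same $k$ Hadamard layers via Remark~\ref{rem:register-routing}.
\item Accept if some copy's empirical frequency exceeds $\tfrac12(1+\tfrac34\delta)$. This threshold predicate is basis-preserving, so it does not raise the depth.
\item Hoeffding's inequality and a union bound then give error at most $\tfrac13$ on every promised instance, exactly as in Theorem~\ref{thm:upper-worst-case}.
\end{itemize}

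The one point to check is that the persistent kickback target $t$ and the fresh answer bits $a$ of the different repetitions stay separate. Each repetition gets its own registers, and the acceptance set is the OR, over copies, of the thresholded counts of the bits $c\oplus a$.

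\emph{Lower bound.} Theorem~\ref{thm:lower-worst-case} states that every phase-query $\FH_k$ circuit making $q\le N^{c_k}$ queries fails on some instance of $\mathrm{ORF}_{2k-1,n}$ with error greater than $\tfrac13$, for $n\ge n_0(k)$. This is precisely the phase-query statement needed. I would note that its proof applies to the indexed phase oracle of Definition~\ref{def:FHk-rel}: index-$0$ queries and queries to padding indices carrying the constant-zero function contribute no variables, so the round simulation of Lemma~\ref{lem:round-embedding} treats the oracle as the concatenated truth table of length $M=mKN$.

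\emph{Diagonalization.} This step copies the proof of Theorem~\ref{thm:all-level-separation}.

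\begin{itemize}
\item Enumerate the uniform phase-query $\FH_k$ families as clocked generators with polynomial query bounds $q_i$.
\item Pick increasing lengths $n_i\ge n_0(k)$ with $q_i(n_i)\le 2^{c_k n_i}$.
\item At length $n_i$, fix an instance $Z_{n_i}$ on which the $i$th family errs, as guaranteed by Theorem~\ref{thm:lower-worst-case}.
\item At every other length, use the all-ones instance. It lies in $\Pi^{\Yes}_n$ because $K-1$ is even.
\item Set $L_O=\{1^n: Z_n\in\Pi^{\Yes}_n\}$.
\end{itemize}

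Then $L_O\notin\FH_k^O$. The amplified standard-query circuit decides $L_O$ at all large $n$, and hardwiring finitely many answers gives $L_O\in\FH_{k,\mathrm{std}}^O$. For the inclusion $\FH_k^O\subseteq\FH_{k,\mathrm{std}}^O$, which makes the separation proper, Remark~\ref{rem:phase-vs-bit} shows that for $k\ge1$ every phase-query $\FH_k$ circuit is a standard-query circuit of the same depth, using a $\ket-$ kickback target prepared at the first layer. I expect no real obstacle. The only delicate point is that this inclusion uses the first Hadamard layer to prepare $\ket-$, and that the routing in Remark~\ref{rem:register-routing} keeps the target off later layers. That is fine because $k\ge2$.
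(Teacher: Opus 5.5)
Your proposal is correct and follows the paper's proof. The paper obtains the inclusion by phase kickback (Remark~\ref{rem:phase-vs-bit}) and reuses the oracle and language of Theorem~\ref{thm:all-level-separation} for the phase-query lower bound. For the upper bound it applies Proposition~\ref{prop:standard-forrelation-test} with $s=k$, amplified over the $m$ copies and combined by an OR. The only difference is that you re-run the diagonalization and plant the all-ones instance at undesignated lengths instead of a $\No$ instance; this is harmless, since the depth-$k$ standard-query decider is correct on every promised instance.
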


\begin{proof}
By Remark~\ref{rem:phase-vs-bit}, phase kickback places $\FH_k^{O}\subseteq\FH_{k,\mathrm{std}}^{O}$
for every oracle $O$, so it remains to make one inclusion strict.  Take the oracle $O$ and unary
language $L_O$ of Theorem~\ref{thm:all-level-separation}, whose instances are drawn from
$\mathrm{ORF}_{2k-1,n}$; that theorem proves $L_O\notin\FH_k^{O}$ in the phase-query model.  On the
other hand, Proposition~\ref{prop:standard-forrelation-test} with $s=k$ gives a uniform
standard-query $\FH_k$ circuit whose acceptance probability on a single instance is
$\tfrac12(1+\mathsf{forr}_{2k-1})$; amplifying over the $m$ copies and taking their OR, exactly as in
the upper bound of Theorem~\ref{thm:standard-query-lower}, decides $L_O$ with bounded error.  Hence
$L_O\in\FH_{k,\mathrm{std}}^{O}$, and $\FH_k^{O}\subsetneq\FH_{k,\mathrm{std}}^{O}$.
\end{proof}

Adaptive classical access does not by itself account for the advantage that
Theorem~\ref{thm:phase-vs-standard} exhibits: the separation survives when the phase model is granted
such access at no cost.  Write
$\P^{O}\text{-}\FH_k^{O}$ for the class of promise problems decided with bounded error by a uniform
family that first runs a polynomial-time adaptive classical computation with oracle access, and then,
on each classical transcript, a phase-$\FH_k$ circuit.

\begin{proposition}[Classical preprocessing does not close the gap]\label{prop:classical-prefix-separation}
Fix $k\ge2$ and let $c_k=\frac{1}{4(2k-1)(2k-2)}$.  For all sufficiently large $n$,
no algorithm that makes at most $N^{c_k}$ adaptive classical queries and then runs a phase-$\FH_k$
circuit making at most $N^{c_k}$ phase queries decides $\mathrm{ORF}_{2k-1,n}$ with error at most
$\tfrac13$ on every promised instance.  Consequently the oracle $O$ of
Theorem~\ref{thm:phase-vs-standard} satisfies
\[
\FH_{k,\mathrm{std}}^{O}\ \not\subseteq\ \P^{O}\text{-}\FH_{k}^{O}.
\]
\end{proposition}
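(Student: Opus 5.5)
The plan is to rerun the proof of Theorem~\ref{thm:lower-worst-case} with one change in Step~3: Corollary~\ref{cor:FHk-growth} is replaced by the classical-preprocessing lifting bound, Lemma~\ref{lem:classical-prefix-lifting}.

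Let the algorithm make $d_0\le N^{c_k}$ adaptive classical queries and then, at each leaf $z$ of its decision tree, run a phase-$\FH_k$ circuit $C_z$ with $q\le N^{c_k}$ queries. Its acceptance probability is $g=\sum_z\chi_z g_z$, where $g_z$ is the acceptance probability of $C_z$ after the path answers are fixed. Each $g_z$, and each of its restrictions, is the acceptance probability of an $\FH_k$ circuit; after the round simulation (Lemma~\ref{lem:round-embedding}) it is a $(k-1)$-round algorithm with at most $q$ parallel queries per round, or oracle-independent if it makes no queries. The second inequality of Lemma~\ref{lem:classical-prefix-lifting} with $r=k-1$ and $t=\max(q,1)$ then gives, for every restriction $\rho$,
\[
L_{1,\ell}(g_\rho)\le 2^{O_{k,\ell}(1)}(d_0+q)^{\ell}M^{\frac{\ell}{2}(1-\frac1{2k-2})}.
\]
An oracle-independent leaf contributes only through the first inequality of the lemma, which is bounded by the same expression. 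This bound has the same shape as Corollary~\ref{cor:FHk-growth}, with $q$ replaced by $d_0+q\le 2N^{c_k}$.

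Steps~1 and~2 of Theorem~\ref{thm:lower-worst-case} use nothing about the algorithm beyond its acceptance probability, so they carry over unchanged. Suppose the algorithm decides the promise. Proposition~\ref{prop:promise-density} then gives a constant advantage between the two product distributions, and the hybrid argument yields a restriction $h$ to one Forrelation block with advantage at least $1/(4m)$. Every further restriction of $h$ is a restriction of $g$, so the displayed bound applies to it. Theorem~\ref{thm:bs-transfer} converts this into biased weights, picking up a factor $4^\ell$, and Theorem~\ref{thm:bs-bias} then bounds the advantage by
\[
C_k\sum_{\ell=K}^{K(K-1)}(2N^{c_k})^{\ell}N^{-\ell/(2(2k-1)(2k-2))}=O_k\bigl(N^{-1/(4(2k-2))}\bigr),
\]
which is below $1/(4m)$ for large $n$. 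This is a contradiction, and it proves the first claim.

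For the class statement, use the oracle $O$ and the language $L_O$ of Theorem~\ref{thm:phase-vs-standard}. Membership $L_O\in\FH^O_{k,\mathrm{std}}$ is already proved there. For the non-membership, redo the diagonalization of Theorem~\ref{thm:all-level-separation} against an enumeration of clocked uniform $\P^{O}\text{-}\FH_k^{O}$ families. Each family has polynomial bounds on both its classical queries and its phase queries, and these bounds fall below $N^{c_k}=2^{c_kn}$ at large lengths. At the length $n_i$ dedicated to the $i$th family, the first claim supplies a promised instance on which that family fails. The oracle therefore has to be built by this enlarged diagonalization; I read ``the oracle of Theorem~\ref{thm:phase-vs-standard}'' as meaning this construction. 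The depth-$k$ standard-query decider remains correct because it works on every promised instance.

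The step I expect to need most care is the application of Lemma~\ref{lem:classical-prefix-lifting}. Its hypothesis asks for continuations with a fixed round count $r$ and parallel width $t$. Here the leaf circuits vary, so I will pad them: add empty rounds where a circuit has fewer than $k-1$, and add dummy index-$0$ queries so every round has width exactly $t=\max(q,1)$. The lemma also needs the restriction-stability clause, which requires that fixing path coordinates commutes with the round simulation. This holds because fixing oracle bits in a phase-query circuit again gives a phase-query $\FH_k$ circuit, or one can apply the restriction clause of Theorem~\ref{thm:gstw-growth} directly.
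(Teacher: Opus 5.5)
Your proposal is correct and follows essentially the same route as the paper. You lift the round-simulated continuations through Lemma~\ref{lem:classical-prefix-lifting} with $r=k-1$ and $t=\max\{1,q\}$, padding with dummy index-$0$ queries, then rerun Steps~1--4 of Theorem~\ref{thm:lower-worst-case} with $d_0+t$ in place of $q$, and finally redo the diagonalization against the clocked $\P^{O}\text{-}\FH_k$ families, which is also how the paper obtains the oracle; the only difference is cosmetic: you absorb the factor $2^{\ell}$ from $(2N^{c_k})^{\ell}$ directly into $C_k$, whereas the paper bounds $d_0+t\le N^{3c_k/2}$ and appeals to Remark~\ref{rem:parametric-lower-bound}.
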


\begin{proof}
Let $d_0\le N^{c_k}$ be the number of classical queries and $q\le N^{c_k}$ the number of phase
queries.  By Lemma~\ref{lem:round-embedding} each continuation is an algorithm with $r=k-1$ rounds of
at most $q_+:=\max\{1,q\}$ parallel phase queries, a continuation without queries being padded with
one query to the dummy index $0$, so Lemma~\ref{lem:classical-prefix-lifting} applies with $t=q_+$
and gives, for every restriction and every $\ell\ge1$,
\[
L_{1,\ell}(f_\rho)\ \le\ 2^{O_{k,\ell}(1)}\,\bigl(d_0+q_+\bigr)^{\ell}\,
M^{\frac{\ell}{2}\left(1-\frac1{2k-2}\right)}.
\]
This is the bound of Corollary~\ref{cor:FHk-growth} with $q$ replaced by $d_0+q_+$.  The proof of
Theorem~\ref{thm:lower-worst-case} uses the growth bound only through that factor, so running Steps
1--4 verbatim with $d_0+q_+$ in place of $q$ shows that the advantage is $o_k(1)$ provided
$d_0+q_+\le N^{c}$ for some $c<2c_k$, since by Remark~\ref{rem:parametric-lower-bound} the argument of
Theorem~\ref{thm:lower-worst-case} applies to every such query bound.  Since $d_0\le N^{c_k}$ and $q_+\le N^{c_k}$, we have
$d_0+q_+\le2N^{c_k}\le N^{3c_k/2}$ for large $n$, and $3c_k/2<2c_k$, so the conclusion of
Theorem~\ref{thm:lower-worst-case} holds: some promised instance is decided with error greater than
$\tfrac13$.

For the consequence, the diagonalization of Theorem~\ref{thm:all-level-separation} may be run against
the enumeration of uniform $\P^{O}\text{-}\FH_k$ families with explicit polynomial clocks in place of
the $\FH_k$ families, since the bound above is uniform over transcripts; it fixes an oracle $O$ with
$L_O\notin\P^{O}\text{-}\FH_k^{O}$.  The upper bound in Theorem~\ref{thm:phase-vs-standard} is
unchanged, so $L_O\in\FH_{k,\mathrm{std}}^{O}$.
\end{proof}

An adaptive classical preprocessing stage therefore does not close the gap between the two access
models.

\begin{remark}[One order of Forrelation at fixed depth]\label{rem:phase-vs-standard-reach}
Within the Forrelation family the two models differ by exactly one order.  At Fourier depth
$k$, phase access solves $(2k-2)$-fold Forrelation (Proposition~\ref{prop:even-forrelation}) but not
$(2k-1)$-fold Forrelation (Theorem~\ref{thm:lower-worst-case}), whereas standard access solves
$(2k-1)$-fold Forrelation (Proposition~\ref{prop:standard-forrelation-test}).  The OR of
$(2k-1)$-fold Forrelation therefore lies in standard-query depth $k$ but not in phase-query depth
$k$.  The upper bound used here is the Aaronson--Ambainis test at depth $k$; unlike the
adjacent separation in the standard-query model, it does not use the standard-query Fourier growth
theorem (Theorem~\ref{thm:standard-query-growth}).
\end{remark}

\subsection{Standard queries versus bounded Fourier depth}\label{subsec:pointer-chasing}

Within the Forrelation family, standard access reaches one order beyond phase access
(Remark~\ref{rem:phase-vs-standard-reach}).  As a statement about the classes, the advantage is
not bounded at all: relative to a suitable oracle, no constant number of Hadamard layers suffices for
a phase circuit to simulate a standard-query circuit, and already $\FH_{0,\mathrm{std}}^{O}$ is
contained in no level of the phase hierarchy.  The reason is visible in the two simulations proved
above.  A phase query acts trivially on a basis state, so the initial block of a phase circuit
contributes only a global phase, and Lemma~\ref{lem:round-embedding} places $\FH_k^{O}$ inside the
model with $k-1$ rounds of parallel queries.  A standard query writes its answer into a register, so
the initial block of a standard-query circuit is instead an adaptive classical computation, the
classical preprocessing that appears in Proposition~\ref{prop:three-round-bit-embedding}.  The
problem that separates the two models is pointer chasing; recall from Lemma~\ref{lem:fh0-std} that
$\FH_{0,\mathrm{std}}^{O}=\P^{O}$ for every oracle $O$.

Fix a polynomial-time computable schedule $1\le d(n)\le\poly(n)$, and fix a polynomial-time encoding
$\langle\cdot\rangle$ of the triples $(1^n,j,i)$ with $1\le j\le d(n)$ and $1\le i\le n$ into strings
of a length $n'=n'(n)\ge n$ from which $n$, $j$ and $i$ are recoverable, with distinct $n$ giving
distinct $n'$.  For a function $F_n:\bits^n\to\bits^n$, use $n$ of the oracle indices at length $n'$
to encode it by
\[
f_{n',i}(z)=F_n(z_{1:n})_i\qquad(z\in\bits^{n'},\ i\in[n]),
\]
where $z_{1:n}$ is the prefix of $z$ of length $n$, leave the remaining indices constant zero, and
set $x_0=0^n$ and $x_j=F_n(x_{j-1})$.  Define
\begin{equation}\label{eq:pointer-chasing-language}
 L_O=\bigl\{\langle 1^n,j,i\rangle\ :\ 1\le j\le d(n),\ 1\le i\le n,\ (x_j)_i=1\bigr\}.
\end{equation}
This keeps the length-preserving convention of Definition~\ref{def:FHk-rel}, under which a circuit
on an input of length $n'$ queries only $O_{n'}$; a query to $f_{n',i}$ at $z$ is answered by one
query to $F_n$ at $z_{1:n}$.

\begin{theorem}[Quantum sequentiality for exact chains; Chung, Fehr, Huang, and
Liao~\cite{chung2021compressed}]\label{thm:cfhl-chain}
Let $F:\bits^n\to\bits^n$ be uniformly random.  A quantum algorithm making $r$ rounds of at most $t\ge1$
parallel queries to $F$ outputs a sequence $x_0,x_1,\dots,x_{r+1}$ with $x_i=F(x_{i-1})$ for every
$1\le i\le r+1$ with probability $O\bigl(t^3(r+1)^3/2^n\bigr)$.  The starting point $x_0$ is free, so
the bound holds a fortiori when $x_0$ is prescribed.
\end{theorem}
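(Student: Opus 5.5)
This theorem is imported from Chung, Fehr, Huang, and Liao rather than proved here, so the plan below reconstructs their argument. The tool is the compressed-oracle technique for random functions $F:\bits^n\to\bits^n$. Purify the uniformly random $F$ as a database register initialized to the all-$\bot$ state. Each query is then a unitary that records the queried point together with a superposition of its value. After any sequence of rounds, the database is supported on partial functions of size at most the number of queries made.

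The core is a progress measure, established by induction on rounds. Let $P_s$ be the projector onto databases that contain a chain $x_0\to x_1\to\cdots\to x_s$ of recorded pairs with $D(x_{i-1})=x_i$. The claim is that after $j$ rounds of $t$ parallel queries, the norm of the component in $P_{j+1}$ is $O\bigl(\sqrt{t^{3}j^{3}\cdots/2^n}\bigr)$, more precisely that the squared norm is bounded by roughly $t^3j^3/2^n$.

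The step is a local analysis of one parallel batch. Here the key point is that the $t$ simultaneous queries can extend a chain by at most one link, except with small amplitude. Two events could extend it further. The first is that a freshly sampled value hits an already recorded input. The second is that two fresh values in the same batch hit each other's inputs. Each fresh value is uniform over $2^n$, and the database has at most $tj$ entries. Bounding the operator norm of "query followed by projection onto a new longer chain" via the standard compressed-oracle transition bounds gives an amplitude of $O\bigl(\sqrt{tj\cdot t/2^n}\bigr)$ per batch. Summing these amplitudes over $r$ rounds by the triangle inequality, and then squaring, yields the $t^3r^3/2^n$ scaling up to constants.

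To finish, one relates the algorithm's output to the database. If it outputs a correct chain of length $r+1$ but the database holds no such chain, then the final verification must either query unrecorded points or guess values. This costs $O(r/2^n)$ by the standard lemma that, for an unrecorded point, the probability of correctly guessing the value of a compressed-oracle entry is $O(1/2^n)$. The main obstacle is the within-batch parallelism. Parallel queries are not sequential compositions, so one must show that a single batch cannot create links forming two consecutive chain steps except through collisions among fresh values. I would first handle this by treating the batch as $t$ commuting local updates and union-bounding over ordered pairs of positions in the batch. The question is whether that union bound loses only the factor $t^2$ that the target $t^3$ exponent allows.

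Since the statement is imported, the paper presumably cites \cite{chung2021compressed} here without proof.
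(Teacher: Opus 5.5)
Your overall route is the one the paper relies on. The paper gives no proof of its own: it cites Chung--Fehr--Huang--Liao and notes that their cubed parameter is the chain length, here $r+1$, with strictly fewer rounds than that length. Their argument is the compressed-oracle progress measure you describe: no $(j+1)$-link chain after $j$ rounds, a per-round transition bound, and a final extraction step.

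The quantitative core of your reconstruction is off, however. For one batch of $t$ parallel queries, the compressed-oracle transition bound is a triangle inequality over the $t$ positions. The amplitude moved into databases with a longer chain is at most $\sum_{i=1}^{t}\sqrt{c\,p_i}$ for an absolute constant $c$. Here $p_i=O\bigl(t(j+1)/2^n\bigr)$ is the probability that the value at position $i$, after its local update, lands on one of the at most $tj$ recorded inputs or the $t$ inputs of the current batch. The batch inputs include position $i$'s own input, which covers fixed points. This gives $O\bigl(t\sqrt{t(j+1)/2^n}\bigr)$ per batch, not the $O\bigl(\sqrt{t\cdot tj/2^n}\bigr)$ you assert. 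Your bound is the square root of the classical union bound. It would require the bad components produced at different positions of the batch to be essentially orthogonal, and the standard transition-capacity bounds do not provide that. Your own arithmetic also does not close: summing $\sqrt{t^2 j/2^n}$ over $j\le r$ and squaring gives $t^2r^3/2^n$. That is neither the $t^3r^3/2^n$ you conclude nor the $t^3j^3/2^n$ invariant you state earlier.

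With the correct per-batch bound the argument goes through. The sum $\sum_{j=0}^{r-1}t^{3/2}\sqrt{(j+1)/2^n}$ is $O\bigl(t^{3/2}(r+1)^{3/2}/2^{n/2}\bigr)$, and squaring gives exactly $O\bigl(t^3(r+1)^3/2^n\bigr)$. This answers your closing question. One factor of $t$ comes from the number of recorded or in-batch inputs that a single fresh value can hit. The other two come from squaring a sum of $t$ amplitudes. So your union bound over ordered pairs is fine, provided it is taken in amplitude, position by position.

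The extraction step also needs care. It is additive in amplitude, $\sqrt{\Pr[\text{success}]}\le\Norm{P_{r+1}\ket{\psi}}+\sqrt{(r+1)/2^n}$, because the output asserts $r+1$ input--output pairs. Your $O(r/2^n)$ should be $O\bigl((r+1)/2^n\bigr)$. This matters at $r=0$: your bound vanishes there, yet a query-free guess of $x_1=F(x_0)$ succeeds with probability $2^{-n}$. That edge case is exactly why the paper writes the bound with $r+1$ rather than $r$.
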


In~\cite{chung2021compressed} the parameter that is cubed in the numerator is the \emph{chain
length}, which here equals $r+1$, and the algorithm is allowed strictly fewer query rounds than that
length.  We have written the bound in terms of the number of rounds $r$ accordingly, so that it
remains valid at $r=0$ and $t\ge1$, where producing $x_1=F(x_0)$ without querying succeeds with
probability $2^{-n}$.

\begin{remark}[One bit query costs one word query]\label{rem:bit-to-word-query}
Theorem~\ref{thm:cfhl-chain} is stated for queries to a word-valued oracle, whereas
Lemma~\ref{lem:round-embedding} gives algorithms making phase queries to the single-bit functions
$f_{n',a}(z)=F_n(z_{1:n})_a$ defined above, each of which is answered by a word query to $F_n$ at
$y=z_{1:n}$.  The conversion does not increase the number
of rounds.  Write $e(a)=e_a$ for the assigned indices $a\in[n]$ and $e(a)=0^n$ for the dummy and
unassigned indices, whose functions are constant zero.  Prepare the target register, controlled on the
index $a$, in the state $\Had\ket{e(a)}$, and apply the word query
$\ket y\ket s\mapsto\ket y\ket{s\oplus F_n(y)}$.  Its action is
\[
\ket y\,\Had\ket{e(a)}
\ \mapsto\
2^{-n/2}\sum_{s}(-1)^{e(a)\cdot s}\ket y\ket{s\oplus F_n(y)}
=(-1)^{e(a)\cdot F_n(y)}\,\ket y\,\Had\ket{e(a)}.
\]
With $y=z_{1:n}$, the exponent $e(a)\cdot F_n(y)$ equals $F_n(y)_a=f_{n',a}(z)$ at the assigned
indices $a\in[n]$, and $0$ otherwise, in agreement with the constant-zero functions.  One word query
therefore realizes the required phase bit query and restores the target unchanged.  The preparation
and its inverse are oracle-free, so a round of $t$ parallel phase bit queries becomes a round of $t$
parallel word queries.
\end{remark}

\begin{theorem}[Pointer chasing separates the two models]\label{thm:pointer-chasing}
$L_O\in\P^{O}$ for every $O$.  Let $F_n$ be drawn uniformly and independently at each length, and let
$\{\mathcal A_n\}_n$ be a family of quantum algorithms, each using $r$ rounds of $\poly(n)$ parallel
queries per round, where $r+1\le d(n)$ for all large $n$.  Then, with probability $1$ over the draw,
there are only finitely many $n$ at which $\mathcal A_n$ decides $L_O$ with error at most $\tfrac13$
on every input with parameter $n$.
\end{theorem}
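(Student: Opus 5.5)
The membership $L_O\in\P^{O}$ is direct: on input $\langle 1^n,j,i\rangle$, decode $(n,j,i)$ in polynomial time, start from $x_0=0^n$, and compute $x_1,\dots,x_j$ sequentially. Each step $x_t=F_n(x_{t-1})$ costs $n$ classical queries, one to $f_{n',a}$ at $z=(x_{t-1},0^{n'-n})$ for each $a\in[n]$. Since $j\le d(n)\le\poly(n)$, this is $\poly(n)$ queries and time. Output $(x_j)_i$. By Lemma~\ref{lem:fh0-std} this places $L_O$ in $\FH_{0,\mathrm{std}}^{O}$ as well.

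For the lower bound I would reduce deciding $L_O$ at length parameter $n$ to outputting a chain, and then invoke Theorem~\ref{thm:cfhl-chain}. Fix $n$ and suppose $\mathcal A_n$ decides $L_O$ with error $\le\tfrac13$ on all inputs with parameter $n$. Build an algorithm $\mathcal B$ with the same $r$ rounds that outputs $x_1,\dots,x_{r+1}$ with $x_0=0^n$.

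\begin{itemize}
\item[(i)] Amplify each bit decision to error $2^{-2n}/(n(r+1))$ using $O(n)$ independent parallel copies and a majority vote. Copies run in parallel, so the number of rounds stays $r$ and the width stays $\poly(n)$.
\item[(ii)] Run these amplified procedures in parallel for every pair $(j,i)$ with $j\le r+1\le d(n)$ and $i\in[n]$. This is $\poly(n)$ parallel instances sharing rounds, so width is still $\poly(n)$.
\item[(iii)] Convert each phase bit-query into one word query to $F_n$ by Remark~\ref{rem:bit-to-word-query}, which preserves the round count.
\end{itemize}

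By a union bound, $\mathcal B$ outputs the correct chain with probability at least $1-2^{-n}$. This holds for every $F_n$ on which $\mathcal A_n$ is correct on all parameter-$n$ inputs.

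Now let $E_n$ be the event over $F_n$ that $\mathcal A_n$ decides all parameter-$n$ inputs correctly. Then
\[
\Pr[E_n](1-2^{-n})\le\Pr_{F_n}[\mathcal B\text{ outputs a valid chain}]\le O\bigl(t^3(r+1)^3/2^n\bigr),
\]
with $t=\poly(n)$ and $r+1\le d(n)=\poly(n)$. So $\Pr[E_n]\le\poly(n)2^{-n}$, which is summable over $n$. Since the $F_n$ are independent across lengths and each input length $n'$ determines $n$, Borel--Cantelli gives that with probability $1$ only finitely many $E_n$ occur.

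The main obstacle I expect is making the reduction respect rounds exactly. The oracle is accessed only through the length-$n'$ bit functions, and the algorithm may also query indices other than $1,\dots,n$. Those are constant zero and are handled by the $e(a)=0^n$ case of the remark. The other care point is taking $r$ and the polynomial width uniform in $n$ (or at least $r+1\le d(n)$ and $t=\poly(n)$) so the bound $t^3(r+1)^3/2^n$ stays summable. If $r$ is allowed to depend on $n$, the same computation goes through as long as $r(n)+1\le d(n)$, which the hypothesis provides.
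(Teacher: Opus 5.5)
Your proposal is correct and follows essentially the same route as the paper: parallel repetition with a majority vote, all parameter-$n$ inputs run simultaneously inside the same $r$ rounds, conversion of phase bit queries into word queries by Remark~\ref{rem:bit-to-word-query}, the chain bound of Theorem~\ref{thm:cfhl-chain}, and Borel--Cantelli applied to the summable bound on $\Pr[E_n]$. The differences are cosmetic: you evaluate only the indices $j\le r+1$, and you amplify to success $1-2^{-n}$ rather than the paper's $\tfrac9{10}$; neither change affects the argument.
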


\begin{proof}
On input $\langle 1^n,j,i\rangle$ the chain $x_1,\dots,x_j$ is computed by $j\le d(n)$ rounds of $n$
classical queries each, so $L_O\in\P^{O}$.

Fix $n$, let $t_0=\poly(n)$ bound the parallelism of $\mathcal A_n$, and let $E_n$ be the event, over
the draw of $F_n$, that $\mathcal A_n$ decides $L_O$ with error at most $\tfrac13$ on every input
$\langle1^n,j,i\rangle$.  Independent repetitions share the same rounds, so running
$O(\log(d(n)n))$ copies of $\mathcal A_n$ and taking a majority gives an algorithm $\mathcal A'$ with
$r$ rounds whose error on each fixed input is, on $E_n$, below $1/(10\,d(n)n)$.  The $d(n)n$ inputs
with parameter $n$ can all be run simultaneously in the same $r$ rounds, and by a union bound, on
$E_n$, every output is correct with probability at least $\tfrac9{10}$ over the measurements.

Those outputs determine the entire list $x_1,\dots,x_{d(n)}$, and in particular the chain
$x_0\mapsto x_1\mapsto\cdots\mapsto x_{r+1}$ that
Theorem~\ref{thm:cfhl-chain} forbids.  The total parallelism is
$t=O\bigl(t_0\,d(n)\,n\log(d(n)n)\bigr)=\poly(n)$, and by Remark~\ref{rem:bit-to-word-query} these
are $t$ parallel word queries to $F_n$ in the same $r$ rounds.  The algorithm that runs
$\mathcal A'$ and outputs the chain therefore succeeds with probability at least
$\tfrac9{10}\Prb[E_n]$ over the draw of $F_n$ and its own randomness, and
Theorem~\ref{thm:cfhl-chain} bounds that probability by $O\bigl(t^3(r+1)^3/2^n\bigr)$.  Hence
$\Prb[E_n]=O\bigl(t^3(r+1)^3/2^n\bigr)=2^{-\Omega(n)}$, which is summable in $n$, and the
Borel--Cantelli lemma gives the claim.
\end{proof}

\begin{corollary}[The containment fails at every level]\label{cor:std-not-in-phase}
For every constant $j\ge1$ there is an oracle $O$ with
$\FH_{0,\mathrm{std}}^{O}\not\subseteq\FH_{j}^{O}$.  In particular, for every $k\ge0$ there is an
oracle with $\FH_{k,\mathrm{std}}^{O}\not\subseteq\FH_{k+1}^{O}$.
\end{corollary}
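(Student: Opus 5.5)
The plan is to diagonalize using the pointer-chasing language $L_O$ of \eqref{eq:pointer-chasing-language}. Fix $j\ge1$ and choose the schedule $d(n)=j+1$, or any polynomial-time $d(n)\ge j+1$. By Theorem~\ref{thm:pointer-chasing}, $L_O\in\P^{O}$ for every oracle family $O$, and Lemma~\ref{lem:fh0-std} gives $\P^{O}=\FH_{0,\mathrm{std}}^{O}$. So it suffices to produce one oracle relative to which $L_O\notin\FH_j^{O}$.

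For the lower bound, let $\{C_n\}$ be any uniform phase-query $\FH_j$ family with polynomial query bound $q(n)$. Lemma~\ref{lem:round-embedding} simulates each $C_{n'}$ exactly by an algorithm with $r=j-1$ rounds of at most $q(n')=\poly(n)$ parallel phase queries to the bit functions $f_{n',a}$. Remark~\ref{rem:bit-to-word-query} turns each such bit query into one word query to $F_n$ without adding rounds. Since $r+1=j\le d(n)$, Theorem~\ref{thm:pointer-chasing} applies to the family.

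We draw each $F_n$ uniformly and independently. For each fixed family, with probability $1$ it decides $L_O$ correctly at only finitely many $n$. The uniform $\FH_j$ families form a countable set, enumerated as clocked generators exactly as in the proof of Theorem~\ref{thm:all-level-separation}. A countable intersection of probability-one events has probability one, so some fixed draw $O$ defeats every family at infinitely many lengths. Hence $L_O\notin\FH_j^{O}$.

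The second assertion follows by taking $j=k+1$ and using $\FH_{0,\mathrm{std}}^{O}\subseteq\FH_{k,\mathrm{std}}^{O}$ (Lemma~\ref{lem:fh0-std}).

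I expect the main obstacle to be bookkeeping rather than anything conceptual. First, the error guarantee must be checked. The family is assumed correct only with error $\tfrac13$ on every input of parameter $n$, and this is exactly the event $E_n$ in Theorem~\ref{thm:pointer-chasing}, whose proof already handles amplification and the union bound over inputs. Second, the simulation must stay polynomial in $n$, which holds because $n'$ is polynomially related to $n$. Third, the diagonalization must not need explicit instances, because the probabilistic argument covers all families at once.
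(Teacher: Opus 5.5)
Your proposal is correct and follows the paper's proof essentially step for step. It uses the pointer-chasing language with Theorem~\ref{thm:pointer-chasing}, the round simulation of Lemma~\ref{lem:round-embedding}, a countable intersection of probability-one events over the uniform families, and $\P^{O}=\FH_{0,\mathrm{std}}^{O}$ from Lemma~\ref{lem:fh0-std}; the only difference is that the paper takes $d(n)=j$ rather than $d(n)=j+1$, which does not matter because both satisfy $r+1=j\le d(n)$.
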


\begin{proof}
Take $d(n)=j$ in \eqref{eq:pointer-chasing-language}.  By Lemma~\ref{lem:round-embedding} an
$\FH_j^{O}$ circuit making $\poly(n)$ phase queries is simulated exactly by $r=j-1$ rounds of
$\poly(n)$ parallel queries, and $r+1=j\le d(n)$, so by Theorem~\ref{thm:pointer-chasing}, for $F=(F_n)_n$ drawn uniformly, each fixed uniform
$\FH_j$ family almost surely decides $L_O$ at only finitely many lengths.  The uniform families are
countable, so almost surely this holds for all of them at once, and any such draw gives an oracle
$O$ with $L_O\notin\FH_j^{O}$; meanwhile $L_O\in\P^{O}=\FH_{0,\mathrm{std}}^{O}$ for every $O$ by
Lemma~\ref{lem:fh0-std}.  Taking $j=k+1$ and using
$\FH_{0,\mathrm{std}}^{O}\subseteq\FH_{k,\mathrm{std}}^{O}$ gives the second statement.
\end{proof}

\begin{corollary}[The phase hierarchy does not contain $\P$]\label{cor:p-not-in-fh}
There is an oracle $O$ with $\P^{O}\not\subseteq\bigcup_{k\ge0}\FH_k^{O}$.
\end{corollary}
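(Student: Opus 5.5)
The plan is to combine the pointer-chasing separation with a single oracle that defeats every level of the phase hierarchy at once. Corollary~\ref{cor:std-not-in-phase} already handles each fixed level $j$ through a schedule $d(n)=j$. For the union over all $k$ we take instead a slowly growing schedule $d(n)$ and let the language $L_O$ of \eqref{eq:pointer-chasing-language} live in $\P^{O}$ for every oracle. This holds by the first clause of Theorem~\ref{thm:pointer-chasing}: the chain is computed by $d(n)$ rounds of $n$ classical queries, which is polynomial as long as $d(n)\le\poly(n)$. The concrete choice is $d(n)=n$, or even $d(n)=\lceil\log n\rceil$; all that matters is that $d(n)\to\infty$ and that $d$ is polynomial-time computable.

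Next we fix a uniform family $\{\mathcal A_n\}$ in $\FH_k^{O}$ for some constant $k$. By Lemma~\ref{lem:round-embedding}, each $\mathcal A_n$ is simulated exactly by an algorithm with $r=k-1$ rounds of $\poly(n)$ parallel phase queries to the bit functions $f_{n',a}$. By Remark~\ref{rem:bit-to-word-query}, these become $\poly(n)$ parallel word queries to $F_n$ in the same $r$ rounds. Since $d(n)\to\infty$, we have $r+1=k\le d(n)$ for all large $n$. Theorem~\ref{thm:pointer-chasing} then shows that, with probability $1$ over a uniformly random draw of $(F_n)_n$, the family decides $L_O$ correctly at only finitely many lengths, and hence does not decide $L_O$.

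To finish, we note that the set of uniform $\FH_k$ families, taken over all $k\ge0$, is countable: the clocked-generator enumeration from the proof of Theorem~\ref{thm:all-level-separation} works for each $k$, and a countable union of countable sets is countable. A countable intersection of probability-one events has probability one, so almost every draw gives an oracle $O$ with $L_O\notin\FH_k^{O}$ for every $k$. The case $k=0$ is covered trivially by the same argument, with zero rounds. Since $L_O\in\P^{O}$, this gives $\P^{O}\not\subseteq\bigcup_k\FH_k^{O}$.

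The main obstacle is checking that Theorem~\ref{thm:pointer-chasing} applies with a growing schedule. Its hypotheses already allow any $d(n)\le\poly(n)$ with $r+1\le d(n)$ for large $n$. The one thing to verify is that the failure probability bound $O(t^3(r+1)^3/2^n)$ remains summable when the total parallelism $t=O(t_0\,d(n)\,n\log(d(n)n))$ is inflated by the number of simultaneously answered inputs $d(n)\,n$. With $d(n)=n$ this is still $\poly(n)$, so the bound is $2^{-\Omega(n)}$ and Borel--Cantelli applies as before.
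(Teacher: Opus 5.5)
Your proposal is correct and follows the paper's proof essentially step for step. A growing schedule $d(n)$ keeps $L_O\in\P^{O}$, and the round simulation together with Theorem~\ref{thm:pointer-chasing} defeats each fixed-depth family almost surely. Countability of the pairs (family, level) then lets a single random draw defeat all of them.

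The only difference is that the paper takes $d(n)=\max\{1,\lceil\log_2 n\rceil\}$ rather than $d(n)=n$. This is immaterial, since Theorem~\ref{thm:pointer-chasing} allows any polynomial schedule. Also, $k=0$ is simply subsumed by $k=1$, because $\FH_0^{O}\subseteq\FH_1^{O}$.
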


\begin{proof}
Take $d(n)=\max\{1,\lceil\log_2 n\rceil\}$, so that $L_O\in\P^{O}$ still.  For each constant $j$ the round budget
$r=j-1$ of Lemma~\ref{lem:round-embedding} satisfies $r+1=j\le d(n)$ at all large $n$, so by Theorem~\ref{thm:pointer-chasing} each fixed uniform family of any constant depth $j$ almost
surely decides $L_O$ at only finitely many lengths.  The pairs consisting of a family and a level are
countable, so a single draw of $F$ almost surely defeats all of them.
\end{proof}

Two distinct advantages are involved here.  The advantage that
Corollary~\ref{cor:std-not-in-phase} exhibits is matched by no constant number of Hadamard layers, and its cause is unrelated to interference: the standard-query
model contains adaptive classical computation and the phase model does not.
Theorem~\ref{thm:phase-vs-standard} is the separation that remains once this advantage is removed.
Its separating problem, the OR of $(2k-1)$-fold Forrelation, is classically hard~\cite{bansal2021k}, so an adaptive classical stage does not decide it; Proposition~\ref{prop:classical-prefix-separation} makes this
precise by keeping the separation in force even when the phase model is given an arbitrary classical
preprocessing stage.

Shi~\cite{Shi_2005} introduced the Fourier hierarchy, conjectured adjacent strictness, and observed
that Simon's problem separates the first two levels relative to an oracle;
Theorem~\ref{thm:standard-query-adjacent} settles the adjacent oracle question at every remaining
fixed level $k\ge2$.  The round simulation of Proposition~\ref{prop:three-round-bit-embedding} is
useful for comparing the two models, but it is not what gives that separation.  The tool that does is
the standard-query growth theorem (Theorem~\ref{thm:standard-query-growth}), in which an adaptive
standard-query block is factored through a leaf of its decision tree, where the entire queried path
is determined.  This handles both the coherent classical adaptivity and the input-dependent diagonal
contribution of the final block, without converting the circuit into parallel-query rounds of
exponential width.

\begin{remark}[The lowest levels]\label{rem:low-levels}
In the phase-query model the separation between the first two levels is immediate: by
Lemma~\ref{lem:round-embedding}, the acceptance probability of an $\FH_1$ computation with
polynomially many queries does not depend on the oracle, whereas $\FH_2$ already detects the squared mean
$(N^{-1}\sum_xF_0(x))^2$ of a single oracle: a SWAP test between $H^{\otimes n}\ket{0^n}$ and
$P_{F_0}H^{\otimes n}\ket{0^n}$ has Fourier depth two and acceptance probability
$\tfrac12(1+(N^{-1}\sum_xF_0(x))^2)$ by Lemma~\ref{lem:swap-test}, which distinguishes the all-ones
function from a balanced one.  Hence
$\FH_1^{O}\ne\FH_2^{O}$ for this direct reason.  The second level is not weak, however: it decides
$2$-Forrelation (Proposition~\ref{prop:even-forrelation}) and therefore lies outside the polynomial
hierarchy relative to an oracle (Corollary~\ref{thm:fh2-not-in-ph}).  In the standard-query model,
$\FH_1$ computations are exactly randomized classical query algorithms (Lemma~\ref{lem:fh1-std}), and
Simon's problem separates $\FH_1$ from $\FH_2$~\cite{Shi_2005,365701}.
Theorems~\ref{thm:all-level-separation} and~\ref{thm:standard-query-adjacent} concern the levels
$k\ge2$, for which, to our knowledge, no separation was previously known in either query model.
\end{remark}

\section{Consequences for relativized classes}\label{sec:landscape}

The separations proved so far are internal to the two hierarchies.  This section places them among the
surrounding classes.  We locate the phase hierarchy relative to the polynomial hierarchy
(Section~\ref{subsec:fh2-vs-ph}) and to $\BQP$ (Section~\ref{subsec:fh-vs-bqp}), and then
show that the adjacent-level question does not relativize
(Section~\ref{subsec:collapse}), so that no oracle construction of the preceding sections can settle
Shi's conjecture itself.

\subsection{The second level and the polynomial hierarchy}\label{subsec:fh2-vs-ph}

We first compare the hierarchy with the polynomial
hierarchy, and show that the two cross between the first and the second level.  Two statements are
needed, and both follow from what we have already proved.  Lemma~\ref{lem:fh1-std} identifies the
first standard-query level with $\BPP$, which is contained in $\PH$ relative to every oracle,
while Corollary~\ref{thm:fh2-not-in-ph} places the second level outside it.  Throughout, $\PH^{O}$ denotes the relativized polynomial hierarchy
under the length-preserving query convention of Definition~\ref{def:FHk-rel}.

For the second level we use the oracle separation of Raz and Tal in the following form.  Recall
from~\eqref{eq:even-Phi} that $\Phi_2(F_0,F_1)=N^{-3/2}\sum_{x,y}F_0(x)(-1)^{x\cdot y}F_1(y)$ is
twofold Forrelation.

\begin{theorem}[{Raz--Tal~\cite[Theorem 1.1]{raz2019oracle}}]\label{thm:raz-tal}
For each $n$ there is an explicitly samplable distribution $\mathcal D_n$ on pairs
$(F_0,F_1)$ of Boolean functions $\bits^n\to\pmone$, with $\mathcal U_n$ the uniform distribution on
such pairs, such that, writing $N=2^n$:
\begin{enumerate}[leftmargin=2em]
\item $\displaystyle \E_{\mathcal D_n}[\Phi_2]=\Omega(1/n)$, while $\E_{\mathcal U_n}[\Phi_2]=0$;
\item no Boolean circuit of quasipolynomial size and constant depth over the $2N$ input bits
distinguishes $\mathcal D_n$ from $\mathcal U_n$ with advantage better than
$\mathrm{polylog}(N)/\sqrt N$.
\end{enumerate}
\end{theorem}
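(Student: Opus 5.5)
The statement is imported from Raz and Tal, and I would not reprove it from scratch. The plan is to recall their construction, check that their normalization of twofold Forrelation agrees with $\Phi_2$ of \eqref{eq:even-Phi}, and then indicate the two steps of their argument. The distribution $\mathcal D_n$ is built from a centered Gaussian vector $(X,Y)\in\mathbb R^{2N}$ with covariance $\varepsilon\begin{pmatrix}I&\mathsf H\\ \mathsf H&I\end{pmatrix}$, where $\mathsf H$ is the normalized Hadamard matrix of \eqref{eq:forr-dictionary} and $\varepsilon=\Theta(1/n)$. Each coordinate is truncated to $[-1,1]$, and each bit $F_0(x),F_1(y)$ is set to $\pm1$ independently with mean equal to the truncated Gaussian value. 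This is explicitly samplable in time polynomial in $N$. The dictionary check is routine: $\Phi_2=N^{-1}F_0^{\top}\mathsf H F_1$, so \eqref{eq:even-Phi} with $k=2$ is the Raz--Tal quantity.

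For part~1, I would compute $\E_{\mathcal D_n}[\Phi_2]$ directly. By independence of the rounding step, this equals $N^{-1}\E[\mathrm{trnc}(X)^{\top}\mathsf H\,\mathrm{trnc}(Y)]$. Without truncation the expectation would be $N^{-1}\operatorname{tr}(\mathsf H\cdot\varepsilon\mathsf H)=\varepsilon$, because $\mathsf H^2=I$. Each coordinate has variance $\varepsilon=O(1/n)$, so truncation is active only with probability $e^{-\Omega(n)}$. A Cauchy--Schwarz bound on the truncated part shows that it changes the mean by $o(\varepsilon)$, which gives $\Omega(1/n)$. Under $\mathcal U_n$ every bit is independent and unbiased, so $\E[\Phi_2]=0$.

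Part~2 carries the real content, and it would follow Raz--Tal's two ingredients. The first is Tal's level-$\ell$ Fourier bound for constant-depth circuits of size $s$ and depth $d$: $L_{1,\ell}\le(c\log s)^{(d-1)\ell}$. This bound is stable under restrictions, since the class of such circuits is closed under restriction. The second is the stochastic-calculus step. Write the Gaussian as the endpoint of a random walk with $T=N^{O(1)}$ tiny steps of size $1/\sqrt T$. Then bound the change of the multilinear extension of the circuit at each step by a Taylor expansion around the current point. Two observations control the expansion. First, the linear term vanishes in expectation. Second, the level-$\ell$ term is controlled by restriction-stable weight via the Bansal--Sinha-type transfer (Theorem~\ref{thm:bs-transfer}), together with the moment bound $\lvert\E[\prod_{i\in S}Z_i]\rvert\le\varepsilon^{\ell/2}N^{-\ell/4}$ for the Hadamard-correlated Gaussian.

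Summing over steps, only level $\ell=2$ survives at leading order. Its total contribution is about $\varepsilon\,N^{-1/2}\,L_{1,2}$, which gives advantage $\mathrm{polylog}(N)/\sqrt N$ for quasipolynomial $s$ and constant $d$. The truncation error is handled by the exponentially small probability of leaving the cube. I expect this walk-and-Taylor step to be the main obstacle, specifically keeping the steps in the interior where the multilinear extension is still a bounded function. In the paper I would simply cite \cite[Theorem~1.1]{raz2019oracle} for it rather than reproduce it.
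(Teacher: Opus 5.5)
Your proposal is correct. For item~(2) it does what the paper does: the paper gives no argument of its own and cites \cite[Theorem~1.1(2)]{raz2019oracle}. Your outline is a faithful summary of their proof: Tal's bound $L_{1,2}\le(c\log s)^{2(d-1)}$, stable under restrictions; the Gaussian written as a walk of $T$ small steps; the linear term vanishing; and only level~$2$ surviving, with total contribution of order $\varepsilon L_{1,2}/\sqrt N$.

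For item~(1) your route differs. The paper does not compute $\E_{\mathcal D_n}[\Phi_2]$. It reads the bound off Raz--Tal's Theorem~1.1(1), an $\Omega(1/\log N)$ advantage for the one-query Aaronson--Ambainis test whose acceptance probability is $\tfrac12(1+\Phi_2)$, using $\E_{\mathcal U_n}[\Phi_2]=0$ and their parameter $\varepsilon=1/(24\ln N)$. Your direct computation is a valid self-contained substitute: you get $N^{-1}\operatorname{tr}(\mathsf H\cdot\varepsilon\mathsf H)=\varepsilon$, and the truncation error is controlled by $\Norm{\mathsf H}=1$ together with the tail $\Prb[\abs{X_x}>1]\le 2N^{-12}$. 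It also makes explicit the dictionary between \eqref{eq:even-Phi} and Raz--Tal's normalization, which the citation leaves implicit. The citation, in turn, needs no knowledge of the construction.

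Two small corrections to your sketch of item~(2).

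First, the moment bound is misstated for $\ell\ge4$. By Wick's formula, $\E[\prod_{i\in S}Z_i]$ is a signed sum over the $(\ell/2)!$ matchings of the $X$-coordinates of $S$ with its $Y$-coordinates, each term of modulus $\varepsilon^{\ell/2}N^{-\ell/4}$. At $\ell=4$ the two terms agree in sign whenever $(x_1\oplus x_2)\cdot(y_1\oplus y_2)=0$, giving $2\varepsilon^2/N$, which exceeds your stated bound. The factor $(\ell/2)!$ is harmless: after summing over steps the level-$\ell$ terms carry $T^{1-\ell/2}$ and vanish for $\ell\ge3$. The leading level-$2$ term, where your bound is exact, is unaffected.

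Second, the reason to keep the walk inside $[-\tfrac12,\tfrac12]^{2N}$ is not boundedness of the multilinear extension, which is bounded on all of $[-1,1]^{2N}$. It is that the restriction (or transfer) argument needs biases of modulus at most $\tfrac12$. That argument is what converts restriction-stable weight into bounds on the Taylor coefficients at the current point.
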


Item~(1) is their Theorem~1.1(1), stated in terms of the Aaronson--Ambainis one-query test, whose
acceptance probability is $\tfrac12(1+\Phi_2)$.  Since the mean under the uniform distribution
vanishes, an advantage of $\Omega(1/\log N)$ for that test is exactly the stated bound on
$\E_{\mathcal D_n}[\Phi_2]$, with an explicit constant coming from their parameter
$\varepsilon=1/(24\ln N)$.  Item~(2) is their Theorem~1.1(2).  The theorem and corollary numbers here
follow the full version, ECCC TR18-107.

The following corollary is due to Buzet and Chailloux~\cite{buzet2026iqp}; we state it here in the
form we use.  They show that $2$-Forrelation is solved by an IQP computation, that is, by constantly
many IQP circuits together with efficient classical pre- and post-processing, and deduce that
$(\BPP^{\mathrm{IQP}})^{O}\not\subseteq\PH^{O}$.  Their computation is an $\FH_2$ circuit, so
the same oracle separates $\FH_2$ from $\PH$.

\begin{corollary}[{The second Fourier level is not contained in $\PH$; Buzet--Chailloux~\cite{buzet2026iqp}}]\label{thm:fh2-not-in-ph}
There is an oracle $O$ and a unary language $L\in\FH_2^{O}$ with $L\notin\PH^{O}$.  Hence
$\FH_2^{O}\not\subseteq\PH^{O}$, and the same holds in the standard-query model.
\end{corollary}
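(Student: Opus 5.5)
The plan is to run the diagonalization of Theorem~\ref{thm:all-level-separation}, replacing the Forrelation OR-problem by the Raz--Tal distribution and the Fourier-depth lower bound by Theorem~\ref{thm:raz-tal}(2). The upper bound comes from the depth-two circuit of Proposition~\ref{prop:even-forrelation} at $k=2$. It makes two phase queries and accepts with probability $\tfrac12(1+\Phi_2^2)$, so it only detects $\Phi_2^2$. Item~(1) gives a mean gap of $\Omega(1/n)$ in $\Phi_2$ itself, so I will instead use the sign-sensitive test of Buzet and Chailloux. Their test is a commuting circuit $H^{\otimes n}DH^{\otimes n}$ with a diagonal $D$ built from the quadratic-form identity recorded after Remark~\ref{rem:fh2-classical-scope}, and it accepts with probability $\tfrac12(1+\Phi_2)$. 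Its only gates besides the two Hadamard layers are diagonal, so it is an $\FH_2$ circuit. By Remark~\ref{rem:phase-vs-bit} it also lies in standard-query $\FH_2$.

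Next I make the language worst-case. I cannot assume the Raz--Tal distribution is concentrated on a promise, so I follow the route of \cite{aaronson2010bqp,raz2019oracle}. At length $n$ the oracle encodes $\mathrm{poly}(n)$ independent pairs drawn either all from $\mathcal D_n$ or all from $\mathcal U_n$, and $L$ asks which case holds.

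For the upper bound, the depth-two circuit runs the test on all pairs in parallel, sharing the two Hadamard layers as in Theorem~\ref{thm:upper-worst-case}. It then thresholds the empirical acceptance rate at $\tfrac12+\Theta(1/n)$. The values $\Phi_2$ are bounded in $[-1,1]$ and independent across pairs, so Hoeffding (Lemma~\ref{lem:hoeffding}) with $O(n^3)$ pairs decides correctly with high probability over the draw. By Borel--Cantelli, a random oracle is then decided correctly at all but finitely many lengths, and those lengths are hardwired into the uniform family.

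For the lower bound, I use the standard conversion of a $\PH^O$ machine on a unary input into a constant-depth circuit of size $2^{\mathrm{poly}(n)}=2^{\mathrm{polylog}(N)}$ over the oracle bits. A hybrid over the $\mathrm{poly}(n)$ pairs then gives total advantage at most $\mathrm{poly}(n)\cdot\mathrm{polylog}(N)/\sqrt N=o(1)$. That distinguishing bound is exactly what Theorem~\ref{thm:raz-tal}(2) provides, since fixing the other pairs keeps the circuit constant-depth and quasipolynomial-size.

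The main obstacle is the step from this distributional bound to an oracle. I would diagonalize over the countably many $\PH$ machines, each of bounded level, and use the probabilistic-method/Borel--Cantelli argument of Raz--Tal. With positive probability over a random draw, each machine errs at infinitely many lengths, so $L\notin\PH^O$, while the depth-two decider is correct almost surely. Since we only need the statement as given in \cite{buzet2026iqp}, the proof will cite their separation and verify that their IQP computation is an $\FH_2$ circuit in the indexed length-preserving convention. Constantly many IQP runs with classical pre- and post-processing share the two layers, and the classical parts are basis-preserving and absorbed into $U_0$ and $U_2$.
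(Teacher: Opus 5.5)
Your final route is the paper's: cite Buzet--Chailloux, check that their $\BPP^{\mathrm{IQP}}$ computation is a uniform $\FH_2$ circuit, and use Remark~\ref{rem:phase-vs-bit} for the standard-query model. The Raz--Tal reconstruction in your earlier paragraphs is not needed for the corollary, but it is sound. The length-preserving convention is what makes the blocks at different lengths independent, and hence makes the events ``machine $i$ is correct at length $n$'' independent. That reconstruction is essentially Remark~\ref{rem:fh2-ph-alternative} with a test linear in $\Phi_2$, so $O(n^3)$ pairs suffice instead of $n^5$.

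One step of your verification fails as stated: the randomized classical pre-processing of a $\BPP^{\mathrm{IQP}}$ computation cannot be ``absorbed into $U_0$''. The block $U_0$ acts on a single computational-basis state before any Hadamard layer, so it is deterministic and cannot toss the coins that select among the IQP circuits.

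The paper's fix has three parts. Each coin is a designated wire placed in $\ket+$ by the first global layer. The selection is realized by diagonal gates controlled on the coin wires; these are again diagonal, so the interior block stays basis-preserving. The coin wires are routed past the second layer, so their values are recorded by the measurement. The threshold is then a polynomial-time accepting set; putting it in $U_2$ instead would also be fine.

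The same point bears on your self-contained route. You assert that a single circuit $H^{\otimes n}DH^{\otimes n}$ accepts with probability exactly $\tfrac12(1+\Phi_2)$, which the paper does not claim. All you need is acceptance probability $a+b\,\Phi_2$ with $b$ a nonzero constant. If that linear estimator is itself a coin-selected mixture of IQP circuits, the coin handling above is needed there as well.
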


\begin{proof}
An IQP circuit has the form $H^{\otimes m}DH^{\otimes m}$ with $D$ diagonal in the computational
basis, which is a circuit of exact Fourier depth two whose single interior block is
basis-preserving.  The computation of~\cite{buzet2026iqp} consists of polynomially many such circuits,
each containing the indexed phase oracle $O_{f,g}$ of Definition~\ref{def:FHk-rel} as a diagonal
gate and applied to independently planted oracle blocks, together with classical processing that
tosses coins to select among circuits and thresholds the measurement outcomes.

All of this is available at Fourier depth two: the circuits run in parallel on disjoint registers and share the two
Hadamard layers (Remark~\ref{rem:register-routing}); a coin is a designated wire placed in $\ket+$ by
the first layer, a diagonal gate controlled on it is again diagonal, and the wire is routed past the
second layer so that its value is recorded by the measurement; and the threshold is an accepting
set recognizable in polynomial time (Definition~\ref{def:FH2FH3}).  The whole computation is therefore a uniform $\FH_2$ circuit, so
the language witnessing $(\BPP^{\mathrm{IQP}})^{O}\not\subseteq\PH^{O}$
in~\cite{buzet2026iqp} lies in $\FH_2^{O}$.  Finally
$\FH_2^{O}\subseteq\FH_{2,\mathrm{std}}^{O}$ by Remark~\ref{rem:phase-vs-bit}, which gives the
standard-query statement.
\end{proof}

\begin{remark}[{A second route to Corollary~\ref{thm:fh2-not-in-ph}}]\label{rem:fh2-ph-alternative}
Proposition~\ref{prop:even-forrelation} gives the same conclusion without the quadratic identity
behind the IQP construction, at a weaker gap.  By that proposition with $k=2$ there is a uniform
$\FH_2$ circuit making two phase queries that accepts a pair $(F_0,F_1)$ with probability
$p(F_0,F_1)=\tfrac12(1+\Phi_2(F_0,F_1)^2)$.  Orthogonality of the characters gives
$\E_{\mathcal U_n}[\Phi_2^2]=1/N$, while Jensen's inequality and Theorem~\ref{thm:raz-tal}(1) give
$\E_{\mathcal D_n}[\Phi_2^2]\ge(\E_{\mathcal D_n}[\Phi_2])^2=\Omega(1/n^2)$, so
$\gamma_n:=\E_{\mathcal D_n}[p]-\E_{\mathcal U_n}[p]\ge c/n^2$ for an explicit constant $c>0$ and all large $n$.  This
is a gap between ensemble means, not a pointwise gap, so it is amplified over independent instances
rather than by repetition on one instance.  Plant at each length $T=\lceil n^5\rceil$ independent
pairs, run the circuit once on each pair through the same two Hadamard layers
(Remark~\ref{rem:register-routing}), and accept if at least $T(\E_{\mathcal U_n}[p]+c/(2n^2))$
runs accept, where $\E_{\mathcal U_n}[p]=\tfrac12(1+1/N)$ is known exactly.  Under $\mathcal D_n^{\otimes T}$ the $T$ acceptance indicators are independent with
mean $\E_{\mathcal D_n}[p]$, since run $j$ depends only on pair $j$ and on its own measurement, and
likewise under $\mathcal U_n^{\otimes T}$ with mean $\E_{\mathcal U_n}[p]$; by Hoeffding's
inequality (Lemma~\ref{lem:hoeffding}) the circuit therefore distinguishes
$\mathcal D_n^{\otimes T}$ from $\mathcal U_n^{\otimes T}$ with error
$e^{-Tc^2/(2n^4)}=2^{-\Omega(n)}$.  On the classical side, a hybrid argument over the $T$
coordinates turns a constant-depth circuit distinguishing the two product distributions with
advantage $\alpha$ into one distinguishing $\mathcal D_n$ from $\mathcal U_n$ with advantage
$\alpha/T$, so Theorem~\ref{thm:raz-tal}(2) bounds $\alpha$ by $T\,\mathrm{polylog}(N)/\sqrt N=2^{-\Omega(n)}$,
and the planting argument of~\cite[Corollary~1.5]{raz2019oracle} applies to the product
distributions.  The gap is $\Omega(1/n^2)$ rather than the $\Omega(1/n)$ of~\cite{buzet2026iqp}
because the circuit computes $\Phi_2^2$ and not $\Phi_2$.  The argument uses only
Proposition~\ref{prop:even-forrelation}, which is needed for general even $k$ in any case.
\end{remark}

Lemma~\ref{lem:fh1-std} and Corollary~\ref{thm:fh2-not-in-ph} together determine where the crossing
occurs: $\FH_{1,\mathrm{std}}^{O}=\BPP^{O}\subseteq\PH^{O}$ for \emph{every} oracle, whereas
$\FH_2^{O}\not\subseteq\PH^{O}$ for some oracle.  One further Hadamard layer therefore takes
the hierarchy outside the polynomial hierarchy.  By Lemma~\ref{lem:round-embedding} the algorithm
that achieves this is moreover non-adaptive: a depth-two circuit is simulated exactly by one round of
parallel queries, so one non-adaptive batch of parallel quantum queries, followed by an oracle-independent
unitary and a measurement, already lies outside $\PH$ relative to an oracle.

Corollary~\ref{cor:p-not-in-fh} places the phase-query model relative to $\P$ as well.
  That corollary gives an oracle relative to which $\P\not\subseteq\bigcup_k\FH_k$, and
interleaving the two constructions on disjoint blocks of input lengths, as in
Remark~\ref{rem:tower}, gives both statements relative to a single oracle.  The phase-query hierarchy
is therefore incomparable with $\P$ relative to an oracle: it lies outside the polynomial hierarchy
at its second level, yet it does not contain $\P$ at any level.  Both statements have the same cause,
namely that a phase query acts trivially on a basis state, which gives interference but not classical
adaptivity.  This is one reason to regard Shi's standard-query model as the primary one, and
Theorem~\ref{thm:standard-query-adjacent} rather than Theorem~\ref{thm:all-level-separation} as the
answer to the question he asked; in that model $\P^{O}\subseteq\FH_{k,\mathrm{std}}^{O}$ holds at
every level by Lemma~\ref{lem:fh0-std}.

We include Corollary~\ref{thm:fh2-not-in-ph}, which is not our result, because it determines the
position of the whole hierarchy relative to a standard classical class, which none of the internal
separations does.

\subsection{Bounded Fourier depth versus \texorpdfstring{$\BQP$}{BQP}}\label{subsec:fh-vs-bqp}

The level separations give a strictly increasing sequence of classes inside the hierarchy.  We now
show that this sequence is strictly contained in $\BQP$, by exhibiting a single promise problem that
is decided with polynomially many quantum queries but at no Fourier depth up to
$\tfrac14\log_2 n$.  Here $\BQP^{O}$ denotes the class of
promise problems decided with bounded error by a uniform family of quantum circuits making
$\poly(n)$ phase queries to $O$, with no bound on the number of Hadamard layers.  Every
$\FH_k$ circuit is such a circuit, so $\FH_k^{O}\subseteq\BQP^{O}$ for every $k$.

Here the Forrelation order must grow with the input.  Aaronson and
Ambainis~\cite[Theorem~5]{aaronson2014forrelationproblemoptimallyseparates} showed that $K$-fold
Forrelation with $K=\poly(n)$, presented by explicit circuits, is $\mathrm{PromiseBQP}$-complete, so
the order $K$ is the parameter along which Forrelation captures the power of quantum computation.
Bounded Fourier depth bounds the order that a circuit can handle
(Remark~\ref{rem:even-tightness}), and it is by letting $K$ grow that we separate the two.

The hard problem is Forrelation whose order grows with the input.  Fix
$K=K(n):=\max\{2,\lceil\log_2 n\rceil\}$ and, at each length $n$, one instance $Z_n$ of $\mathrm{ORF}_{K(n),n}$
from Definition~\ref{def:hard-problem}, read with the growing parameter $K(n)$.  Since $K(n)\to\infty$
slowly, the gap $\delta=2^{-5K}=n^{-5+o(1)}$ and the copy count $m=2^{5K}=n^{5+o(1)}$ stay polynomial,
and $M=mKN=2^{n+O(\log n)}$.

\begin{definition}[Variable Fourier depth]\label{def:variable-depth}
For a nondecreasing, polynomial-time computable schedule $d:\mathbb N\to\mathbb N$, the class
$\FH^{O}_{d(\cdot)}$ consists of the promise problems solvable, under the conventions of
Definition~\ref{def:FHk-rel}, by a uniform circuit family whose length-$n$ member uses at most $d(n)$
global Hadamard layers; the standard-query analogue $\FH^{O}_{d(\cdot),\mathrm{std}}$ is defined in the
same way with standard queries.  Since $d$ is computable and the gate set is finite (Definition~\ref{def:FH2FH3}), the families
obeying a given schedule can be enumerated up to normalization: enumerate all clocked generators
and replace, at each length, an output that is malformed or uses more than $d(n)$ layers by a fixed
rejecting circuit.  The normalized family obeys the schedule and coincides with the original whenever
the original does, so every family obeying the schedule is one of the normalized families, and the
diagonalization of Theorem~\ref{thm:all-level-separation} can be applied to them.
\end{definition}

\begin{theorem}[Logarithmic Fourier depth is weaker than $\BQP$]\label{thm:fh-vs-bqp}
There is an oracle $O$ and a language $L\in\BQP^{O}$ such that $L\notin\FH^{O}_{d(\cdot)}$ for every
nondecreasing, polynomial-time computable schedule $d$ with $d(n)\le\tfrac14\log_2 n$
(Definition~\ref{def:variable-depth}).  In particular $L\notin\FH_k^{O}$ for every constant $k$, and
\[
\Bigl(\textstyle\bigcup_{k}\FH_k\Bigr)^{O}\ \subsetneq\ \BQP^{O}.
\]
\end{theorem}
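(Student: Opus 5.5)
We would build the oracle exactly as in Theorem~\ref{thm:all-level-separation}, but with the growing order $K(n)=\max\{2,\lceil\log_2 n\rceil\}$. At each length $n$ the oracle is one instance $Z_n$ of $\mathrm{ORF}_{K(n),n}$, and we set $L=\{1^n: Z_n\in\Pi^{\Yes}_n\}$. The argument has three parts: membership in $\BQP^{O}$, a uniform lower bound against every schedule $d(n)\le\tfrac14\log_2 n$, and a diagonalization over the countably many pairs (schedule, family) from Definition~\ref{def:variable-depth}.

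\emph{Upper bound.} We use the interference test of Proposition~\ref{prop:odd-interference-circuit} when $K$ is odd. When $K$ is even we use the Aaronson--Ambainis control-qubit test; its Fourier depth is irrelevant here, and it has acceptance probability $\tfrac12(1+\mathsf{forr}_K)$ with $\lceil K/2\rceil$ queries. We then amplify as in Theorem~\ref{thm:upper-worst-case}, with $R=O(\delta^{-2}\log m)=n^{10+o(1)}$ repetitions on each of the $m=n^{5+o(1)}$ copies. Since $\delta=2^{-5K}=n^{-5+o(1)}$, the total query count and circuit size are polynomial, and the index width $O(\log m)=O(\log n)$ fits Definition~\ref{def:FHk-rel}. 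So $L\in\BQP^{O}$ at all large lengths, and we hardwire the remaining finitely many. The promise sets are nonempty: for even $K$ the all-ones instance may not work, but a suitable tuple (for instance, with $F_{K-1}$ chosen as a character that undoes the last transform) gives $\mathsf{forr}_K=1$. Proposition~\ref{prop:promise-density} still holds, since $4m/(\delta^2N)=\poly(n)/N\to0$.

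\emph{Lower bound.} Fix a circuit of Fourier depth $k\le\tfrac14\log_2 n$ with $q=\poly(n)$ phase queries. We rerun Steps~1--4 of Theorem~\ref{thm:lower-worst-case}, now tracking every constant as a function of $k$ and $K$. Theorem~\ref{thm:gstw-growth} with $r=k-1$ (constant $(2^{2r}-1)^{2\ell}\le 2^{4k\ell}$), Theorem~\ref{thm:bs-transfer}, and Theorem~\ref{thm:bs-bias} bound the level-$\ell$ term by
\[
2^{O(k+\log K)\ell}\,(mK)^{\ell}\,q^{\ell}\,N^{-\frac{\ell}{2}\left(\frac1{2k-2}-\frac1K\right)}.
\]
The right-hand side is meaningful because $2k-2<K$: indeed $2k\le\tfrac12\log_2 n<K$. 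The gap $\tfrac1{2k-2}-\tfrac1K\ge\tfrac1{2k}-\tfrac1{4k}=\tfrac1{4k}\ge 1/\log_2 n$ gives a factor $N^{-\ell/(8k)}\le 2^{-\ell n/(2\log_2 n)}$. This overwhelms $2^{O(\log n)\ell}(mKq)^{\ell}=2^{O(\ell\log n)}$. Summing over $K\le\ell\le K(K-1)$, at most $\log^2 n$ terms, the advantage is $2^{-\Omega(n/\log n)}$. That is far below the hybrid threshold $1/(4m)=n^{-5-o(1)}$, so some promised instance defeats the circuit. Step~2 (the hybrid) and the restriction clause go through unchanged.

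\emph{Main obstacle.} The hard part is the bookkeeping of constants when $K$ and $k$ grow. The Bansal--Sinha factor $(8K)^{14\ell}$ at levels up to $K^2$ and the count of $m$ copies are only quasi-constant, and everything rests on the singly exponential constant of Appendix~\ref{app:gstw-growth-constant}. We also have to check that Theorems~\ref{thm:bs-distributions}--\ref{thm:bs-transfer} hold uniformly in $K$, i.e.\ that their statements carry no hidden $n_0(K)$. Given this, the diagonalization runs as in Theorem~\ref{thm:all-level-separation}: we enumerate all pairs (schedule $d$, normalized family) and assign each pair a dedicated large length, where the lower bound applies because $d(n_i)\le\tfrac14\log_2 n_i$. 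Finally, $\bigcup_k\FH_k^{O}\subseteq\BQP^{O}$ is trivial, and strictness follows because every constant $k$ is a schedule with $k\le\tfrac14\log_2 n$ eventually; we handle the finitely many small lengths by hardwiring.
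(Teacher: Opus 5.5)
Your proposal is correct and follows the paper's proof essentially step for step. It uses the same growing-order OR of Forrelation with $K(n)=\max\{2,\lceil\log_2 n\rceil\}$, the same rerun of Steps~1--4 of Theorem~\ref{thm:lower-worst-case} with the single-exponential growth constant and the gap $\frac1{2k-2}-\frac1K\ge\frac1{4k}$, and essentially the same diagonalization, with hardwiring for the constant-depth consequence. The only substantive difference is the upper bound: you use linear interference tests with $O(\delta^{-2}\log m)$ repetitions per copy, whereas the paper simply runs the $K$-fold Forrelation circuit and estimates $\mathsf{forr}_K^2$ from $O(\delta^{-4}\log m)$ repetitions; both counts are polynomial.

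Two slips do not affect the argument:
\begin{enumerate}
\item At even $K$, no tuple of characters and constants gives $\mathsf{forr}_K=1$; such tuples give $\abs{\mathsf{forr}_K}=N^{-1/2}$. Nonemptiness of $\Pi^{\Yes}$ should instead come from Theorem~\ref{thm:bs-distributions}(2).
\item Circuits of exact depth at most one are not covered by your formula, which involves $\frac1{2k-2}$. Dispatch them by the oracle-independence clause of Lemma~\ref{lem:round-embedding}, or treat them as depth two.
\end{enumerate}
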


\begin{proof}
Let $L=\{1^n:Z_n\in\Pi^{\Yes}_{K(n),n}\}$.

\emph{Upper bound.}  A $\BQP$ circuit has no Fourier-depth budget, so at length $n$ it runs
the $K$-fold Forrelation expression directly.  On one instance $z$, the circuit
$\Had P_{F_{K-1}}\Had\cdots\Had P_{F_0}\Had\ket{0^n}$ places amplitude $\mathsf{forr}_K(z)$ on $\ket{0^n}$
by \eqref{eq:forr-dictionary}, so a computational-basis measurement returns $0^n$ with probability
$\mathsf{forr}_K(z)^2$.  For each of the $m$ copies, estimate this probability to additive
$\delta^2/8$ from $O(\delta^{-4}\log m)$ repetitions, and accept if some estimate exceeds
$\delta^2/2$.  A $\Yes$ instance has a copy with $\mathsf{forr}_K\ge\delta$, hence probability
$\ge\delta^2$; a $\No$ instance has every copy with $\abs{\mathsf{forr}_K}\le\delta/2$, hence probability
$\le\delta^2/4$.  A union bound over the $m$ copies makes the decision correct with bounded error,
using $O(m\,\delta^{-4}\log m)=\poly(n)$ repetitions of an instance with
$K=O(\log n)$ queries.
Thus $L\in\BQP^{O}$.

\emph{Lower bound.}  Consider a uniform family $A$ making $q(n)=\poly(n)$ queries whose circuit at
length $n$ uses $k=k(n)$ Hadamard layers, where $2\le k(n)\le\tfrac14\log_2 n$; then
$4(k-1)\le\lceil\log_2 n\rceil=K$, so in particular $2(k-1)<K$.  (If $k(n)\le1$, the acceptance
probability at that length does not depend on the oracle by Lemma~\ref{lem:round-embedding}, so the
family fails on one of the two promised instances.)  By Lemma~\ref{lem:round-embedding},
Corollary~\ref{cor:FHk-growth} and Theorem~\ref{thm:bs-bias}, exactly as in the proof of
Theorem~\ref{thm:lower-worst-case} (whose Step~4 uses only the gap $2(k-1)<K$; cf.\
Remark~\ref{rem:parametric-lower-bound}), were $A$ to decide $\mathrm{ORF}_{K(n),n}$ with bounded
error at length $n$, some single-copy restriction $h$ would satisfy
\[
\frac1{4m}\ \le\ \bigl|\E_{\mathcal F_K}[h]-\E_{\mathcal U}[h]\bigr|
\ \le\ \sum_{\ell=K}^{K(K-1)} N^{-\frac\ell2\left(1-\frac1K\right)}(8K)^{14\ell}A_\ell,
\qquad
A_\ell=4^{\ell}(2^{2k-2}-1)^{2\ell}q^{\ell}M^{\frac\ell2\left(1-\frac1{2k-2}\right)}.
\]
Writing $M=mKN$, the net power of $N$ in the term $\ell$ is $N^{-\frac\ell2\beta_k}$ with
$\beta_k=\frac1{2k-2}-\frac1K\ge\frac1{4(k-1)}$, the last inequality because $4(k-1)\le K$.  With
$k\le\tfrac14\log_2n$ and $\ell\le K(K-1)=O(\log^2 n)$, the factor
$(2^{2k-2}-1)^{2\ell}\le2^{4k\ell}\le2^{\ell\log_2 n}$, together with
$4^{\ell},(8K)^{14\ell},q^{\ell}$ and $(mK)^{\ell}$, is at most $2^{O(\log^3 n)}=:Q(n)$, with a
constant depending on the polynomial query bound of the family, as $q,mK=\poly(n)$ and $\ell=O(\log^2n)$.  Each term is thus at most
$Q(n)N^{-\frac\ell2\beta_k}\le Q(n)\,2^{-nK\beta_k/2}\le Q(n)\,2^{-n/2}$, using
$K\beta_k\ge K/(4(k-1))\ge1$, and the $O(\log^2 n)$ terms sum to $2^{-n/2+O(\log^3n)}\le2^{-n/3}$,
below $\frac1{4m}=\Theta(n^{-5})$.  Hence $A$ fails on some promise instance at every large length at
which its depth respects the cap.

\emph{Diagonalization.}  Enumerate all clocked uniform families $A^{(i)}$, with malformed outputs
replaced by a rejecting circuit, and assign to each family infinitely many dedicated lengths, all so
large that the bound above is below $\frac1{4m}$: for instance the lengths $n_{\langle i,j\rangle}$,
$j\ge1$, of an increasing sequence indexed by pairs.

At a dedicated length $n$ of $A^{(i)}$, if the
circuit of $A^{(i)}$ uses more than $\tfrac14\log_2 n$ Hadamard layers, plant any promised instance.
Otherwise the bound above, or oracle-independence when the depth is at most one, provides a promised
instance on which $A^{(i)}_{n}$ fails; plant it.  At every other length plant any instance of
$\Pi^{\Yes}_{K(n),n}\cup\Pi^{\No}_{K(n),n}$, nonempty by
Proposition~\ref{prop:promise-density} (for $\Pi^{\No}$) and Theorem~\ref{thm:bs-distributions}(2)
(for $\Pi^{\Yes}$, since $\Pr_{\mathcal F_{K(n)}}[\mathrm{forr}_{K(n)}\ge\delta]\ge6\delta>0$, which
holds for every $K(n)\ge2$, odd or even).  Every family obeying some admissible schedule is enumerated and fails at each of its dedicated
lengths, while every planted instance is promised, so the decider above is correct at every length.
Hence $L\in\BQP^{O}$ and $L\notin\FH^{O}_{d(\cdot)}$ for every admissible schedule $d$.  For the
constant-depth consequence, if a uniform $\FH_k$ family decided $L$, then replacing its members at
the finitely many lengths $n<2^{4k}$ by depth-zero circuits with the correct answers hardwired gives a
family obeying the schedule $d(n)=\lfloor\tfrac14\log_2n\rfloor$ that decides $L$, a contradiction.
\end{proof}

\begin{corollary}[A $\BQP$ problem of prescribed Fourier depth]\label{cor:prescribed-fdepth}
There is an absolute constant $c>0$ with the following property.  Let
$d:\mathbb N\to\mathbb N$ be nondecreasing and polynomial-time computable, with
$2\le d(n)\le c\log n$ for all sufficiently large $n$.  Then there are an oracle $O$ and a unary
language $L_O$, whose length-$n$ instance is $\mathrm{ORF}_{2d(n)-1,n}$ at every length where
$d(n)\ge2$, such that
\[
L_O\in\FH^{O}_{d(\cdot)+1}\setminus\FH^{O}_{d(\cdot)},
\]
and $L_O$ is decided in $\BQP^{O}$ using $\poly(n)$ queries, or $n^{o(1)}$ queries when
$d(n)=o(\log n)$.  If moreover $d(n)\to\infty$, then
$\bigl(\bigcup_j\FH_j\bigr)^{O}\subseteq\FH^{O}_{d(\cdot)}\subsetneq\BQP^{O}$, refining
Theorem~\ref{thm:fh-vs-bqp}.
\end{corollary}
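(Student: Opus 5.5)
This statement is the growing-depth version of Theorem~\ref{thm:all-level-separation}, and I would assemble it from Corollary~\ref{cor:growing-k}(2), run at order $K(n)=2d(n)-1$, together with the $\BQP$ upper bound of Theorem~\ref{thm:fh-vs-bqp}. At each length $n$ with $d(n)\ge2$, the oracle carries one instance $Z_n$ of $\mathrm{ORF}_{2d(n)-1,n}$. At the finitely many lengths with $d(n)<2$ it carries any fixed promised instance; the all-ones tuple works, since $K-1$ is even. The language is $L_O=\{1^n:Z_n\in\Pi^{\Yes}_n\}$.

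\emph{Upper bounds.} Take $c$ no larger than the constant of Corollary~\ref{cor:growing-k}. Then $m=2^{5K}$, $R=2^{\Theta(d)}$ and the index width $\Theta(d)$ are all polynomial in $n$. The amplified interference test of Theorem~\ref{thm:upper-worst-case}, built from Proposition~\ref{prop:odd-interference-circuit} at $k=d(n)$, is therefore a uniform circuit of Fourier depth $d(n)+1$ and polynomial size. Uniformity holds because $d$ is polynomial-time computable, so the generator can compute $d(n)$ and then lay out the layers. The test makes $2^{O(d(n))}$ queries, which is $n^{o(1)}$ when $d=o(\log n)$. This gives $L_O\in\FH^{O}_{d(\cdot)+1}$. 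The same circuit is in particular a $\BQP^{O}$ circuit, so $L_O\in\BQP^O$ with the stated query counts. Correctness at the finitely many small lengths is hardwired.

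\emph{Lower bound.} Enumerate the normalized families obeying the schedule $d$, as in Definition~\ref{def:variable-depth}, and give each its own large length $n_i$. At $n_i$, apply the lower bound of Corollary~\ref{cor:growing-k}(1) with $k=d(n_i)$. Every circuit of Fourier depth at most $d(n_i)$ making at most $2^{c'n/k^2}$ queries fails on some promised instance, and a polynomial query bound is eventually below $2^{c'n/(c\log n)^2}$. Plant that failing instance, and plant any promised instance at all other lengths. This puts $L_O\notin\FH^{O}_{d(\cdot)}$.

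\emph{The final inclusion.} If $d(n)\to\infty$, any constant-depth family obeys the schedule $d$ for all large $n$, after hardwiring the finitely many small lengths; this gives $(\bigcup_j\FH_j)^O\subseteq\FH^O_{d(\cdot)}$. The strictness $\FH^O_{d(\cdot)}\subsetneq\BQP^O$ is witnessed by $L_O$ itself. The containment $\FH^O_{d(\cdot)}\subseteq\BQP^O$ is immediate, since a depth-$d(n)$ circuit of polynomial size is a $\BQP$ circuit.

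\emph{Main obstacle.} The hard step is checking that the constants in the lower bound stay uniform while $k$ grows. This is exactly the bookkeeping of Corollary~\ref{cor:growing-k}: it uses the singly exponential growth constant of Theorem~\ref{thm:gstw-growth}, and it needs $k\le c\log n\ll n^{1/3}$. A second point needs care: the schedule changes the Forrelation order from length to length. The diagonalization must use, at each dedicated length, the order matching $d$ at that length. It must also choose the lengths large enough that the failure bound applies there.
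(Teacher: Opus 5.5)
Your proposal is correct and follows the paper's proof. You diagonalize over the normalized families obeying the schedule (Definition~\ref{def:variable-depth}) using the depth-$d(n)$ lower bound of Corollary~\ref{cor:growing-k}, decide $L_O$ at depth $d(n)+1$ with the amplified interference test, and witness $\FH^{O}_{d(\cdot)}\subsetneq\BQP^{O}$ by $L_O$ itself for this same oracle. The only difference is that you use the depth-$(d(n)+1)$ circuit directly as the $\BQP$ decider, which is legitimate since every $\FH$ circuit is a $\BQP$ circuit, whereas the paper cites the direct estimator of Theorem~\ref{thm:fh-vs-bqp}; both make $2^{\Theta(d(n))}$ queries. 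One small point: at the finitely many lengths with $d(n)<2$ the order $2d(n)-1$ is degenerate, so it is cleaner to plant nothing and hardwire the answer there, as the paper does.
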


\begin{proof}
By Definition~\ref{def:variable-depth} the uniform families obeying the schedule $d$ are effectively
enumerable.  Running the diagonalization of Theorem~\ref{thm:all-level-separation} over these, and
using at length $n$ the depth-$d(n)$ lower bound of Corollary~\ref{cor:growing-k} (valid since
$d(n)\le c\log n\le c'n^{1/3}$ for large $n$), fixes an oracle $O$ with
$L_O\notin\FH^{O}_{d(\cdot)}$; the depth-$(d(n)+1)$ circuit of Corollary~\ref{cor:growing-k} has
polynomial size in this range and decides $L_O$, so $L_O\in\FH^{O}_{d(\cdot)+1}$.  For the $\BQP$
bound, the gap $\delta=2^{-5(2d(n)-1)}=2^{-\Theta(d(n))}$ gives copy count $m=\delta^{-1}$ and
repetition count $\delta^{-4}$, so the direct estimator of Theorem~\ref{thm:fh-vs-bqp} makes
$O\bigl(K\,m\,\delta^{-4}\log m\bigr)=2^{\Theta(d(n))}$ queries, polynomial for $d(n)\le c\log n$
with $c$ small and $n^{o(1)}$ for $d(n)=o(\log n)$.  At the finitely many lengths with $d(n)<2$, plant no instance and hardwire the answer into both
families.  Finally, if $d(n)\to\infty$ then for every constant $j$ we have $d(n)\ge j$ for all large
$n$, so $\FH_j\subseteq\FH_{d(\cdot)}$ as classes of promise problems, again after hardwiring the
finitely many smaller lengths.  The last inclusion is strict for this
same oracle by the two facts already established here, namely $L_O\in\BQP^{O}$ and
$L_O\notin\FH^{O}_{d(\cdot)}$, and not by Theorem~\ref{thm:fh-vs-bqp}, which constructs a different
oracle.
\end{proof}

\begin{remark}[The chain of inclusions]\label{rem:tower}
Together with the level separations, Theorem~\ref{thm:fh-vs-bqp} gives the strict relativized chain
of inclusions
\[
\FH_2^{O}\subsetneq\FH_3^{O}\subsetneq\FH_4^{O}\subsetneq\cdots\subsetneq
\Bigl(\textstyle\bigcup_{k}\FH_k\Bigr)^{O}\subsetneq\BQP^{O};
\]
interleaving the construction of Corollary~\ref{cor:single-oracle-all-levels} with that of
Theorem~\ref{thm:fh-vs-bqp} on disjoint length blocks realizes it relative to a single oracle, and
the lowest levels are as in Remark~\ref{rem:low-levels}.  The two hard problems used here are of
opposite types.  The growing order $K=\lceil\log_2 n\rceil$ makes the Fourier depth of the hard
problem as large as possible, up to $\Theta(\log n)$, and rules out every depth up to
$\tfrac14\log_2 n$, whereas the minimal order $K=2d-1$ of Corollary~\ref{cor:prescribed-fdepth}
fixes the Fourier depth of the problem at $d+1$ while ruling out depth $d$ up to $c\log n$.  With the constant of
Appendix~\ref{app:gstw-growth-constant} the growth bound is singly exponential and supports depth up
to $cn^{1/3}$ in the lower bound.  The ceiling in the latter is therefore no longer that constant but
the deciding circuit itself, whose $2^{\Theta(d)}$ repetitions must fit into a polynomial-size
family.  The same
argument with the standard-query growth bound (Theorem~\ref{thm:standard-query-growth}) gives
$\bigl(\bigcup_k\FH_{k,\mathrm{std}}\bigr)^{O}\subsetneq\BQP^{O}$: by
Lemma~\ref{lem:standard-cert-count} the constant there satisfies $A'_{k,\ell}=2^{O_k(\ell)}$, so with
$K=\lceil\log_2 n\rceil$, $\ell\le K(K-1)=O((\log n)^2)$, and $q=\poly(n)$, all elementary factors
are at most $2^{O_k((\log n)^{3})}$, dominated by the $2^{-\Omega_k(n\log n)}$ decay at the growing
Forrelation order.
\end{remark}

\subsection{A collapsing oracle: the adjacent-level question does not relativize}\label{subsec:collapse}

All the separations proved above are relativized, in the sense that they exhibit an oracle
relative to which adjacent levels differ.  We close with an oracle in the opposite
direction, relative to which every level from the second onwards equals $\mathrm{PSPACE}$, in both
query models.  Taken together, the two directions show that Shi's unrelativized conjecture cannot be
settled by a relativizing argument.

For this subsection only, we use the standard relativization convention: the oracle is a single
language $O\subseteq\bits^*$, exposed as the family of its slices $O\cap\bits^m$, and a circuit on
an input of length $n$ may query slices at every length $m$ up to its polynomial size, through the
phase or standard gates of Definition~\ref{def:FHk-rel}.  We used the length-preserving convention in
Definition~\ref{def:FHk-rel} because the diagonalizations there fix one instance at each length;
Proposition~\ref{rem:standard-convention-separations} shows that the separations also hold under the
standard convention, so that the two directions concern the same model.  Under this convention
$\BQP^{O}$ denotes the class of promise problems decided with bounded error by uniform
polynomial-size quantum circuit families with standard oracle gates.

\begin{proposition}[Collapse relative to a $\mathrm{PSPACE}$-complete oracle]\label{prop:pspace-collapse}
Let $O$ be a $\mathrm{PSPACE}$-complete language; replacing $O$ by $\{1y:y\in O\}$ if necessary,
assume no member of $O$ begins with $0$.  Under the standard convention,
\[
\FH_{k,\mathrm{std}}^{O}\ =\ \P^{O}
\quad\text{for every }k\ge0,
\qquad
\FH_{k}^{O}\ =\ \FH_{2}^{O}\ =\ \BQP^{O}
\quad\text{for every }k\ge2
\]
in the standard-query and phase models respectively, and as language classes all of these equal
$\mathrm{PSPACE}$.  In the phase model the statement begins at the second level because the first
is degenerate: $\FH_1^{O'}=\FH_1$ for every oracle $O'$ (Lemma~\ref{lem:round-embedding}), so
whether $\FH_1^{O}$ also equals $\FH_2^{O}$ relative to this oracle is the unrelativized question
of whether $\FH_1=\mathrm{PSPACE}$, which we do not address.
\end{proposition}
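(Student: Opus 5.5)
We prove the collapse as a cycle of inclusions, all ending in $\mathrm{PSPACE}$.

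\emph{Lower inclusions.} First, $\mathrm{PSPACE}\subseteq\P^{O}$ holds by completeness: a polynomial-time machine reduces its input to an instance of $O$ and makes one query. Second, $\P^{O}\subseteq\FH_{0,\mathrm{std}}^{O}\subseteq\FH_{k,\mathrm{std}}^{O}$ by Lemma~\ref{lem:fh0-std}. In the phase model the lower end is $\FH_2^{O}$. A single phase query to $O$ can be turned into a written bit using two Hadamard layers. Put an ancilla in $\ket+$ at layer one. In the interior block, apply the phase query to $O$ at the string $1y$, controlled on that ancilla, by routing the address to a string beginning with $0$ (a non-member, phase $+1$) when the ancilla is $0$. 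Then undo the ancilla with layer two. This writes $O(y)$ into the ancilla deterministically, provided $y$ is already known classically before the first layer. Basis-preserving preprocessing computes $y$ from the input, since in the phase model the pre-layer block runs on a basis state. So the $\P$-reduction followed by one query gives $\mathrm{PSPACE}\subseteq\FH_2^{O}$. The trivial inclusions $\FH_2^{O}\subseteq\FH_k^{O}\subseteq\BQP^{O}$ are immediate, and for $\FH_{k,\mathrm{std}}^{O}$ we likewise have $\subseteq\BQP^{O}$.

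\emph{Upper inclusion.} It remains to show $\BQP^{O}\subseteq\mathrm{PSPACE}$ for both query models. I would use the standard Feynman path-sum argument, as foreshadowed in the discussion after Definition~\ref{def:FH2FH3}. The acceptance probability equals a sum over exponentially many computational paths of products of gate amplitudes. The gate set is finite with algebraic entries, so each amplitude is computable to polynomial precision in polynomial space. Each oracle gate needs one oracle query of polynomial length, and we answer it with the $\mathrm{PSPACE}$ machine for $O$, reusing space. The sum is accumulated in polynomial space, so $\BQP^{O}\subseteq\mathrm{PSPACE}^{\mathrm{PSPACE}}=\mathrm{PSPACE}$. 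Bounded error lets us threshold the result.

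\emph{Main obstacle.} The delicate point is the precision bookkeeping in the path sum: with $2^{\poly}$ paths we need $\poly$ bits of precision per term, which is routine because the entries are algebraic. A second point is checking that the ``begins with $0$'' convention supplies a genuine trivial query in the phase model, which is exactly the role of the normalization in the statement. Finally, the collapse is stated for language classes, so the promise versions are not needed.
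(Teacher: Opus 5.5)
Your proposal is correct and is essentially the paper's own argument. It uses the path-sum bound $\BQP^{O}\subseteq\mathrm{PSPACE}$, the one-query inclusion $\mathrm{PSPACE}\subseteq\P^{O}$ combined with Lemma~\ref{lem:fh0-std}, and, at depth two, a phase query controlled on a qubit in $\ket{+}$ whose $0$ branch is routed to a zero-prefixed non-member. Computing $y$ in $U_0$ rather than in the interior block, as you do, changes nothing. You should, however, say explicitly that the routing is undone after the query: otherwise the address register stays entangled with the control, and the second layer yields a uniform bit.

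One correction concerns your closing remark that promise versions are not needed, which is mistaken. The displayed equalities are between classes of promise problems (Definition~\ref{def:FHk-rel}), so language equality is not enough. The paper handles this by thresholding the path-sum value at $\tfrac12$ on every input. This yields a total language $L'\in\mathrm{PSPACE}$ containing every yes instance and no no instance, and a lower-level decider for $L'$ then decides the original promise problem.
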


\begin{proof}
The proof is the standard collapse of $\BQP$ to $\P$ relative to a
$\mathrm{PSPACE}$-complete oracle~\cite{bernstein1997quantum}, carried out level by level.  The one
point that requires care is that a phase query must be able to read the oracle from the second level
onwards.

\emph{Upper inclusions.}  Every class above is contained in $\BQP^{O}$: an $\FH_k$
circuit is a polynomial-size quantum circuit with oracle gates, and a phase gate is one standard
gate by kickback.  For a uniform family with oracle $O$, the acceptance probability on a given
input is computed to precision $2^{-n-3}$ in polynomial space by the path-sum
method of~\cite{bernstein1997quantum}, evaluating each oracle answer $O(y)$, for the queried string
$y$ of polynomial length, in polynomial space using $O\in\mathrm{PSPACE}$ and reusing the space
across paths.  Hence for every promise problem in $\BQP^{O}$ there is a total language
$L'\in\mathrm{PSPACE}$, namely the inputs on which the computed acceptance probability exceeds
$\tfrac12$, that contains every $\Yes$ instance and no $\No$ instance.

\emph{Lower inclusions.}  Let $L'\in\mathrm{PSPACE}$ and let $g$ be a polynomial-time reduction
with $x\in L'\iff g(x)\in O$.  A deterministic polynomial-time machine decides $L'$ with one
query, so $\mathrm{PSPACE}\subseteq\P^{O}$, and $\P^{O}=\FH_{0,\mathrm{std}}^{O}\subseteq
\FH_{k,\mathrm{std}}^{O}$ by Lemma~\ref{lem:fh0-std}, whose proof does not use the access
convention.  Combined with the upper inclusions, every promise problem in
$\FH_{k,\mathrm{std}}^{O}$ is decided by a $\P^{O}$ machine through its language $L'$, which gives
the standard-query equalities.

For the phase model, it suffices to decide any $L'\in\mathrm{PSPACE}$ at Fourier depth two.  The
first global Hadamard layer prepares a control qubit $c$ in $\ket+$.  The interior block
reversibly computes, by Bennett's method, the string $g(x)$ into the query register on the $c=1$
branch and the string $0^{\abs{g(x)}}$ on the $c=0$ branch; both maps are basis-preserving.  It then
makes one phase query and uncomputes the routing.  The $c=0$ branch queries a string beginning with
$0$, which is not in $O$ by assumption, so it acquires phase $+1$; the $c=1$ branch acquires
$(-1)^{O(g(x))}$.  The second Hadamard layer, applied to $c$, therefore maps the control to the
basis state $\ket{O(g(x))}$, and the measurement decides $L'$ with certainty.  Hence
$\mathrm{PSPACE}\subseteq\FH_2^{O}$, and with the upper inclusions
$\FH_2^{O}=\FH_k^{O}=\BQP^{O}$ for every $k\ge2$, all equal to $\mathrm{PSPACE}$ as
language classes.
\end{proof}

\begin{proposition}[The separations under the standard convention]\label{rem:standard-convention-separations}
Theorems~\ref{thm:all-level-separation}, \ref{thm:standard-query-adjacent}
and~\ref{thm:phase-vs-standard} remain true under the standard convention, and a single language
oracle realizes all three separations at every level $k\ge2$ at once.
\end{proposition}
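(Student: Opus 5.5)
The plan is to encode, at each length $n$, the length-preserving indexed instance $O_n=(f_{n,a})_{a\in A_n}$ used in Theorems~\ref{thm:all-level-separation}, \ref{thm:standard-query-adjacent} and~\ref{thm:phase-vs-standard} into one slice of a single language, and then to argue two things. The upper-bound circuits still work, because a circuit can compute the slice address reversibly inside a basis-preserving block. The lower bounds still hold, because queries to other slices carry no information about the planted instance. Concretely, I would fix a polynomial-time injective encoding $\langle n,a,y\rangle$ of length $p(n)$, with distinct $n$ giving distinct lengths and disjoint lengths reserved for each theorem and each level $k$. For instance, I would use a tag recording the theorem and the value of $k$, as in Corollary~\ref{cor:single-oracle-all-levels}, and put $\langle n,a,y\rangle\in O$ iff $f_{n,a}(y)=1$. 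Every string outside the image of the encoding is not in $O$, and I would prefix all members with $1$ as in Proposition~\ref{prop:pspace-collapse}, so that a query to a string beginning with $0$ acts as the distinguished constant-zero index.

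\emph{Upper bounds.} The circuits of Theorem~\ref{thm:upper-worst-case}, Theorem~\ref{thm:standard-query-lower} and Proposition~\ref{prop:standard-forrelation-test} make indexed queries $\ket{a,y}$. I would replace each one by computing $\langle n,a,y\rangle$ into a query register with a classical reversible circuit, making a single phase or standard query, and uncomputing. All three steps are basis-preserving, so the Fourier depth and query count do not change. Queries on the ``dummy'' branch are sent to $0^{p(n)}$, which carries phase $+1$. Hence the deciding families remain uniform $\FH_{k+1}$ (respectively $\FH_{k,\mathrm{std}}$) families relative to $O$.

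\emph{Lower bounds.} Now a circuit on input $1^n$ may query any slice of length up to its size. The diagonalization fixes the instances in order of length. At the dedicated length $n_i$ of the $i$th machine, I would treat every slice other than the one encoding $Z_{n_i}$ as fixed. The slices already chosen are fixed, and for the slices not yet chosen I would commit to a default instance, namely all-ones Yes or a fixed No instance. The acceptance probability of $A^{(i)}_{n_i}$ is then a function of the truth tables of $Z_{n_i}$ alone, obtained by restricting the full acceptance function. The growth bounds Corollary~\ref{cor:FHk-growth} and Theorem~\ref{thm:standard-query-growth} are stated for every restriction, and their proofs apply verbatim to a circuit whose oracle is the whole finite prefix of $O$ that the circuit can reach. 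So Theorems~\ref{thm:lower-worst-case} and~\ref{thm:standard-query-lower} produce a promised $Z_{n_i}$ on which the machine errs. The query count against the enlarged $M$ only matters through $q$, and the Fourier weight relevant to the Bansal--Sinha criterion concerns only the $KmN$ free variables.

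The main obstacle is the bookkeeping of this staged construction. Later lengths are fixed after $n_i$, so they may differ from the defaults that were assumed when $Z_{n_i}$ was chosen. I would first try the usual fix: choose $n_{i+1}$ larger than the size bound of $A^{(i)}$ at $n_i$, so that the machine cannot reach those later slices. The defaults are then only needed up to that size, and they are never changed afterwards. Finally, I would interleave the length classes for the three theorems and all $k$, as in Corollary~\ref{cor:single-oracle-all-levels}, and enumerate all pairs (family, target separation) so that each is defeated at its own dedicated lengths. This yields a single language oracle realizing every separation.
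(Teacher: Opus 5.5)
Your proposal is correct. The encoding and the upper-bound part match the paper's; the genuine difference is the diagonalization.

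\textbf{The paper's route.} It gives up the adversarial choice. It fixes designated lengths $n_{j+1}=2^{n_j}$ by a polynomial-time rule, and gives each triple (model, level, machine) infinitely many of them. At each designated length it plants a fair-coin draw, either from $\mathcal F_K^{\otimes m}$ conditioned on $\Pi^{\Yes}$ or from $\mathcal U^{\otimes m}$ conditioned on $\Pi^{\No}$. It then uses the vanishing-advantage bound of Corollary~\ref{cor:error-robust}, which is uniform over the earlier blocks. This makes the machine err with conditional probability at least $\tfrac15$ at all but finitely many of its designated lengths, hence almost surely infinitely often.

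\textbf{Your route.} You keep the adversarial choice by staging it in the Baker--Gill--Solovay manner. You take $n_{i+1}$ beyond the size bound of $A^{(i)}$ at $n_i$ and freeze the intervening slices at promised defaults. You then apply the worst-case Theorems~\ref{thm:lower-worst-case} and~\ref{thm:standard-query-lower} to the restricted acceptance function.

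\textbf{Comparison.} Your argument is more elementary and deterministic. The price is that your dedicated lengths depend on the enumerated clocks. What rescues the deciders is your device of putting a promised instance of the right type at every length of a polynomial-time class $S_{\mu,k}$, so the deciders never need to know which lengths are dedicated. The paper's clock-independent designated lengths are recognizable outright. Its argument also shows that a random planted oracle works almost surely.

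\textbf{One step to make explicit.} Replace your remark that ``the enlarged $M$ only matters through $q$'' with a precise statement. The growth bound must be used in its free-variable form, with $\widetilde M=mKN$ in the exponent: this is the restriction clause of Theorem~\ref{thm:gstw-growth}, or Theorem~\ref{thm:standard-query-growth} directly. Corollary~\ref{cor:FHk-growth} as literally stated carries the full $M$, which is now the whole reachable prefix of up to $2^{\poly(n)}$ bits and would make Step~4 useless. Alternatively, hardwire the frozen slices into a basis-preserving subcircuit. The lower-bound theorems allow this, because they hold for circuits of arbitrary size.
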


\begin{proof}
\emph{The encoding.}
The indexed functions of a block at length $n$ are encoded in
the slice of $O$ at length $n+s(n)+1$, with $s(n)=\lceil\log_2n\rceil$: the string $1ay$ with
$a\in\bits^{s(n)}$ and $y\in\bits^n$ lies in $O$ exactly when $f_{n,a}(y)=1$, and no string beginning
with $0$ lies in $O$.  Only the diagonalization changes, from an adversarial choice at each length to
a probabilistic one, since a circuit may now query slices at lengths other than its own.

\emph{The random planting.}
Enumerate the triples $(\mu,k,i)$, where $\mu$ is one of the two query models, $k\ge2$ is a level,
and $i$ is a clocked machine of that model with clock bound $n^{c_i}$ on its circuit size, which
also bounds the length of every string it queries.  Choose
designated lengths $n_1<n_2<\cdots$ with $n_{j+1}=2^{n_j}$, assigning infinitely many designated
lengths to every triple by a polynomial-time computable rule, each triple receiving only lengths
large enough for its block to fit in the index width $s(n)$.  For a fixed machine, $n_j^{c_i}<n_{j+1}$
for all large $j$, so at all but finitely many of its designated lengths the machine queries no slice
at length $n_{j+1}$ or beyond.  At a length $n_j$ designated for $(\mu,k,i)$ we toss a fair coin and
plant a level-$k$ block of the model $\mu$, with $K=2k-1$ or $K=2k+1$ accordingly, drawn either from
$\mathcal F_K^{\otimes m}$ conditioned on $\Pi^{\Yes}$ or from $\mathcal U^{\otimes m}$ conditioned
on $\Pi^{\No}$, independently of everything else; at every other length we plant the all-ones
instance, which lies in $\Pi^{\Yes}$ at every level.

\emph{At a designated length the machine has small advantage, whatever was planted before.}
Fix a triple $(\mu,k,i)$ and one of its designated lengths $n_j$, and condition on all blocks at
lengths below $n_j$.  The acceptance probability of machine $i$ at input $1^{n_j}$ is then a
restriction of its acceptance function in which only the block at $n_j$ remains free, so
Corollary~\ref{cor:error-robust}, whose single-block bound holds uniformly over restrictions, bounds
its advantage between the two unconditioned draws by $o_k(1)$.  Here the restriction clause is used
in the free-variable form of~\cite[Theorem~4.1]{girish2024power}: the growth bound for a restricted
function depends only on its number of free coordinates, which is $mKN$ however many oracle bits at
other lengths have been fixed.  Passing to the conditioned draws changes this by at most twice the
conditioning mass, which is at most $2e^{-6}/(1-e^{-6})+o(1)\le\tfrac1{100}$ by
Proposition~\ref{prop:promise-density}.  \emph{Hence it fails at infinitely many of them.}
Write $p_1$ and $p_2$ for the conditional probabilities that
the machine is correct in the bounded-error sense on a $\Yes$ draw (acceptance probability at least
$\tfrac23$) and on a $\No$ draw (at most $\tfrac13$).  Since the acceptance probability is at least
$\tfrac23$ with probability $p_1$ under the first draw and at most $1-\tfrac23p_2$ in expectation
under the second, the advantage is at least $\tfrac23(p_1+p_2)-1$, so
$p_1+p_2\le\tfrac32(1+\tfrac1{100}+o_k(1))$ and the conditional probability that the error exceeds
$\tfrac13$ at length $n_j$ is $1-\tfrac12(p_1+p_2)\ge\tfrac15$ for large $j$, whatever the earlier
blocks are.  The product of these conditional bounds over the designated lengths of the triple tends
to $0$, so almost surely machine $i$ fails at infinitely many of them; the triples are countable, so
almost surely this holds for every triple at once.  For the standard model the single-block advantage
bound is the one in the proof of Theorem~\ref{thm:standard-query-lower}, which is likewise uniform
over restrictions.

\emph{The separating languages.}
For each model $\mu$ and level $k$ let $L_{\mu,k}$ consist of the strings $1^n$ such that $n$ is
designated for a triple $(\mu,k,i)$ and the block at $n$ is a $\Yes$ instance.  The set of such $n$
is polynomial-time decidable, every block lies in its promise set by construction, and the
depth-$(k+1)$ decider of Theorem~\ref{thm:upper-worst-case} (or Theorem~\ref{thm:standard-query-lower})
queries only its own slice, so $L_{\mu,k}$ is decided at Fourier depth $k+1$ in the model $\mu$ at
every length, whereas every level-$k$ machine of that model fails on it at infinitely many lengths.
Any realization with these properties is an oracle $O_1$ that separates every adjacent pair of
levels in both models.  The same oracle realizes Theorem~\ref{thm:phase-vs-standard} as well: the
language $L_{\mathrm{phase},k}$ lies outside $\FH_k^{O_1}$, and its blocks are instances of
$\mathrm{ORF}_{2k-1,n}$, which the standard-query depth-$k$ circuit in the proof of
Theorem~\ref{thm:phase-vs-standard} decides while querying only its own slice; hence
$L_{\mathrm{phase},k}\in\FH_{k,\mathrm{std}}^{O_1}\setminus\FH_k^{O_1}$.
\end{proof}

\begin{corollary}[The adjacent-level question does not relativize]\label{cor:nonrelativizing}
Under the standard convention there are oracles $O_1$ and $O_2$ such that, for every $k\ge2$,
\[
\FH_{k,\mathrm{std}}^{O_1}\subsetneq\FH_{k+1,\mathrm{std}}^{O_1},
\qquad
\FH_{k,\mathrm{std}}^{O_2}=\FH_{k+1,\mathrm{std}}^{O_2},
\]
and likewise in the phase model.  Consequently no relativizing argument resolves Shi's conjecture,
in either direction and in either query model.
\end{corollary}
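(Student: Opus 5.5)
The plan is to assemble this from results already in place, taking $O_1$ and $O_2$ separately.

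For $O_1$, take the single language oracle of Proposition~\ref{rem:standard-convention-separations}. Under the standard convention it makes every adjacent pair of levels strict in both query models, for all $k\ge2$ simultaneously. The separating languages $L_{\mu,k}$ are decided at depth $k+1$ in model $\mu$, and every level-$k$ machine of that model fails on them at infinitely many designated lengths. This gives $\FH_{k,\mathrm{std}}^{O_1}\subsetneq\FH_{k+1,\mathrm{std}}^{O_1}$ and the phase statement $\FH_k^{O_1}\subsetneq\FH_{k+1}^{O_1}$. The inclusions themselves hold trivially, since a depth-$k$ circuit is a depth-$(k+1)$ circuit.

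For $O_2$, take a $\mathrm{PSPACE}$-complete language padded so that no member begins with $0$, as in Proposition~\ref{prop:pspace-collapse}. That proposition gives $\FH_{k,\mathrm{std}}^{O_2}=\P^{O_2}$ for every $k\ge0$, so in particular $\FH_{k,\mathrm{std}}^{O_2}=\FH_{k+1,\mathrm{std}}^{O_2}$. It also gives $\FH_k^{O_2}=\FH_2^{O_2}=\BQP^{O_2}$ for every $k\ge2$, so in particular $\FH_k^{O_2}=\FH_{k+1}^{O_2}$ for $k\ge2$.

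The only point needing care is that both oracles must live in the same model, namely the standard convention in both query modes. This is exactly why Proposition~\ref{rem:standard-convention-separations} was proved rather than relying on the indexed length-preserving theorems.

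The non-relativization consequence follows by the usual argument. Suppose a relativizing proof established $\FH_k\subsetneq\FH_{k+1}$ for some $k\ge2$; then it would hold relative to $O_2$, which is false. Suppose instead a relativizing proof established $\FH_k=\FH_{k+1}$; then it would hold relative to $O_1$, which is also false. The same reasoning applies in each query model.

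The main obstacle is bookkeeping rather than mathematics: checking that the class definitions used in the two propositions agree. Both must use the same notion of uniform family, bounded error and promise classes under the standard convention. Both must also state the collapse at the level of classes of promise problems. Proposition~\ref{prop:pspace-collapse} states equalities of promise classes via the total language $L'$, so this matches.
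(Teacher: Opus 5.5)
Your proposal is correct and matches the paper's proof. The paper likewise takes $O_1$ from Proposition~\ref{rem:standard-convention-separations}, a single language oracle that separates every adjacent pair $k\ge2$ in both query models under the standard convention, and $O_2$ to be a padded $\mathrm{PSPACE}$-complete language as in Proposition~\ref{prop:pspace-collapse}; your spelled-out non-relativization argument and your observation that both oracles must live under the same convention are the points the paper makes when it sets up that subsection.
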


\begin{proof}
Take $O_1$ from Proposition~\ref{rem:standard-convention-separations}, which handles every $k\ge2$ and both
models with one oracle, and $O_2$ any $\mathrm{PSPACE}$-complete language as in
Proposition~\ref{prop:pspace-collapse}.
\end{proof}

Relative to $O_2$ the chain of inclusions in Remark~\ref{rem:tower} also collapses:
$\FH_2^{O_2}=\bigl(\bigcup_k\FH_k\bigr)^{O_2}=\BQP^{O_2}$, in contrast with
Theorem~\ref{thm:fh-vs-bqp}, and the phase hierarchy contains $\P$ relative to $O_2$, in contrast
with Corollary~\ref{cor:p-not-in-fh}.  Which of the two sides of
Corollary~\ref{cor:nonrelativizing} holds in the unrelativized setting is precisely Shi's
conjecture, and deciding this requires techniques that do not relativize.

\section{Discussion and open problems}\label{sec:open-problems}

We conclude with four questions that the results leave open.

First, the quantitative form of the separation.  The query threshold of
Theorem~\ref{thm:lower-worst-case} is $N^{c_k}$ with $c_k=\Theta(1/k^2)$, and
Remark~\ref{rem:parametric-lower-bound} records how far the present argument reaches.  The optimal
exponent is open, in both query models.

Second, how much is gained by retaining the answer bit.  Theorem~\ref{thm:phase-vs-standard}
separates the two access models at a fixed Fourier depth, and
Proposition~\ref{prop:classical-prefix-separation} shows that the separation survives an adaptive
classical preprocessing stage.  Whether one further Hadamard layer suffices,
\[
\FH_{k,\mathrm{std}}^{O}\ \subseteq\ \P^{O}\text{-}\FH_{k+1}^{O}?
\]
remains open.  The question is related to Aaronson's question about standard versus erasing
oracles~\cite[Problem~11]{aaronson2021open}, although the two are not the same: an erasing oracle
maps an input to its image and discards the input register, and is not a phase oracle.

Third, the sign ensemble of Appendix~\ref{sec:appendix-sign}.  Its analysis reduces to the decay of
a single Fourier coefficient, which Theorem~\ref{thm:revealed-decay} establishes in one of the two
parameter ranges and Conjecture~\ref{conj:sharp-decay} asserts in general.  Proving the conjecture
would make the ensemble an explicitly defined hard distribution for the oracle separation
$\FH_2\subsetneq\FH_3$.

Fourth, an intrinsic characterization of Fourier depth.  Arunachalam, Bri\"et, and
Palazuelos~\cite{arunachalam2017quantumqueryalgorithmscompletely} characterize $q$-query quantum
algorithms as exactly the completely bounded forms of degree $2q$, a characterization that counts
queries but takes no account of Fourier depth.  What additional structure on the form corresponds to a bound
of $k$ on the number of Hadamard layers?  A related question is whether $\FH_k$ and the
$(k-1)$-round parallel-query model of Lemma~\ref{lem:round-embedding} differ as classes.

Two further points are settled by oracles.  The containment of $\mathrm{NP}$
in the standard-query levels is oracle-dependent: relative to a $\mathrm{PSPACE}$-complete
oracle $\mathrm{NP}^{O}=\FH_{k,\mathrm{std}}^{O}$ for every $k$
(Proposition~\ref{prop:pspace-collapse}), while relative to a random oracle
$\mathrm{NP}^{O}\not\subseteq\BQP^{O}$~\cite{bennett1997strengths}.  Shi's conjecture itself is
likewise beyond relativizing techniques: both directions are realized by oracles, the separations of
this paper on one side and the collapse just mentioned on the other, so no relativizing argument
resolves it (Corollary~\ref{cor:nonrelativizing}).

\section*{Acknowledgments}
We thank Uma Girish and Makrand Sinha for helpful discussion regarding~\cite{girish2024power} (personal communication, August 2026). We also thank the anonymous reviewers of FOCS 2026 for constructive feedback on an earlier version of this work, which claimed only a partial resolution of Shi's conjecture and whose contents now form Appendix~\ref{sec:appendix-sign}. Large language models were used in preparing this paper: Claude Opus and ChatGPT 5 throughout, and Claude Opus and Fable in the later stages of writing and to check proofs and computations. Their role was substantial in identifying errors in the author's proof sketches and in expanding those sketches into full rigorous proofs. We verified the originality of the results, checked every reference against its source, and take full responsibility for all content. This work is supported by the author's faculty startup grant from Virginia Tech.

\printbibliography

@article{girish2025fourierspectrumnoisyquantum,
  title={{F}ourier Spectrum of Noisy Quantum Algorithms},
  author={Girish, Uma},
  journal={arXiv preprint arXiv:2510.06385},
  year={2025}
}

@article{GGM86,
  author = {Goldreich, Oded and Goldwasser, Shafi and Micali, Silvio},
  title = {How to construct random functions},
  journal = {Journal of the {ACM}},
  volume = {33},
  number = {4},
  pages = {792--807},
  year = {1986},
  doi = {10.1145/6490.6503}
}

@article{HILL99,
  author = {H{\aa}stad, Johan and Impagliazzo, Russell and Levin, Leonid A. and Luby, Michael},
  title = {A Pseudorandom Generator from any One-way Function},
  journal = {{SIAM} Journal on Computing},
  volume = {28},
  number = {4},
  pages = {1364--1396},
  year = {1999},
  doi = {10.1137/S0097539793244708}
}

@inproceedings{zhandry2012,
  author = {Zhandry, Mark},
  title = {How to Construct Quantum Random Functions},
  booktitle = {53rd Annual {IEEE} Symposium on Foundations of Computer Science ({FOCS})},
  pages = {679--687},
  year = {2012},
  doi = {10.1109/FOCS.2012.37}
}

@inproceedings{ABKM16,
  author = {Aaronson, Scott and Bouland, Adam and Kuperberg, Greg and Mehraban, Saeed},
  title = {The computational complexity of ball permutations},
  booktitle = {Proceedings of the 49th Annual {ACM} {SIGACT} Symposium on Theory of Computing ({STOC})},
  pages = {317--327},
  year = {2017},
  doi = {10.1145/3055399.3055453}
}

@misc{JM24,
  author = {Dale Jacobs and Saeed Mehraban},
  title = {The Space Just Above One Clean Qubit},
  eprint = {2410.08051},
  archiveprefix = {arXiv},
  year = {2024},
  note = {Manuscript}
}

@article{aaronson2021open,
  title = {Open Problems Related to Quantum Query Complexity},
  author = {Aaronson, Scott},
  journal = {arXiv preprint arXiv:2109.06917},
  year = {2021}
}

@article{Shi_2005,
   title={Quantum and classical tradeoffs},
   volume={344},
   ISSN={0304-3975},
   url={http://dx.doi.org/10.1016/j.tcs.2005.03.053},
   DOI={10.1016/j.tcs.2005.03.053},
   eprint={quant-ph/0312213},
   archiveprefix={arXiv},
   note={The class $\FH_k$ and Conjecture~4.1 appear in the concluding discussion; section and
   statement numbers cited here follow the preprint, \texttt{arXiv:quant-ph/0312213}},
   number={2--3},
   journal={Theoretical Computer Science},
   author={Shi, Yaoyun},
   year={2005},
   month=nov, pages={335--345} }

@inproceedings{aaronson2014forrelationproblemoptimallyseparates,
  title={{F}orrelation: A problem that optimally separates quantum from classical computing},
  author={Aaronson, Scott and Ambainis, Andris},
  booktitle={Proceedings of the Forty-Seventh Annual ACM Symposium on Theory of Computing (STOC 2015)},
  pages={307--316},
  year={2015},
  note={Theorem numbers cited here follow the full version, \texttt{arXiv:1411.5729}}
}

@inproceedings{bansal2021k,
  eprint = {2008.07003},
  archiveprefix = {arXiv},
  note = {Theorem numbers cited here follow the full version, \texttt{arXiv:2008.07003}},
  title={k-{F}orrelation optimally separates quantum and classical query complexity},
  author={Bansal, Nikhil and Sinha, Makrand},
  booktitle={Proceedings of the 53rd Annual ACM SIGACT Symposium on Theory of Computing (STOC 2021)},
  pages={1303--1316},
  year={2021},
  doi={10.1145/3406325.3451040}
}

@inproceedings{girish2025forrelationextremallyhard,
  title={Forrelation is extremally hard},
  author={Girish, Uma and Servedio, Rocco},
  booktitle={17th Innovations in Theoretical Computer Science Conference (ITCS 2026)},
  series={Leibniz International Proceedings in Informatics (LIPIcs)},
  volume={362},
  pages={72:1--72:22},
  year={2026},
  publisher={Schloss Dagstuhl -- Leibniz-Zentrum f{\"u}r Informatik},
  doi={10.4230/LIPIcs.ITCS.2026.72}
}

@inproceedings{girish2024power,
  title={The power of adaptivity in quantum query algorithms},
  author={Girish, Uma and Sinha, Makrand and Tal, Avishay and Wu, Kewen},
  booktitle={Proceedings of the 56th Annual ACM Symposium on Theory of Computing (STOC 2024)},
  pages={1488--1497},
  year={2024},
  doi={10.1145/3618260.3649621}
}

@book{stanley2012enumerative,
  title={Enumerative Combinatorics},
  author={Stanley, Richard P.},
  volume={1},
  edition={2},
  publisher={Cambridge University Press},
  year={2012},
  doi={10.1017/CBO9781139058520}
}

@inproceedings{girish2021parity,
  title={Fourier growth of parity decision trees},
  author={Girish, Uma and Tal, Avishay and Wu, Kewen},
  booktitle={36th Computational Complexity Conference (CCC 2021)},
  year={2021},
  eprint={2103.11604},
  archivePrefix={arXiv},
  primaryClass={cs.CC}
}

@article{bremner2016average,
  doi = {10.1103/PhysRevLett.117.080501},
  title={Average-case complexity versus approximate simulation of commuting quantum computations},
  author={Bremner, Michael J and Montanaro, Ashley and Shepherd, Dan J},
  journal={Physical Review Letters},
  volume={117},
  number={8},
  pages={080501},
  year={2016},
}

@article{Morimae_2018,
   title={{M}erlin-{A}rthur with efficient quantum Merlin and quantum supremacy for the second level of the {F}ourier hierarchy},
   volume={2},
   ISSN={2521-327X},
   url={http://dx.doi.org/10.22331/q-2018-11-15-106},
   DOI={10.22331/q-2018-11-15-106},
   journal={Quantum},
   author={Morimae, Tomoyuki and Takeuchi, Yuki and Nishimura, Harumichi},
   year={2018},
   month=nov, pages={106} }

@misc{fefferman2015powerquantumfouriersampling,
      title={The Power of Quantum Fourier Sampling},
      author={Bill Fefferman and Chris Umans},
      year={2015},
      eprint={1507.05592},
      archivePrefix={arXiv},
      primaryClass={cs.CC},
      url={https://arxiv.org/abs/1507.05592},
}

@INPROCEEDINGS{365700,
  author={Shor, P.W.},
  booktitle={Proceedings 35th Annual Symposium on Foundations of Computer Science},
  title={Algorithms for quantum computation: discrete logarithms and factoring},
  year={1994},
  pages={124--134},
  doi={10.1109/SFCS.1994.365700}}

@INPROCEEDINGS{365701,
  author={Simon, D.R.},
  booktitle={Proceedings 35th Annual Symposium on Foundations of Computer Science},
  title={On the power of quantum computation},
  year={1994},
  pages={116--123},
  doi={10.1109/SFCS.1994.365701}}

@book{odonnell2021analysisbooleanfunctions,
  title={Analysis of {B}oolean Functions},
  author={O'Donnell, Ryan},
  year={2014},
  publisher={Cambridge University Press}
}

@inproceedings{mossel2005noisestabilityfunctionslow,
  doi = {10.4007/annals.2010.171.295},
  title={Noise stability of functions with low influences: invariance and optimality},
  author={Mossel, Elchanan and O'Donnell, Ryan and Oleszkiewicz, Krzysztof},
  booktitle={46th Annual IEEE Symposium on Foundations of Computer Science (FOCS'05)},
  pages={21--30},
  year={2005},
  organization={IEEE},
  note={Journal version: \emph{Annals of Mathematics} \textbf{171} (2010), 295--341}
}

@misc{ananth2024pseudorandomnessinverselesshaarrandom,
      title={Pseudorandomness in the (Inverseless) Haar Random Oracle Model},
      author={Prabhanjan Ananth and John Bostanci and Aditya Gulati and Yao-Ting Lin},
      year={2024},
      eprint={2410.19320},
      archivePrefix={arXiv},
      primaryClass={quant-ph},
      url={https://arxiv.org/abs/2410.19320},
}

@article{bu2025quantumhigherorderfourier,
  doi = {10.1073/pnas.2515667122},
  title={Quantum higher-order Fourier analysis and the Clifford hierarchy},
  author={Bu, Kaifeng and Gu, Weichen and Jaffe, Arthur},
  journal={Proceedings of the National Academy of Sciences},
  volume={122},
  number={45},
  pages={e2515667122},
  year={2025},
}

@article{shepherd2009temporally,
  doi = {10.1098/rspa.2008.0443},
  title={Temporally unstructured quantum computation},
  author={Shepherd, Dan and Bremner, Michael J},
  journal={Proceedings of the Royal Society A: Mathematical, Physical and Engineering Sciences},
  volume={465},
  number={2105},
  pages={1413--1439},
  year={2009},
}

@article{bremner2011classical,
  doi = {10.1098/rspa.2010.0301},
  title={Classical simulation of commuting quantum computations implies collapse of the polynomial hierarchy},
  author={Bremner, Michael J and Jozsa, Richard and Shepherd, Dan J},
  journal={Proceedings of the Royal Society A: Mathematical, Physical and Engineering Sciences},
  volume={467},
  number={2126},
  pages={459--472},
  year={2011},
}

@article{shepherd2010quantum,
  title={Quantum Complexity: restrictions on algorithms and architectures},
  author={Shepherd, Daniel James},
  journal={arXiv preprint arXiv:1005.1425},
  year={2010}
}

@article{arunachalam2017quantumqueryalgorithmscompletely,
  title={Quantum Query Algorithms are Completely Bounded Forms},
  author={Arunachalam, Srinivasan and Bri{\"e}t, Jop and Palazuelos, Carlos},
  journal={arXiv preprint arXiv:1711.07285},
  year={2017},
  eprint={1711.07285},
  archivePrefix={arXiv},
  primaryClass={quant-ph},
  url={https://arxiv.org/abs/1711.07285}
}

@article{nisan1994hardness,
  doi = {10.1016/S0022-0000(05)80043-1},
  title={Hardness vs randomness},
  author={Nisan, Noam and Wigderson, Avi},
  journal={Journal of Computer and System Sciences},
  volume={49},
  number={2},
  pages={149--167},
  year={1994},
}

@article{Demarie_2018,
   title={Classical verification of quantum circuits containing few basis changes},
   volume={97},
   ISSN={2469-9934},
   url={http://dx.doi.org/10.1103/PhysRevA.97.042319},
   DOI={10.1103/physreva.97.042319},
   number={4},
   journal={Physical Review A},
   author={Demarie, Tommaso F. and Ouyang, Yingkai and Fitzsimons, Joseph F.},
   year={2018},
   month=apr }

@article{shor1999polynomial,
  doi = {10.1137/S0036144598347011},
  title={Polynomial-time algorithms for prime factorization and discrete logarithms on a quantum computer},
  author={Shor, Peter W},
  journal={SIAM Review},
  volume={41},
  number={2},
  pages={303--332},
  year={1999},
}

@book{nielsen2010quantum,
  title={Quantum Computation and Quantum Information},
  author={Nielsen, Michael A and Chuang, Isaac L},
  year={2010},
  publisher={Cambridge University Press}
}

@article{kitaev1995quantum,
  title={Quantum measurements and the {A}belian stabilizer problem},
  author={Kitaev, A Yu},
  journal={arXiv preprint quant-ph/9511026},
  year={1995}
}

@misc{shi2002toffolicontrollednotneedlittle,
      title={Both {T}offoli and {C}ontrolled-{NOT} need little help to do universal quantum computation},
      author={Yaoyun Shi},
      year={2002},
      eprint={quant-ph/0205115},
      archivePrefix={arXiv},
      primaryClass={quant-ph},
      url={https://arxiv.org/abs/quant-ph/0205115},
}

@article{vadhan2012pseudorandomness,
  doi = {10.1561/0400000010},
  title={Pseudorandomness},
  author={Vadhan, Salil P},
  journal={Foundations and Trends in Theoretical Computer Science},
  volume={7},
  number={1--3},
  pages={1--336},
  year={2012},
}

@inproceedings{ji2018pseudorandom,
  doi = {10.1007/978-3-319-96878-0_5},
  title={Pseudorandom quantum states},
  author={Ji, Zhengfeng and Liu, Yi-Kai and Song, Fang},
  booktitle={Annual International Cryptology Conference},
  pages={126--152},
  year={2018},
  organization={Springer}
}

@inproceedings{krestchmer_pru,
  doi = {10.4230/LIPIcs.TQC.2021.2},
  title={Quantum Pseudorandomness and Classical Complexity},
  author={Kretschmer, William},
  booktitle={16th Conference on the Theory of Quantum Computation, Communication and Cryptography (TQC 2021)},
  pages={2:1--2:20},
  year={2021},
  organization={Schloss Dagstuhl--Leibniz-Zentrum f{\"u}r Informatik}
}

@inproceedings{ma2025constructrandomunitaries,
  doi = {10.1145/3717823.3718254},
  title={How to construct random unitaries},
  author={Ma, Fermi and Huang, Hsin-Yuan},
  booktitle={Proceedings of the 57th Annual ACM Symposium on Theory of Computing},
  pages={806--809},
  year={2025}
}

@inproceedings{metger2024simpleconstructionslineardepthtdesigns,
  doi = {10.1109/FOCS61266.2024.00038},
  title={Simple constructions of linear-depth t-designs and pseudorandom unitaries},
  author={Metger, Tony and Poremba, Alexander and Sinha, Makrand and Yuen, Henry},
  booktitle={2024 IEEE 65th Annual Symposium on Foundations of Computer Science (FOCS)},
  pages={485--492},
  year={2024},
  organization={IEEE}
}

@article{bernstein1997quantum,
  title={Quantum complexity theory},
  author={Bernstein, Ethan and Vazirani, Umesh},
  journal={SIAM Journal on Computing},
  volume={26},
  number={5},
  pages={1411--1473},
  year={1997},
  doi={10.1137/S0097539796300921}
}

@inproceedings{cleve2000fast,
  doi = {10.1109/SFCS.2000.892140},
  title={Fast parallel circuits for the quantum {F}ourier transform},
  author={Cleve, Richard and Watrous, John},
  booktitle={Proceedings 41st Annual Symposium on Foundations of Computer Science (FOCS 2000)},
  pages={526--536},
  year={2000},
  organization={IEEE}
}

@article{raz2019oracle,
  title={Oracle separation of {BQP} and {PH}},
  author={Raz, Ran and Tal, Avishay},
  journal={Journal of the ACM},
  volume={69},
  number={4},
  pages={30:1--30:21},
  year={2022},
  doi={10.1145/3530258},
  note={Preliminary version in STOC 2019}
}

@inproceedings{tal2020towards,
  title={Towards optimal separations between quantum and randomized query complexities},
  author={Tal, Avishay},
  booktitle={Proceedings of the 61st IEEE Annual Symposium on Foundations of Computer Science (FOCS 2020)},
  pages={228--239},
  year={2020},
  doi={10.1109/FOCS46700.2020.00030}
}

@article{sherstov2023optimal,
  title={An optimal separation of randomized and quantum query complexity},
  author={Sherstov, Alexander A. and Storozhenko, Andrey A. and Wu, Pei},
  journal={SIAM Journal on Computing},
  volume={52},
  number={2},
  pages={525--567},
  year={2023},
  doi={10.1137/22M1468943},
  note={Preliminary version in STOC 2021}
}

@article{bravyi2021classical,
  title={Classical algorithms for {F}orrelation},
  author={Bravyi, Sergey and Gosset, David and Grier, Daniel and Schaeffer, Luke},
  journal={arXiv preprint arXiv:2102.06963},
  year={2021},
  url={https://arxiv.org/abs/2102.06963}
}

@inproceedings{carolan2025parallel,
  title={Quantum advantage and lower bounds in parallel query complexity},
  author={Carolan, Joseph and Gilani, Amin Shiraz and Vempati, Mahathi},
  booktitle={16th Innovations in Theoretical Computer Science Conference (ITCS 2025)},
  series={Leibniz International Proceedings in Informatics (LIPIcs)},
  volume={325},
  pages={31:1--31:14},
  year={2025},
  publisher={Schloss Dagstuhl -- Leibniz-Zentrum f{\"u}r Informatik},
  doi={10.4230/LIPIcs.ITCS.2025.31}
}

@inproceedings{coudron2020computations,
  doi = {10.1145/3357713.3384269},
  title={Computations with greater quantum depth are strictly more powerful (relative to an oracle)},
  author={Coudron, Matthew and Menda, Sanketh},
  booktitle={Proceedings of the 52nd Annual ACM SIGACT Symposium on Theory of Computing (STOC 2020)},
  pages={889--901},
  year={2020}
}

@inproceedings{chia2020need,
  title={On the need for large quantum depth},
  author={Chia, Nai-Hui and Chung, Kai-Min and Lai, Ching-Yi},
  booktitle={Proceedings of the 52nd Annual ACM SIGACT Symposium on Theory of Computing (STOC 2020)},
  pages={902--915},
  year={2020}
}

@article{arora2022oracle,
  title={Oracle separations of hybrid quantum-classical circuits},
  author={Arora, Atul Singh and Gheorghiu, Alexandru and Singh, Uttam},
  journal={arXiv preprint arXiv:2201.01904},
  year={2022},
  url={https://arxiv.org/abs/2201.01904}
}

@inproceedings{hasegawa2022optimal,
  title={An optimal oracle separation of classical and quantum hybrid schemes},
  author={Hasegawa, Atsuya and Le Gall, Fran{\c{c}}ois},
  booktitle={33rd International Symposium on Algorithms and Computation (ISAAC 2022)},
  series={Leibniz International Proceedings in Informatics (LIPIcs)},
  volume={248},
  pages={6:1--6:14},
  year={2022},
  publisher={Schloss Dagstuhl -- Leibniz-Zentrum f{\"u}r Informatik},
  doi={10.4230/LIPIcs.ISAAC.2022.6}
}

@article{hoeffding1963probability,
  doi = {10.1080/01621459.1963.10500830},
  title={Probability inequalities for sums of bounded random variables},
  author={Hoeffding, Wassily},
  journal={Journal of the American Statistical Association},
  volume={58},
  number={301},
  pages={13--30},
  year={1963}
}

@article{buzet2026iqp,
  title={{IQP} circuits for 2-{F}orrelation},
  author={Buzet, Quentin and Chailloux, Andr\'e},
  journal={arXiv preprint arXiv:2604.15248},
  year={2026}
}

@inproceedings{aaronson2010bqp,
  author = {Aaronson, Scott},
  title = {{BQP} and the Polynomial Hierarchy},
  booktitle = {Proceedings of the 42nd Annual {ACM} Symposium on Theory of Computing ({STOC})},
  pages = {141--150},
  year = {2010},
  doi = {10.1145/1806689.1806711}
}

@article{bennett1973logical,
  author = {Bennett, Charles H.},
  title = {Logical Reversibility of Computation},
  journal = {IBM Journal of Research and Development},
  volume = {17},
  number = {6},
  pages = {525--532},
  year = {1973},
  doi = {10.1147/rd.176.0525}
}

@article{hoeffding1948class,
  author = {Hoeffding, Wassily},
  title = {A Class of Statistics with Asymptotically Normal Distribution},
  journal = {The Annals of Mathematical Statistics},
  volume = {19},
  number = {3},
  pages = {293--325},
  year = {1948},
  doi = {10.1214/aoms/1177730196}
}

@book{bhattacharya1976normal,
  author = {Bhattacharya, R. N. and Ranga Rao, R.},
  title = {Normal Approximation and Asymptotic Expansions},
  publisher = {John Wiley \& Sons},
  address = {New York},
  year = {1976}
}

@inproceedings{chung2021compressed,
  author = {Chung, Kai-Min and Fehr, Serge and Huang, Yu-Hsuan and Liao, Tai-Ning},
  title = {On the Compressed-Oracle Technique, and Post-Quantum Security of Proofs of Sequential Work},
  booktitle = {Advances in Cryptology --- {EUROCRYPT} 2021},
  year = {2021},
  note = {arXiv:2010.11658}
}

@inproceedings{klivans2002learning,
  doi = {10.1109/SFCS.2002.1181894},
  title={Learning intersections and thresholds of halfspaces},
  author={Klivans, Adam R. and O'Donnell, Ryan and Servedio, Rocco A.},
  booktitle={43rd Annual IEEE Symposium on Foundations of Computer Science (FOCS)},
  pages={177--186},
  year={2002}
}

@article{bennett1997strengths,
  doi = {10.1137/S0097539796300933},
  author  = {Bennett, Charles H. and Bernstein, Ethan and Brassard, Gilles and Vazirani, Umesh},
  title   = {Strengths and weaknesses of quantum computing},
  journal = {SIAM Journal on Computing},
  volume  = {26},
  number  = {5},
  pages   = {1510--1523},
  year    = {1997}
}

\newpage
\appendix
\section{Toward an explicit hard instance}\label{sec:appendix-sign}

The separation $\FH_2\subsetneq\FH_3$ of Corollary~\ref{cor:fh2-fh3-separation} is unconditional, and
nothing in this appendix is needed for it.  What the hard distribution used there does not supply is
explicitness: it is a rounding of correlated Gaussians, its moments are not available in closed form,
and it is not known to be samplable together with a certificate of the promise.  The reason to remove
this defect is that the adjacent-level question does not relativize
(Corollary~\ref{cor:nonrelativizing}), so no oracle separation settles Shi's conjecture, and an
explicitly defined hard distribution is a first step toward instances that could be evaluated
without an oracle.  This appendix proposes such an instance at the
second level, reduces its analysis to a single Fourier coefficient, proves the conjectured exponent of that
coefficient, with a weaker prefactor, in one of the two parameter ranges, and states the conjecture
that would complete the argument.  Apart from Conjecture~\ref{conj:sharp-decay} and the one statement conditional on it,
all the formal results below are proved here, using only Fourier analysis on the Boolean
cube~\cite{odonnell2021analysisbooleanfunctions}.

\subsection{The sign ensemble}\label{subsec:appendix-upper}

Let $A,B:\bits^n\to\pmone$ be independent uniformly random sign functions.  Their normalized
Hadamard transforms are
\begin{equation}\label{eq:walsh-transform-prelim}
\widetilde A(y)=2^{-n/2}\sum_{x\in\bits^n}(-1)^{x\cdot y}A(x),
\qquad
\widetilde B(y)=2^{-n/2}\sum_{z\in\bits^n}(-1)^{y\cdot z}B(z).
\end{equation}
Each coefficient has variance $1$.  When a coefficient vanishes, use an independent uniform
tie-breaking sign $\tau\in\pmone$ and set
\begin{equation}\label{eq:sgnstar-def}
\sgnstar(t;\tau)=
\begin{cases}1,&t>0,\\-1,&t<0,\\\tau,&t=0.
\end{cases}
\end{equation}
Let $T_A,T_B:\bits^n\to\pmone$ be independent uniform sign functions, independent of $A,B$, and define
\begin{equation}\label{eq:U-V-def-prelim}
U(y)=\sgnstar(\widetilde A(y);T_A(y)),
\qquad
V(y)=\sgnstar(\widetilde B(y);T_B(y)).
\end{equation}

\begin{definition}[The $\Yes$ and $\No$ ensembles]\label{def:yes-no-ensembles}
The structured ensemble $\Yes_n$ is the distribution of
\begin{equation}\label{eq:yes-triple}
F_0=A,\qquad F_1=U\cdot V,\qquad F_2=B,
\end{equation}
where $A,B$ are independent uniform sign functions.  The uniform ensemble $\No_n$ is the
distribution of three independent uniform sign functions $F_0,F_1,F_2$.
\end{definition}

The ensemble combines two known constructions: Aaronson's forrelated distribution, which takes the
signs of a Gaussian vector and of its Hadamard transform~\cite{aaronson2010bqp,%
aaronson2014forrelationproblemoptimallyseparates}, and the form of the hard threefold distribution of
Bansal and Sinha~\cite{bansal2021k}, whose middle oracle is a product.  Here the oracle tuple
$(F_0,F_1,F_2)$ is set equal to $(A,\ U\cdot V,\ B)$, with $U,V$ the tie-broken signs
of \eqref{eq:U-V-def-prelim}, so that the middle oracle is the product $U\cdot V$ and its moments
factor exactly into Fourier coefficients of one explicit function
(Proposition~\ref{prop:structured-factorization-rev}).

The following calculation determines the design of the middle oracle.  Substituting $F_0=A$, $F_1=UV$ and
$F_2=B$ into \eqref{eq:odd-Phi} and summing over $x$ and $z$ first turns each oracle sum into a
Fourier coefficient, leaving
\begin{equation}\label{eq:structured-Phi3}
\Phi_3(A,\,UV,\,B)=\frac1N\sum_{y\in\bits^n}\widetilde A(y)\,U(y)\,V(y)\,\widetilde B(y)
=\frac1N\sum_{y\in\bits^n}\abs{\widetilde A(y)}\,\abs{\widetilde B(y)},
\end{equation}
since $\widetilde A(y)U(y)=\abs{\widetilde A(y)}$ and likewise for $B$, both identities holding at a
vanishing coefficient as well because there each side is $0$.  The middle oracle is exactly the phase
correction that makes every term of the Forrelation sum nonnegative.  On a structured instance,
therefore, $\Phi_3\ge0$ always, it does not depend on the tie-breaking signs, and its mean is
$\E\abs{\widetilde A(y)}\E\abs{\widetilde B(y)}\to2/\pi$.  Under $\No_n$ the three oracles are independent and $\Phi_3$
concentrates at $0$.

\begin{proposition}[The two ensembles and the promise]\label{prop:promise-density-sign}
For every $n\ge1$,
\[
\Prb_{\Yes_n}\bigl[\Phi_3\ge\tfrac14\bigr]\ \ge\ \tfrac19,
\qquad
\Prb_{\No_n}\bigl[\abs{\Phi_3}\le\tfrac1{12}\bigr]\ \ge\ 1-\frac{144}{N}.
\]
\end{proposition}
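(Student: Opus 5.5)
The two claims are independent, and I would prove each by a second-moment argument in the identity \eqref{eq:structured-Phi3}.

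\emph{The Yes side.}  Write $\Phi_3=\frac1N\sum_y a_yb_y$ with $a_y=\abs{\widetilde A(y)}$ and $b_y=\abs{\widetilde B(y)}$.  The vectors $(a_y)$ and $(b_y)$ are independent, and $\Phi_3\ge0$.  I would use Paley--Zygmund: $\Pr[\Phi_3\ge\theta\E\Phi_3]\ge(1-\theta)^2(\E\Phi_3)^2/\E\Phi_3^2$.

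For the first moment, $\E\Phi_3=\frac1N\sum_y\E a_y\,\E b_y=(\E a_0)^2$ by independence and symmetry across $y$.  Each $\widetilde A(y)$ is a normalized Rademacher sum of unit variance, so $\E a_y\ge(\E\widetilde A^2)^{3/2}/(\E\widetilde A^4)^{1/2}\ge1/\sqrt3$, using $\E\widetilde A(y)^4=3-2/N\le3$.  Hence $\E\Phi_3\ge\tfrac13$.

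For the second moment, $\E\Phi_3^2=\frac1{N^2}\sum_{y,y'}\E[a_ya_{y'}]\,\E[b_yb_{y'}]$.  Cauchy--Schwarz gives $\E[a_ya_{y'}]\le1$, so $\E\Phi_3^2\le1$.

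Choosing $\theta=\tfrac34$ gives $\Pr[\Phi_3\ge\tfrac14]\ge\tfrac1{16}\cdot\tfrac19$.  That is too weak, so the constants need sharpening.  Two improvements are available.
\begin{itemize}
\item Use $\E a_ya_{y'}\le\sqrt{\E a_y^2\E a_{y'}^2}=1$ together with the better first-moment bound: $\E a\ge1/\sqrt3$ gives $(\E\Phi_3)^2\ge1/9$, so with $\theta$ small we reach roughly $\tfrac19$ only in the limit.
\item Better, use Paley--Zygmund at the absolute level $\tfrac14$: $\Pr[\Phi_3\ge t]\ge(\E\Phi_3-t)^2/\E\Phi_3^2$.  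This needs $\E\Phi_3\ge\tfrac14+\tfrac13=\tfrac7{12}$.  The bound $(1/\sqrt3)^2$ does not reach that.
\end{itemize}

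So the main obstacle is a sharper lower bound on $\E\abs{\widetilde A(y)}$, ideally near $\sqrt{2/\pi}$ uniformly in $n$, together with a sharper upper bound on $\E\Phi_3^2$.  For the first, I would use Szarek's optimal Khintchine constant, $\E\abs{\sum\varepsilon_ic_i}\ge\Norm c_2/\sqrt2$.  This gives $\E\Phi_3\ge\tfrac12$, and then $(\tfrac12-\tfrac14)^2/1=\tfrac1{16}$.  For the second, I would show $\E\Phi_3^2\le(\E\Phi_3)^2+O(1/N)+$ off-diagonal covariance.  The off-diagonal covariance should be small because distinct Hadamard coefficients are uncorrelated with $\E[\widetilde A(y)^2\widetilde A(y')^2]=1+O(1/N)$.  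Bounding $\mathrm{Cov}(a_y,a_{y'})$ via that fourth-moment identity (e.g.\ $\abs{t}$ approximated by $t^2$-based inequalities) gives $\E\Phi_3^2\le(\E a)^4+o(1)$, hence probability about $(\tfrac12-\tfrac14)^2/(\tfrac12)^2\cdot$, which exceeds $\tfrac19$.  Small $n$ would be checked directly if needed.

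\emph{The No side.}  Under $\No_n$, $\Phi_3=\frac1N\sum_y\widetilde F_0(y)F_1(y)\widetilde F_2(y)$ with $F_1$ uniform and independent of $F_0,F_2$, so $\E\Phi_3=0$.  Conditioned on $F_0,F_2$, the variance is $\frac1{N^2}\sum_y\widetilde F_0(y)^2\widetilde F_2(y)^2$.  Its expectation is $\frac1{N^2}\cdot N=\frac1N$, since $\E\widetilde F_0(y)^2=1$ and independence.  Chebyshev then gives $\Pr[\abs{\Phi_3}>\tfrac1{12}]\le144/N$, which is exactly the stated bound.
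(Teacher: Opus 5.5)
Your $\No_n$ argument is correct and essentially the paper's. You obtain $\E[\Phi_3^2]=1/N$ by conditioning on $F_0,F_2$ and using that $F_1$ is an independent vector of uniform signs, where the paper expands the square directly. Chebyshev then finishes in both.

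\textbf{The $\Yes_n$ side is not proved.} With the bounds you actually establish, Paley--Zygmund tops out at $\tfrac1{16}$, as you note yourself. Those bounds are $\E\Phi_3\ge\tfrac13$ (or $\ge\tfrac12$ via Szarek) together with $\E\Phi_3^2\le1$.

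The step meant to close the gap, $\E\Phi_3^2\le(\E a)^4+o(1)$, is only asserted. The identity $\E[\widetilde A(y)^2\widetilde A(y')^2]=1-2/N$ concerns squares and does not by itself control $\mathrm{Cov}(\abs{\widetilde A(y)},\abs{\widetilde A(y')})$. Moreover, an unquantified $o(1)$ cannot yield a statement for every $n\ge1$. Saying ``small $n$ would be checked directly'' does not name any finite set of $n$ to check.

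\textbf{The missing idea is the pointwise upper bound $\Phi_3\le1$.} By \eqref{eq:forr-dictionary}, $\Phi_3$ is a bilinear form of two unit vectors in a contraction, so it never exceeds $1$. For a random variable $X\in[0,1]$ and $a\in[0,1)$,
\[
\E X\le a+(1-a)\Pr[X\ge a].
\]
Hence the first moment alone suffices:
\[
\Pr\bigl[\Phi_3\ge\tfrac14\bigr]\ \ge\ \frac{\tfrac13-\tfrac14}{1-\tfrac14}\ =\ \frac19.
\]
This is exactly the paper's proof, and it uses the same bound $\E\abs{\widetilde A(y)}\ge1/\sqrt3$ that you already derived.

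You could also repair your own route. From $0\le\Phi_3\le1$ you get $\E\Phi_3^2\le\E\Phi_3$. Paley--Zygmund with Szarek's $\E\Phi_3\ge\tfrac12$ then gives $(\tfrac12-\tfrac14)^2/\tfrac12=\tfrac18\ge\tfrac19$. No covariance estimate and no asymptotics are needed either way.
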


\begin{proof}
Under $\Yes_n$, \eqref{eq:structured-Phi3} gives $\Phi_3\ge0$, and $\Phi_3\le1$ always by
\eqref{eq:forr-dictionary}.  Since $A$ and $B$ are independent with the same law,
\eqref{eq:structured-Phi3} also gives
$\E[\Phi_3]=\frac1N\sum_y\E\abs{\widetilde A(y)}\,\E\abs{\widetilde B(y)}=\bigl(\E\abs{\widetilde A(y)}\bigr)^2$,
where $\E\abs{\widetilde A(y)}$ does not depend on $y$ because $\widetilde A(y)$ has the law of
$N^{-1/2}S$ for a sum $S$ of $N$ independent uniform signs.  For such a sum $\E[S^2]=N$ and
$\E[S^4]=3N^2-2N$, and two applications of the Cauchy--Schwarz inequality,
$\E[S^2]\le(\E\abs S)^{1/2}(\E\abs S^3)^{1/2}$ and $\E\abs S^3\le(\E S^2)^{1/2}(\E S^4)^{1/2}$,
give $\E\abs S\ge(\E S^2)^{3/2}/(\E S^4)^{1/2}\ge\sqrt{N/3}$.  Hence $\E[\Phi_3]\ge\tfrac13$.
For a random variable $X$ with values in $[0,1]$ and $a\in[0,1)$ we have
$\E X\le a+(1-a)\Prb[X\ge a]$, so
$\Prb[\Phi_3\ge\tfrac14]\ge(\tfrac13-\tfrac14)/(1-\tfrac14)=\tfrac19$.

Under $\No_n$ the three oracles are independent and uniform.  Squaring \eqref{eq:odd-Phi} at $k=2$
and taking expectations, the expectation of each product $F_j(w)F_j(w')$ vanishes unless $w=w'$, so
only the $N^3$ terms with $(x_0,x_1,x_2)$ equal to their primed copies survive, each contributing
$N^{-4}$, and $\E[\Phi_3^2]=1/N$.  Chebyshev's inequality gives
$\Prb[\abs{\Phi_3}>\tfrac1{12}]\le144\,\E[\Phi_3^2]=144/N$.
\end{proof}

\begin{definition}[The sign Forrelation promise problem]\label{def:sign-promise-sets}
For each $n$, define the promise sets
\[
\begin{aligned}
\Pi^{\Yes,\mathrm{sign}}_n
&:=\Bigl\{(F_0,F_1,F_2):\ \Phi_3(F_0,F_1,F_2)\ge\tfrac14\Bigr\},\\[2pt]
\Pi^{\No,\mathrm{sign}}_n
&:=\Bigl\{(F_0,F_1,F_2):\ \abs{\Phi_3(F_0,F_1,F_2)}\le\tfrac1{12}\Bigr\}.
\end{aligned}
\]
The sign Forrelation promise problem is to decide, given phase-query access to a triple promised
to lie in $\Pi^{\Yes,\mathrm{sign}}_n\cup\Pi^{\No,\mathrm{sign}}_n$, which case holds.  By Proposition~\ref{prop:promise-density-sign}, $\Yes_n$ gives mass at least $\tfrac19$ to
$\Pi^{\Yes,\mathrm{sign}}_n$ and $\No_n$ gives mass at least $1-144/N$ to $\Pi^{\No,\mathrm{sign}}_n$.
\end{definition}

\begin{theorem}[Depth three decides the promise problem]\label{thm:FH3-pointwise}
There is a uniform polynomial-size $\FH_3$ circuit family, making $O(n)$ phase queries, that
accepts every instance of $\Pi^{\Yes,\mathrm{sign}}_n$ with probability at least $1-2^{-n}$ and
accepts every instance of $\Pi^{\No,\mathrm{sign}}_n$ with probability at most $2^{-n}$.
\end{theorem}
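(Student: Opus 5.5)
We will obtain the circuit by amplifying the interference test of Proposition~\ref{prop:odd-interference-circuit} at $k=2$, following the template of Theorem~\ref{thm:upper-worst-case} but without the OR over copies, since the promise here concerns a single triple. At $k=2$ that proposition gives a uniform $\FH_3$ circuit making $2$ phase queries, namely one indexed query selecting $F_0$ or $F_2$ according to the control qubit and one middle query to $F_1$ on the $c=1$ branch only. Its acceptance probability on every triple is exactly $p=\tfrac12(1+\Phi_3(F_0,F_1,F_2))$. In Definition~\ref{def:FHk-rel} the three functions sit at distinct nonzero indices, and the distinguished index carries the constant-zero function, so the construction applies verbatim.

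Run $R=\lceil Cn\rceil$ independent copies of this test in parallel, for an absolute constant $C$ fixed below. All copies share the same three global Hadamard layers via the routing of Remark~\ref{rem:register-routing}, and every gate outside those layers is basis-preserving, so the combined circuit is a uniform polynomial-size $\FH_3$ circuit making $2R=O(n)$ phase queries. Measure all $R$ control qubits and accept iff the number of outcomes $0$ is at least $\tfrac R2(1+\theta)$, with the threshold $\theta=\tfrac16$, the midpoint of $\tfrac1{12}$ and $\tfrac14$. This accepting set is recognizable in polynomial time.

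For the analysis, conditioned on the triple the $R$ outcomes are independent Bernoulli variables with mean $p$. On $\Pi^{\Yes,\mathrm{sign}}_n$ we have $p\ge\tfrac12+\tfrac18$, which exceeds the threshold fraction $\tfrac12+\tfrac1{12}$ by $\tfrac1{24}$. On $\Pi^{\No,\mathrm{sign}}_n$ we have $p\le\tfrac12+\tfrac1{24}$, which falls below it by $\tfrac1{24}$. By the one-sided Hoeffding bound of Lemma~\ref{lem:hoeffding}, each error probability is at most $\exp(-2R/24^2)$, and taking $C=288\ln 2$ makes this at most $2^{-n}$.

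There is no genuine obstacle. The only points needing care are confirming that the indexed oracle lets the three functions of a single triple be addressed as in the proof of Proposition~\ref{prop:odd-interference-circuit}, and that parallel repetition does not increase the Fourier depth. Both are already established earlier in the paper.
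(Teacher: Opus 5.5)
Your proposal is correct and follows the paper's proof step for step: $R=O(n)$ parallel copies of the $k=2$ interference test share the three Hadamard layers, the threshold is $\tfrac16$, the margin is $\tfrac1{24}$ on each side, and the one-sided Hoeffding bound gives $\exp(-2R/24^2)$. Your constant $C=288\ln 2$ is exactly what is needed, whereas the paper takes the slightly larger $R=\lceil 2^8 n\rceil$.
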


\begin{proof}
Run $R$ parallel repetitions of the interference test of
Proposition~\ref{prop:odd-interference-circuit} at $k=2$; by Remark~\ref{rem:register-routing} all
repetitions share the same three global Hadamard layers, and all other gates are basis-preserving,
so the combined circuit is a uniform $\FH_3$ circuit with $2R$ phase queries.  Measure all control
qubits and accept if at least $\frac R2\bigl(1+\frac16\bigr)$ of them equal $0$; this accepting set
is recognizable in polynomial time.

Each repetition accepts independently with probability $\frac12(1+\Phi_3)$.  The threshold constant
$\frac16$ is the midpoint of $\frac14$ and $\frac1{12}$, so on $\Pi^{\Yes,\mathrm{sign}}_n$, where
$\Phi_3\ge\frac14$, the acceptance probability of each repetition exceeds the threshold fraction by
at least $\frac12(\frac14-\frac16)=\frac1{24}$, and on $\Pi^{\No,\mathrm{sign}}_n$, where
$\Phi_3\le\frac1{12}$, it falls below the threshold by the same amount.  By Hoeffding's inequality
(Lemma~\ref{lem:hoeffding}), the threshold test fails with probability at most $\exp(-2R/24^2)$,
which is at most $2^{-n}$ for $R=\lceil 2^{8}\,n\rceil=O(n)$.
\end{proof}

\begin{remark}[From the ensembles to an oracle]\label{rem:sign-or}
The constant yes-mass of Proposition~\ref{prop:promise-density-sign} is what the direct product of
Definition~\ref{def:hard-problem} amplifies.  For the OR of $m$ independent instances,
$\Yes_n^{\otimes m}$ has some instance with $\Phi_3\ge\tfrac14$ except with probability
$(\tfrac89)^m$, and $\No_n^{\otimes m}$ has every instance with $\abs{\Phi_3}\le\tfrac1{12}$ except
with probability $144m/N$.  Running the circuit of Theorem~\ref{thm:FH3-pointwise} on every instance
in parallel and accepting if some instance accepts decides this OR at Fourier depth three, as in
Theorem~\ref{thm:upper-worst-case}, and the hybrid argument of Theorem~\ref{thm:lower-worst-case}
reduces the advantage of an $\FH_2$ circuit on the OR to its advantage on a single instance with the
other instances fixed.  Proposition~\ref{prop:conditional-lower-bound} below bounds the latter, and
its proof applies to such restrictions because the growth bound it uses does.  Under
Conjecture~\ref{conj:sharp-decay}, the diagonalization of Theorem~\ref{thm:all-level-separation}
therefore gives an oracle $O$ with $\FH_2^{O}\subsetneq\FH_3^{O}$ whose instances are drawn from
these two ensembles.
\end{remark}

\subsection{One Fourier coefficient of a product of threshold functions}\label{subsec:appendix-moments}

Let $A:\bits^n\to\pmone$ be uniform and let $H^{\otimes n}A$ be its Hadamard transform, with entries
$\widetilde A(y)$ as in \eqref{eq:walsh-transform-prelim}.  For a finite
$Y\subseteq\bits^n$ define
\begin{equation}\label{eq:gY-def}
g_Y(A)=\prod_{y\in Y}\sgn_0\bigl(\widetilde A(y)\bigr),
\end{equation}
where $\sgn_0$ is the sign function with $\sgn_0(0)=0$.  Averaging
\eqref{eq:U-V-def-prelim} over the tie-breaking signs, which are independent and mean zero, replaces
each $U(y)$ by $\sgn_0(\widetilde A(y))$, so for finite $X\subseteq\bits^n$
\begin{equation}\label{eq:gamma-def-appendix}
\Gamma(X,Y)
:=\E\Bigl[\prod_{x\in X}A(x)\prod_{y\in Y}U(y)\Bigr]
=\E\bigl[A_X\,g_Y(A)\bigr]
=\widehat{g_Y}(X),
\qquad A_X:=\prod_{x\in X}A(x).
\end{equation}
The rest of the appendix concerns this coefficient.  Here $g_Y$ is a product of $\abs Y$ threshold
functions, each equal to $0$ on a tie, so that $g_Y$ takes values in $\{-1,0,1\}$:
\[
\sgn_0\bigl(\widetilde A(y)\bigr)=\sgn_0\Bigl(N^{-1/2}\textstyle\sum_x(-1)^{x\cdot y}A(x)\Bigr)
\]
whose defining linear forms are the rows of $H^{\otimes n}$ indexed by $Y$.  Those rows are
orthonormal, and every coefficient of every one of them has the same modulus $N^{-1/2}$.  Thus $g_Y$
is a product of $\abs Y$ threshold functions of pairwise orthogonal linear forms, each of which
satisfies $\max_i\abs{w_i}/\Norm w_2=N^{-1/2}$, the smallest value that ratio can take; such forms
are regular in the sense of Mossel, O'Donnell and
Oleszkiewicz~\cite{mossel2005noisestabilityfunctionslow}.  The remainder of this subsection concerns the
low-degree Fourier coefficients of such a product.

For finite $X,Y,Z \subseteq \bits^n$, define the reduced monomial
\begin{equation}\label{eq:reduced-monomial-rev}
M_{X,Y,Z}
=
\prod_{x \in X} F_0(x)
\prod_{y \in Y} F_1(y)
\prod_{z \in Z} F_2(z);
\end{equation}
repeated occurrences cancel because every oracle value lies in $\pmone$.  The two halves of the
ensemble are independent, so its moments are products of two such coefficients.

\begin{proposition}[Exact factorization of structured moments]\label{prop:structured-factorization-rev}
For every finite $X,Y,Z \subseteq \bits^n$,
\begin{equation}\label{eq:structured-factorization}
\E_{\Yes_n}[M_{X,Y,Z}]
=
\widehat{g_Y}(X)\,\widehat{g_Y}(Z).
\end{equation}
\end{proposition}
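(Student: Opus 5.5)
Under $\Yes_n$ we have $F_0=A$, $F_1=U\cdot V$ and $F_2=B$, so the plan is to substitute these into the reduced monomial, factor it into an $A$-part and a $B$-part, and then identify each part with a Fourier coefficient through \eqref{eq:gamma-def-appendix}. First,
\[
M_{X,Y,Z}=\Bigl(\prod_{x\in X}A(x)\prod_{y\in Y}U(y)\Bigr)\Bigl(\prod_{y\in Y}V(y)\prod_{z\in Z}B(z)\Bigr).
\]
Since $U$ and $V$ take values in $\pmone$, the product $\prod_{y\in Y}(UV)(y)$ splits as $\prod_y U(y)\prod_y V(y)$ with no cancellation issues. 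The reduction of repeated occurrences is already built into the fact that $X,Y,Z$ are sets.

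The first factor is a function of $(A,T_A)$ alone, and the second is a function of $(B,T_B)$ alone. These two pairs are independent by Definition~\ref{def:yes-no-ensembles}, so the expectation of the product equals the product of the two expectations. By \eqref{eq:gamma-def-appendix}, the first expectation is $\Gamma(X,Y)=\widehat{g_Y}(X)$. That identity comes from averaging over the independent mean-zero tie-breaking signs $T_A(y)$: each $U(y)$ is replaced by $\sgn_0(\widetilde A(y))$, because conditioned on $A$ the signs $T_A(y)$ for distinct $y\in Y$ are independent and each has mean zero.

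For the second factor we need a symmetric version. Write $\widetilde B(y)=2^{-n/2}\sum_z(-1)^{y\cdot z}B(z)$; this has the same form as $\widetilde A$, since $y\cdot z=z\cdot y$. Also, $(B,T_B)$ has the same law as $(A,T_A)$. Hence
\[
\E\Bigl[\prod_{z\in Z}B(z)\prod_{y\in Y}V(y)\Bigr]=\E\bigl[B_Z\,g_Y(B)\bigr]=\widehat{g_Y}(Z),
\]
with the same function $g_Y$. Multiplying the two expectations gives \eqref{eq:structured-factorization}.

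There is no real obstacle here. The only step that needs care is the tie-breaking average: the $T$-signs must be independent of everything else, so that conditional on $A$ the expectation factors over $y$ and gives $\sgn_0$. We also use that $\widehat{g_Y}(X)=\E[A_Xg_Y(A)]$ under uniform $A$, which is the paper's Fourier normalization for functions of $A$.
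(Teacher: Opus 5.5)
Your proof is correct and follows the paper's argument: split $M_{X,Y,Z}$ into a function of $(A,T_A)$ times a function of $(B,T_B)$, factor the expectation by independence, and identify each factor with $\widehat{g_Y}$ via \eqref{eq:gamma-def-appendix} together with the equality in law of $A$ and $B$. Where the paper says only that $A$ is independent of $B$, you also check that the tie-breaking signs $T_A,T_B$ are independent of each other and of $A,B$, which is the slightly more careful version of the same step.
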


\begin{proof}
Under the structured ensemble $F_0=A$, $F_1=U\cdot V$ and $F_2=B$ with $A$ independent of $B$, so
$M_{X,Y,Z}=\bigl(A_X\prod_{y}U(y)\bigr)\bigl(B_Z\prod_{y}V(y)\bigr)$ splits into a function of $A$
and a function of $B$.  The expectation factors, each factor is $\widehat{g_Y}$ evaluated at the
corresponding set by \eqref{eq:gamma-def-appendix}, and $B$ has the same law as $A$.
\end{proof}

Under $\No_n$ the three oracles are independent and uniform, so any nonconstant reduced monomial has
at least one oracle value appearing an odd number of times and therefore has expectation $0$.

The whole analysis of the ensemble is therefore an analysis of the low-degree Fourier coefficients
of $g_Y$.  We write $\Gamma(X,Y)$ for $\widehat{g_Y}(X)$ when both arguments vary, and turn to what is
known about it.

Three symmetries of the ensemble force many coefficients to vanish.  All three come from the same source: the law of $A$ is invariant under a group of
transformations of the cube, and each of them acts on $g_Y$ in a way we can follow exactly.  Write
$\Sigma_X=\bigoplus_{x\in X}x$ and $\Sigma_Y=\bigoplus_{y\in Y}y$.

\begin{lemma}[Symmetries of $\Gamma$]\label{lem:gamma-symmetries}
For all finite $X,Y\subseteq\bits^n$ and all $x_0,s\in\Ftwo^n$:
\begin{enumerate}[leftmargin=2em]
\item\label{it:parity} $\Gamma(X,Y)=0$ unless $\abs X+\abs Y$ is even;
\item\label{it:covariance}
$\Gamma(X+x_0,Y)=(-1)^{x_0\cdot\Sigma_Y}\,\Gamma(X,Y)$ and
$\Gamma(X,Y+s)=(-1)^{s\cdot\Sigma_X}\,\Gamma(X,Y)$;
\item\label{it:selection} if $X$ is invariant under translation by some $x_0$ with
$x_0\cdot\Sigma_Y=1$, then $\Gamma(X,Y)=0$, and dually in $Y$.
\end{enumerate}
\end{lemma}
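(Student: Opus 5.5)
All three items come from transformations of $A$ that preserve the uniform law on $\pmone^{\bits^n}$, followed by tracking how each transformation acts on $A_X$ and on the Hadamard coefficients $\widetilde A(y)$. Throughout I use $\Gamma(X,Y)=\E[A_X\,g_Y(A)]$ from \eqref{eq:gamma-def-appendix}, where $g_Y=\prod_{y\in Y}\sgn_0(\widetilde A(y))$. The elementary identities needed are $(-1)^{a}A_X$-type sign bookkeeping, $\sgn_0(-t)=-\sgn_0(t)$, and $\sgn_0(ct)=\sgn_0(t)$ for $c>0$.

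\emph{Item~\ref{it:parity}.} Replace $A$ by $-A$; the law is unchanged. Under this substitution $A_X\mapsto(-1)^{\abs X}A_X$. Each $\widetilde A(y)$ is linear in $A$, so it changes sign, and by oddness of $\sgn_0$ we get $g_Y\mapsto(-1)^{\abs Y}g_Y$. Hence $\Gamma=(-1)^{\abs X+\abs Y}\Gamma$, and $\Gamma=0$ when $\abs X+\abs Y$ is odd.

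\emph{Item~\ref{it:covariance}.} Here I use two law-preserving maps.

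First, translation: $A'(x):=A(x+x_0)$. Then $A_{X+x_0}=A'_X$. Substituting $x\mapsto x+x_0$ in the transform gives $\widetilde A(y)=(-1)^{x_0\cdot y}\widetilde{A'}(y)$. Therefore
\[
g_Y(A)=\prod_{y\in Y}(-1)^{x_0\cdot y}\,g_Y(A')=(-1)^{x_0\cdot\Sigma_Y}\,g_Y(A').
\]
Taking expectations over $A'$, which is again uniform, gives $\Gamma(X+x_0,Y)=(-1)^{x_0\cdot\Sigma_Y}\Gamma(X,Y)$.

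Second, modulation: $A'(x):=(-1)^{s\cdot x}A(x)$. Then $\widetilde{A'}(y)=\widetilde A(y+s)$, so $g_{Y+s}(A)=g_Y(A')$. Also $A_X=(-1)^{s\cdot\Sigma_X}A'_X$. Together these give the dual identity $\Gamma(X,Y+s)=(-1)^{s\cdot\Sigma_X}\Gamma(X,Y)$.

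\emph{Item~\ref{it:selection}.} If $X+x_0=X$ as sets, item~\ref{it:covariance} gives $\Gamma(X,Y)=-\Gamma(X,Y)$ whenever $x_0\cdot\Sigma_Y=1$, so $\Gamma(X,Y)=0$. The dual case, $Y+s=Y$ with $s\cdot\Sigma_X=1$, follows in the same way.

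The only real care is bookkeeping. $X+x_0$ and $Y+s$ denote translated sets. The products must be over sets, which is consistent because translation is a bijection, so no reduction of repeated points arises. The main check is the sign computation in the translation/modulation step.
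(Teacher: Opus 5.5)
Your proof is correct and follows the paper's argument. You use the same three law-preserving maps (global negation, translation $A\mapsto A(\cdot+x_0)$, and modulation by $(-1)^{s\cdot x}$), and you read the third item off the second at a period of $X$. The only difference is that you work with the averaged form $g_Y=\prod_{y\in Y}\sgn_0(\widetilde A(y))$, for which oddness and positive homogeneity of $\sgn_0$ hold at a vanishing coefficient automatically. The paper instead works with $U$, so it must pair each map with a compensating change of the tie-breaking signs $T_A$; your version avoids that step and is slightly cleaner.
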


\begin{proof}
Each part exhibits a measure-preserving transformation under which $\Gamma$ acquires an explicit
sign, and the sign is then forced to be $+1$.

For~\ref{it:parity}, flip $A$ and the tie-breaking signs together, $(A,T_A)\mapsto(-A,-T_A)$.  Both
are uniform and independent, so the joint law is unchanged.  Every value $A(x)$ changes sign, every
coefficient $\widetilde A(y)$ changes sign, and $\sgnstar(-t;-\tau)=-\sgnstar(t;\tau)$ holds for all
$t$ including $t=0$, so every $U(y)$ changes sign as well.  The product defining $\Gamma(X,Y)$ is
therefore multiplied by $(-1)^{\abs X+\abs Y}$, which forces it to vanish when $\abs X+\abs Y$ is
odd.

For~\ref{it:covariance}, translate the cube: $\Theta:A\mapsto A(\cdot+x_0)$ permutes coordinates and
so preserves the uniform law, while a change of summation variable gives
$\widetilde{\Theta A}(y)=(-1)^{x_0\cdot y}\widetilde A(y)$.  Pairing $\Theta$ with the tie-bit
change $T_A(y)\mapsto(-1)^{x_0\cdot y}T_A(y)$ preserves the joint law and sends
$U(y)\mapsto(-1)^{x_0\cdot y}U(y)$, so the factors $U(y)$ pick up the phase $(-1)^{x_0\cdot\Sigma_Y}$
while the factors $A(x)$ are relabelled.  Dually, the modulation $A(x)\mapsto(-1)^{x\cdot s}A(x)$
preserves the law and sends $\widetilde A(y)\mapsto\widetilde A(y+s)$; pairing it with the
relabelling $T_A(y)\mapsto T_A(y+s)$ sends $U(y)\mapsto U(y+s)$, while the factors $A(x)$, $x\in X$, pick up
$(-1)^{s\cdot\Sigma_X}$.

Part~\ref{it:selection} is~\ref{it:covariance} read at a period of $X$: the left-hand side is
$\Gamma(X,Y)$ and the right-hand side is $-\Gamma(X,Y)$.
\end{proof}

We use Part~\ref{it:selection} mainly at $X=\varnothing$.  Every $x_0$ is a period of the empty set,
so $\Gamma(\varnothing,Y)=0$ whenever $\Sigma_Y\ne0$; in particular
\begin{equation}\label{eq:pairs-uncorrelated}
\Gamma(\varnothing,\{y_1,y_2\})=0\qquad\text{for all distinct }y_1,y_2,
\end{equation}
so the signs of two distinct Fourier coefficients are exactly uncorrelated, not merely nearly so.

Beyond the symmetries, the following bound is what we can prove about $\Gamma$ without further
assumptions.  It needs no expansion of the signs: they depend on $A$ only through a few aggregate
sums, and conditioned on those sums each value of $A$ is nearly unbiased.

\begin{theorem}[Unconditional decay in $\abs X$]\label{thm:revealed-decay}
Let $X\subseteq\bits^n$ and $Y\subseteq\bits^n\setminus\{0\}$ be finite and nonempty, let $V$ be the
linear span of $Y$ in $\Ftwo^n$, and let $\rho=\dim V$.  The $2^{\rho}$ cosets of $V^{\perp}$
partition $\bits^n$; let $j_1,\dots,j_{2^{\rho}}$ be the numbers of points of $X$ in each coset, so
that $\sum_hj_h=\abs X$.  Then
\[
\abs{\Gamma(X,Y)}
\ \le\
\prod_{h=1}^{2^{\rho}}\binom{N/2^{\rho}}{j_h}^{-1/2},
\]
and if moreover $\abs X\le N/2^{\rho}$, then the right side is at most
\[
\binom{N/2^{\rho}}{\abs X}^{-1/2}
\ \le\
\left(\frac{2^{\rho}\abs X}{N}\right)^{\abs X/2}.
\]
\end{theorem}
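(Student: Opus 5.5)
Since the signs $\sgn_0(\widetilde A(y))$, $y\in Y$, depend on $A$ only through the coefficients $\widetilde A(y)$, and each such coefficient is a fixed linear combination of the coset sums of $A$ over the cosets of $V^{\perp}$, I would condition on those coset sums. Concretely, for each coset $C_h$ of $V^{\perp}$ (there are $2^{\rho}$ of them, each of size $N/2^{\rho}$) let $S_h=\sum_{x\in C_h}A(x)$. For $y\in V$ the character $x\mapsto(-1)^{x\cdot y}$ is constant on each coset, so $\widetilde A(y)=N^{-1/2}\sum_h\epsilon_h(y)S_h$ with fixed signs $\epsilon_h(y)$. Hence $g_Y(A)$ is a function of the vector $S=(S_1,\dots,S_{2^\rho})$ alone, and
\[
\Gamma(X,Y)=\E\bigl[g_Y(S)\,\E[A_X\mid S]\bigr],\qquad
\abs{\Gamma(X,Y)}\le\E\bigl[\abs{\E[A_X\mid S]}\bigr],
\]
using $\abs{g_Y}\le1$.

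The second step is to compute the conditional expectation. Given $S$, the restrictions of $A$ to the different cosets are independent, and on coset $C_h$ the restriction is uniform among sign vectors of length $n_h:=N/2^\rho$ with sum $S_h$, that is, with a prescribed number of $+1$ entries. So $\E[A_X\mid S]=\prod_h\E[A_{X\cap C_h}\mid S_h]$. For a uniformly random $\pmone$-vector of length $n_h$ with a prescribed weight, the expectation of the product over a fixed $j_h$-subset is a Krawtchouk-type quantity $K_{j_h}(S_h)/\binom{n_h}{j_h}$. Rather than estimate it pointwise, I would use Cauchy--Schwarz together with independence across cosets:
\[
\E\bigl[\abs{\E[A_X\mid S]}\bigr]\le\prod_h\Bigl(\E\bigl[\E[A_{X\cap C_h}\mid S_h]^2\bigr]\Bigr)^{1/2}.
\]
Within one coset I would exploit symmetry. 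Since the conditional law is exchangeable, $\E[A_J\mid S_h]$ is the same for every $j_h$-subset $J$ of $C_h$. Averaging therefore gives $\E[A_J\mid S_h]=\binom{n_h}{j_h}^{-1}\E[e_{j_h}(A|_{C_h})\mid S_h]$, where $e_j$ is the elementary symmetric polynomial of degree $j$. The conditional expectation is a contraction in $L^2$, and $e_j$ is orthogonal to $\binom{n_h}{j}$ characters of unit norm, so
\[
\E\bigl[\E[A_J\mid S_h]^2\bigr]\le\binom{n_h}{j_h}^{-2}\E[e_{j_h}^2]=\binom{n_h}{j_h}^{-1}.
\]
Multiplying over $h$ gives the first claimed bound. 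Empty $X\cap C_h$ contributes a factor $1$, which is consistent.

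For the second display I would use the log-convexity (superadditivity) of $\log\binom{n}{j}$ in the form $\prod_h\binom{n_h}{j_h}\ge\binom{n_h}{\sum j_h}$ whenever $\sum_hj_h\le n_h$. This follows because choosing $\abs X$ items from one block of size $n_h$ injects into choosing $j_h$ items from each of the $2^\rho$ blocks, after fixing an ordering. Finally $\binom{n}{j}\ge(n/j)^j$ yields the last inequality. The step I expect to need the most care is the symmetrization within a coset. I would have to check that exchangeability holds under the conditioning (it does, since the condition depends only on the sum) and that $0\notin Y$ plays no hidden role beyond guaranteeing that $\rho\ge1$.
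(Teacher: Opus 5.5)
Your proposal is correct and follows the paper's proof essentially step for step. The signs depend on $A$ only through the coset sums of $V^{\perp}$; exchangeability within a coset together with $\E[e_j^2]=\binom{\nu}{j}$ bounds each conditional second moment by $\binom{\nu}{j}^{-1}$; and Cauchy--Schwarz with independence across cosets gives the product bound.

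Two differences are cosmetic: you work with $g_Y$ directly rather than also conditioning on the tie-breaking signs, and you merge the binomials by your sorted-split injection rather than by the falling-factorial ratio. One terminology slip: the merging inequality is subadditivity coming from log-concavity of $j\mapsto\binom{\nu}{j}$, not log-convexity, although your injection proves it correctly.
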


\begin{proof}
The proof has three steps.  First, the signs are functions of the coset sums alone.  Second,
conditioned on the coset sums, the product of the values $A(x)$ over $x\in X$ has second moment equal to the
reciprocal of a product of binomial coefficients.  Third, Cauchy--Schwarz within each coset, with
independence across cosets, gives the bound.

\emph{Step 1: the signs depend only on the coset sums.}
Each character $x\mapsto(-1)^{x\cdot y}$ with $y\in Y$ is constant on every coset $F$ of
$V^{\perp}$, since $x\cdot y$ is unchanged when $x$ moves by an element of $V^{\perp}$.  Writing
$\chi_F(y)$ for its value on $F$ and $W_F=\sum_{x\in F}A(x)$ for the coset sums,
\[
\sqrt N\,\widetilde A(y)=\sum_{F}\chi_F(y)\,W_F
\qquad\text{for every }y\in Y.
\]
Each $U(y)=\sgnstar(\widetilde A(y);T_A(y))$, and hence the product $\prod_{y\in Y}U(y)$, is
therefore a function of the coset sums $W=(W_F)_F$ and the tie-breaking signs $T_A$, and $T_A$ is
independent of $A$.

\emph{Step 2: the conditional expectation of the product over $X$.}
Conditioned on $W$, the restrictions $A|_F$ are independent across cosets, and within a coset $F$ of
size $\nu:=N/2^{\rho}$ the restriction is uniform on the sign patterns with sum $W_F$.  Writing
$X_F=X\cap F$ and $j_F=\abs{X_F}$,
\[
\E\Bigl[\prod_{x\in X}A(x)\Bigm|W\Bigr]=\prod_F\kappa_F(W_F),
\qquad
\kappa_F(W_F):=\E\Bigl[\prod_{x\in X_F}A(x)\Bigm|W_F\Bigr].
\]
The conditional law of $A|_F$ is exchangeable, so $\E[\prod_{x\in S}A(x)\mid W_F]$ is the same for
every $S\subseteq F$ with $\abs S=j_F$; summing over all such $S$,
\[
\binom{\nu}{j_F}\,\kappa_F=\E[e_{j_F}\mid W_F]=e_{j_F},
\qquad
e_j:=\sum_{S\subseteq F,\ \abs S=j}\ \prod_{x\in S}A(x),
\]
the last equality because $e_{j}$ is a symmetric function of the values $A|_F$ and hence a function
of $W_F$.  Squaring and taking expectations, the monomials $\prod_{x\in S}A(x)$ are orthonormal, so
$\E[e_{j}^{2}]=\binom{\nu}{j}$ and
\[
\E\bigl[\kappa_F^{2}\bigr]=\binom{\nu}{j_F}^{-1}.
\]

\emph{Step 3: Cauchy--Schwarz.}
Conditioning on $(W,T_A)$ and using Step~1 and Step~2,
\[
\abs{\Gamma(X,Y)}
=\Bigl|\E\Bigl[\prod_F\kappa_F(W_F)\cdot\prod_{y\in Y}U(y)\Bigr]\Bigr|
\le\E\Bigl[\prod_F\abs{\kappa_F}\Bigr]
=\prod_F\E\abs{\kappa_F}
\le\prod_F\binom{\nu}{j_F}^{-1/2},
\]
using $\abs{\prod U}\le1$, independence of the $W_F$, and Cauchy--Schwarz in each factor.  For the
second inequality in the statement, $\binom{\nu}{a}\binom{\nu}{b}\ge\binom{\nu}{a+b}$ for
$a+b\le\nu$, since their ratio is $\binom{a+b}{a}\,(\nu)_a(\nu)_b/(\nu)_{a+b}\ge1$, where
$(\nu)_a$ is the falling factorial; iterating merges the cosets.  The third uses
$\binom{\nu}{\abs X}\ge(\nu/\abs X)^{\abs X}$ at $\nu=N/2^\rho$.
\end{proof}

Write $d=\abs X+\abs Y$.  The theorem and the symmetries combine as follows.

\begin{corollary}[Unconditional decay in the range $\abs X\ge\abs Y$]\label{cor:revealed-decay-range}
For all finite $X,Y\subseteq\bits^n$ with $\abs X\ge\abs Y$, writing $d=\abs X+\abs Y$,
\[
\abs{\Gamma(X,Y)}\le2^{O(d^{2})}\,N^{-\frac{\max(\abs X,\abs Y)+\abs X}{4}}.
\]
\end{corollary}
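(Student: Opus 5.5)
Since $\abs X\ge\abs Y$, we have $\max(\abs X,\abs Y)=\abs X$, so the target exponent is $\frac{2\abs X}{4}=\frac{\abs X}{2}$. The claim is therefore $\abs{\Gamma(X,Y)}\le 2^{O(d^2)}N^{-\abs X/2}$, which is Theorem~\ref{thm:revealed-decay} with the dependence on $\rho$ absorbed into the constant. The plan is to apply that theorem and check that $2^{\rho}$ and the binomial losses are controlled by $d$ alone.

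\emph{Degenerate cases.} If $X=\varnothing$, then $\abs Y=0$ and $\Gamma=1$, so the bound holds. If $Y=\varnothing$, then $g_Y\equiv1$ and $\Gamma(X,\varnothing)=0$ for $X\ne\varnothing$. If $0\in Y$, then $\widetilde A(0)=N^{-1/2}\sum_xA(x)$, and I would remove $0$ by linearity. Concretely, $0$ corresponds to the all-ones row, which is constant on every coset, so Step~1 of Theorem~\ref{thm:revealed-decay} still applies with $V$ the span of $Y\setminus\{0\}$ (or with the single coset $\bits^n$ when $Y=\{0\}$). The proof of that theorem only uses that every $y\in Y$ gives a character constant on cosets of $V^\perp$, which $y=0$ does, so it extends verbatim. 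Hence assume $X,Y$ are nonempty.

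\emph{Main estimate.} We have $\rho=\dim V\le\abs Y\le d$, so $2^\rho\le 2^d$. If $\abs X\le N/2^\rho$, Theorem~\ref{thm:revealed-decay} gives
\[
\abs{\Gamma}\le\bigl(2^\rho\abs X/N\bigr)^{\abs X/2}\le \bigl(2^{d}d\bigr)^{d/2}N^{-\abs X/2}=2^{O(d^2)}N^{-\abs X/2}.
\]
Otherwise $N<2^\rho\abs X\le 2^d d$, so $N^{\abs X/2}\le(2^dd)^{d/2}=2^{O(d^2)}$, and the trivial bound $\abs\Gamma\le1$ already suffices.

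The only real obstacle is making sure that Theorem~\ref{thm:revealed-decay} covers $0\in Y$. That theorem excludes $0$ only to keep $V$'s span meaningful, and with the remark above the coset-sum argument goes through unchanged.
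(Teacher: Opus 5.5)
Your proof is correct and follows the paper's. In the nontrivial case you apply Theorem~\ref{thm:revealed-decay} with $2^{\rho}\abs X\le 2^{d}d$, which gives the factor $(2^dd)^{d/2}=2^{O(d^2)}$; otherwise you use the trivial bound, splitting on $2^{\rho}\abs X>N$ where the paper splits on $2^{\abs Y}\abs X>N$.

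The only real difference is the case $0\in Y$. The paper keeps the theorem as a black box: it replaces $Y$ by $Y+s$ with $s\notin Y$, which leaves $\abs{\Gamma}$ unchanged by Lemma~\ref{lem:gamma-symmetries}(\ref{it:covariance}). You instead observe, correctly, that the coset-sum argument never uses $0\notin Y$: the character of $y=0$ is constant and $\operatorname{span}(Y)$ is unchanged, so the theorem holds verbatim in that case.
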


\begin{proof}
If $Y=\varnothing$ then $\abs X\ge\abs Y$ and $\Gamma(X,\varnothing)=\mathbf 1[X=\varnothing]$ give
the bound directly, so assume $\abs X\ge\abs Y\ge1$, whence $d\ge2$ and $\log_2d\ge1$.  If
$2^{\abs Y}\abs X>N$, then $n<\abs Y+\log_2\abs X\le d+\log_2d$, so
$N^{\abs X/2}\le2^{(d+\log_2d)d/2}$ and the trivial bound $\abs{\Gamma(X,Y)}\le1$ is already of the
displayed form.  Otherwise $2^{\abs Y}\le N$, so $\abs Y\le n<N$ and $Y$ is a proper subset of
$\bits^n$; if $0\in Y$, replace $Y$ by $Y+s$ for some $s\notin Y$, which changes neither $\abs Y$ nor
$\abs{\Gamma}$ by Lemma~\ref{lem:gamma-symmetries}(\ref{it:covariance}).  Then
Theorem~\ref{thm:revealed-decay} applies, since $\abs X\le N/2^{\abs Y}\le N/2^{\rho}$, and gives
$\abs{\Gamma(X,Y)}\le(2^{\abs Y}\abs X)^{\abs X/2}N^{-\abs X/2}$ with
$(2^{\abs Y}\abs X)^{\abs X/2}\le2^{(d+\log_2d)d/2}$.
\end{proof}

What Theorem~\ref{thm:revealed-decay} cannot capture is decay in $\abs Y$: at $X=\varnothing$ it
gives only a vacuous bound, and the range $\abs X<\abs Y$, with the coefficients
$\widehat{g_Y}(\varnothing)=\E[g_Y]$ at its extreme, is what remains open.

\subsection{The conjectured decay rate and its consequences}\label{subsec:appendix-open}

We now reduce the $\FH_2$ lower bound for the sign ensemble to a single statement about $\Gamma$.

By the simulation lemma (Lemma~\ref{lem:round-embedding}), an $\FH_2$ circuit making $q$ phase
queries is a non-adaptive quantum query algorithm making $q$ parallel queries.  Its acceptance
probability $f:\pmone^{3N}\to[0,1]$ is a bounded function of the three oracles, and by the
non-adaptive growth bound of Corollary~\ref{cor:fh2-growth} its level-$\ell$ Fourier weight satisfies
\begin{equation}\label{eq:appendix-growth}
L_{1,\ell}(f)\ \le\ (\ell+1)\,\bigl(3Nq\bigr)^{\ell/4}.
\end{equation}
Write a Fourier character over the $3N$ oracle bits as a reduced monomial $M_{X,Y,Z}$ with
$X,Y,Z\subseteq\bits^n$ indexing the queried points of $F_0,F_1,F_2$ and set $\ell=\abs{X}+\abs{Y}+\abs{Z}$.
Since $\E_{\No_n}[M_{X,Y,Z}]=0$ for $(X,Y,Z)\neq(\varnothing,\varnothing,\varnothing)$ and, by
Proposition~\ref{prop:structured-factorization-rev},
$\E_{\Yes_n}[M_{X,Y,Z}]=\Gamma(X,Y)\,\Gamma(Z,Y)$, the distinguishing advantage of $f$ is
\begin{equation}\label{eq:appendix-advantage}
\bigl|\E_{\Yes_n}[f]-\E_{\No_n}[f]\bigr|
\ \le\
\sum_{\ell\ge 1} L_{1,\ell}(f)\cdot
\max_{\abs{X}+\abs{Y}+\abs{Z}=\ell}\bigl|\Gamma(X,Y)\,\Gamma(Z,Y)\bigr|.
\end{equation}
Every nonzero moment has $Y\neq\varnothing$: if $Y=\varnothing$ then
$\Gamma(X,\varnothing)=\E\prod_{x\in X}A(x)=\mathbf 1[X=\varnothing]$, so a nonzero moment with
$(X,Y,Z)\neq(\varnothing,\varnothing,\varnothing)$ must have $Y\neq\varnothing$.  The advantage is
therefore controlled entirely by the correlations $\Gamma(X,Y)$ with $Y\neq\varnothing$, and the
only remaining question is how fast they decay in $\abs{X}$ and $\abs{Y}$.

The exponent to conjecture is predicted by the following heuristic count of the leading correction.  The linear forms
$\widetilde A(y)$, $y\in Y$, are orthonormal, so a jointly Gaussian vector with their covariance has
independent coordinates and $\E\prod_y\sgn(G_y)=0$; the correlation is entirely a non-Gaussian
effect.  Signs and Gaussians have the same first three moments, so the first correction is of fourth
order, and for distinct $y_1,\dots,y_4$ a direct expansion gives
\begin{equation}\label{eq:fourth-cumulant}
\E\bigl[\widetilde A(y_1)\widetilde A(y_2)\widetilde A(y_3)\widetilde A(y_4)\bigr]
=-\frac2N\,\mathbf 1\bigl[y_1\oplus y_2\oplus y_3\oplus y_4=0\bigr],
\end{equation}
the two terms being one coincidence of the four summation points, contributing $+N^{-1}$, against
three pairings contributing $-N^{-1}$ each.  The leading non-Gaussian correction to the Hadamard coefficients is therefore a fourth cumulant of size
$N^{-1}$, supported on quadruples of vanishing sum; the higher cumulants are of lower order in $N$
and are ignored in this count.  Correlating $\abs Y$
signs therefore requires at least $\abs Y/4$ such corrections, at $N^{-1}$ each, for a rate
$N^{-\abs Y/4}$.  A point of $X$ is more expensive, at $N^{-1/2}$: it is fixed rather than summed over, so no
character cancellation occurs.  We conjecture the rate that this count predicts.

\begin{conjecture}[Low-degree decay of $\widehat{g_Y}$]\label{conj:sharp-decay}
For every constant $\alpha>0$ there is a constant $C=C(\alpha)\ge1$ such that for all
$X,Y\subseteq\bits^n$ with $Y\neq\varnothing$ and $d=\abs{X}+\abs{Y}\le n^{\alpha}$,
\[
\abs{\Gamma(X,Y)}\ \le\ d^{\,Cd}\,N^{-\frac{\max(\abs{X},\abs{Y})+\abs{X}}{4}}.
\]
\end{conjecture}

Since $\max(\abs X,\abs Y)+\abs X\ge\abs X+\abs Y$, Conjecture~\ref{conj:sharp-decay} implies the
weaker rate $d^{\,Cd}N^{-d/4}$, with which the argument below also runs, at the cost of a smaller
query threshold.  In the range $\abs X\ge\abs Y$ the exponent of the conjecture is a theorem: there it equals
$\abs X/2$, and Corollary~\ref{cor:revealed-decay-range} proves this rate unconditionally, though
with the prefactor $2^{O(d^{2})}$ rather than $d^{Cd}$.  The exponent of $N$ is therefore new only in the range
$\abs X<\abs Y$, while the prefactor $d^{Cd}$ is conjectural in both ranges;
Remark~\ref{rem:what-remains-conditional} records where each is used.

An invariance principle compares $g_Y$ with its Gaussian analogue, for which
$\E\prod_y\sgn(G_y)=0$ by the orthonormality of the linear forms, and returns a single additive
error in which the number of threshold functions enters only through the prefactor.  The conjecture asserts
something of a different type: each further element of $Y$ contributes a further factor $N^{-1/4}$,
so that the error compounds rather than accumulates, as the count following
\eqref{eq:fourth-cumulant} predicts.  Two nearby results do not give it.  Fixed-dimensional lattice
Edgeworth expansions~\cite{bhattacharya1976normal} fix both the dimension and the order of the
expansion, and do not control the constants as $d$ grows.  The Fourier concentration theorem for
functions of $m$ halfspaces~\cite{klivans2002learning} bounds the total Fourier weight above a given
degree, whereas $\Gamma(X,Y)$ is one coefficient at a fixed degree, and that theorem carries no
decay in $N$.  We regard the conjecture as the main analytic question raised by this ensemble.

The refined exponent enters through the following elementary minimization.

\begin{lemma}[The worst configuration at a fixed level]\label{lem:level-exponent}
Let $\ell\ge1$ and let $a,b,c$ be nonnegative integers with $a+b+c=\ell$ and $b\ge1$.  Then
\[
\frac{\max(a,b)+a}{4}+\frac{\max(c,b)+c}{4}\ \ge\ \frac{\ell}{3},
\]
with equality when $a=b=c$.
\end{lemma}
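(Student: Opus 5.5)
Multiply through by $4$: the claim becomes
\[
E(a,b,c):=\max(a,b)+a+\max(c,b)+c\ \ge\ \tfrac43(a+b+c).
\]
Neither side changes under $a\leftrightarrow c$, so I will assume $a\le c$ without loss of generality. I will then split according to where $b$ sits relative to $a\le c$. The hypothesis $b\ge1$ is not needed for the inequality itself; it only matches the setting in which the lemma is used, where moments with $Y=\varnothing$ are excluded.

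\emph{Case $b\le a\le c$.} Here $E=2a+2c$. Since $b\le a$ and $b\le c$, we have $3b\le 2b\le a+c$ when $b\ge 0$, so $b\le (a+c)/2$. Then
\[
\tfrac43(a+b+c)\ \le\ \tfrac43\cdot\tfrac32(a+c)\ =\ 2(a+c)\ =\ E.
\]

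\emph{Case $a\le b\le c$.} Here $E=b+a+2c$. The required inequality $a+b+2c\ge\frac43(a+b+c)$ is equivalent to $2c\ge a+b$, which holds because $c\ge b\ge a$.

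\emph{Case $a\le c\le b$.} Here $E=2b+a+c$. The required inequality is equivalent to $2b\ge a+c$, which holds because $b$ is the largest of the three.

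In each case the step that closes the argument is the same: the elementary fact that the largest of three numbers is at least their average.

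For equality, when $a=b=c=t$ we get $E=4t$ and $\frac43\cdot 3t=4t$, so the bound is attained. There is no real obstacle in this proof. The only care needed is to check that the three cases together cover every ordering of $a,b,c$ once $a\le c$ is fixed, which they do.
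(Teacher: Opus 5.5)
Your proof is correct and follows essentially the paper's argument: a case split on how $b$ compares with $a$ and $c$ (you use the $a\leftrightarrow c$ symmetry to get three cases where the paper lists four), each reducing to a one-line linear inequality. One slip to fix: in the first case the chain ``$3b\le 2b\le a+c$'' is false for $b\ge1$. You only need $2b\le a+c$, which follows from $b\le a$ and $b\le c$ and gives $b\le(a+c)/2$ directly.
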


\begin{proof}
Write $4E$ for four times the left-hand side and distinguish four cases.  If $a\ge b$ and $c\ge b$,
then $4E=2(a+c)=2(\ell-b)$, and $\ell\ge3b$ gives $4E\ge\tfrac43\ell$.  If $a<b$ and $c<b$, then
$4E=2b+a+c=\ell+b$, and $\ell<3b$ gives $4E>\tfrac43\ell$.  If $a\ge b$ and $c<b$, then
$4E=2a+b+c$, and $4E\ge\tfrac43(a+b+c)$ is equivalent to $2a\ge b+c$, which holds because $a\ge b$
and $a>c$.  The case $a<b$, $c\ge b$ is symmetric.  At $a=b=c$ both maxima equal $a$ and
$4E=\tfrac43\ell$.
\end{proof}

\begin{proposition}[Conditional $\FH_2$ lower bound]\label{prop:conditional-lower-bound}
Assume Conjecture~\ref{conj:sharp-decay}, and fix a constant $\gamma>0$.  Then for all large $n$,
every $\FH_2$ circuit making $q\le n^{\gamma}$ phase queries distinguishes $\Yes_n$ from $\No_n$ with
advantage at most $2^{-\Omega(n)}$, the constant in the exponent depending on $\gamma$.  Combined with
Theorem~\ref{thm:FH3-pointwise}, which decides the promise problem of
Definition~\ref{def:sign-promise-sets} at Fourier depth three, this would make the sign ensemble an
explicitly defined hard distribution for the oracle separation $\FH_2\subsetneq\FH_3$
(Remark~\ref{rem:sign-or}).
\end{proposition}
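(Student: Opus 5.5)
The plan is to plug the conjectured decay into the advantage expansion \eqref{eq:appendix-advantage}, using the non-adaptive growth bound \eqref{eq:appendix-growth}, and to sum over levels. Fix an $\FH_2$ circuit with $q\le n^\gamma$ phase queries and acceptance probability $f$. By Lemma~\ref{lem:round-embedding} and Corollary~\ref{cor:fh2-growth}, $L_{1,\ell}(f)\le(\ell+1)(3Nq)^{\ell/4}$. Every nonzero term of \eqref{eq:appendix-advantage} has $Y\ne\varnothing$. For such a term, with $a=\abs X$, $b=\abs Y$, $c=\abs Z$, the moment is $\abs{\Gamma(X,Y)\Gamma(Z,Y)}$. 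Conjecture~\ref{conj:sharp-decay} bounds this by $d_1^{Cd_1}d_2^{Cd_2}N^{-E}$, where $E$ is the exponent of Lemma~\ref{lem:level-exponent}. That lemma gives $E\ge\ell/3$, and since $d_1,d_2\le\ell$ the prefactor is at most $\ell^{2C\ell}$.

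I will split the levels at a cutoff $\ell_0=n^{\alpha}$, with $\alpha$ a small constant, so that the conjecture applies to every term with $\ell\le\ell_0$. In the low range the level-$\ell$ contribution is at most
\[
(\ell+1)\,\ell^{2C\ell}\,(3q)^{\ell/4}\,N^{\ell/4-\ell/3}
=(\ell+1)\,\ell^{2C\ell}\,(3q)^{\ell/4}\,N^{-\ell/12}.
\]
With $q\le n^\gamma$ and $\ell\le n^\alpha$, the factors $\ell^{2C\ell}(3q)^{\ell/4}$ are $2^{O(\ell\log n)}$. Since $N^{-\ell/12}=2^{-\ell n/12}$, the term is at most $2^{-\ell n/13}$ for large $n$, and the sum over $\ell\ge1$ is $2^{-\Omega(n)}$. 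Level $\ell=1$ contributes nothing by Lemma~\ref{lem:gamma-symmetries}(\ref{it:parity}), since a nonzero moment needs $a+b$ and $c+b$ both even. That parity observation is not needed for the bound, though.

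The high range $\ell>\ell_0$ is the step I expect to be the main obstacle, because the conjecture is only stated for $d\le n^{\alpha}$. There the crude bound $L_{1,\ell}(f)\cdot\max\abs{\text{moment}}$ with moments at most $1$ gives $(3Nq)^{\ell/4}$, which is useless. First I would avoid the levelwise sum entirely for large $\ell$ by using the exact identity $\E_{\Yes}[f]-\E_{\No}[f]=\sum_{S\ne\varnothing}\widehat f(S)\,\E_{\Yes}[\chi_S]$. The $N$-dimensional sign ensemble has the exact Cauchy--Schwarz bound $\sum_{\abs S=\ell}\abs{\E_{\Yes}\chi_S}^2\le$ (the level-$\ell$ weight of the density of $\Yes_n$). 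I would then combine this with $\sum_{\abs S=\ell}\widehat f(S)^2\le1$.

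If that is insufficient, the fallback is to observe that a non-adaptive $q$-query algorithm has Fourier degree at most $2q$. This holds because its acceptance probability is a polynomial of degree $2q$ in the oracle bits, which is the standard polynomial method and is visible in Theorem~\ref{thm:nonadaptive-growth}, where $f$ is a bilinear form in $O_x^{\otimes t}$. Hence $L_{1,\ell}(f)=0$ for $\ell>2q$. Since $2q\le2n^\gamma$, choosing $\alpha>\gamma$ (with $C=C(\alpha)$) makes the high range empty for large $n$, so only the low-range sum remains and the advantage is $2^{-\Omega(n)}$ with the constant depending on $\gamma$ through $\alpha$. I would adopt this degree argument as the main route. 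The final sentence of the proposition then follows by Remark~\ref{rem:sign-or} together with Theorem~\ref{thm:FH3-pointwise}.
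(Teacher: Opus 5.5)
Your adopted route is the paper's proof: the degree bound $\deg f\le 2q$ restricts the sum to levels $\ell\le 2q$, so Conjecture~\ref{conj:sharp-decay} applies to both factors $\Gamma(X,Y)$ and $\Gamma(Z,Y)$. The paper takes $\alpha=\gamma+1$; your choice $\alpha>\gamma$ works equally well. Lemma~\ref{lem:level-exponent} then gives the bound $\ell^{2C\ell}N^{-\ell/3}$, and the terms $(\ell+1)(3q)^{\ell/4}\ell^{2C\ell}N^{-\ell/12}=2^{-\Omega(n\ell)}$ sum to $2^{-\Omega(n)}$. The Cauchy--Schwarz detour you sketch first is unnecessary, since the degree bound empties the high range.
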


\begin{proof}
If $q=0$ the acceptance probability does not depend on the oracle and the advantage is $0$, so
assume $q\ge1$.  On a nonzero term of \eqref{eq:appendix-advantage} we have $Y\neq\varnothing$, and since $f$ has
degree at most $2q$ only levels $\ell=\abs{X}+\abs{Y}+\abs{Z}\le2q$ contribute.  Both factor degrees satisfy
$\abs{X}+\abs{Y},\,\abs{Z}+\abs{Y}\le\ell\le2q\le n^{\gamma+1}$, so Conjecture~\ref{conj:sharp-decay} with $\alpha=\gamma+1$
applies to each, with $C=C(\gamma+1)$, and Lemma~\ref{lem:level-exponent} with $(a,b,c)=(\abs{X},\abs{Y},\abs{Z})$ gives
\[
\bigl|\Gamma(X,Y)\,\Gamma(Z,Y)\bigr|
\ \le\
\ell^{2C\ell}\,
N^{-\frac{\max(\abs{X},\abs{Y})+\abs{X}}{4}-\frac{\max(\abs{Z},\abs{Y})+\abs{Z}}{4}}
\ \le\
\ell^{2C\ell}\,N^{-\ell/3}.
\]
Substituting this and \eqref{eq:appendix-growth} into \eqref{eq:appendix-advantage},
\[
\bigl|\E_{\Yes_n}[f]-\E_{\No_n}[f]\bigr|
\ \le\
\sum_{\ell=1}^{2q} (\ell+1)\,(3Nq)^{\ell/4}\cdot\ell^{2C\ell}N^{-\ell/3}
\ =\
\sum_{\ell=1}^{2q} (\ell+1)\bigl(3q\bigr)^{\ell/4}\ell^{2C\ell}\,N^{-\ell/12}.
\]
Unlike the corresponding sum for the rate $N^{-d/4}$, this one decays in $\ell$: the conjectured
exponent exceeds the growth exponent $\ell/4$ by $\ell/12$ at every level.  Taking logarithms, the
$\ell$-th term is $2^{\,O(\ell\log(q\ell))-n\ell/12}$, and $\ell\le2q\le2n^{\gamma}$ gives
$\log(q\ell)=O(\log n)$, so the exponent is $O(\ell\log n)-n\ell/12\le-n\ell/24$ for large $n$.
Every term is therefore at most $2^{-n\ell/24}$ and the sum is at most $2^{-n/24+1}$.
\end{proof}

\begin{remark}[The remaining conditional content]\label{rem:what-remains-conditional}
The hypothesis can be weakened, though not for the whole range of query counts.  In the proof, a
factor $\Gamma(X,Y)$ with $\abs X\ge\abs Y$ need not appeal to the conjecture:
Corollary~\ref{cor:revealed-decay-range} gives the same exponent
$N^{-\abs X/2}=N^{-(\max(\abs X,\abs Y)+\abs X)/4}$ unconditionally, but with prefactor
$2^{O(\ell^{2})}$ rather than $\ell^{2C\ell}$.  Substituting it changes the $\ell$-th exponent in the
sum above to $O(\ell^{2})+O(\ell\log n)-n\ell/12$, and the quadratic term is absorbed only when
$\ell=o(n)$.  Since $\ell\le2q$, the substitution is therefore legitimate whenever $q=o(n)$, and
in that range Conjecture~\ref{conj:sharp-decay} is needed only for the pairs with $\abs X<\abs Y$.
For larger polynomial query counts the conjectured prefactor $d^{Cd}$ is
used on both factors, and we do not know how to avoid it.

Even in the restricted range, this locates the conditional content of the lower bound: it lies in the
correlations with $\abs Y>\abs X$, and at the extreme $X=\varnothing$ in the correlations
$\Gamma(\varnothing,Y)$ of the signs alone.
\end{remark}

Two features of the argument remove distinct obstacles.  The first is the shape of the conjectured
rate.  Under the weaker rate $d^{Cd}N^{-d/4}$ the product bound
is only $\ell^{2C\ell}N^{-(\ell+\abs Y)/4}$, and its worst case $\abs Y=1$ leaves no decay in $\ell$
at all; the level sum then collapses to $N^{-1/4}2^{O(q\log q)}$, which is $o(1)$ only for
$q=o(n/\log n)$.  The rate of Conjecture~\ref{conj:sharp-decay} is designed to remove exactly that
case, and Lemma~\ref{lem:level-exponent} converts the improvement into a uniform factor $N^{-\ell/12}$.
The second is the growth bound.  Using the general Corollary~\ref{cor:FHk-growth} instead of the
non-adaptive one would replace the factor $(3Nq)^{\ell/4}$ by $9^{\ell}q^{\ell}(3N)^{\ell/4}$, a
larger query exponent that costs a constant factor in the admissible range of $q$.  Within the range
of degrees the conjecture covers, namely $d\le n^{\alpha}$, the conclusion available is hardness against
every polynomial query bound in $n$, as
Proposition~\ref{prop:conditional-lower-bound} states; and by
Remark~\ref{rem:what-remains-conditional}, for $q=o(n)$ only the range $\abs X<\abs Y$ is at issue, while
for larger polynomial $q$ the conjectured prefactor is needed in both ranges.

\section{A single-exponential constant in the Fourier-growth bound}\label{app:gstw-growth-constant}

Our phase-query lower bound uses the Fourier growth theorem of Girish, Sinha, Tal, and Wu through
Theorem~\ref{thm:gstw-growth}, which is their Corollary~4.2, derived from
\cite[Theorem~4.1]{girish2024power}.  At a fixed Fourier depth the constant in that bound is
absorbed and plays no role.  The growing-depth statements are different: there the depth grows with
the input length, and the constant has to be singly exponential in $r\ell$ for the argument to
survive.  This appendix proves the theorem in that form.

The quantity to be evaluated is an assignment count on the Boolean lattice.  In the step that
rewrites $h(s)$ as $\sum_{s'}P[s,s']\,g(s')$, the coefficient $P[s,s']$ counts the admissible
families $(J^{(b)})_{b\in B}$ for a query tuple of size profile $s'$.\footnote{Cross-references are
to the full version, arXiv:2311.16057v2, where this step appears in the proof of Theorem~4.1; the
published proceedings version~\cite{girish2024power} gives the proof overview and defers the
detailed argument to the full version.}  We evaluate it in closed form by M\"obius inversion,
which gives the factor $(2^{2r}-1)^{2\ell}=2^{O(r\ell)}$, uniformly over all levels $\ell\ge1$.
Theorem~\ref{thm:gstw-growth} is stated with this constant, and it is what the growing-depth
separation of Corollary~\ref{cor:growing-k} uses.

We retain the notation of the proof of~\cite[Theorem~4.1]{girish2024power}.  Thus $d\ge2$ is the
number of oracle factors in the product defining $f$, $\ell$ is the Fourier level, and
\[
B:=\{b\in\bits^d: \Norm{b}_1\equiv1\bmod2\}.
\]
For a tuple of query multisets $(I_1,\ldots,I_d)$, let $I^{(b)}$ be the set of coordinates lying in
exactly those $\oplus I_i$ with $b_i=1$, as in~\cite{girish2024power}.  We consider only tuples for
which $\abs{\oplus_i I_i}=\ell$, so that the $I^{(b)}$, $b\in B$, partition the symmetric difference,
with $\sum_{b\in B}\abs{I^{(b)}}=\ell$.  Following~\cite{girish2024power} we record the sizes by
$s'=(s'^{(b)})_{b\in B}\in\mathbb N^B$ with $s'^{(b)}=\abs{I^{(b)}}$ and $\Norm{s'}_1=\ell$, and let
$g(s')$ denote the contribution of all such tuples with $\abs{I^{(b)}}=s'^{(b)}$ for every $b$,
exactly as in the cited proof.  For a second vector $s\in\mathbb N^B$ with $\Norm{s}_1=\ell$, the
quantity $h(s)$ is the corresponding sum over the query tuples and over the families
$(J^{(b)})_{b\in B}$ of pairwise disjoint sets satisfying
\[
 \abs{J^{(b)}}=s^{(b)}
 \qquad\text{and}\qquad
 J^{(b)}\subseteq\bigcup_{b'\ge b}I^{(b')} .
\]

The coefficient is an assignment count, and it is convenient to compute it over a slightly
larger index set.  Let
\[
\mathcal C:=\bits^d\setminus\{0^d\},
\qquad
K:=\abs{\mathcal C}=2^d-1,
\]
ordered coordinatewise, and let $\mathcal S_\ell$ be the set of vectors $s\in\mathbb N^{\mathcal C}$ with
$\Norm{s}_1=\ell$.  Extend $g$ by zero to $\mathcal S_\ell$, and for $s\in\mathcal S_\ell$ let $h_{\mathcal C}(s)$
be defined exactly as $h(s)$, except that the pairwise disjoint sets $J^{(c)}$ are indexed by all
$c\in\mathcal C$, subject to the same constraints $\abs{J^{(c)}}=s^{(c)}$ and
$J^{(c)}\subseteq\bigcup_{b\in B,\,b\ge c}I^{(b)}$.  For $s$ supported on $B$ this is $h(s)$.

\begin{lemma}[The assignment count in closed form]\label{lem:gstw-exact-count}
For $s,s'\in\mathcal S_\ell$, let $P_{\mathcal C}[s,s']$ be the number of maps
\[
\varphi:\{(\gamma,j):\gamma\in\mathcal C,\ 1\le j\le s'^{(\gamma)}\}\to\mathcal C
\qquad\text{with}\qquad
\varphi(\gamma,j)\le\gamma
\quad\text{and}\quad
\abs{\varphi^{-1}(c)}=s^{(c)}\ \text{for all }c .
\]
Then $h_{\mathcal C}(s)=\sum_{s'\in\mathcal S_\ell}P_{\mathcal C}[s,s']\,g(s')$; moreover $P_{\mathcal C}$ is
invertible, and
\[
\Norm{P_{\mathcal C}^{-1}}_1\ \le\ K^{\ell}=(2^d-1)^{\ell}.
\]
\end{lemma}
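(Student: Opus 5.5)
The plan has three parts: the identity $h_{\mathcal C}=P_{\mathcal C}\,g$, the invertibility of $P_{\mathcal C}$, and the bound on $\Norm{P_{\mathcal C}^{-1}}_1$. The last two follow the same route as Lemma~\ref{lem:standard-cert-count}, with the lattice $\mathcal C=\bits^d\setminus\{0^d\}$ ordered coordinatewise playing the role of the nonempty subsets of $[d]$.

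\emph{The identity.} Fix a query tuple with size profile $s'$, extended by zero off $B$. The disjoint sets $I^{(b)}$ partition the $\ell$ coordinates of the symmetric difference. A family $(J^{(c)})_{c\in\mathcal C}$ of pairwise disjoint sets with $\abs{J^{(c)}}=s^{(c)}$ and $\sum_c s^{(c)}=\ell$ is then the same thing as a labelling of these $\ell$ coordinates by elements of $\mathcal C$. The constraint $J^{(c)}\subseteq\bigcup_{b\ge c}I^{(b)}$ says exactly that a coordinate in $I^{(\gamma)}$ receives a label $c\le\gamma$.

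Enumerating the coordinates of each $I^{(\gamma)}$ as $(\gamma,1),\dots,(\gamma,s'^{(\gamma)})$ in a canonical order, the admissible families are in bijection with the maps $\varphi$ in the statement. Their number is therefore $P_{\mathcal C}[s,s']$, which depends on the tuple only through $s'$. Summing the contributions over tuples grouped by profile then gives $h_{\mathcal C}(s)=\sum_{s'}P_{\mathcal C}[s,s']\,g(s')$, since $g$ vanishes off profiles supported on $B$.

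\emph{Invertibility.} Introduce commuting variables $y_c$ for $c\in\mathcal C$. Each labelled source of type $\gamma$ is sent independently to some $c\le\gamma$, so
\[
\sum_s P_{\mathcal C}[s,s']\prod_c y_c^{s^{(c)}}=\prod_{\gamma\in\mathcal C}\Bigl(\sum_{0\ne c\le\gamma}y_c\Bigr)^{s'^{(\gamma)}}.
\]
Hence $P_{\mathcal C}$ is the matrix, in the monomial basis of the homogeneous polynomials of degree $\ell$, of the ring substitution $y_\gamma\mapsto u_\gamma:=\sum_{0\ne c\le\gamma}y_c$. This is the zeta transform of the poset $\mathcal C$. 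It is unitriangular with respect to any linear extension of the order, so it is invertible, and Möbius inversion on the Boolean lattice gives $y_\gamma=\sum_{0\ne c\le\gamma}(-1)^{\abs\gamma-\abs c}u_c$. The one point to check is that removing the bottom element $0^d$ does not break the inversion. It does not, because every interval $[c,\gamma]$ with $c\ne0$ lies inside $\mathcal C$.

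\emph{The norm bound.} The inverse substitution is again a ring homomorphism, so its matrix is $P_{\mathcal C}^{-1}$. The column indexed by $s'$ lists the coefficients of a product of $\ell$ linear forms, each a signed sum of at most $K=2^d-1$ variables with unit coefficients. The $\ell_1$ norm of polynomial coefficients is submultiplicative, so every absolute column sum is at most $K^\ell$. The main obstacle is bookkeeping rather than analysis: one has to match the definition of $h$ in~\cite{girish2024power} (the sets $I^{(b)}$, the order $b'\ge b$, and the extension by zero) precisely with the assignment count, so that the bijection in the identity step is exact.
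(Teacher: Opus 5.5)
Your proposal is correct and follows the paper's proof essentially step for step. The identity comes from the bijection between admissible families $(J^{(c)})_{c\in\mathcal C}$ and the maps $\varphi$, using that $g$ vanishes off profiles supported on $B$. The generating-function identity exhibits $P_{\mathcal C}$ as the matrix of the zeta-transform substitution, Möbius inversion gives its explicit inverse, and submultiplicativity of the coefficient $\ell_1$ norm yields the column-sum bound $K^{\ell}$. You also check that deleting $0^d$ is harmless, since every interval $[c,\gamma]$ with $c\ne 0^d$ lies in $\mathcal C$; the paper leaves this point implicit.
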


\begin{proof}
For a fixed query tuple, the sets $I^{(b)}$, $b\in B$, are disjoint and their union, the symmetric
difference, has cardinality $\ell$.  The sets $J^{(c)}$ are pairwise disjoint subsets of that union
of total size $\ell$, so they partition it.  Assigning each element of $I^{(b)}$ to the unique
$J^{(c)}$ containing it, which is permitted precisely when $c\le b$, is a bijection between the
admissible families $(J^{(c)})_{c\in\mathcal C}$ and the maps counted by $P_{\mathcal C}[s,s']$, and the count
depends only on the size profile $s'$.  Summing the weights over the query tuples gives the
identity.

For the inversion, $P_{\mathcal C}$ is the assignment matrix of Lemma~\ref{lem:standard-cert-count} for
the lattice $\mathcal C$: the generating identity
\[
\sum_{s\in\mathcal S_\ell}P_{\mathcal C}[s,s']\prod_{c\in\mathcal C}y_c^{\,s^{(c)}}
=\prod_{\gamma\in\mathcal C}\Bigl(\sum_{c\le\gamma}y_c\Bigr)^{s'^{(\gamma)}}
\]
exhibits it as the matrix of the substitution $y_\gamma\mapsto\sum_{c\le\gamma}y_c$ on homogeneous
polynomials of degree $\ell$, which M\"obius inversion on the Boolean
lattice~\cite[the M\"obius inversion formula, Prop.~3.7.1, and the Boolean lattice,
Ex.~3.8.3]{stanley2012enumerative} inverts explicitly; the
inverse substitutes $\sum_{c\le\gamma}(-1)^{\Norm{\gamma}_1-\Norm{c}_1}y_c$ for $y_\gamma$.  A
column of $P_{\mathcal C}^{-1}$ lists the coefficients of a product of $\ell$ linear forms, each with at
most $K$ terms of unit modulus, so every absolute column sum is at most $K^{\ell}$.
\end{proof}

\begin{lemma}[The entrywise bound over the enlarged index set]\label{lem:gstw-entrywise-extended}
In the setting of~\cite[Lemma~4.3]{girish2024power}, for every $s\in\mathcal S_\ell$,
\[
\abs{h_{\mathcal C}(s)}
\ \le\
t^{\ell}\,
\max\Bigl\{1,\ \bigl(\widetilde n/t\bigr)^{\frac12\lfloor(d-1)\ell/d\rfloor}\Bigr\},
\]
where $t$ is the number of queries per factor and $\widetilde n$ the number of free variables.
\end{lemma}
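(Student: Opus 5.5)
The plan is to rerun the proof of \cite[Lemma~4.3]{girish2024power} verbatim, but with the certificate registers indexed by all of $\mathcal C$ instead of only by $B$. This works because nothing in that proof uses the parity condition $\Norm{b}_1\equiv1\bmod2$ on the index of a certificate set. It uses only the interval structure $c\mapsto[\min c,\max c]$ of the required positions. The enlargement therefore costs nothing in the entrywise bound.

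Concretely, fix $s\in\mathcal S_\ell$. Express $h_{\mathcal C}(s)$ as an augmented operator product, exactly as in Step~4--Step~5 of Lemma~\ref{lem:standard-path-growth}. The differences are that each factor is now the diagonal parallel oracle $O_x^{\otimes t}$ rather than a decision-tree contraction, and that the paths $P_p$ are replaced by the $t$ query registers of the factor. For each $c\in\mathcal C$ with $s^{(c)}>0$, introduce a register holding $\bot$ or an $s^{(c)}$-subset. This register is created at $\min c$, is checked for containment in the queried multiset at every position of $c$, and is erased at $\max c$.

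Next choose the split $r\in[d]$ minimizing
\[
e_r=\sum_{\min c>r}s^{(c)}+\sum_{\max c<r}s^{(c)}.
\]
The averaging identity \eqref{eq:standard-er-average} holds verbatim over $\mathcal C$, which gives $e_r\le\lfloor(d-1)\ell/d\rfloor$.

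To verify that the augmented product equals $h_{\mathcal C}(s)$, I would follow the certificate-counting argument. The split operator enforces that the visible sets partition the symmetric difference $\oplus_iI_i$. The containment checks are then exactly $J^{(c)}\subseteq\bigcup_{b\ge c}I^{(b)}$, because a coordinate lies in each $I_i$ with $i\in c$ precisely when its pattern $b$ satisfies $b\ge c$. The parity restriction $b\in B$ is automatic for coordinates of the symmetric difference.

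The norm bounds follow from the row/column-sum inequality $\Norm{G}\le\sqrt{\Norm{G}_1\Norm{G}_\infty}$ applied to the transition operators. A set created before the split chooses among at most $t$ queried indices. A set created after the split, or erased before it, ranges over $\widetilde n$ variables on one side only. As in the computation of $E_r$, each $s^{(c)}$ therefore contributes at most one full factor of $t$ (or $\binom{t}{s}$), and the uncertified sides contribute $\widetilde n^{e_r/2}$.

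The main obstacle is recovering the $t$-normalized form $t^\ell(\widetilde n/t)^{e_r/2}$ rather than the cruder $t^\ell\widetilde n^{e_r/2}$. To get it, I would bound the free choices on the uncertified sides by $\binom{\widetilde n}{s}$ against the matching $\binom ts$ on the certified side, as in \cite[Lemma~4.3]{girish2024power}. This trades one $t^{1/2}$ per variable for $\widetilde n^{1/2}$. I would then check that the coefficient bookkeeping gives at most $t$ per unit of $s$, and that the degenerate case $e_r=0$ (or $\widetilde n\le t$) yields the $\max\{1,\cdot\}$.
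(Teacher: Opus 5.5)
Your proposal is correct and follows the paper's own proof: you re-index the certificate registers (and the matrices $Q_i,Q_i',W$) of \cite[Lemma~4.3]{girish2024power} by $\mathcal C$, observe that the norm estimates use only the first and last nonzero coordinates of each $c$, and average over the split to get $e_r\le\lfloor(d-1)\ell/d\rfloor$. As in the paper, each type contributes $\binom{t}{s^{(c)}}^{1/2}$ on a side where it is present and $\binom{\widetilde n}{s^{(c)}}^{1/2}$ on a side where it is absent, which gives $t^{\ell}(\widetilde n/t)^{e_r/2}$, and you check that the covering condition at the split makes the enlarged constraints equivalent to those defining $h_{\mathcal C}(s)$. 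One point of precision: the containment checks must be against the odd-occurrence sets $\oplus I_i$, as your verification paragraph implicitly assumes, and not against the support of the queried multiset (the literal analogue of $P_p$ in Lemma~\ref{lem:standard-path-growth}), since with repeated queries the support checks count a different family of certificates from the one defining $h_{\mathcal C}(s)$.
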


\begin{proof}
\emph{Re-indexing the certificate registers by $\mathcal C$.}
In the operator factorization proving the entrywise bound on $h(s)$
in~\cite[Lemma~4.3]{girish2024power}, index the certificate sets $J^{(c)}$ and the matrices
$Q_i,Q_i'$ and $W$ by $\mathcal C$ in place of $B$; the matrices $R_i,R_i'$ are unchanged.

\emph{The norm estimates survive the re-indexing.}
They use only that every $c\in\mathcal C$ has a first and a last nonzero coordinate.  For a
split at position $r$ they give $\Norm{Q_i}\le\prod\binom{t}{s^{(c)}}^{1/2}$, over the $c$ whose
first nonzero coordinate is $i$, for $i\le r$; $\Norm{Q_i'}\le\prod\binom{t}{s^{(c)}}^{1/2}$, over
the $c$ whose last nonzero coordinate is $i$, for $i\ge r$; and
$\Norm{W}\le\prod\binom{\widetilde n}{s^{(c)}}^{1/2}$, over the $c$ vanishing on the first $r$
coordinates together with the $c$ vanishing on the last $d-r+1$.  Each $c$ therefore contributes one
factor for the initial segment and one for the final segment: a $\binom{t}{s^{(c)}}^{1/2}$ on a side
where it is present, and a $\binom{\widetilde n}{s^{(c)}}^{1/2}$ on a side where it is absent.
Writing $e_r$ for the total size of the $c$ absent from one of the two sides, and using
$\binom{t}{s^{(c)}}\le t^{s^{(c)}}$ and $\binom{\widetilde n}{s^{(c)}}\le\widetilde n^{s^{(c)}}$,
the product of the norms is at most
$t^{(2\ell-e_r)/2}\widetilde n^{e_r/2}=t^{\ell}(\widetilde n/t)^{e_r/2}$.

\emph{Averaging over the split position.}
Writing $z(c)$ for the number of leading zeros of $c$
and $z'(c)-1$ for the position of its last nonzero coordinate,
\[
\sum_{r=1}^{d}e_r=\sum_{c\in\mathcal C}s^{(c)}\bigl(z(c)+d+1-z'(c)\bigr)\le(d-1)\ell,
\]
since every nonzero $c$ has $z'(c)-z(c)\ge2$; some $r$ therefore has
$e_r\le\lfloor(d-1)\ell/d\rfloor$.

\emph{The modified constraints define the same quantity.}
They are
$\abs{J^{(c)}}=s^{(c)}$, $J^{(c)}\subseteq\bigcap_{i:c_i=1}\oplus I_i$, and
$\bigcup_c J^{(c)}=\oplus I_1\oplus\cdots\oplus I_d$.  Since $\sum_c s^{(c)}=\ell$ equals the
cardinality of the symmetric difference, the union condition forces the $J^{(c)}$ to be pairwise
disjoint and every element to lie in some $I^{(b)}$ with $b\in B$ and $b\ge c$, so these constraints
are equivalent to the ones defining $h_{\mathcal C}(s)$, and the remainder of the cited proof is
unchanged.
\end{proof}

\begin{corollary}[The growth theorem with a single-exponential constant]\label{cor:gstw-growth-constant}
For every $d\ge2$, every $\ell\ge1$, and every restriction leaving $\widetilde n$ free variables,
\[
L_{1,\ell}(f|_\rho)
\ \le\
(2^d-1)^{2\ell}\,t^{\ell}\,
\max\Bigl\{1,\ \bigl(\widetilde n/t\bigr)^{\frac12\lfloor(d-1)\ell/d\rfloor}\Bigr\}.
\]
With $d=2r$ this is Theorem~\ref{thm:gstw-growth}.
\end{corollary}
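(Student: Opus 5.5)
The statement is Corollary~\ref{cor:gstw-growth-constant}. I will prove it by combining the two preceding lemmas with the profile-reduction inequality already used in the standard-query setting, namely Step~2 of Lemma~\ref{lem:standard-path-growth}. First, fix a restriction $\rho$ and pass to the restricted function. A restriction only fixes some oracle bits, so the restricted function again has the product form of~\cite[Theorem~4.1]{girish2024power}, now on $\widetilde n$ free variables with at most $t$ queries per factor. The case $\widetilde n=0$ is trivial, since then $L_{1,\ell}=0$ for every $\ell\ge1$. Next, choose unit-modulus dual phases $a(S)$ as in \eqref{eq:standard-dual-phases}, so that $L_{1,\ell}(f|_\rho)=\sum_{s'}g(s')$ with $g$ the profile contributions indexed by $B$. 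Extend $g$ by zero to $\mathcal S_\ell$. Then $L_{1,\ell}\le\Norm{g}_1$.

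The second step inverts the assignment count. By Lemma~\ref{lem:gstw-exact-count} we have $h_{\mathcal C}=P_{\mathcal C}\,g$ with $P_{\mathcal C}$ invertible, so $g=P_{\mathcal C}^{-1}h_{\mathcal C}$. This gives
\[
\Norm{g}_1\le\Norm{P_{\mathcal C}^{-1}}_1\Norm{h_{\mathcal C}}_1\le(2^d-1)^{\ell}\,\abs{\mathcal S_\ell}\,\max_{s}\abs{h_{\mathcal C}(s)}.
\]
Here $\abs{\mathcal S_\ell}=\binom{\ell+K-1}{K-1}$, which is the number of multisets of size $\ell$ from $K=2^d-1$ elements. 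It is therefore at most $K^{\ell}=(2^d-1)^{\ell}$, by the same remark as the one following Lemma~\ref{lem:standard-path-growth}. Third, Lemma~\ref{lem:gstw-entrywise-extended} bounds each $\abs{h_{\mathcal C}(s)}$ by $t^{\ell}\max\{1,(\widetilde n/t)^{\frac12\lfloor(d-1)\ell/d\rfloor}\}$. Multiplying the three factors yields the displayed bound with constant $(2^d-1)^{2\ell}$.

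Finally, for the identification with Theorem~\ref{thm:gstw-growth}, I set $d=2r$. This is the number of oracle factors in $u^\dagger\prod(\text{rounds})^\dagger\,\Pi\prod(\text{rounds})\,v$ for an $r$-round algorithm, as in~\cite{girish2024power}. We then get $\lfloor(2r-1)\ell/(2r)\rfloor$ and the constant $(2^{2r}-1)^{2\ell}$. The max with $1$ is removed using $t\le M$, and $M$ is replaced by $\widetilde M$ for restrictions.

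The main point needing care is the first equality. The dual-phase trick has to be inserted into the product exactly as in the cited proof, so that $g$ sums to the signed weight, including the contributions of profiles supported on $B$ only. Beyond that, I expect the only real subtlety to be checking that the zero extension of $g$ off $B$ is compatible with $h_{\mathcal C}=P_{\mathcal C}g$. This holds because Lemma~\ref{lem:gstw-exact-count} derived the identity precisely for that extension.
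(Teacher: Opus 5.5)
Your main chain is the paper's own argument: $L_{1,\ell}(f|_\rho)\le\Norm{g}_1\le\Norm{P_{\mathcal C}^{-1}}_1\Norm{h_{\mathcal C}}_1\le K^{\ell}\abs{\mathcal S_\ell}\max_s\abs{h_{\mathcal C}(s)}$, then $\abs{\mathcal S_\ell}\le K^{\ell}$ and Lemma~\ref{lem:gstw-entrywise-extended}. It correctly proves the displayed bound, the one with the maximum.

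The gap is in the last step, the identification with Theorem~\ref{thm:gstw-growth}. That theorem is stated without the maximum, and for restrictions it has $M$ replaced by $\widetilde M$. You justify dropping the maximum by $t\le M$, but this only covers the unrestricted function. The hypothesis $t\le M$ says nothing about $\widetilde M$, and a restriction can leave $1\le\widetilde M<t$ free variables. Write $a=\tfrac12\lfloor(2r-1)\ell/(2r)\rfloor$. For $1\le\widetilde M<t$ the maximum equals $1$, so your bound reads $(2^{2r}-1)^{2\ell}t^{\ell}$. Whenever $a>0$, this is strictly weaker than the claimed $(2^{2r}-1)^{2\ell}t^{\ell}(\widetilde M/t)^{a}$. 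So the sentence ``with $d=2r$ this is Theorem~\ref{thm:gstw-growth}'' is not yet justified for the restriction clause.

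The missing case is settled by a separate, elementary bound rather than by the operator argument. Since $f$ takes values in $[0,1]$, every Fourier coefficient of $f_\rho$ has modulus at most $1$. Hence, using $\widetilde M<t$ and $a\le\ell$,
\[
L_{1,\ell}(f_\rho)\le\binom{\widetilde M}{\ell}\le\widetilde M^{\ell}=\widetilde M^{\ell-a}\widetilde M^{a}\le t^{\ell-a}\widetilde M^{a}=t^{\ell}(\widetilde M/t)^{a}.
\]
Combine this with your case $\widetilde M=0$ and with the case $\widetilde M\ge t$, where the maximum is exactly $(\widetilde M/t)^{a}$. Together these give the form of Theorem~\ref{thm:gstw-growth} for every restriction; this is precisely the second paragraph of the paper's proof.
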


\begin{proof}
As in~\cite{girish2024power}, $L_{1,\ell}(f|_\rho)\le\sum_{s'}\abs{g(s')}$.  By
Lemma~\ref{lem:gstw-exact-count},
\[
\sum_{s'\in\mathcal S_\ell}\abs{g(s')}
\le\Norm{P_{\mathcal C}^{-1}}_1\sum_{s\in\mathcal S_\ell}\abs{h_{\mathcal C}(s)}
\le K^{\ell}\cdot\abs{\mathcal S_\ell}\cdot\max_{s}\abs{h_{\mathcal C}(s)} ,
\]
and $\abs{\mathcal S_\ell}=\binom{\ell+K-1}{K-1}\le K^{\ell}$, since a vector in $\mathcal S_\ell$ is a
multiset of $\ell$ elements of $\mathcal C$.  Combining with
Lemma~\ref{lem:gstw-entrywise-extended} gives the stated bound.

The form without the maximum, in which Theorem~\ref{thm:gstw-growth} is stated, follows for every
$t\ge1$.  Write $a=\frac12\lfloor(d-1)\ell/d\rfloor$.  If $\widetilde n\ge t$ the maximum equals
$(\widetilde n/t)^{a}$.  If $1\le\widetilde n<t$, then $\abs f\le1$ gives
$\abs{\widehat{f|_\rho}(S)}\le1$ for every $S$, so
$L_{1,\ell}(f|_\rho)\le\binom{\widetilde n}{\ell}\le\widetilde n^{\ell}\le t^{\ell-a}\widetilde n^{a}
=t^{\ell}(\widetilde n/t)^{a}$, using $\widetilde n\le t$ and $a\le\ell$; and if $\widetilde n=0$ then
$L_{1,\ell}(f|_\rho)=0$ for $\ell\ge1$.
\end{proof}

A single-exponential constant is what permits the Fourier depth to grow in
Corollary~\ref{cor:growing-k}.  A factor whose logarithm were itself exponential in $r\ell$ would
support only $k=O((\log n)^{1/3})$ there, whereas $2^{O(r\ell)}$ supports $k=O(n^{1/3})$, the
binding constraint then becoming the $2^{\Theta(k)}$ queries of the deciding circuit.

\section{Proof of the non-adaptive growth bound}\label{app:nonadaptive-proof}

We prove Theorem~\ref{thm:nonadaptive-growth}.  The bound is stated by Girish, Sinha, Tal, and
Wu~\cite[Remark~1.6]{girish2024power}, and the argument below is theirs: it follows the technical
proof overview in the full version for the case of two oracle factors, where the level-$\ell$ weight
is grouped by the pair of sizes $(s_1,s_2)$.  Their Section~4 proves the general form
(Theorem~\ref{thm:gstw-growth}) rather than this sharper non-adaptive one.  We write the proof out
because our application depends on how the constant grows with $\ell$, and because the resulting
bound is stable under restrictions, which we use in Appendix~\ref{sec:appendix-sign}; this case also
does not involve the assignment count of Appendix~\ref{app:gstw-growth-constant}.  We
claim no novelty for the method.

\begin{proof}
The proof has five steps: (i) absorb the restriction into the boundary vectors; (ii) split the
level-$\ell$ weight according to the shape of the symmetric difference $\oplus I\triangle\oplus J$;
(iii) write each piece as a product of two matrices in two different ways; (iv) bound the norms of
the four factors; (v) take the geometric mean of the two resulting bounds.

\emph{Step 1: the restriction.}  Since $O_x^{\otimes t}\ket I=\bigl(\prod_{p\in[t]}x_{I(p)}\bigr)\ket I$
and $x_i^2=1$, writing $\oplus I\subseteq[M]$ for the set of indices occurring an odd number of times
in the tuple $I$, the index $0$ being ignored, we have
\[
f(x)=\sum_{(I,a),(J,b)\in A}\overline{u[I,a]}\;\mathsf B[(I,a),(J,b)]\;v[J,b]\;
x^{\oplus I}\,x^{\oplus J},
\qquad x^{S}:=\prod_{i\in S}x_i.
\]
Let $F\subseteq[M]$ be the free coordinates of $\rho$, so $\abs F=\widetilde M$.  Splitting each
product over fixed and free coordinates gives $x^{\oplus I}=\varepsilon_\rho(I)\,x^{\oplus I\cap F}$
with $\varepsilon_\rho(I)\in\pmone$ determined by $\rho$.  Hence $f_\rho$ has the same form with $u$
and $v$ replaced by their entrywise products with $\varepsilon_\rho$, which leaves their norms
unchanged, and with $\oplus I$ replaced by $\oplus I\cap F$.  As $\oplus I\cap F$ is a subset of $F$
containing at most $t$ elements, $\abs{\oplus I\cap F}\le\tau$.  We may therefore assume that $\rho$
is empty, that $F=[M]$ with $\widetilde M=M$, and that $\abs{\oplus I}\le\tau$ for every $I$.

\emph{Step 2: splitting by shape.}  Choose unimodular phases $\alpha(S)$ with
$L_{1,\ell}(f)=\sum_{\abs S=\ell}\alpha(S)\widehat f(S)$.  By the identity above,
$\widehat f(S)$ is the sum of $\overline{u[I,a]}\mathsf B[(I,a),(J,b)]v[J,b]$ over the pairs with
$\oplus I\triangle\oplus J=S$.  For $s=(s_1,s_2)$ let $\mathbf 1_{s}(S_1,S_2)$ be the indicator of
$\abs{S_1\setminus S_2}=s_1$ and $\abs{S_2\setminus S_1}=s_2$, and set
\[
g(s):=\sum_{(I,a),(J,b)}
\mathbf 1_{s}(\oplus I,\oplus J)\,\alpha(\oplus I\triangle\oplus J)\,
\overline{u[I,a]}\,\mathsf B[(I,a),(J,b)]\,v[J,b].
\]
The $\ell+1$ pairs $s$ with $s_1+s_2=\ell$ partition the constraint
$\abs{\oplus I\triangle\oplus J}=\ell$, so $L_{1,\ell}(f)=\sum_{s_1+s_2=\ell}g(s)$.

\emph{Step 3: two factorizations.}  Introduce a register indexed by the subsets $T\subseteq[M]$ and
define
\[
\begin{aligned}
W\bigl[(I,a),((I',a'),T)\bigr]
 &=\mathbf 1[(I,a)=(I',a')]\,\mathbf 1_{s}(\oplus I,T)\,\alpha(\oplus I\triangle T), \\
R'\bigl[((I',a'),T),(J,b)\bigr]
 &=\mathbf 1[\oplus J=T]\,\mathsf B[(I',a'),(J,b)], \\
R\bigl[(I,a),((J',b'),T)\bigr]
 &=\mathbf 1[\oplus I=T]\,\mathsf B[(I,a),(J',b')], \\
W'\bigl[((J',b'),T),(J,b)\bigr]
 &=\mathbf 1[(J,b)=(J',b')]\,\mathbf 1_{s}(T,\oplus J)\,\alpha(T\triangle\oplus J).
\end{aligned}
\]
Carrying out both products, the entries $(WR')[(I,a),(J,b)]$ and $(RW')[(I,a),(J,b)]$ are both equal
to
\[
\mathbf 1_{s}(\oplus I,\oplus J)\,\alpha(\oplus I\triangle\oplus J)\,\mathsf B[(I,a),(J,b)],
\qquad\text{so}\qquad
g(s)=u^{\dagger}WR'v=u^{\dagger}RW'v.
\]
We use two such factorizations because $W$ imposes the size constraint while propagating $\oplus J$
forward and $W'$ imposes it while propagating $\oplus I$ backward, so that the two binomial factors
appear in the opposite order in the two bounds.

\emph{Step 4: norms.}  Group the rows of $R'$ by their $T$ component and group its columns by
$\oplus J$.
A row with component $T$ is nonzero only in columns with $\oplus J=T$, so the matrix is block
diagonal, and the block at $T$ is the submatrix of $\mathsf B$ on the columns $\{(J,b):\oplus J=T\}$;
hence $\Norm{R'}\le\Norm{\mathsf B}\le1$.  Grouping the rows of $R$ by $\oplus I$ and its columns by
$T$ gives $\Norm R\le1$ in the same way.

For $W$ use $\Norm W\le\sqrt{\Norm W_1\Norm W_\infty}$, with $\Norm\cdot_1$ the largest absolute
column sum and $\Norm\cdot_\infty$ the largest absolute row sum.  Each column of $W$ has at most one
nonzero entry, of modulus one, so $\Norm W_1\le1$.  The row $(I,a)$ is nonzero only in the columns
$((I,a),T)$ with $\mathbf 1_{s}(\oplus I,T)=1$, that is, with $T$ obtained from $\oplus I$ by
deleting $s_1$ of its elements and adjoining $s_2$ elements from outside it; there are
$\binom{\abs{\oplus I}}{s_1}\binom{M-\abs{\oplus I}}{s_2}\le\binom{\tau}{s_1}\binom{M}{s_2}$ such
$T$.  Symmetrically each row of $W'$ has at most one nonzero entry, while the column $(J,b)$ is
nonzero only in the rows $((J,b),T)$ with $T$ obtained from $\oplus J$ by deleting $s_2$ elements
and adjoining $s_1$.  Hence
\[
\Norm W\le\sqrt{\tbinom{\tau}{s_1}\tbinom{M}{s_2}},
\qquad
\Norm{W'}\le\sqrt{\tbinom{\tau}{s_2}\tbinom{M}{s_1}}.
\]

\emph{Step 5: the geometric mean.}  Both factorizations bound $\abs{g(s)}$ by a product of norms, so
$\abs{g(s)}$ is at most the minimum of the two bounds, hence at most their geometric mean:
\[
\abs{g(s)}
\ \le\
\Bigl[\tbinom{\tau}{s_1}\tbinom{M}{s_2}\tbinom{\tau}{s_2}\tbinom{M}{s_1}\Bigr]^{1/4}
\ \le\
\bigl[\tau^{s_1+s_2}M^{s_1+s_2}\bigr]^{1/4}
=(M\tau)^{\ell/4}.
\]
Summing over the $\ell+1$ values of $s$ gives $L_{1,\ell}(f)\le(\ell+1)(M\tau)^{\ell/4}$, which is
the claim.  For the final assertion, a non-adaptive algorithm has the form
$U_1(O_x^{\otimes t}\otimes I_m)U_0$, so its acceptance probability is
$\bra{\psi}(O_x^{\otimes t}\otimes I_m)U_1^{\dagger}\Pi U_1(O_x^{\otimes t}\otimes I_m)\ket{\psi}$
with $\ket\psi=U_0\ket{\psi_0}$ a unit vector, using that $O_x$ is Hermitian.
\end{proof}

\end{document}